\documentclass[12pt]{article}
\usepackage{amsmath,amssymb,amsthm}
\usepackage{graphics,epsfig}
\usepackage{hyperref}
\usepackage[numbers]{natbib}
\usepackage{color}
\usepackage{graphicx}
\usepackage{caption}
\usepackage{subcaption}
\usepackage{float}
\usepackage{mathrsfs}
\usepackage{booktabs}
\usepackage{multirow}
\usepackage{comment}
\usepackage{setspace}
\usepackage{bbm}
\usepackage{bm}
\usepackage{amsfonts}
\usepackage{xcolor}
\usepackage{pslatex}
\usepackage{setspace}
\usepackage{bbm}
\usepackage{listings}
\usepackage{algorithm}
\usepackage{algpseudocode}

\def \ll {\boldsymbol{l}}

\def \E {\mathbb{E}}

\newtheorem{definition}{\bf Definition}
\newtheorem{defn}[definition]{\bf Definition}

\newtheorem{remark}{\bf Remark}
\newtheorem{rmk}[remark]{\bf Remark}

\newtheorem{theorem}{\bf Theorem}
\newtheorem{prop}[theorem]{\bf Proposition}
\newtheorem{lem}[theorem]{\bf Lemma}
\newtheorem{cor}[theorem]{\bf Corollary}


\begin{document}

	\title{\bfseries
		{Dynamic Physical Hedging amid Jump Losses, Reconstruction-Price Uncertainty, Population Interactions
		}}

	\author{
	Paramahansa Pramanik$^{1,2}$ \and 
	Michael Bowdin$^{3,4}$
	}
	
	\date{
		\small
		$^{1}$ Department of Mathematics and Statistics, University of South Alabama, Mobile, AL 36688, United States.\\
		$^{2}$
		Corresponding author, \texttt{ppramanik@southalabama.edu}\\[0.5em]
		$^{3}$Department of Mathematics and Statistics, University of South Alabama, Mobile, AL 36688, United States.\\
		$^{4}$ \texttt{mb2139@jagmail.southalabama.edu}
	}
	
	\maketitle

	\begin{abstract}
		We study dynamic physical hedging for insurers exposed jointly to catastrophe losses and stochastic reconstruction costs. Surplus evolves
		as a controlled jump diffusion whose loss amplitude combines marked catastrophe severity, an exogenous mean-reverting cost factor, and endogenous mitigation. We establish well-posedness, moment and stability
		estimates, and a stopping-time dynamic programming principle, and prove that the value function is the unique viscosity solution of the resulting nonlocal Hamilton-Jacobi-Bellman (HJB) equation Strategic
		interaction is introduced through a mean field game (MFG) with reduced-form vulnerability costs, yielding a coupled backward-forward HJB-Kolmogorov system. We establish relaxed equilibrium existence,
		Markovian realization, and uniqueness under appropriate compactness and monotonicity conditions. Numerical experiments show that reconstruction
		costs and capitalization materially affect optimal hedging and that cross-sectional vulnerability alters equilibrium costs. Tail-family robustness calculations further assess the sensitivity of these conclusions to alternative catastrophe-severity specifications.
	\end{abstract}

\noindent\textbf{Keywords:}
Mean Field Games; Viscosity Solutions; Stochastic Control; 
Nonlocal HJB Equations; Catastrophe Insurance.

\section{Introduction}
\label{sec:introduction}

The economic problem considered in this paper is the allocation of physical mitigation when an insurer faces two distinct sources of uncertainty: discontinuous catastrophe losses and continuously fluctuating reconstruction prices. For $(t,x,y)\in [0,T]\times\mathbb R^2$, an admissible predictable policy $u\in\mathcal U_t$ is evaluated through the discounted criterion
	\begin{equation}
		\label{eq:intro-objective}
		V(t,x,y)
		=
		\inf_{u\in\mathcal U_t}
		\E_{t,x,y}\!\left[
		\int_t^T e^{-\delta(s-t)}
		f(s,X_s,Y_s,u_s)\,ds
		+
		e^{-\delta(T-t)}g(X_T,Y_T)
		\right].
	\end{equation}
	The controlled state is specified by
	\begin{equation}
		\label{eq:intro-state}
		\begin{aligned}
			dY_s
			&=
			\kappa(\vartheta-Y_s)\,ds+\sigma_I\,dW_s^I,\\
			dX_s
			&=
			b(X_{s-},Y_{s-},u_s)\,ds
			+
			\sigma_X(X_{s-},Y_{s-},u_s)\,dW_s^X
			-
			\int_E e^{Y_{s-}}a(z)h(u_s)\,
			\widetilde N(ds,dz),
		\end{aligned}
		\qquad
		(X_t,Y_t)=(x,y).
	\end{equation}
	The first coordinate $X$ represents insurer surplus, whereas $I=e^Y$ is an exogenously evolving reconstruction-cost index whose logarithm reverts toward the long-run level $\vartheta$. Catastrophe
	events and their marks are generated by the marked Poisson random measure $N$. The factor $h(u)$ measures the fraction of physical loss remaining after the insurer has committed resources to pre-arranged reconstruction capacity. Consequently, mitigation enters the model at
	the level of catastrophe severity itself: conditional on $(Y,z)$, the monetary jump transmitted to surplus is scaled by $h(u)$ rather than being offset only after the loss has occurred.
	
	A central modeling distinction is that the reconstruction-cost process and the catastrophe point process are separate primitive risks. Equation~\eqref{eq:intro-state} contains no catastrophe-triggered impulse in $Y$, and a realization of $N$ does not cause the replacement-cost index to increase. The framework therefore analyzes the adverse joint exposure created when a catastrophe occurs while
	reconstruction inputs are already expensive; it does not impose a causal mechanism through which the catastrophe itself generates a price surge. Within this specification, we establish well-posedness and stability of the controlled jump system, obtain a stopping-time
	dynamic programming principle, and identify the value function with the unique viscosity solution of the corresponding nonlocal Hamilton-Jacobi-Bellman (HJB) equation. We subsequently introduce cross-sectional strategic interaction through a mean field game of
	insurers, thereby linking individual mitigation choices to the evolution of aggregate vulnerability. This perspective complements evidence that catastrophe recovery may be accompanied by reconstruction-cost inflation and demand-surge effects
	\citep{DohrmannGuertlerHibbeln2017,Kousky2019,OlsenPorter2011,RothKunreuther1998}, while deliberately isolating the control problem generated by stochastic replacement-cost exposure from the
	separate economic question of how post-catastrophe reconstruction prices are endogenously formed.

The first strand of related work concerns catastrophe insurance,mitigation, reconstruction costs, and stochastic loss models. Classical insurance-risk theory studies reserves, ruin, dividends, and reinsurance under compound-Poisson and diffusion-perturbed surplus processes \citep{Asmussen2000,Avanzi2009,Azcue2010,DeFinetti1957,Dufresne1991,Garrido1989,Gerber1969,Hogaard1999,PanjerWillmot1992,Schmidli2008}.  Catastrophe-risk research emphasizes that insurance and ex ante mitigation jointly affect recovery and loss reduction \citep{Kousky2019,RothKunreuther1998}, while empirical work on reconstruction markets documents post-event price pressure, contractor scarcity, and insured-loss inflation
	\citep{DohrmannGuertlerHibbeln2017,OlsenPorter2011}.  These studies motivate replacement-cost exposure as an economically distinct source of catastrophe severity.  The present model is deliberately narrower
	$Y$ is an exogenous replacement-cost factor rather than a
	market-clearing reconstruction price, and aggregate rebuilding demand does not determine its dynamics \citep{pramanik2020optimization}.  On the probabilistic side, catastrophe clustering and transient changes in claim intensity have been modeled through shot-noise, contagion, and related point processes \citep{Albrecher2006,Dassios2003,Dassios2012,Macci2011,Liu2023,Pojer2023}, while marked point processes accommodate heterogeneous event severity \citep{Zeller2022}.  Our mechanism differs
	from intensity-based models because $e^Y$ scales the monetary loss generated by a catastrophe mark, and the control $h(u)$ acts on that same nonlocal loss amplitude.  Thus physical hedging changes residual catastrophe severity before the loss enters surplus rather than acting
	only through a premium, dividend, or reinsurance transfer.

The second strand concerns stochastic control with jumps, dynamic programming, and viscosity solutions of nonlocal HJB equations. General foundations for controlled diffusions and jump diffusions are developed in \citep{Applebaum2009,FlemingSoner2006,
	OksendalSulem2019,Pham2009,Situ2005}.  For integro-differential HJB equations, comparison and approximation require a nonlocal viscosity solutions
\citep{,BarlesImbert2008,BiswasJakobsenKarlsen2010,CrandallIshiiLions1992,JakobsenKarlsen2006}.  The stopping-time DPP used here also requires stability under conditioning and concatenation together with measurable selection.  We therefore rely on the canonical
control-correspondence and analytic-set frameworks of
\citet{ElKarouiTan2013,Zitkovic2014,FayvisovichZitkovic2021}, rather than attributing these general principles to application-specific results.  Relaxed-control compactification and approximation are
treated through the classical framework of
\citet{ElKarouiNguyenJeanblanc1987}.  Numerically, the relevant
principle is monotone approximation of viscosity solutions, together with policy iteration and positivity-preserving discretization of the
forward equation \citep{paramahansa2025construction}.  The computational construction below is therefore designed around a monotone nonlocal HJB discretization and the adjoint
Kolmogorov operator rather than a Gaussian approximation of the jump term.

The third strand is mean field games and large-population stochastic control \citep{pramanik2024semicooperation}.  The equilibrium consistency principle originates with
\citet{LasryLions2007}, while probabilistic formulations and
large-population limits are developed in
\citet{Lacker2016,CarmonaDelarue2018I}.  Mean field models of systemic risk illustrate how cross-sectional distributions can feed back into individual incentives \citep{CarmonaFouqueSun2015}, whereas models with
explicit market clearing show how a population interaction may instead generate an endogenous equilibrium price
\citep{pramanik2026strategic}.  This distinction is central here, our couplings $F$ and $G$ are reduced-form vulnerability costs, not market-clearing reconstruction prices.  Nonlocal mean field systems with jump or fractional operators require additional analytical and computational structure
\citep{ChowdhuryErslandJakobsen2023,
	ChowdhuryJakobsenKrupski2024}.  The present paper combines these nonlocal-control and mean field structure in a catastrophe-insurance setting in which physical hedging changes jump severity, the induced control changes the cross-sectional law of capitalization and replacement-cost exposure, and that law feeds back into individual
costs.  Common catastrophe noise, endogenous reconstruction prices, and population-dependent catastrophe intensities are not included and are treated as structural extensions rather than features of the current equilibrium.

The contributions are fourfold.  First, we provide a probabilistic foundation for the catastrophe-insurance control problem under diffusion and marked jump risk, including strong well-posedness, moment estimates, stability with respect to initial states and controls, a restart property, and the measurable continuation machinery required for dynamic programming.  These results are stated under moment conditions that permit heavy-tailed catastrophe marks up to a prescribed order rather than imposing a Gaussian approximation \citep{pramanik2023scoring}.  Second, we derive the nonlocal HJB equation associated with \eqref{eq:intro-objective}-\eqref{eq:intro-state} and prove that the value function is its unique viscosity solution in an exponential-polynomial growth class.  The comparison argument treats the control-dependent nonlocal loss operator generated by physical hedging.  Third, we formulate a mean field game of catastrophe insurers, characterize equilibrium through a coupled backward nonlocal HJB equation and forward Kolmogorov equation, and establish existence by a fixed-point argument and uniqueness under a Lasry-Lions \citep{LasryLions2007} type monotonicity condition.  Fourth, we translate the analytical structure into economic comparative statics and a calibrated finite-difference experiment. The model implies that an elevated replacement-cost state increases every finite moment of monetary catastrophe-loss exposure, whereas physical hedging reduces the same exposure through the residual-loss function $h$. Numerically, optimal hedging is stronger when replacement costs are elevated and when insurers are weakly capitalized. The mean field coupling changes the equilibrium valuation of vulnerability even when its effect on the pointwise hedging rule is modest. These conclusions concern the interaction between catastrophe losses and stochastic replacement-cost risk; they do not require, and should not be interpreted as establishing, a causal catastrophe-to-price mechanism \citep{pramanik2020motivation}.

The remainder of the paper is organized as follows.  Section~\ref{sec:prob-foundations} develops the stochastic basis of the model and establishes well-posedness, moment and stability estimates, stopping-time restart properties, measurable selection, and the probabilistic ingredients of the dynamic programming principle.  Section~\ref{sec:insurance-control} specializes this framework to catastrophe insurance, introduces the demand-surge loss specification and the physical-hedging objective, and identifies the nonlocal channel through which the control changes catastrophe severity.  Section~\ref{sec:dpp-viscosity} proves the dynamic programming principle, derives the nonlocal HJB equation, and establishes the viscosity comparison, existence, and uniqueness results.  Section~\ref{sec:mfg} introduces the population interaction and proves existence and uniqueness of the mean field equilibrium.  Section~\ref{sec:economic-implications} develops the economic implications for demand-surge exposure, optimal hedging, and equilibrium vulnerability.  Section~\ref{sec:numerics} presents the calibration, numerical scheme, and comparative-statics results.  Section~\ref{sec:conclusion} concludes, and the Appendix collects auxiliary arguments.

\section{Probabilistic Foundations}
\label{sec:prob-foundations}

In this section we discuss the stochastic basis on which the catastrophe-insurance control problem is posed and establishes the well-posedness properties required later for dynamic programming, viscosity characterization, and the mean field fixed-point argument. The formulation is deliberately stated at a level that permits both finite-activity catastrophe arrivals and infinite-activity jump perturbations. In particular, catastrophe severity is represented by a marked Poisson random measure, while the post-disaster replacement-cost factor enters the loss amplitude as a state variable. Physical hedging acts directly on the jump diffusion. 

Throughout, $T\in(0,\infty)$ is fixed, $E:=\mathbb{R}_{+}\setminus\{0\}$ is the mark space, and $|\cdot|$ denotes the Euclidean norm. Constants denoted by $C$, $C_p$, or $K_p$ may change from line to line. For a probability measure $\mu$ on a Polish space, $\mathcal{P}_p$ denotes the set of probability measures with finite $p$th moment.

\medskip

\noindent\textbf{Assumption 2.1.}
Let $(\Omega,\mathcal{F},\mathbb{F},\mathbb{P})$ be a complete filtered probability space satisfying  $\mathbb{F}=(\mathcal{F}_t)_{0\le t\le T}$. The space supports a two-dimensional Brownian motion
$
W_t=(W_t^{X},W_t^{I}),
$
and an $\mathbb{F}$-Poisson random measure $N(dt,dz)$ on $[0,T]\times E$ with compensator $\nu(dz)\,dt$. We define
$
\widetilde N(dt,dz):=N(dt,dz)-\nu(dz)\,dt.
$
The Brownian motion and $N$ are independent. The L\'evy measure $\nu$ satisfies
$
	\int_E (1\wedge z^2)\,\nu(dz)<\infty.
$
For some fixed $p_\star>2$,
$
	\int_{\{z>1\}} z^{p_\star}\,\nu(dz)<\infty.
$

\medskip

\noindent\textbf{Assumption 2.2.}
The logarithmic construction-cost factor $Y$ satisfies
\begin{equation}
	\label{eq:Y}
	dY_s=\kappa(\vartheta-Y_s)\,ds+\sigma_I\,dW_s^{I},
	\qquad Y_t=y,
\end{equation}
where $\kappa>0$, $\sigma_I>0$, and $\vartheta\in\mathbb{R}$. The observable replacement-cost index is
$
	I_s:=e^{Y_s}.
$
Thus $I_s>0$ almost surely for every $s$, avoiding the economically artificial possibility of a negative construction-cost index that would arise from applying an Ornstein-Uhlenbeck process directly to the level.

\medskip

\noindent\textbf{Assumption 2.3.}
Let $U\subset\mathbb{R}_{+}$ be nonempty, convex, and closed. For $(t,x,y)\in[0,T]\times\mathbb{R}\times\mathbb{R}$, an admissible physical-hedging control is an $\mathbb{F}$-predictable $U$-valued process $u=(u_s)_{s\in[t,T]}$ satisfying
$
	\mathbb{E}\left\{\int_t^T |u_s|^{p_\star}\,ds\right\}<\infty.
$
The set of admissible controls is denoted by $\mathcal{U}_t$.

\medskip

\noindent\textbf{Assumption 2.4 .}
The insurer surplus process $X$ is governed by
\begin{align}
	\label{eq:X-general}
	dX_s
	&=
	b(X_{s-},Y_{s-},u_s)\,ds
	+\sigma_X(X_{s-},Y_{s-},u_s)\,dW_s^{X}
	-\int_E \ell(Y_{s-},z,u_s)\,\widetilde N(ds,dz),
	\qquad X_t=x.
\end{align}
The measurable functions
$
b:\mathbb{R}^2\times U\to\mathbb{R},
\
\sigma_X:\mathbb{R}^2\times U\to\mathbb{R},
$ and $
\ell:\mathbb{R}\times E\times U\to\mathbb{R}_{+}
$
satisfy the following conditions.
\begin{itemize}

\item For some $L>0$ and all $x,x',y,y'\in\mathbb{R}$ and $u\in U$,
\begin{align}
	\label{eq:lipschitz-b-sigma}
	|b(x,y,u)-b(x',y',u)|
	+
	|\sigma_X(x,y,u)-\sigma_X(x',y',u)|
	\le
	L\big(|x-x'|+|y-y'|\big),
\end{align}
and
\begin{equation}
	\label{eq:growth-b-sigma}
	|b(x,y,u)|+|\sigma_X(x,y,u)|
	\le
	L\big(1+|x|+|y|+|u|\big).
\end{equation}
\item There exists a measurable $\rho:E\to[0,\infty)$ such that
\begin{equation}
	\label{eq:rho-mom}
	\int_E \big(\rho(z)^2+\rho(z)^{p_\star}\big)\,\nu(dz)<\infty,
\end{equation}
and
\begin{align}
	\label{eq:ell-lip}
	|\ell(y,z,u)-\ell(y',z,u)|
	&\le
	L|y-y'|\rho(z),\ \ 
	|\ell(y,z,u)|
	\le
	L(1+e^{y}+|u|)\rho(z).
\end{align}
\end{itemize}

\medskip

\noindent\textbf{Assumption 2.5.}
For $\nu$-a.e.\ $z\in E$, the mapping $u\mapsto\ell(y,z,u)$ is nonincreasing for every $y\in\mathbb{R}$ and the mapping $y\mapsto\ell(y,z,u)$ is nondecreasing for every $u\in U$. Moreover, there exist measurable functions $a:E\to(0,\infty)$ and $h:U\to(0,1]$ such that the benchmark specification
\begin{equation}
	\label{eq:benchmark-loss}
	\ell(y,z,u)=e^y a(z)h(u)
\end{equation}
is admissible, with $h$ locally Lipschitz and nonincreasing. Hence higher construction costs amplify the monetary loss generated by a catastrophe mark, whereas stronger physical hedging attenuates that amplification.

\medskip

\noindent\textbf{Assumption 2.6.}
The running cost $f:[0,T]\times\mathbb{R}^2\times U\to\mathbb{R}$ and terminal cost $g:\mathbb{R}^2\to\mathbb{R}$ are continuous and satisfy, for some $q\in[2,p_\star)$,
\begin{equation}
	\label{eq:cost-growth}
	|f(s,x,y,u)|+|g(x,y)|
	\le
	C\left(1+|x|^q+e^{q|y|}+|u|^q\right).
\end{equation}
In addition, for some $c_0>0$ and $c_1\ge0$,
\begin{equation}
	\label{eq:coercivity}
	f(s,x,y,u)\ge c_0|u|^2-c_1\left(1+|x|^2+e^{2|y|}\right).
\end{equation}
The coercivity condition represents increasing marginal scarcity costs of pre-arranged reconstruction capacity.

\medskip

\begin{remark}
The preceding assumptions are standard in the theory of jump-diffusion control, but their interpretation here is specific to catastrophe insurance. The mark $z$ records catastrophe severity, $e^{Y_s}$ records the prevailing replacement-cost environment, and $u_s$ is the insurer's physical hedge. Assumption 2.5 permits a single disaster to be economically more damaging during a construction-cost spike and less damaging when reconstruction inputs have been secured in advance.
\end{remark}

\begin{lem}[Exponential moments of the construction-cost factor]
	\label{lem:OU-moments}
	Under Assumption~2.2, for every $m\ge1$ and every $(t,y)\in[0,T]\times\mathbb{R}$,
	\begin{equation}
		\label{eq:Y-sup-mom}
		\mathbb{E}_{t,y}\left[\sup_{t\le s\le T}|Y_s|^m\right]
		\le
		C_m(1+|y|^m),
	\end{equation}
	and, for every $r>0$,
	\begin{equation}
		\label{eq:expY-mom}
		\mathbb{E}_{t,y}\left[\sup_{t\le s\le T} e^{r|Y_s|}\right]
		\le
		C_{r,T}\exp(C_{r,T}|y|).
	\end{equation}
	Consequently, every positive and negative polynomial moment of $I_s=e^{Y_s}$ is finite on $[t,T]$.
\end{lem}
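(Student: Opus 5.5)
The proof will rest on the explicit solution of the Ornstein--Uhlenbeck equation \eqref{eq:Y}. By variation of constants,
\begin{equation*}
Y_s = \vartheta + (y-\vartheta)e^{-\kappa(s-t)} + \sigma_I M_s,\qquad M_s:=\int_t^s e^{-\kappa(s-r)}\,dW_r^I ,
\end{equation*}
so that, using $e^{-\kappa(s-t)}\le 1$,
\begin{equation*}
\sup_{t\le s\le T}|Y_s|\le |\vartheta|+|y-\vartheta|+\sigma_I\sup_{t\le s\le T}|M_s|\le 2|\vartheta|+|y|+\sigma_I S,\qquad S:=\sup_{t\le s\le T}|M_s| .
\end{equation*}
Both \eqref{eq:Y-sup-mom} and \eqref{eq:expY-mom} will then follow from bounds on $S$ that do not depend on $y$.

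The process $M$ is not a martingale because the integrand depends on $s$. To remove this dependence, I will write $M_s=e^{-\kappa s}\widetilde M_s$ with $\widetilde M_s:=\int_t^s e^{\kappa r}\,dW_r^I$, which is a continuous martingale. This gives
\begin{equation*}
S\le e^{-\kappa t}\sup_{t\le s\le T}|\widetilde M_s|.
\end{equation*}
Its quadratic variation is deterministic, with $\langle \widetilde M\rangle_T\le (T-t)e^{2\kappa T}$. The BDG inequality then yields $\mathbb E[S^m]\le C_m T^{m/2}e^{m\kappa T}$ for every $m\ge1$. Combining this with $(a+b+c)^m\le 3^{m-1}(a^m+b^m+c^m)$ gives \eqref{eq:Y-sup-mom}, with $C_m$ depending on $m,\kappa,\sigma_I,\vartheta,T$.

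The exponential bound is the one step needing a bit more than BDG. Here $\widetilde M$ is a time-changed Brownian motion: $\widetilde M_s=B_{\langle\widetilde M\rangle_s}$ for a Brownian motion $B$, by Dambis--Dubins--Schwarz, or directly because its quadratic variation is deterministic. Hence
\begin{equation*}
\sup_s|\widetilde M_s|\le \sup_{0\le \tau\le \Lambda}|B_\tau|,\qquad \Lambda:=Te^{2\kappa T}.
\end{equation*}
The reflection principle gives $\mathbb E\bigl[\exp(\lambda\sup_{\tau\le\Lambda}|B_\tau|)\bigr]\le 4e^{\lambda^2\Lambda/2}<\infty$ for every $\lambda>0$. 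Alternatively, Doob's maximal inequality applied to the submartingale $e^{\lambda \widetilde M}$, together with the Gaussian moment generating function, gives the same conclusion. Therefore
\begin{equation*}
\mathbb E\Bigl[\sup_s e^{r|Y_s|}\Bigr]\le e^{r(2|\vartheta|+|y|)}\,\mathbb E\bigl[e^{r\sigma_I S}\bigr]\le C_{r,T}\,e^{r|y|},
\end{equation*}
which is \eqref{eq:expY-mom} once $C_{r,T}$ is enlarged so that $C_{r,T}\ge r$. The estimate is uniform in $t\in[0,T]$ because all constants depend only on $T$.

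Finally, for any $k\in\mathbb R$ we have $I_s^k=e^{kY_s}\le e^{|k||Y_s|}$. Applying \eqref{eq:expY-mom} with $r=|k|$ shows that $\sup_s I_s^{k}$ is integrable, which gives finiteness of all positive and negative polynomial moments of $I_s$ on $[t,T]$.
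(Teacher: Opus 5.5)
Your proof is correct. The polynomial bound \eqref{eq:Y-sup-mom} follows the paper: variation of constants, writing the stochastic convolution as $e^{-\kappa s}\int_t^s e^{\kappa r}\,dW^I_r$, and BDG. For the exponential bound \eqref{eq:expY-mom} you take a different route. The paper treats $G_s=\sigma_I\int_t^s e^{-\kappa(s-r)}\,dW^I_r$ only as a continuous centered Gaussian process on $[t,T]$. It invokes Fernique's theorem to get $\mathbb{E}[\exp(\eta\sup_s|G_s|^2)]<\infty$ for some unspecified $\eta>0$, and then uses Young's inequality $ra\le \eta a^2+r^2/(4\eta)$ to reach every rate $r>0$. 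You instead reuse the martingale $\widetilde M$ from the BDG step. Its bracket is deterministic, so it is a time-changed Brownian motion, and the reflection principle (or Doob's inequality for the submartingale $e^{\lambda\widetilde M}$) gives the explicit bound $4e^{\lambda^2\Lambda/2}$. Your argument is more elementary and gives a fully explicit constant, namely $C_{r,T}=4e^{2r|\vartheta|}\exp\bigl(r^2\sigma_I^2Te^{2\kappa T}/2\bigr)$, enlarged to be at least $r$. The price is the crude $\Lambda=Te^{2\kappa T}$, which comes from bounding $e^{-\kappa s}\le e^{-\kappa t}$ before taking the supremum. That is harmless here because the constants may depend on $T$. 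The paper's Fernique route needs only Gaussianity and path continuity, not the martingale factorization, so it would carry over to Gaussian drivers without such a factorization. It also yields the stronger square-exponential integrability directly, at the cost of a nonconstructive constant.
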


\begin{prop}[Well-defined catastrophe jump integral]
	\label{prop:jump-integral}
	Suppose Assumptions~2.1-2.5 hold and let $u\in\mathcal{U}_t$. If $Y$ is the solution of \eqref{eq:Y}, then
	\begin{equation}
		\label{eq:jump-L2}
		\mathbb{E}\left[
		\int_t^T\int_E
		|\ell(Y_s,z,u_s)|^2\,\nu(dz)\,ds\right]
		<\infty.
	\end{equation}
	Moreover, for every $p\in[2,p_\star]$,
	\begin{equation}
		\label{eq:jump-Lp}
		\mathbb{E}\left[
		\int_t^T\int_E
		|\ell(Y_s,z,u_s)|^p\,\nu(dz)\,ds\right]
		<\infty.
	\end{equation}
	Consequently,
	$
	\int_t^\cdot\int_E
	\ell(Y_{s-},z,u_s)\,\widetilde N(ds,dz)
	$
	is a well-defined c\`adl\`ag local martingale, and it is square-integrable for $p=2$.
\end{prop}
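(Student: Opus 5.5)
The plan is to bound the integrand pointwise using the growth condition \eqref{eq:ell-lip}, then integrate first in $z$ and then in $(s,\omega)$. For every $p\in[2,p_\star]$, the bound $|\ell(y,z,u)|\le L(1+e^{y}+|u|)\rho(z)$ together with the elementary inequality $(a+b+c)^p\le 3^{p-1}(a^p+b^p+c^p)$ gives
\begin{equation*}
|\ell(Y_s,z,u_s)|^p\le 3^{p-1}L^p\big(1+e^{pY_s}+|u_s|^p\big)\rho(z)^p .
\end{equation*}
Integrating in $z$ by Tonelli, it suffices to show that $\int_E\rho^p\,d\nu<\infty$ and that $\mathbb{E}\int_t^T(1+e^{pY_s}+|u_s|^p)\,ds<\infty$.

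The $z$-integral is handled by interpolation. On $\{\rho\le1\}$ we have $\rho^p\le\rho^2$, and on $\{\rho>1\}$ we have $\rho^p\le\rho^{p_\star}$. Hence $\rho^p\le\rho^2+\rho^{p_\star}$ everywhere, and \eqref{eq:rho-mom} gives $\int_E\rho^p\,d\nu<\infty$ for every $p\in[2,p_\star]$.

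The time integral splits into three terms.
\begin{itemize}
\item The term $e^{pY_s}\le e^{p|Y_s|}$ is controlled by Lemma~\ref{lem:OU-moments}: $\mathbb{E}\int_t^T e^{p|Y_s|}ds\le T\,C_{p,T}e^{C_{p,T}|y|}<\infty$.
\item The control term is controlled by Assumption~2.3, using $|u|^p\le 1+|u|^{p_\star}$ for $p\le p_\star$.
\item The constant term is trivially finite.
\end{itemize}
This yields \eqref{eq:jump-Lp}, and \eqref{eq:jump-L2} is the case $p=2$.

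For the stochastic integral, the integrand $(s,z)\mapsto\ell(Y_{s-},z,u_s)$ is $\mathcal{P}\otimes\mathcal{B}(E)$-measurable. Indeed, $Y_{s-}$ is left-continuous and adapted, hence predictable, $u$ is predictable, and $\ell$ is jointly measurable. Since $Y$ is continuous, $Y_{s-}=Y_s$, so \eqref{eq:jump-L2} holds with $Y_{s-}$ in place of $Y_s$. The standard $L^2$ theory of compensated Poisson integrals (e.g.\ Applebaum, Ikeda--Watanabe) then shows that $\int_t^\cdot\int_E\ell\,\widetilde N(ds,dz)$ is a square-integrable càdlàg martingale, which is in particular a local martingale. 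The isometry gives its second moment as the expectation in \eqref{eq:jump-L2}.

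The only delicate point is the exponential factor $e^{Y}$, which is not polynomially bounded in the state. This is exactly why the exponential moment bound \eqref{eq:expY-mom} of Lemma~\ref{lem:OU-moments} is needed, rather than a polynomial one. Otherwise the argument is routine.
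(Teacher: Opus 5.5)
Your proposal is correct and follows essentially the same route as the paper. Both arguments use the pointwise bound from the growth condition in Assumption~2.4, then Tonelli, then Lemma~\ref{lem:OU-moments} for the $e^{pY_s}$ factor, the $p_\star$-integrability of $u$, and finally standard compensated-Poisson integration theory. Your interpolation $\rho^p\le\rho^2+\rho^{p_\star}$ makes explicit a step the paper leaves implicit when it cites \eqref{eq:rho-mom} for intermediate $p$. Your remark that $Y_{s-}=Y_s$ by continuity cleanly reconciles the two forms of the integrand.
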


\begin{theorem}[Strong well-posedness of the controlled catastrophe-insurance state]
	\label{thm:strong-wellposedness}
	Let Assumptions~2.1-2.5 hold and fix $(t,x,y)\in[0,T]\times\mathbb{R}^2$ and $u\in\mathcal{U}_t$. Then Assumptions~2.2-2.4 admit a pathwise unique strong c\`adl\`ag solution
	$
	(X_s^{t,x,y;u},Y_s^{t,y})_{s\in[t,T]}.
	$
	For every $p\in[2,p_\star]$,
	\begin{equation}
		\label{eq:state-moment}
		\mathbb{E}\left[
		\sup_{t\le s\le T}|X_s^{t,x,y;u}|^p
		\right]
		\le
		C_p
		\left[
		1+|x|^p+|y|^p
		+e^{C_p|y|}
		+\mathbb{E}\left[\int_t^T|u_s|^p\,ds\right]
		\right].
	\end{equation}
	In particular, the controlled surplus cannot explode in finite time.
\end{theorem}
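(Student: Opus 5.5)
The argument is triangular. The factor $Y$ does not depend on $X$ or $u$, so I first solve \eqref{eq:Y} explicitly as an Ornstein--Uhlenbeck process, $Y_s=\vartheta+(y-\vartheta)e^{-\kappa(s-t)}+\sigma_I\int_t^s e^{-\kappa(s-r)}dW_r^I$. This gives a unique continuous strong solution, and its moments are controlled by Lemma~\ref{lem:OU-moments}. With $Y$ fixed as an exogenous continuous adapted input, the equation \eqref{eq:X-general} for $X$ is one-dimensional. Its drift and diffusion coefficients $(s,\omega,x)\mapsto b(x,Y_{s-},u_s)$ and $\sigma_X(x,Y_{s-},u_s)$ are random and predictable, and globally Lipschitz in $x$ by \eqref{eq:lipschitz-b-sigma}. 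The jump coefficient $\ell(Y_{s-},z,u_s)$ does not depend on $x$ at all. The jump integral is therefore a given process $M_s:=\int_t^s\int_E\ell(Y_{r-},z,u_r)\widetilde N(dr,dz)$, and by Proposition~\ref{prop:jump-integral} it is a square-integrable càdlàg martingale. The equation for $X$ thus becomes an SDE with Lipschitz random coefficients and an additive càdlàg forcing term.

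For existence and uniqueness I will run a Picard iteration in the space $\mathcal S^2$ of càdlàg adapted processes with $\E\sup_s|X_s|^2<\infty$. Doob's inequality for the stochastic integral, Cauchy--Schwarz for the $ds$-integral, and the Lipschitz bound give
\[
\E\sup_{t\le r\le s}|X^{n+1}_r-X^n_r|^2\le C\int_t^s\E\sup_{t\le v\le r}|X^{n}_v-X^{n-1}_v|^2\,dr .
\]
Iterating yields the factorial bound $C^n(s-t)^n/n!$, so the iterates converge in $\mathcal S^2$. For this to work, the zeroth iterate must lie in $\mathcal S^2$. That holds because $\E\int_t^T(|Y_s|^2+|u_s|^2)ds<\infty$ by Lemma~\ref{lem:OU-moments} and Assumption~2.3, and because $M$ is square-integrable. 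Pathwise uniqueness follows from the same Lipschitz estimate combined with Gronwall's lemma. Since both $M$ and the integrals are càdlàg, the solution is càdlàg.

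For the $p$-th moment bound with $p\in[2,p_\star]$, I split
\[
\sup_{t\le r\le s}|X_r|^p\le C_p\Bigl(|x|^p+\sup_{t\le r\le s}\Bigl|\int_t^r b\,dv\Bigr|^p+\sup_{t\le r\le s}\Bigl|\int_t^r\sigma_X\,dW^X\Bigr|^p+\sup_{t\le r\le s}|M_r|^p\Bigr).
\]
\begin{itemize}
\item The drift term is handled by Hölder's inequality together with \eqref{eq:growth-b-sigma}.
\item The Brownian term is handled by the BDG inequality followed by Hölder's inequality.
\item The jump term is handled by the Kunita/BDG inequality for Poisson integrals:
\[
\E\sup_{r\le T}|M_r|^p\le C_p\Bigl(\E\Bigl(\int\!\!\int|\ell|^2\nu\,ds\Bigr)^{p/2}+\E\int\!\!\int|\ell|^p\nu\,ds\Bigr).
\]
Inserting $|\ell|\le L(1+e^{Y}+|u|)\rho(z)$ and using \eqref{eq:rho-mom}, this is bounded by $C\,\E\int_t^T(1+e^{pY_s}+|u_s|^p)ds$, or by the same quantity with the power $p/2$ placed outside the time integral, which Hölder's inequality over $[t,T]$ reduces to the previous form.
\end{itemize}
Lemma~\ref{lem:OU-moments} then bounds $\E\sup e^{p|Y|}\le Ce^{C|y|}$ and $\E\sup|Y|^p\le C(1+|y|^p)$. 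Gronwall's lemma applied to $\phi(s)=\E\sup_{r\le s}|X_r|^p$ gives \eqref{eq:state-moment}, and this in turn shows $\sup|X|<\infty$ a.s., which is the non-explosion claim.

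The step I expect to be the main obstacle is making the Gronwall argument rigorous when $\phi$ is not known a priori to be finite for $p>2$. To handle it I will localize with stopping times $\tau_n=\inf\{s:|X_s|\ge n\}\wedge T$, derive the estimate for the stopped process, and pass to the limit with Fatou's lemma. A further point needs care: the cross term $\E\bigl(\int_t^T|u_s|^2ds\bigr)^{p/2}$ must be controlled by $\E\int_t^T|u_s|^p ds$. Hölder's inequality on the finite interval gives exactly this, at the cost of a constant depending on $T$.
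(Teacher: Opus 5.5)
Your proposal is correct and follows essentially the paper's route. Both arguments use the explicit Ornstein--Uhlenbeck solution for $Y$, a Picard construction for $X$, H\"older/BDG/Bichteler--Jacod estimates combined with Lemma~\ref{lem:OU-moments}, Gronwall after localization and Fatou, and pathwise uniqueness from cancellation of the $X$-independent jump term. The only difference is in existence: you run the Picard iteration globally in $\mathcal S^2$, treating the jump integral as a fixed additive forcing, whereas the paper builds solutions up to $\tau_n$ and extends them using the uniform stopped estimate and Markov's inequality. Your version is slightly cleaner, since localization is then needed only to justify Gronwall for $p>2$.
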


\begin{cor}
	\label{cor:state-integrability}
	Under the hypotheses of Theorem~\ref{thm:strong-wellposedness}, for every $p\in[2,p_\star]$,
	$
	(X^{t,x,y;u},Y^{t,y})
	\in
	\mathcal{S}^p([t,T];\mathbb{R}^2),
	$
	where
	\[
	\mathcal{S}^p([t,T];\mathbb{R}^2)
	:=
	\left\{
	Z:\,
	Z\text{ is adapted and c\`adl\`ag},\
	\mathbb{E}\left[\sup_{t\le s\le T}|Z_s|^p\right]<\infty
	\right\}.
	\]
	In particular, the objective functional
	\begin{equation}
		\label{eq:objective-foundation}
		J(t,x,y;u)
		:=
		\mathbb{E}\left[
		\int_t^T e^{-\delta(s-t)}
		f(s,X_s,Y_s,u_s)\,ds
		+
		e^{-\delta(T-t)}g(X_T,Y_T)
		\right]
	\end{equation}
	is well-defined for every $u\in\mathcal{U}_t$ satisfying
	$\mathbb{E}\left[\int_t^T|u_s|^{p_\star}ds\right]<\infty$.
\end{cor}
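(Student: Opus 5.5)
The corollary follows from Theorem~\ref{thm:strong-wellposedness}, Lemma~\ref{lem:OU-moments} and the growth bound \eqref{eq:cost-growth}. The argument splits into two parts: membership of the pair $(X,Y)$ in $\mathcal{S}^p$, and finiteness of the expectation defining $J$.

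For the first part, Theorem~\ref{thm:strong-wellposedness} already provides an adapted c\`adl\`ag solution. The $X$-component then satisfies $\mathbb{E}[\sup_s|X_s|^p]<\infty$ by \eqref{eq:state-moment}, because $\mathbb{E}\int_t^T|u_s|^p\,ds<\infty$ for $p\le p_\star$. This finiteness follows from H\"older's inequality on the finite interval $[t,T]$ together with the admissibility condition in Assumption~2.3. The $Y$-component is continuous and adapted, and its sup-moments are finite by \eqref{eq:Y-sup-mom}. Since $|(X,Y)|^p\le 2^{p/2}(|X|^p+|Y|^p)$, the pair belongs to $\mathcal{S}^p([t,T];\mathbb{R}^2)$.

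For the second part, I bound the integrand pointwise. Because $0\le e^{-\delta(s-t)}\le 1$, \eqref{eq:cost-growth} gives
\[
\int_t^T|f(s,X_s,Y_s,u_s)|\,ds+|g(X_T,Y_T)|\le C\Big[(T+1)\big(1+\sup_s|X_s|^q+\sup_s e^{q|Y_s|}\big)+\int_t^T|u_s|^q\,ds\Big].
\]
Each term on the right has finite expectation:
\begin{itemize}
\item $\mathbb{E}\sup_s|X_s|^q<\infty$ by the first part, since $q\in[2,p_\star)$;
\item $\mathbb{E}\sup_s e^{q|Y_s|}\le C_{q,T}e^{C_{q,T}|y|}$ by \eqref{eq:expY-mom} with $r=q$;
\item $\mathbb{E}\int_t^T|u_s|^q\,ds\le (T-t)^{1-q/p_\star}\big(\mathbb{E}\int_t^T|u_s|^{p_\star}ds\big)^{q/p_\star}<\infty$ by H\"older's inequality.
\end{itemize}
The integrand is therefore in $L^1(\mathbb{P})$, so $J(t,x,y;u)$ is a well-defined finite real number. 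Joint measurability of $(s,\omega)\mapsto f(s,X_s,Y_s,u_s)$ holds because $f$ is continuous and the processes are progressively measurable, so Fubini's theorem applies.

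No step is a real obstacle. The only care needed is in the first part: the moment bound \eqref{eq:state-moment} requires $\mathbb{E}\int|u|^p<\infty$ for every $p\le p_\star$, not only for $p=p_\star$. The H\"older interpolation on the bounded interval $[t,T]$ handles this.
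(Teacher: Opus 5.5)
Your proposal is correct and follows essentially the same route as the paper. The paper likewise gets $\mathcal{S}^p$ membership from Theorem~\ref{thm:strong-wellposedness} and Lemma~\ref{lem:OU-moments}, then uses \eqref{eq:cost-growth}, the state and exponential moment bounds, H\"older's inequality with $q<p_\star$, and the discount factor bounded by one; you just spell out the H\"older interpolation for the control moments and the measurability needed for Fubini, which the paper leaves implicit.
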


\begin{proof}
	The $\mathcal{S}^p$ integrability follows directly from Theorem~\ref{thm:strong-wellposedness} and Lemma~\ref{lem:OU-moments}. By Assumption~2.6,
	\[
	|f(s,X_s,Y_s,u_s)|
	+
	|g(X_T,Y_T)|
	\le
	C\left(
	1+|X_s|^q+e^{q|Y_s|}+|u_s|^q
	+|X_T|^q+e^{q|Y_T|}
	\right).
	\]
	Because $q<p_\star$, Theorem~\ref{thm:strong-wellposedness}, Lemma~\ref{lem:OU-moments}, and H\"older's inequality imply integrability of the right-hand side. Since the discount factor is bounded above by one, $J(t,x,y;u)$ is finite in absolute value.
\end{proof}

\begin{prop}
	\label{prop:initial-stability}
	Consider Assumptions~2.1-2.5. Let $(x,y),(x',y')\in\mathbb{R}^2$, and let the same control $u\in\mathcal{U}_t$ and the same noises $(W,N)$ drive the corresponding solutions. Then, for every $p\in[2,p_\star]$,
	\begin{equation}
		\label{eq:stability}
		\mathbb{E}
		\left[
		\sup_{t\le s\le T}
		\left|
		X_s^{t,x,y;u}
		-
		X_s^{t,x',y';u}
		\right|^p
		\right]
		\le
		C_p
		\left(
		|x-x'|^p
		+
		|y-y'|^p
		\right)
		\Xi_p(y,y'),
	\end{equation}
	where
	$
	\Xi_p(y,y')
	:=
	1+\exp(C_p|y|)+\exp(C_p|y'|).
	$
	Moreover,
	\begin{equation}
		\label{eq:Y-stability}
		\sup_{t\le s\le T}
		|Y_s^{t,y}-Y_s^{t,y'}|
		\le
		|y-y'|
		\qquad\text{a.s.}
	\end{equation}
\end{prop}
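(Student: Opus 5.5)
The plan is to start with the $Y$-estimate \eqref{eq:Y-stability}, because it is pathwise and everything else feeds on it. Both processes $Y^{t,y}$ and $Y^{t,y'}$ are driven by the same $W^I$, so their difference $D_s:=Y_s^{t,y}-Y_s^{t,y'}$ solves the random ODE $dD_s=-\kappa D_s\,ds$ with $D_t=y-y'$. Hence $D_s=e^{-\kappa(s-t)}(y-y')$, and since $\kappa>0$ we get $|D_s|\le|y-y'|$ for every $s$, almost surely.

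For the surplus, I set $\Delta_s:=X_s^{t,x,y;u}-X_s^{t,x',y';u}$ and write
\[
\Delta_s=(x-x')+\int_t^s\delta b_r\,dr+\int_t^s\delta\sigma_r\,dW^X_r-\int_t^s\!\int_E\delta\ell_r(z)\,\widetilde N(dr,dz).
\]
The drift difference $\delta b_r$ and diffusion difference $\delta\sigma_r$ are evaluated at the same control $u_r$. By \eqref{eq:lipschitz-b-sigma} they are bounded by $L(|\Delta_{r-}|+|D_{r-}|)\le L(|\Delta_{r-}|+|y-y'|)$. By \eqref{eq:ell-lip}, the jump amplitude difference satisfies $|\delta\ell_r(z)|\le L|D_{r-}|\rho(z)\le L|y-y'|\rho(z)$, and this bound does not involve $\Delta$.

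I then take $\sup_{t\le r\le s}|\Delta_r|^p$ and estimate each term.
\begin{itemize}
\item The drift term is handled by H\"older's inequality.
\item The Brownian term is handled by Burkholder--Davis--Gundy.
\item The compensated jump term is handled by the Kunita (BDG-type) inequality for Poisson integrals, valid for $p\ge2$:
\[
\mathbb{E}\sup\Big|\int\!\!\int\delta\ell\,d\widetilde N\Big|^p\le C_p\,\mathbb{E}\Big[\Big(\int\!\!\int|\delta\ell|^2\nu\,dr\Big)^{p/2}+\int\!\!\int|\delta\ell|^p\nu\,dr\Big].
\]
Inserting the bound on $\delta\ell$, this is at most $C_p|y-y'|^p\big(\int\rho^2\,d\nu\big)^{p/2}+C_p|y-y'|^p\int\rho^p\,d\nu$. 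Both integrals are finite by \eqref{eq:rho-mom}, since $\rho^p\le\rho^2+\rho^{p_\star}$ for $p\in[2,p_\star]$.
\end{itemize}
Collecting the three terms gives
\[
\phi(s):=\mathbb{E}\sup_{r\le s}|\Delta_r|^p\le C_p\big(|x-x'|^p+|y-y'|^p\big)+C_p\int_t^s\phi(r)\,dr .
\]
Before applying Gronwall I need $\phi$ to be finite, and Theorem~\ref{thm:strong-wellposedness} supplies this. Gronwall then yields \eqref{eq:stability}, in fact with a deterministic constant; since $\Xi_p\ge1$, the stated bound follows a fortiori.

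The step I expect to need the most care is the use of the Lipschitz bound on $\ell$ in \eqref{eq:ell-lip}. It is uniform in $y$, but the benchmark $e^ya(z)h(u)$ only satisfies a local version, $|e^y-e^{y'}|\le e^{\max(y,y')}|y-y'|$. The exponential factor $\Xi_p(y,y')$ in the statement is there to absorb exactly this. To cover that case I would bound the jump difference by $L e^{|y|+|y'|}|y-y'|\rho(z)$, using the pathwise bound \eqref{eq:Y-stability} together with the mean reversion of $Y$ and Lemma~\ref{lem:OU-moments}. The random factor is then handled via H\"older's inequality, which produces the factor $1+e^{C_p|y|}+e^{C_p|y'|}$. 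The delicate point in that variant is choosing the H\"older exponents so that only moments of $\rho$ up to order $p_\star$ are needed. I would place all the $e^{Y}$ weight on the OU side, where every exponential moment is available, and keep $\rho$ at order $p$.
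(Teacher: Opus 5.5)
Your argument is correct and is essentially the paper's proof. The paper likewise gets $Y^{t,y}_s-Y^{t,y'}_s=(y-y')e^{-\kappa(s-t)}$ from the explicit OU solution, uses the Lipschitz bounds, applies H\"older, BDG and Bichteler--Jacod (your Kunita inequality) with Gronwall to obtain a deterministic constant, and concludes because $\Xi_p\ge 1$. It also remarks, as you do, that $\Xi_p$ is kept only for extensions where $\ell$ is locally Lipschitz in $e^Y$; in that variant the weight is the random factor $e^{\max(Y^{t,y}_r,Y^{t,y'}_r)}$ rather than a deterministic $e^{|y|+|y'|}$, and the variant is not needed under Assumption~2.4 as stated.
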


\begin{rmk}
	The above discussions estimate control perturbations of the initial insurance state.  For dynamic programming one also needs stability under perturbations of the hedging policy, a restart property at stopping times, and a measurable way of choosing nearly optimal continuation policies.  We impose the following additional regularity.  It is economically feasible that small changes in the amount of pre-contracted reconstruction capacity should not produce discontinuous changes in premium income, diffusion exposure, or catastrophe-loss attenuation.
	\end{rmk}

\medskip

\noindent\textbf{Assumption 2.7 (Regularity in the physical-hedging control).}
There exists $L_u>0$ such that, for all $(x,y)\in\mathbb{R}^2$, $u,v\in U$, and $\nu$-a.e.\ $z\in E$,
\[
|b(x,y,u)-b(x,y,v)|
+
|\sigma_X(x,y,u)-\sigma_X(x,y,v)|
\le L_u|u-v|,
\]
and
\[
|\ell(y,z,u)-\ell(y,z,v)|
\le L_u|u-v|\rho(z).
\]
Moreover, for every $R>0$, the restrictions of $f$ to
$[0,T]\times[-R,R]^2\times(U\cap[-R,R])$ and of $g$ to $[-R,R]^2$ are uniformly continuous.

\medskip

For $p\in[2,p_\star]$ and $u,v\in\mathcal U_t$, define
$
d_{p,t}(u,v)
:=
\left(
\mathbb E\int_t^T |u_s-v_s|^p\,ds
\right)^{1/p}.
$

\begin{lem}[Stability under perturbations of physical hedging]
	\label{lem:control-stability}
	Suppose Assumptions~2.1-2.5 and 2.7 hold.  Fix $(t,x,y)$ and let $u,v\in\mathcal U_t$.  For every $p\in[2,p_\star]$,
	\begin{equation}
		\label{eq:control-stability}
		\mathbb E\left[
		\sup_{t\le s\le T}
		\left|X_s^{t,x,y;u}-X_s^{t,x,y;v}\right|^p
		\right]
		\le C_p\,\mathbb E\left[\int_t^T|u_s-v_s|^p\,ds \right].
	\end{equation}
	Consequently, $d_{p,t}(u^n,u)\to0$ implies
	$
	X^{t,x,y;u^n}\rightarrow X^{t,x,y;u}
	$ in $\mathcal S^p([t,T]).$
\end{lem}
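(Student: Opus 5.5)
The plan is the usual synchronous-coupling argument combined with Burkholder--Davis--Gundy (BDG) type inequalities and Gronwall's lemma. Drive both solutions with the same Brownian motion $W$ and the same Poisson random measure $N$, and note that $Y=Y^{t,y}$ does not depend on the control, so it is common to both. Set $\Delta_s:=X_s^{t,x,y;u}-X_s^{t,x,y;v}$, which satisfies $\Delta_t=0$. Then
\[
\Delta_s=\int_t^s \big(b(X^u_{r-},Y_{r-},u_r)-b(X^v_{r-},Y_{r-},v_r)\big)dr+\int_t^s\big(\sigma_X(\cdot,u_r)-\sigma_X(\cdot,v_r)\big)dW^X_r-\int_t^s\!\!\int_E\big(\ell(Y_{r-},z,u_r)-\ell(Y_{r-},z,v_r)\big)\widetilde N(dr,dz).
\]
For the coefficient differences I split each one by the triangle inequality into an $x$-difference at a fixed control, which is bounded by \eqref{eq:lipschitz-b-sigma} as $L|\Delta_{r-}|$, and a control difference at a fixed state, which is bounded by Assumption~2.7 as $L_u|u_r-v_r|$. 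The jump integrand needs no $x$-term, since $\ell$ does not depend on $x$. By Assumption~2.7 it is bounded by $L_u|u_r-v_r|\rho(z)$.

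Next I would fix a localizing sequence of stopping times $\tau_n$, for example the first time either $|X^u|$ or $|X^v|$ exceeds $n$, so that every quantity is finite a priori; Theorem~\ref{thm:strong-wellposedness} already gives finiteness for $p\le p_\star$. Write $\phi(s):=\mathbb E\sup_{t\le r\le s\wedge\tau_n}|\Delta_r|^p$. The three terms are handled as follows.
\begin{itemize}
\item \textbf{Drift.} Hölder's inequality in time gives the bound $C\int_t^s\phi(r)\,dr+C\,\mathbb E\int_t^s|u_r-v_r|^p dr$.
\item \textbf{Brownian term.} BDG gives $C\,\mathbb E(\int_t^s|\cdot|^2dr)^{p/2}$, and Hölder's inequality in time (with $p\ge2$) turns this into the same form as the drift bound.
\item \textbf{Compensated jump term.} I use the BDG/Kunita inequality for $p\ge2$,
\[
\mathbb E\sup\Big|\int\!\!\int k\,d\widetilde N\Big|^p\le C_p\Big(\mathbb E\Big(\int\!\!\int |k|^2\nu(dz)dr\Big)^{p/2}+\mathbb E\int\!\!\int|k|^p\nu(dz)dr\Big).
\]
With $|k|\le L_u|u_r-v_r|\rho(z)$ and \eqref{eq:rho-mom}, the second piece is $\le C\,\mathbb E\int|u_r-v_r|^pdr$. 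The first piece is $\le C\,\mathbb E(\int|u_r-v_r|^2dr)^{p/2}\le C T^{p/2-1}\mathbb E\int|u_r-v_r|^pdr$.
\end{itemize}

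Combining the three bounds gives $\phi(s)\le C\int_t^s\phi(r)dr+C\,\mathbb E\int_t^T|u_r-v_r|^pdr$. Gronwall's lemma then yields the bound uniformly in $n$, and letting $n\to\infty$ with Fatou's lemma gives \eqref{eq:control-stability}. The constant depends only on $p,T,L,L_u$ and $\int_E(\rho^2+\rho^{p})d\nu$. This last integral is finite for $p\in[2,p_\star]$ by interpolating between the exponents $2$ and $p_\star$ in \eqref{eq:rho-mom}. In particular the constant does not depend on $x$ or $y$. The consequence follows immediately: $d_{p,t}(u^n,u)\to0$ makes the right-hand side vanish, which is exactly convergence in $\mathcal S^p([t,T])$.

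The main obstacle is the jump term. First, the Kunita inequality has to be applied with the correct pair of terms, a $(\cdot)^{p/2}$ term and an $L^p$ term, which is where the $\rho^{p_\star}$ moment in \eqref{eq:rho-mom} enters. Second, the state-dependence of $\ell$ through $Y$ must cancel, and it does because the same $Y$ drives both solutions. If Assumption~2.7 were only local in $u$, I would instead need a truncation together with the moment bound \eqref{eq:state-moment}. Under the stated global $L_u$ this is unnecessary.
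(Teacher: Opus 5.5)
Your proposal is correct and follows essentially the same route as the paper: the common $Y$ cancels out, the coefficient differences are split via Assumptions~2.4 and~2.7, H\"older and BDG handle the drift and Brownian terms, the Bichteler--Jacod (Kunita) inequality with the $\rho^2$- and $\rho^p$-moments handles the jump term, and Gronwall closes the estimate. Your additional localization-plus-Fatou step is harmless but not needed, because Theorem~\ref{thm:strong-wellposedness} already supplies the finiteness that Gronwall requires.
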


\begin{prop}
	\label{prop:J-continuity}
	Under Assumptions~2.1-2.7, let $u^n,u\in\mathcal U_t$ satisfy
	$d_{p_\star,t}(u^n,u)\to0$ and
	$
	\sup_{n\ge1}\mathbb E\left[\int_t^T|u_s^n|^{p_\star}\,ds\right]<\infty.
	$
	Then
	$
	J(t,x,y;u^n)\rightarrow J(t,x,y;u).
	$
\end{prop}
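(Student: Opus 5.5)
The plan is to prove convergence in probability of the integrand, then upgrade it to convergence of expectations via uniform integrability (Vitali). Write $X^n:=X^{t,x,y;u^n}$ and $X:=X^{t,x,y;u}$. Both share the same $Y=Y^{t,y}$, because the factor is uncontrolled. Then
\[
J(t,x,y;u^n)-J(t,x,y;u)=\mathbb E\Big[\int_t^T e^{-\delta(s-t)}\big(f(s,X^n_s,Y_s,u^n_s)-f(s,X_s,Y_s,u_s)\big)ds+e^{-\delta(T-t)}\big(g(X^n_T,Y_T)-g(X_T,Y_T)\big)\Big].
\]
By Lemma~\ref{lem:control-stability} with $p=p_\star$, $X^n\to X$ in $\mathcal S^{p_\star}([t,T])$. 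In particular $\sup_s|X^n_s-X_s|\to0$ in probability. Also $u^n\to u$ in $L^{p_\star}(d\mathbb P\otimes ds)$, hence in $d\mathbb P\otimes ds$-measure.

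\textbf{Step 1 (convergence in measure).} I would show that $f(s,X^n_s,Y_s,u^n_s)\to f(s,X_s,Y_s,u_s)$ in $d\mathbb P\otimes ds$-measure on $\Omega\times[t,T]$, and that $g(X^n_T,Y_T)\to g(X_T,Y_T)$ in probability. The argument is a localization. Fix $\varepsilon>0$ and choose $R$ large so that the set $\{|X_s|+|Y_s|+|u_s|>R/2\}$ has $d\mathbb P\otimes ds$-measure below $\varepsilon$; this is possible by Theorem~\ref{thm:strong-wellposedness}, Lemma~\ref{lem:OU-moments} and $u\in\mathcal U_t$ via Markov's inequality. Off this set, and off the small set where $|X^n_s-X_s|+|u^n_s-u_s|>\min(\eta,R/2)$, all arguments lie in $[-R,R]^2\times(U\cap[-R,R])$. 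There the uniform continuity from Assumption~2.7, with modulus $\eta$, makes the difference smaller than $\varepsilon$. The terminal term is handled identically, using only the convergence of $X^n_T$.

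\textbf{Step 2 (uniform integrability).} By \eqref{eq:cost-growth},
\[
|f(s,X^n_s,Y_s,u^n_s)|\le C\big(1+\sup_r|X^n_r|^q+\sup_r e^{q|Y_r|}+|u^n_s|^q\big).
\]
Since $q<p_\star$, it suffices to bound a moment of order $p_\star/q>1$ uniformly in $n$.
\begin{itemize}
\item $\mathbb E\sup_r|X^n_r|^{p_\star}$ is bounded uniformly in $n$ by \eqref{eq:state-moment} together with the hypothesis $\sup_n\mathbb E\int|u^n|^{p_\star}<\infty$.
\item $e^{q|Y|}$ does not depend on $n$ and has all moments by \eqref{eq:expY-mom}.
\item $|u^n|^q$ is bounded in $L^{p_\star/q}(d\mathbb P\otimes ds)$ by the same hypothesis.
\end{itemize}
By de la Vallée-Poussin, the family is uniformly integrable on $\Omega\times[t,T]$, and likewise for the terminal term. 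Vitali's theorem then gives convergence of both expectations. The discount factor is bounded by one, so it causes no difficulty.

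\textbf{Main obstacle.} The delicate point is Step 1. The continuity hypothesis is only local-uniform, and the control is unbounded. So the localization must simultaneously control the tails of $X$, $Y$ and $u$, and also the tails of the approximants. The latter follow from the former because $X^n-X$ and $u^n-u$ are small in measure. The exponential growth in $y$ is harmless because $Y$ is common to both sides and has exponential moments of every order.
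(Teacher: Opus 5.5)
Your proposal is correct and follows essentially the same route as the paper. Lemma~\ref{lem:control-stability} gives convergence of the integrands in measure, a uniform higher-moment bound gives uniform integrability, and Vitali's theorem concludes; the paper bounds the moment of order $1+\varepsilon$ with $q(1+\varepsilon)<p_\star$, where you use order $p_\star/q$. The only difference is minor: the paper extracts a $ds\otimes d\mathbb P$-a.e.\ convergent subsequence, uses plain continuity of $f$, and closes with a subsequence-of-subsequence argument, whereas your localization with local uniform continuity gives convergence in measure of the costs directly, so Vitali applies without passing to subsequences.
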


\medskip

If $\tau$ is an $\mathbb F$-stopping time taking values in $[t,T]$, define, for $r\ge0$ with $\tau+r\le T$,
$
W_r^\tau:=W_{\tau+r}-W_\tau,
$ with $
N^\tau((0,r]\times A)
:=
N((\tau,\tau+r]\times A).$
By the strong Markov property of Brownian motion and of a Poisson random measure with deterministic compensator, conditionally on $\mathcal F_\tau$, the shifted noises have the same characteristics as the original noises and are independent of the pre-$\tau$ increments \citep{pramanik2024motivation}. For controls $u\in\mathcal U_t$ and $v\in\mathcal U_\tau$, their concatenation at $\tau$ is
$
(u\otimes_\tau v)_s
:=
u_s\mathbf 1_{\{s<\tau\}}+v_s\mathbf 1_{\{s\ge\tau\}}.
$

\begin{lem}
	\label{lem:concatenation}
	If $u\in\mathcal U_t$, $\tau$ is an $[t,T]$-valued stopping time, and $v\in\mathcal U_\tau$, then
	$u\otimes_\tau v\in\mathcal U_t$.
	More generally, if $(A_k)_{k\ge1}$ is an $\mathcal F_\tau$-measurable partition and $v^k\in\mathcal U_\tau$, then
	$
	v_s:=\sum_{k\ge1}\mathbf 1_{A_k}v_s^k,$ for all $ s\ge\tau,
	$
	is predictable after $\tau$ whenever
	\[
	\sum_{k\ge1}
	\mathbb E\!\left[
	\mathbf 1_{A_k}\int_\tau^T|v_s^k|^{p_\star}ds
	\right]<\infty.
	\]
\end{lem}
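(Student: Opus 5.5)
To prove Lemma~\ref{lem:concatenation} we check the three defining properties of $\mathcal U_t$ from Assumption~2.3 for $w:=u\otimes_\tau v$: predictability, $U$-valuedness, and the $p_\star$-integrability bound. Throughout, $v\in\mathcal U_\tau$ is read as an $\mathbb F$-predictable $U$-valued process defined on the stochastic interval $[\![\tau,T]\!]$, for instance extended by an arbitrary fixed $u_0\in U$ before $\tau$. This extension does not change any of the quantities involved.

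\emph{Predictability.} Since $\tau$ is a stopping time, the stochastic interval $[\![t,\tau[\![=\{(s,\omega):s<\tau(\omega)\}$ is predictable: its indicator $\mathbf 1_{\{s<\tau\}}$ is left-continuous and adapted, hence predictable. The complementary interval $[\![\tau,T]\!]$ is then also predictable. So $w_s=u_s\mathbf 1_{\{s<\tau\}}+v_s\mathbf 1_{\{s\ge\tau\}}$ is a sum of products of predictable processes and is predictable.

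\emph{Values and integrability.} For each $(s,\omega)$ exactly one indicator equals one, so $w_s\in U$. The integrability bound splits along the same partition:
\[
\mathbb E\int_t^T|w_s|^{p_\star}ds=\mathbb E\int_t^{\tau}|u_s|^{p_\star}ds+\mathbb E\int_\tau^T|v_s|^{p_\star}ds ,
\]
and both terms are finite because $u\in\mathcal U_t$ and $v\in\mathcal U_\tau$. Hence $w\in\mathcal U_t$.

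\emph{Countable pasting.} Each $A_k\in\mathcal F_\tau$. The key fact is that for $A\in\mathcal F_\tau$ the process $\mathbf 1_A\mathbf 1_{\{s>\tau\}}$ is predictable. It equals $\mathbf 1_{]\!]\tau_A,\infty[\![}$, where $\tau_A:=\tau\mathbf 1_A+\infty\mathbf 1_{A^c}$ is a stopping time precisely because $A\in\mathcal F_\tau$. Including the point $s=\tau$ is also fine: the graph $[\![\tau_A]\!]$ is a predictable set once we restrict to where $v$ lives, and in any case it is Lebesgue-null in $s$, so it does not affect integrals. Therefore each $\mathbf 1_{A_k}\mathbf 1_{\{s\ge\tau\}}v^k_s$ is predictable. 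The series $\sum_k\mathbf 1_{A_k}v^k_s$ has at most one nonzero term for each $\omega$, since the $A_k$ are disjoint, so it is a pointwise limit of the predictable partial sums and is therefore predictable. It takes values in $U$ because the $A_k$ partition $\Omega$. Finally, Tonelli's theorem gives
\[
\mathbb E\int_\tau^T|v_s|^{p_\star}ds=\sum_k\mathbb E\Big[\mathbf 1_{A_k}\int_\tau^T|v_s^k|^{p_\star}ds\Big]<\infty
\]
by hypothesis. So $v\in\mathcal U_\tau$, and the first part then yields $u\otimes_\tau v\in\mathcal U_t$.

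\emph{Main obstacle.} The only delicate step is the predictability of $\mathbf 1_A\mathbf 1_{\{s\ge\tau\}}$ at the closed endpoint $s=\tau$. The strict-inequality version is handled by the stopping time $\tau_A$. For the endpoint, I would note that $[\![\tau_A,\infty[\![$ is optional but not in general predictable. Two remedies are available: either use that $\{s\ge\tau\}\cap(A\times[0,T])=[\![\tau_A,T]\!]$ and combine it with the predictability of $v^k$ on $[\![\tau,T]\!]$ (as built into the definition of $\mathcal U_\tau$), or modify $v$ on the graph of $\tau$, which is a $ds\otimes d\mathbb P$-null set. The modification is harmless because controls enter the state equation and the cost only through $ds$- and $\nu(dz)ds$-integrals, so it does not alter $X$, $J$, or the norm bound.
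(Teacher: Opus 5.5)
Your outline (pasting for predictability, splitting the $p_\star$-integral at $\tau$, Tonelli over the partition) is the paper's. However, your predictability step for $u\otimes_\tau v$ is wrong.

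\textbf{The indicator is right-continuous, not left-continuous.} The map $s\mapsto\mathbf 1_{\{s<\tau\}}$ equals $1$ on $[t,\tau)$ and $0$ on $[\tau,T]$. The adapted left-continuous indicator is $\mathbf 1_{\{s\le\tau\}}$. In fact $[\![\tau,T]\!]$ is a predictable set only when $\tau$ is a predictable time, and in this model the natural stopping times are not. For example, take the first catastrophe arrival $\tau=\inf\{s\ge t:N((t,s]\times E_0)\ge1\}\wedge T$ with $0<\nu(E_0)<\infty$; it is totally inaccessible on $\{\tau<T\}$.

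\textbf{Counterexample.} With $u\equiv a$, $v\equiv b$ and $a\neq b$ in $U$, the process $(u\otimes_\tau v)_s=a+(b-a)\mathbf 1_{\{s\ge\tau\}}$ is not predictable. So under the convention $\mathbf 1_{\{s<\tau\}}$, $\mathbf 1_{\{s\ge\tau\}}$ the first assertion fails in general. The paper's one-line appeal to ``standard stability under pasting'' glosses over the same point.

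\textbf{Internal inconsistency.} Your last paragraph concedes this for $A=\Omega$ (``$[\![\tau_A,\infty[\![$ is optional but not in general predictable''), which contradicts your first paragraph. The closed-endpoint problem is therefore not confined to the countable pasting. For the same reason the graph $[\![\tau_A]\!]$ is not a predictable set. Hence your claim that $\mathbf 1_{A_k}\mathbf 1_{\{s\ge\tau\}}v^k_s$ is predictable is also unjustified.

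\textbf{The repair.} Paste with the predictable convention $\tilde w_s:=u_s\mathbf 1_{\{s\le\tau\}}+v_s\mathbf 1_{\{s>\tau\}}$. This is predictable because $\mathbf 1_{[\![0,\tau]\!]}$ is adapted and left-continuous. Likewise use $\mathbf 1_{A_k}\mathbf 1_{\{s>\tau\}}v^k_s=\mathbf 1_{]\!]\tau_{A_k},\infty[\![}v^k_s$ in the partition step, which is exactly what the paper's proof does there. Your integrability and Tonelli computations are unaffected, since $[\![\tau]\!]$ is $ds\otimes d\mathbb P$-null.

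\textbf{Your two remedies.} The first does not work. Under your own reading, $v^k$ is predictable on all of $[0,T]\times\Omega$, and multiplying by the non-predictable indicator of $[\![\tau_{A_k},T]\!]$ does not give a predictable process (take $v^k$ constant).

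The second remedy is the right one, but its justification is inaccurate. The control enters $X$ not only through $ds$- and $\nu(dz)\,ds$-integrals but also through the integral against $N(ds,dz)$. That integral does charge $[\![\tau]\!]$ when $\tau$ is a catastrophe time, so switching the control at $\tau$ itself would change the jump at $\tau$. What is true is that two \emph{predictable} controls agreeing $ds\otimes d\mathbb P$-a.e.\ give indistinguishable states. This follows from Lemma~\ref{lem:control-stability}, or from the compensator identity $\mathbb E\int\!\int H\,N(ds,dz)=\mathbb E\int\!\int H\,\nu(dz)\,ds$ for predictable $H\ge0$. So the concatenation should be \emph{defined} with value $u_\tau$ at $s=\tau$ from the outset, rather than obtained by modifying a non-predictable process.
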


\begin{proof}
	Predictability of $u\otimes_\tau v$ follows from the standard stability of predictable processes under stopping and pasting at a stopping time.  Its integrability follows from
	\[
	\mathbb E\left[\int_t^T|(u\otimes_\tau v)_s|^{p_\star}ds\right]
	=
	\mathbb E\left[\int_t^\tau|u_s|^{p_\star}ds\right]
	+
	\mathbb E\left[\int_\tau^T|v_s|^{p_\star}ds\right]
	<\infty.
	\]
	For the partitioned control, $\mathbf 1_{A_k}\mathbf 1_{\{s>\tau\}}v_s^k$ is predictable for each $k$, and the pointwise sum is predictable because the $A_k$ form a disjoint partition.  Tonelli's theorem gives
	\[
	\mathbb E\left[\int_\tau^T|v_s|^{p_\star}ds\right]
	=
	\sum_{k\ge1}
	\mathbb E\!\left[
	\mathbf 1_{A_k}\int_\tau^T|v_s^k|^{p_\star}ds
	\right],
	\]
	which is finite by hypothesis.
\end{proof}

\begin{prop}
	\label{prop:flow}
	Let Assumptions~2.1-2.7 hold, let $\tau$ be an $[t,T]$-valued stopping time, and let
	$u\in\mathcal U_t$.  Then, up to indistinguishability, the post-$\tau$ trajectory satisfies
	\begin{equation}
		\label{eq:flow}
		(X_s^{t,x,y;u},Y_s^{t,y})
		=
		\left(
		X_s^{\tau,X_\tau^{t,x,y;u},Y_\tau^{t,y};\,u},
		Y_s^{\tau,Y_\tau^{t,y}}
		\right),
		\qquad \tau\le s\le T,
	\end{equation}
	where the system on the right is understood with the shifted noises $(W^\tau,N^\tau)$ and the restriction of $u$ to $[\tau,T]$.
	If $v\in\mathcal U_\tau$, then the same identity holds with $u\otimes_\tau v$ on the left and $v$ in the restarted system on the right.
\end{prop}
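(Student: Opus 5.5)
The plan is to reduce the flow identity to pathwise uniqueness (Theorem~\ref{thm:strong-wellposedness}), applied conditionally on $\mathcal F_\tau$ and on the shifted time axis. Set $\xi:=X_\tau^{t,x,y;u}$ and $\eta:=Y_\tau^{t,y}$; both are $\mathcal F_\tau$-measurable and lie in $L^p$ by Theorem~\ref{thm:strong-wellposedness} and Lemma~\ref{lem:OU-moments}. Work on the time-shifted filtration $\mathcal G_r:=\mathcal F_{\tau+r}$, $r\in[0,T-\tau]$. By the strong Markov property recalled just before the statement, $W^\tau$ is a $\mathbb G$-Brownian motion and $N^\tau$ is a $\mathbb G$-Poisson random measure with compensator $\nu(dz)\,dr$, both independent of $\mathcal G_0=\mathcal F_\tau$. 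The shifted control $\tilde u_r:=u_{\tau+r}$ is $\mathbb G$-predictable, since predictability is preserved under random time change by a stopping time, and it keeps the $p_\star$-integrability. The random horizon $T-\tau$ is handled by extending everything trivially beyond it, or by working directly on $[\tau,T]$ with stochastic intervals.

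The first key step is to write the original equation from $\tau$ onward. Subtracting the integral equations at times $s$ and $\tau$ gives, for $s\ge\tau$,
\[
X_s=\xi+\int_\tau^s b(X_{r-},Y_{r-},u_r)\,dr+\int_\tau^s\sigma_X(X_{r-},Y_{r-},u_r)\,dW^X_r-\int_\tau^s\!\int_E\ell(Y_{r-},z,u_r)\,\widetilde N(dr,dz),
\]
and the analogous identity holds for $Y$. I then check that stochastic integrals over $(\tau,s]$ against $(W,\widetilde N)$ coincide a.s. with the integrals against $(W^\tau,\widetilde N^\tau)$ of the time-shifted integrands. For simple predictable integrands this is immediate from the definitions of $W^\tau$ and $N^\tau$. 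It extends to general integrands by $L^2$ isometry together with the integrability in Proposition~\ref{prop:jump-integral}, using localization when $p>2$. This shows that $(X_{\tau+\cdot},Y_{\tau+\cdot})$ solves, on $\mathbb G$, the system started at $(\xi,\eta)$ with control $\tilde u$ and noises $(W^\tau,N^\tau)$.

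The second step is to apply uniqueness. Theorem~\ref{thm:strong-wellposedness} is stated for deterministic initial data, and I expect this to be the main obstacle. I would not route through a measurable solution map $(x,y)\mapsto X^{\tau,x,y;u}$ with substitution. Instead I would rerun the Lipschitz/BDG/Gronwall argument of that theorem directly with $\mathcal F_\tau$-measurable initial data. Take two $\mathbb G$-solutions with the same initial data $(\xi,\eta)$, localize with stopping times $\sigma_n$ so that the difference is in $\mathcal S^2$, and estimate $\mathbb E\sup_{r\le\cdot\wedge\sigma_n}|\Delta|^2$ via \eqref{eq:lipschitz-b-sigma} and \eqref{eq:ell-lip}. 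The jump term is controlled by $L^2|\Delta Y|^2\int\rho^2\,d\nu$, and $\Delta Y\equiv0$ because the OU equation has Lipschitz coefficients. Gronwall then gives $\Delta\equiv0$, and letting $n\to\infty$ yields indistinguishability, since both processes are càdlàg. Nothing in this argument requires deterministic initial data. The exponential moment $e^{C|\eta|}$ would only enter the moment bounds, not the uniqueness. This identifies the right-hand side of \eqref{eq:flow}, defined as the unique $\mathbb G$-solution from $(\xi,\eta)$, with the post-$\tau$ trajectory of the original solution.

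For the concatenated control, Lemma~\ref{lem:concatenation} gives $u\otimes_\tau v\in\mathcal U_t$. On $[t,\tau)$ the state driven by $u\otimes_\tau v$ agrees with the one driven by $u$. This follows from pathwise uniqueness on the stochastic interval $[t,\tau]$: the integrands coincide there, and the value at $\tau$ involves only $u$ on $[t,\tau]$, since $\{\tau\}$ is a $ds$-null set and a Poisson jump exactly at $\tau$ is a null event. Hence $X_\tau$ is the same for both controls. Applying the first part to the control $u\otimes_\tau v$, whose restriction to $[\tau,T]$ is $v$, gives the second claim.
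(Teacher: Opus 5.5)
Your argument for the main identity is the paper's argument. You write the equations on $[\tau,s]$, shift time by $r=\tau+a$, and conclude by pathwise uniqueness. You carry it out more carefully than the paper. The paper simply cites Theorem~\ref{thm:strong-wellposedness}, which is stated for deterministic initial data. You instead rerun the Lipschitz--Gronwall uniqueness argument with $\mathcal F_\tau$-measurable data, and you justify both the identification of the shifted stochastic integrals and the predictability of $u_{\tau+\cdot}$. Your closing sentence for the concatenated control is also exactly the paper's argument.

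The intermediate claim just before that sentence, however, is wrong as justified. A jump of $N$ exactly at $\tau$ is a null event when $\tau$ is predictable, by quasi-left-continuity of $N$. It is not a null event for a general stopping time, and general stopping times are what the DPP requires. For example, take $0<\nu(A)<\infty$; in the calibrated model $\nu$ itself is finite. Then $\tau:=\inf\{s\ge t:N((t,s]\times A)\ge1\}\wedge T$ is a stopping time at which $N$ jumps on the event $\{\tau<T\}$, which has positive probability. Restarting at a catastrophe date is in fact the economically natural case.

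With the pasting as defined, $(u\otimes_\tau v)_\tau=v_\tau$. The surplus jump at $\tau$ is then $-e^{Y_{\tau-}}a(z)h(v_\tau)$ rather than $-e^{Y_{\tau-}}a(z)h(u_\tau)$, so in general $X_\tau^{t,x,y;u\otimes_\tau v}\neq X_\tau^{t,x,y;u}$.

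The correct mechanism is predictability. When $N$ can jump at $\tau$, the time $\tau$ is not predictable, and $\mathbf 1_{\{s\ge\tau\}}$ is not a predictable process. The admissible pasting therefore puts $v$ on $\{s>\tau\}$, which is what the paper's own proof of Lemma~\ref{lem:concatenation} does for partitioned controls. With that convention the two equations have identical coefficients on $\{t\le s\le\tau\}$. Stopping both solutions at $\tau$ and applying uniqueness then gives the same $X_\tau$, with no statement about jumps at $\tau$.

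In any case this step is not needed. If you read the restarted initial condition as $X_\tau^{t,x,y;u\otimes_\tau v}$, the second assertion is just the first part applied to the control $u\otimes_\tau v$. So drop the null-event argument, or replace it by the predictable-pasting argument, rather than rely on it.
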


\begin{prop}
	\label{prop:conditional-restart}
	Let $\tau$ be as in Proposition~\ref{prop:flow}.  For every Borel functional
	$\Phi:D([0,T];\mathbb R^2)\to\mathbb R$ for which the expectations below are finite,
	\begin{equation}
		\label{eq:conditional-restart}
		\mathbb E\!\left[
		\Phi\!\left((X_s,Y_s)_{\tau\le s\le T}\right)
		\,\middle|\,\mathcal F_\tau
		\right]
		=
		\mathcal R_\Phi\!\left(
		\tau,X_\tau,Y_\tau;u^{\tau,\omega}
		\right)
		\quad\text{a.s.},
	\end{equation}
	where $\mathcal R_\Phi(r,\xi,\eta;v)$ denotes expectation of $\Phi$ under the system restarted at $(r,\xi,\eta)$ with continuation control $v$, and $u^{\tau,\omega}$ denotes the post-$\tau$ control section obtained by freezing the pre-$\tau$ history.
\end{prop}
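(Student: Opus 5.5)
The plan is to work on the canonical space and reduce the conditional-restart identity to the flow property of Proposition~\ref{prop:flow} together with a regular conditional probability argument. First realize $(\Omega,\mathcal F,\mathbb P)$ as the product of the canonical Wiener space and the canonical space of integer-valued measures on $[0,T]\times E$. This costs no generality for laws, since strong existence and pathwise uniqueness (Theorem~\ref{thm:strong-wellposedness}) make the state a measurable functional of $(x,y,u,W,N)$. On this space, every $\mathbb F$-predictable control $u$ is a predictable functional of the noise path. For fixed $\omega$, the section $u^{\tau,\omega}$ is defined by freezing the pre-$\tau$ path $\omega|_{[0,\tau(\omega)]}$ and reading $u$ as a functional of the shifted noises $(W^\tau,N^\tau)$. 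By Galmarino's test and the concatenation stability of predictable processes (as in Lemma~\ref{lem:concatenation}), the map $\omega\mapsto u^{\tau,\omega}$ is $\mathcal F_\tau$-measurable, and for $\mathbb P$-a.e.\ $\omega$ it lies in $\mathcal U_{\tau(\omega)}$. The integrability condition holds a.s.\ by Fubini, since $\mathbb E\int|u|^{p_\star}<\infty$.

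Second, invoke Proposition~\ref{prop:flow}: on $[\tau,T]$ the trajectory equals the solution restarted at $(\tau,X_\tau,Y_\tau)$, driven by $(W^\tau,N^\tau)$ and the post-$\tau$ control. By the strong Markov property of $(W,N)$ recalled before Lemma~\ref{lem:concatenation}, $(W^\tau,N^\tau)$ is independent of $\mathcal F_\tau$ and has the original law. The restarted solution can be written as $\Psi(\tau,X_\tau,Y_\tau,\omega|_{[0,\tau]},W^\tau,N^\tau)$, where $\Psi$ is jointly measurable. That joint measurability follows from Picard iteration under the Lipschitz bounds of Assumptions~2.4 and~2.7: each iterate is a measurable functional, and the iterates converge in $\mathcal S^2$. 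This gives the measurable version needed.

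Third, apply the freezing lemma for conditional expectations. If $G$ is $\mathcal F_\tau$-measurable, $H$ is independent of $\mathcal F_\tau$, and $\Phi\circ\Psi$ is integrable, then $\mathbb E[\Phi(\Psi(G,H))\mid\mathcal F_\tau]=\mathbb E[\Phi(\Psi(g,H))]|_{g=G}$. The right-hand side at $g=(\tau,X_\tau,Y_\tau,\omega|_{[0,\tau]})$ is exactly $\mathcal R_\Phi(\tau,X_\tau,Y_\tau;u^{\tau,\omega})$, which gives \eqref{eq:conditional-restart}. I will first prove this for bounded $\Phi$ via the monotone class theorem. I will then extend it to integrable $\Phi$ by truncation $\Phi^{\pm}\wedge n$ and conditional monotone convergence, using the moment bounds of Theorem~\ref{thm:strong-wellposedness} and Lemma~\ref{lem:OU-moments} to guarantee that the conditional quantities are a.s.\ finite.

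The main obstacle is the first step: making $u^{\tau,\omega}$ and $\Psi$ jointly measurable with respect to the frozen history when $\tau$ is random. I would handle this first for stopping times with finitely many values, where conditioning on $\{\tau=t_k\}$ reduces everything to deterministic restarts. For a general $\tau$, I would approximate it by the dyadic stopping times $\tau_n\downarrow\tau$. I would then pass to the limit in both sides of \eqref{eq:conditional-restart} using right-continuity of paths, the stability estimates of Proposition~\ref{prop:initial-stability} and Lemma~\ref{lem:control-stability}, and martingale convergence of $\mathbb E[\cdot\mid\mathcal F_{\tau_n}]$ to $\mathbb E[\cdot\mid\mathcal F_\tau]$. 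This limit is first taken for bounded continuous $\Phi$ and then extended by monotone class.
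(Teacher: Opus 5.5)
Your final plan is the paper's proof: condition on $\{\tau=t_k\}$ for finitely-valued $\tau$, approximate a general $\tau$ by dyadic $\tau_n\downarrow\tau$, pass to the limit, then use monotone class and truncation. The gap is the step that both you and the paper dispose of in one sentence, namely the passage to the limit on the \emph{right-hand} side.

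\textbf{What the limit argument actually gives.} For bounded continuous $\Phi$ the left side is fine. The shifted paths from $\tau_n$ converge to the one from $\tau$ by right-continuity, and backward martingale convergence along $\mathcal F_{\tau_n}\downarrow\mathcal F_\tau$ gives $L^1$ convergence. But this only shows that $\mathcal R_\Phi(\tau_n,X_{\tau_n},Y_{\tau_n};u^{\tau_n,\omega})$ converges in $L^1$ to $\mathbb E[\Phi((X_s,Y_s)_{\tau\le s\le T})\mid\mathcal F_\tau]$. Identifying that limit with $\mathcal R_\Phi(\tau,X_\tau,Y_\tau;u^{\tau,\omega})$ is the whole content of the proposition, and nothing you cite provides it.

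\textbf{Why the stability estimates do not close it.} Proposition~\ref{prop:initial-stability} and Lemma~\ref{lem:control-stability} compare solutions started at the same time, on the same space, with control differences measured in $L^p(ds\otimes d\mathbb P)$. The sections $u^{\tau_n,\omega}$ and $u^{\tau,\omega}$ start at different times. They also differ in that the noise on $(\tau,\tau_n]$ is frozen at its realized value in the first and integrated out in the second. Since $u$ is merely a predictable functional of the noise path, there is no sense in which $u^{\tau_n,\omega}\to u^{\tau,\omega}$ for fixed $\omega$.

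Repairing this along the dyadic route would need ingredients you do not mention:
\begin{itemize}
\item approximating $u$ in $L^{p_\star}(ds\otimes d\mathbb P)$ by controls depending continuously on finitely many noise increments;
\item Lipschitz $\Phi$;
\item transferring the approximation error via the fact that concatenating $\omega|_{[0,\tau]}$ with an independent copy of the noise increments reproduces the law $\mathbb P$.
\end{itemize}
That last fact is the strong Markov property of $(W,N)$ at the general time $\tau$. Once you use it, the approximation is superfluous.

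\textbf{Your Steps 1--3 already contain the right proof.} Your reason for abandoning them is misdiagnosed. Joint measurability in the frozen history is needed just as much at a deterministic restart time. ``Conditioning on $\{\tau=t_k\}$ and using independence of the increments after $t_k$'' is precisely the freezing lemma with $G=\omega|_{[0,t_k]}$. So discretizing $\tau$ removes only measurability in the restart time, which is harmless. Since independence of $(W^\tau,N^\tau)$ from $\mathcal F_\tau$ holds for every stopping time, you can proceed directly:
\begin{enumerate}
\item Construct once a map $\Psi(r,\xi,\eta,w,h)$ jointly measurable in all arguments. Take Picard iterates with a jointly measurable (Stricker--Yor type) version of the parameter-dependent stochastic integrals.
\item Note that for each parameter $\sum_k\mathbb E\sup|X^{k+1}-X^k|^2<\infty$, so the iterates converge a.s. Define $\Psi$ as the limit on the measurable convergence set. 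Convergence in $\mathcal S^2$ alone, as you wrote it, does not yield a measurable version.
\item Check that $\Psi(G,H)$ with $\mathcal F_\tau$-measurable $G$ solves the restarted equation. Pathwise uniqueness in Proposition~\ref{prop:flow} then identifies it with the post-$\tau$ trajectory.
\item Apply the freezing lemma directly at $\tau$. Your monotone-class and truncation steps finish the argument.
\end{enumerate}

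\textbf{Setting caveat.} Assumption~2.1 does not require $\mathbb F$ to be generated by $(W,N)$. Representing every predictable control as a functional of the noise path is therefore a standing interpretation of the statement, needed for $u^{\tau,\omega}$ to make sense. It is not a reduction that ``costs no generality.''
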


\medskip

\begin{rmk}
	The state process is not, in general, Markov under an arbitrary open-loop adapted control, because the future control may retain information from the entire past.  The correct Markov statement is obtained after fixing a feedback rule.
	\end{rmk}

\medskip

\noindent\textbf{Assumption 2.8.}
Let $\mathfrak A$ be the class of Borel maps
$\alpha:[0,T]\times\mathbb R^2\to U$ such that, for some constant $L_\alpha$,
$
|\alpha(s,x,y)-\alpha(s,x',y')|
\le
L_\alpha(|x-x'|+|y-y'|)
$
and
$
|\alpha(s,x,y)|
\le
L_\alpha(1+|x|+|y|).
$
For $\alpha\in\mathfrak A$, write
$u_s^\alpha:=\alpha(s,X_{s-},Y_{s-})$.

\begin{theorem}
	\label{thm:feller}
	Suppose Assumptions~2.1-2.5, 2.7, and 2.8 hold.  For each
	$\alpha\in\mathfrak A$, the controlled state
	$Z^\alpha:=(X^\alpha,Y)$ is a time-inhomogeneous strong Markov process.
	Its transition operators
	\[
	P_{t,s}^\alpha\varphi(x,y)
	:=
	\mathbb E_{t,x,y}\!\left[\varphi(X_s^\alpha,Y_s)\right],
	\qquad 0\le t\le s\le T,
	\]
	map $C_b(\mathbb R^2)$ into $C_b(\mathbb R^2)$.  Thus $Z^\alpha$ is Feller.
\end{theorem}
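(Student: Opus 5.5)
We first reduce the feedback-controlled system to an uncontrolled SDE with Lipschitz coefficients. Substituting $u_s=\alpha(s,X_{s-},Y_{s-})$, set $\tilde b(s,x,y):=b(x,y,\alpha(s,x,y))$, $\tilde\sigma(s,x,y):=\sigma_X(x,y,\alpha(s,x,y))$ and $\tilde\ell(s,x,y,z):=\ell(y,z,\alpha(s,x,y))$. Assumption~2.4 together with Assumptions~2.7 and 2.8 gives
\[
|\tilde b(s,x,y)-\tilde b(s,x',y')|+|\tilde\sigma(s,x,y)-\tilde\sigma(s,x',y')|\le (L+L_uL_\alpha)(|x-x'|+|y-y'|),
\]
and similarly $|\tilde\ell(s,x,y,z)-\tilde\ell(s,x',y',z)|\le (L+L_uL_\alpha)(|x-x'|+|y-y'|)\rho(z)$. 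The growth of $\tilde\ell$ is controlled by $(1+e^y+|x|+|y|)\rho(z)$. The $e^y$ factor is not globally Lipschitz, but it enters only through the exogenous Gaussian factor $Y$. Its Lipschitz constant in $y$ is already covered by \eqref{eq:ell-lip}.

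\textbf{Step 1: existence.} Existence and pathwise uniqueness of $(X^\alpha,Y)$ follow from the standard Picard and Gronwall argument used for Theorem~\ref{thm:strong-wellposedness}, now with $x$-dependence in the jump and control. The moment bound \eqref{eq:state-moment} persists because $|u^\alpha_s|\le L_\alpha(1+|X_{s-}|+|Y_{s-}|)$ can be absorbed via Gronwall. This also shows that $u^\alpha\in\mathcal U_t$ a posteriori.

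\textbf{Step 2: Markov property.} Take a stopping time $\tau$. The flow identity of Proposition~\ref{prop:flow} holds here with the feedback map: after $\tau$ the process solves the same feedback SDE driven by $(W^\tau,N^\tau)$ from $(X_\tau,Y_\tau)$. The control is a function of the current state alone, so the post-$\tau$ section $u^{\tau,\omega}$ in Proposition~\ref{prop:conditional-restart} depends on the past only through $(\tau,X_\tau,Y_\tau)$. By pathwise uniqueness, the solution is a measurable functional $\Psi(\tau,X_\tau,Y_\tau,W^\tau,N^\tau)$. The shifted noises are independent of $\mathcal F_\tau$ with the original law, so the freezing lemma gives
\[
\E[\varphi(Z_s)\mid\mathcal F_\tau]=P^\alpha_{\tau,s}\varphi(Z_\tau)\quad\text{on } \{\tau\le s\}.
\]
This is the strong Markov property. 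The Chapman--Kolmogorov identity follows with deterministic $\tau$.

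\textbf{Step 3: Feller property.} The main obstacle is joint measurability of $(r,\xi,\eta)\mapsto P^\alpha_{r,s}\varphi(\xi,\eta)$ at random $\tau$. We handle it by approximating $\tau$ from above by dyadic stopping times $\tau_n$ and then using the right-continuity of paths together with continuity of $P^\alpha$ in the initial state. That continuity is the Feller step. We redo Proposition~\ref{prop:initial-stability} for the feedback system: with two initial points and common noise, the differences $\Delta X$ and $\Delta Y$ satisfy a linear Gronwall inequality, because the feedback coefficients are Lipschitz in $(x,y)$. This gives $\E\sup|Z^{t,x,y}-Z^{t,x',y'}|^2\to0$ as $(x',y')\to(x,y)$. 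For $\varphi\in C_b$, convergence in probability and bounded convergence then yield continuity of $P^\alpha_{t,s}\varphi$, and boundedness is immediate. Continuity in $t$ for the dyadic argument follows from the same estimate combined with $\E|Z_{t'}-(x,y)|^2\to0$ as $t'\downarrow t$.
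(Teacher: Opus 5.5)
Your proposal is correct and takes essentially the same route as the paper. You substitute the feedback to obtain Lipschitz coefficients, derive the strong Markov property from the flow/restart identity together with pathwise uniqueness and independence of the shifted noises, and get the Feller property from the common-noise Gronwall stability estimate plus bounded convergence. Your version is in fact more careful on two points the paper glosses over. First, the feedback makes the jump amplitude depend on $x$, so the jump terms no longer cancel in the uniqueness argument of Theorem~\ref{thm:strong-wellposedness}; they need the $L^2(\nu)$ Lipschitz bound you supply. Second, you make the passage to a random $\tau$ explicit, via dyadic approximation and joint continuity of $(r,\xi,\eta)\mapsto P^\alpha_{r,s}\varphi(\xi,\eta)$, rather than delegating it to Proposition~\ref{prop:conditional-restart}.
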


\begin{cor}[Chapman-Kolmogorov identity]
	\label{cor:chapman-kolmogorov}
	Under the hypotheses of Theorem~\ref{thm:feller}, for
	$0\le t\le r\le s\le T$ and $\varphi\in B_b(\mathbb R^2)$,
	$
	P_{t,s}^\alpha\varphi
	=
	P_{t,r}^\alpha P_{r,s}^\alpha\varphi.
	$
\end{cor}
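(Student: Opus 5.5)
The plan is to derive the Chapman--Kolmogorov identity from the Markov property in Theorem~\ref{thm:feller}, applied at the deterministic intermediate time $r$, together with the restart identity of Proposition~\ref{prop:flow}. Fix $\alpha\in\mathfrak A$ and $0\le t\le r\le s\le T$, and first take $\varphi\in B_b(\mathbb R^2)$ nonnegative; the general case follows by splitting $\varphi=\varphi^+-\varphi^-$ and using linearity. The first step is to write, by the tower property,
\[
P_{t,s}^\alpha\varphi(x,y)=\mathbb E_{t,x,y}\!\left[\mathbb E\!\left[\varphi(X_s^\alpha,Y_s)\,\middle|\,\mathcal F_r\right]\right].
\]

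The second step identifies the inner conditional expectation. Under a feedback rule the control after time $r$ is $u_v=\alpha(v,X_{v-},Y_{v-})$, which depends on the past only through the current state. Hence the post-$r$ control section $u^{r,\omega}$ appearing in Proposition~\ref{prop:conditional-restart} is precisely the feedback control $\alpha$ applied to the restarted system. Proposition~\ref{prop:flow}, with $\tau\equiv r$, shows that on $[r,s]$ the trajectory coincides with the solution restarted from $(r,X_r^\alpha,Y_r)$ and driven by the shifted noises $(W^r,N^r)$. These shifted noises are independent of $\mathcal F_r$ and have the original laws. Pathwise uniqueness, which holds because $\alpha$ is Lipschitz so that Theorem~\ref{thm:strong-wellposedness} applies to the closed-loop coefficients, makes the restarted solution a measurable functional $\Psi(\xi,\eta,W^r,N^r)$ of the initial data and the noise. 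The freezing lemma then gives
\[
\mathbb E\!\left[\varphi(X_s^\alpha,Y_s)\,\middle|\,\mathcal F_r\right]=P_{r,s}^\alpha\varphi(X_r^\alpha,Y_r)\quad\text{a.s.}
\]
Taking expectations yields $P_{t,s}^\alpha\varphi(x,y)=\mathbb E_{t,x,y}[P_{r,s}^\alpha\varphi(X_r^\alpha,Y_r)]=P_{t,r}^\alpha(P_{r,s}^\alpha\varphi)(x,y)$.

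The main obstacle is a measurability issue. For the last equality to make sense for $\varphi\in B_b$ rather than $C_b$, the map $(\xi,\eta)\mapsto P_{r,s}^\alpha\varphi(\xi,\eta)$ must be Borel. The Feller property only gives continuity when $\varphi\in C_b$. I would handle this with a monotone class argument. The class $\mathcal H$ of bounded Borel $\varphi$ for which $P_{r,s}^\alpha\varphi$ is Borel and the identity holds contains $C_b(\mathbb R^2)$ by the argument above, since there $P_{r,s}^\alpha\varphi\in C_b$ by Theorem~\ref{thm:feller}. It is a vector space containing the constants, and it is closed under bounded monotone limits: Borel measurability is preserved under pointwise limits, and the identity passes to the limit by monotone convergence applied to both sides. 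Because $C_b$ is multiplicative and generates the Borel $\sigma$-field, the functional monotone class theorem gives $\mathcal H=B_b(\mathbb R^2)$.

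Finally, the relevant expectations are finite because $\varphi$ is bounded, so no moment estimates beyond existence (Theorem~\ref{thm:strong-wellposedness}) are needed.
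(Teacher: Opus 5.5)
Your proposal is correct and follows the same route as the paper. You apply the tower property at time $r$, identify $\mathbb E[\varphi(Z_s^\alpha)\mid\mathcal F_r]$ with $P_{r,s}^\alpha\varphi(Z_r^\alpha)$ via the Markov property, and then read off $P_{t,r}^\alpha P_{r,s}^\alpha\varphi$. The paper simply cites the Markov property from Theorem~\ref{thm:feller} instead of re-deriving it through Proposition~\ref{prop:flow} and the freezing lemma. It also leaves implicit the Borel measurability of $P_{r,s}^\alpha\varphi$ for $\varphi\in B_b(\mathbb R^2)$, which your monotone-class step makes explicit; that measurability would also follow directly from joint measurability of $\Psi$ by Fubini.
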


\begin{proof}
	Apply the tower property at time $r$ and then the Markov property:
	\[
	P_{t,s}^\alpha\varphi(x,y)
	=
	\mathbb E_{t,x,y}\!\left[
	\mathbb E\!\left[\varphi(Z_s^\alpha)\mid\mathcal F_r\right]
	\right]
	=
	\mathbb E_{t,x,y}\!\left[
	P_{r,s}^\alpha\varphi(Z_r^\alpha)
	\right].
	\]
	The right-hand side equals
	$P_{t,r}^\alpha P_{r,s}^\alpha\varphi(x,y)$. This completes the proof.
\end{proof}

\medskip

Uniform integrability is needed twice in the sequel. First to pass from deterministic-time decompositions to stopping-time decompositions, and second to justify limits of $\varepsilon$-optimal controls.  For $R>0$, define the bounded-moment control class
\[
\mathcal U_t(R)
:=
\left\{
u\in\mathcal U_t:
\mathbb E\left[\int_t^T|u_s|^{p_\star}ds\right]\le R
\right\}.
\]

\begin{prop}[Uniform integrability of state and cost families]
	\label{prop:UI}
	Suppose Assumptions~2.1-2.7 hold.  Fix a compact set
	$K\subset\mathbb R^2$ and $R>0$.  Then there exists $\eta>0$ such that
	\begin{equation}
		\label{eq:UI-bound}
		\sup_{\substack{(x,y)\in K\\u\in\mathcal U_t(R)}}
		\mathbb E\left[
		\sup_{t\le s\le T}|X_s^{t,x,y;u}|^{q(1+\eta)}
		+
		\sup_{t\le s\le T}e^{q(1+\eta)|Y_s^{t,y}|}
		+
		\int_t^T|u_s|^{q(1+\eta)}ds
		\right]
		<\infty.
	\end{equation}
	Consequently, the families
	$
	\left\{
	g(X_T^{t,x,y;u},Y_T^{t,y})
	\right\}_{(x,y)\in K,\,u\in\mathcal U_t(R)}
	$
	and
	$
	\left\{
	\int_t^T
	|f(s,X_s^{t,x,y;u},Y_s^{t,y},u_s)|\,ds
	\right\}_{(x,y)\in K,\,u\in\mathcal U_t(R)}
	$
	are uniformly integrable.
\end{prop}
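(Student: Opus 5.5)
Since $q<p_\star$, the plan is to choose $\eta>0$ with $q(1+\eta)\le p_\star$, for instance $\eta:=(p_\star-q)/q$, so that $p:=q(1+\eta)\in[2,p_\star]$. Each of the three terms in \eqref{eq:UI-bound} is then bounded separately, uniformly over $(x,y)\in K$ and $u\in\mathcal U_t(R)$.

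\textbf{Control term.} I would apply H\"older's inequality on $[t,T]$:
\[
\mathbb E\int_t^T|u_s|^{p}ds\le T^{1-p/p_\star}\Big(\mathbb E\int_t^T|u_s|^{p_\star}ds\Big)^{p/p_\star}\le T^{1-p/p_\star}R^{p/p_\star}.
\]

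\textbf{Replacement-cost term.} Apply \eqref{eq:expY-mom} with $r=p$. This gives $C_{p,T}\exp(C_{p,T}|y|)$, which is bounded on the compact set $K$.

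\textbf{Surplus term.} Use \eqref{eq:state-moment} of Theorem~\ref{thm:strong-wellposedness} with exponent $p$. The bound there is $C_p[1+|x|^p+|y|^p+e^{C_p|y|}+\mathbb E\int_t^T|u_s|^pds]$. The polynomial and exponential terms are bounded on $K$, and the control term is bounded by the previous step. The one point to check is that $C_p$ depends only on $p,T$ and the structural constants $L,\kappa,\sigma_I,\vartheta,\nu,\rho$, and not on $u$ or $(x,y)$. This holds because the proof of the moment estimate (BDG, Kunita's inequality for the jump integral via \eqref{eq:ell-lip} and \eqref{eq:rho-mom}, Gronwall) uses only these constants. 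Summing the three bounds gives \eqref{eq:UI-bound}.

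\textbf{Uniform integrability.} For $g$, the growth bound \eqref{eq:cost-growth} gives
\[
|g(X_T,Y_T)|^{1+\eta}\le C\big(1+|X_T|^{q(1+\eta)}+e^{q(1+\eta)|Y_T|}\big).
\]
The right-hand side has uniformly bounded expectation, so the de la Vall\'ee-Poussin criterion with $\Phi(r)=r^{1+\eta}$ yields uniform integrability. For the running cost, Jensen's inequality on $[t,T]$ gives
\[
\Big(\int_t^T|f|ds\Big)^{1+\eta}\le T^{\eta}\int_t^T|f|^{1+\eta}ds\le C\int_t^T\big(1+|X_s|^{p}+e^{p|Y_s|}+|u_s|^{p}\big)ds.
\]
This is bounded in expectation by $CT$ times the suprema controlled above, plus the control term. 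Hence the family is bounded in $L^{1+\eta}$ and therefore uniformly integrable.

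The main (mild) obstacle is the uniformity of the constant in \eqref{eq:state-moment} over the control class. If the statement of Theorem~\ref{thm:strong-wellposedness} is read as allowing $C_p$ to depend on $u$, I would briefly redo the Gronwall estimate. For the jump part, Kunita's inequality bounds $\mathbb E\sup|\int\ell\,d\widetilde N|^p$ by $C\,\mathbb E\int(1+e^{p|Y_s|}+|u_s|^p)\,ds$ times $\int(\rho^2+\rho^p)d\nu$ (using $(\int\ell^2 d\nu)^{p/2}$ with the $\rho^2$ moment). This makes the dependence on $u$ appear only through $\mathbb E\int|u|^p$.
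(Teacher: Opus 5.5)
Your proposal is correct and follows essentially the same route as the paper. Both arguments apply the moment bound of Theorem~\ref{thm:strong-wellposedness} at $p=q(1+\eta)$, Lemma~\ref{lem:OU-moments} for the exponential moment of $Y$, and H\"older's inequality for the control term, then use de la Vall\'ee-Poussin for $g$ and Jensen's inequality for the running cost.

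Choosing $\eta$ so that $q(1+\eta)=p_\star$ exactly (the paper keeps it strictly below) is harmless, since Theorem~\ref{thm:strong-wellposedness} covers the endpoint. Your remark that $C_p$ depends on $u$ only through $\mathbb E\int_t^T|u_s|^p\,ds$ makes explicit a uniformity the paper uses tacitly.
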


\medskip

To formulate measurable selection without imposing an artificial topology on individual predictable processes, we use the standard canonical-law formulation of stochastic control.  Let
\[
\Omega^\circ
:=
D([0,T];\mathbb R^2)
\times D([0,T];\mathbb R^2)
\times\mathcal M([0,T]\times E)
\times\mathcal V,
\]
where $\mathcal V$ is the space of relaxed controls on $[0,T]\times U$ endowed with the stable topology.  The strict control $u$ is embedded as
$q^u(ds,da)=ds\,\delta_{u_s}(da)$.  The coordinate processes encode the state, Brownian noise, jump measure, and control measure.  We write
$\mathfrak P(t,x,y)$ for the collection of probability laws on $\Omega^\circ$ under which the coordinates solve the controlled martingale problem associated with Assumptions~2.1-2.7, start from $(x,y)$ at time $t$, and satisfy the $p_\star$-moment admissibility condition.

\medskip

	\noindent\textbf{Assumption 2.9 (Compact relaxation structure).}
	In addition to Assumptions~2.1--2.8, suppose that:
	
	\begin{itemize}
		
		\item[(i)]
		The control set $U$ is a compact convex subset of $\mathbb R_+$.
		
		\item[(ii)]
		For every $(t,x,y,z)\in[0,T]\times\mathbb R^2\times E$, the maps
		$
		u\longmapsto b(x,y,u),\
		u\longmapsto \sigma_X^2(x,y,u),\
		u\longmapsto \ell(y,z,u),$ and 
		$
		u\longmapsto f(t,x,y,u)
		$
		are continuous. The continuity is locally uniform in $(t,x,y)$, and
		the growth bounds in Assumptions~2.4 and~2.6 hold uniformly over
		$u\in U$.
		
		\item[(iii)]
		For every compact $K\subset\mathbb R^2$,
		$
		\lim_{\delta\downarrow0}
		\sup_{\substack{(x,y)\in K\\ |u-v|\le\delta}}
		\int_E
		|\ell(y,z,u)-\ell(y,z,v)|^2\,\nu(dz)
		=
		0,
		$
		and
		\[
		\lim_{\delta\downarrow0}
		\sup_{\substack{(x,y)\in K\\ |u-v|\le\delta}}
		\int_E
		|\ell(y,z,u)-\ell(y,z,v)|^{p_\star}\,\nu(dz)
		=
		0.
		\]
		
		\item[(iv)]
		Relaxed controls are predictable kernels
		$q_s(da)\,ds$ taking values in $\mathcal P(U)$, and the relaxed
		generator is defined by
		$
		\mathcal A^{q_s}\varphi(x,y)
		:=
		\int_U
		\mathcal A^a\varphi(x,y)\,q_s(da),
		$ for all $
		\varphi\in C_c^2(\mathbb R^2).$
		
	\end{itemize}

\medskip

\begin{rmk} Assumption~2.9 is stronger than the assumptions needed for the well-posedness results above. Its purpose is specific: compactness of $U$ and continuity of the controlled characteristics permit the relaxed-control space to be compactified and allow relaxed controls to be approximated by rapidly switching strict controls. We therefore no longer assume analyticity, conditioning stability, concatenation stability, or strict-control density as primitive hypotheses. The first three properties are verified in Proposition~\ref{prop:canonical-verification}, while strict-control density is proved in Proposition~\ref{prop:chattering}. The compactness and continuity conditions above are precisely the additional structure used below to verify the analytic control correspondence and to invoke the classical chattering approximation for relaxed stochastic controls \citep{ElKarouiNguyenJeanblanc1987,ElKarouiTan2013,Zitkovic2014}. \end{rmk}

\begin{prop} \label{prop:canonical-verification} Suppose Assumptions~2.1-2.9 hold. Let $\mathfrak P(t,x,y)$ denote the canonical family of relaxed controlled laws introduced above. Then 
		\begin{enumerate}
			 \item[(a)] the graph $ \operatorname{Gr}(\mathfrak P) := \left\{ (t,x,y,P): P\in\mathfrak P(t,x,y) \right\} $ is analytic in $ [0,T]\times\mathbb R^2\times\mathfrak P(\Omega^\circ); $ 
			 \item[(b)] if $\tau$ is an $[t,T]$-valued stopping time and $P\in\mathfrak P(t,x,y)$, then a regular conditional probability distribution $P^{\tau,\omega}$ can be chosen such that $ P^{\tau,\omega} \in \mathfrak P\bigl( \tau(\omega), X_\tau(\omega), Y_\tau(\omega) \bigr) $ for $P$-a.e.\ $\omega$; 
			 \item[(c)] if $ \omega\mapsto Q_\omega \in \mathfrak P\bigl( \tau(\omega), X_\tau(\omega), Y_\tau(\omega) \bigr) $ is a universally measurable continuation kernel satisfying the admissibility moment condition, then $ P\otimes_\tau Q \in \mathfrak P(t,x,y). $ \end{enumerate} Hence, the canonical family is an analytic control correspondence stable under conditioning and concatenation. \end{prop}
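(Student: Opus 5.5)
The plan is to recast membership in $\mathfrak P(t,x,y)$ as a countable family of Borel conditions on $(t,x,y,P)$, following El Karoui--Tan and \v{Z}itkovi\'c. Compactness of $U$ in Assumption~2.9(i) makes the relaxed-control space $\mathcal V$ compact metrizable under the stable topology. Hence $\Omega^\circ$ is Polish and $\mathfrak P(\Omega^\circ)$ is Polish under the weak topology.

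\textbf{Part (a).} Fix a countable family $\{\varphi_j\}\subset C_c^2(\mathbb R^2)$ that is dense in the $C^2$ sense. Also fix a countable family of bounded continuous $\mathcal F_r$-measurable test functionals $\{\psi_k\}$ and rational times $r<s$. A law $P$ belongs to $\mathfrak P(t,x,y)$ iff the following hold:
\begin{itemize}
\item $P\circ(X_t,Y_t)^{-1}=\delta_{(x,y)}$;
\item the noise coordinates have the prescribed law, namely Brownian and Poisson with compensator $\nu(dz)\,ds$, independent of each other;
\item $E^P\bigl[\psi_k\,(M^{\varphi_j}_s-M^{\varphi_j}_r)\bigr]=0$ for all $j,k,r,s$, where $M^{\varphi}_s=\varphi(X_s,Y_s)-\int_t^s\int_U\mathcal A^a\varphi(X_u,Y_u)\,q_u(da)\,du$;
\item $E^P\int_t^T\int_U|a|^{p_\star}q_s(da)\,ds<\infty$.
\end{itemize}
Each map $(t,x,y,P)\mapsto E^P[\cdot]$ above is Borel. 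This uses the continuity in $u$ of Assumption~2.9(ii) and the uniform $\nu$-integrability of 2.9(iii), which make $\int_U\mathcal A^a\varphi\,q(da)$ jointly measurable and locally bounded. For the unbounded integrands, truncate and take monotone limits. The moment constraint is a countable increasing union of Borel sets. Hence the graph is Borel, in particular analytic. A possible subtlety is that the strong formulation with given noises must be matched with the martingale problem. I would appeal to Theorem~\ref{thm:strong-wellposedness} together with a Yamada--Watanabe type equivalence to identify the two descriptions.

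\textbf{Part (b).} Take $P\in\mathfrak P(t,x,y)$ and a regular conditional distribution $P^{\tau,\omega}$ given $\mathcal F_\tau$, which exists because $\Omega^\circ$ is Polish. I would then check each defining condition for $P^{\tau,\omega}$ for $P$-a.e.\ $\omega$. The initial condition holds because $(\tau,X_\tau,Y_\tau)$ is $\mathcal F_\tau$-measurable. The noise characteristics follow from the strong Markov property of $(W,N)$ recalled before Lemma~\ref{lem:concatenation}. For the martingale identities, optional sampling gives that $M^{\varphi_j}_{\cdot\vee\tau}$ is a $P$-martingale after $\tau$. Conditioning then gives, for each fixed $(j,k,r,s)$, the identity $E^{P^{\tau,\omega}}[\psi_k(M_s-M_r)]=0$ off a null set. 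Because the family is countable, the exceptional null sets can be united. The moment condition holds a.e.\ because its $P$-expectation is finite. Here Proposition~\ref{prop:UI} gives the uniform integrability needed to justify conditioning the unbounded martingales.

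\textbf{Part (c).} Form $\bar P=P\otimes_\tau Q$ by gluing $P$ on $\mathcal F_\tau$ with the kernel $Q_\omega$ after $\tau$. This is well-defined because $Q$ is universally measurable and $Q_\omega$ charges paths agreeing with $\omega$ at $\tau$. To verify the martingale property, split $E^{\bar P}[\psi_k(M_s-M_r)]$ on $\{\tau\le r\}$, $\{r<\tau<s\}$ and $\{\tau\ge s\}$. On each piece, use either the $P$-martingale property or the $Q_\omega$-martingale property, and the tower property at $\tau$. The noise law is preserved because the conditional characteristics after $\tau$ do not depend on $\omega$. Finally, the $p_\star$-moment of $\bar P$ splits as the pre-$\tau$ contribution plus $\int E^{Q_\omega}[\cdot]\,P(d\omega)$, which is finite by hypothesis, exactly as in Lemma~\ref{lem:concatenation}.

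I expect the main obstacle to be the measurability in (a). Specifically, one must show that the unbounded integro-differential terms yield Borel functionals of $P$ and that the countable test family characterizes the martingale problem, including the independence and compensator of the jump coordinate. I would handle this by localization with stopping times $\inf\{s:|X_s|+|Y_s|>n\}$, which are measurable on Skorokhod space, and by dominated convergence using Proposition~\ref{prop:UI}.
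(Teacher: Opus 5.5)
Your proposal is correct and follows essentially the same route as the paper. For (a) both characterize $\mathfrak P(t,x,y)$ by a countable Borel family of martingale-problem identities plus initial-state and moment constraints; for (b) both use a regular conditional distribution given $\mathcal F_\tau$, optional sampling and a countable union of null sets; for (c) both glue at $\tau$ and split increments with the tower property.

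Three small caveats. First, the Yamada--Watanabe identification is unnecessary, since $\mathfrak P(t,x,y)$ is defined directly by the martingale problem, and it would not be available for genuinely relaxed laws anyway. Second, if you keep conditions on the noise coordinates, they must require $W$ and $N$ to be Brownian/Poisson relative to the full canonical filtration, not merely in law, because your appeal to the strong Markov property in (b) uses exactly that. Third, no localization or uniform-integrability input is needed: for $\varphi\in C_c^2$ the generator $\mathcal A^a\varphi$ is bounded on $\mathbb R^2\times U$.
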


\begin{prop} \label{prop:chattering} Suppose Assumptions~2.1-2.9 hold. Let $P\in\mathfrak P(t,x,y)$ be generated by an admissible relaxed control $q_s(da)\,ds$. Then there exists a sequence of strict predictable controls $(u^n)_{n\ge1}\subset\mathcal U_t$ such that their occupation measures \[ q^{u^n}(ds,da) := ds\,\delta_{u_s^n}(da) \] converge stably to $q_s(da)\,ds$ and, if $(X^n,Y^n)$ denotes the corresponding strict controlled state, \begin{equation} \label{eq:chattering-state-convergence} (X^n,Y^n) \Longrightarrow (X,Y) \qquad \text{in } D([t,T];\mathbb R^2). \end{equation} Moreover, \begin{equation} \label{eq:chattering-cost-convergence} J(t,x,y;u^n) \longrightarrow J(t,x,y;P). \end{equation} Consequently, strict controls are dense in the relaxed admissible class for the topology relevant to the value function. \end{prop}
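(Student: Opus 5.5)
The argument follows the classical chattering scheme of \citet{ElKarouiNguyenJeanblanc1987}, adapted to the nonlocal catastrophe term. Since Proposition~\ref{prop:canonical-verification} lets us realize $P$ on a filtered space carrying $(W,N)$ and the predictable kernel $q_s(da)$, we first build strict controls on that same space. Because $U$ is compact (Assumption~2.9(i)), we approximate $q$ in two stages. First, mollify in time to get kernels $q^m$ that are piecewise constant on a dyadic grid of mesh $2^{-m}$ and supported on a finite $\delta_m$-net $\{a_1,\dots,a_{k_m}\}\subset U$; the weights $\lambda_i^{m}(s)$ are taken $\mathcal F_{t_j}$-measurable on $(t_j,t_{j+1}]$, for instance by averaging $q$ over the previous cell. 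Second, on each cell $(t_j,t_{j+1}]$ we subdivide into subintervals of lengths proportional to $\lambda_i^m$ and set $u^n_s=a_i$ on the $i$th piece. This yields predictable, $U$-valued, bounded strict controls, so $u^n\in\mathcal U_t$. The standard Riemann-sum argument then gives $\int\!\int \phi(s,a)\,ds\,\delta_{u^n_s}(da)\to\int\!\int\phi(s,a)\,q_s(da)\,ds$ a.s.\ for every $\phi$ that is continuous in $a$ and bounded measurable in $s$, which is stable convergence of the occupation measures.

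Next we prove convergence of the states. Because $Y$ does not depend on the control, $Y^n=Y$, so only $X^n$ needs work. We run the same noise on the same space and compare $X^n$ with the relaxed state $X$, which solves the equation with drift $\int_U b(\cdot,a)q_s(da)$, diffusion coefficient entering through $\int\sigma_X^2 q_s(da)$, and jump compensator $\int\ell\,q_s(da)$. The diffusion and jump parts do not match pathwise for a relaxed control, so we do not aim for strong convergence. Instead we work at the level of the martingale problem. We establish tightness of $(X^n,Y,q^{u^n})$ in $D([t,T];\mathbb R^2)\times\mathcal V$ via Aldous' criterion, using the uniform moment bound of Theorem~\ref{thm:strong-wellposedness} (uniform because $U$ is bounded) together with the bound on $\int_E\rho^2\,d\nu$. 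We then identify the limit points. For $\varphi\in C_c^2$, the process
\[
\varphi(X^n_s,Y_s)-\int_t^s\!\int_U\mathcal A^a\varphi(X^n_r,Y_r)\,q^{u^n}_r(da)\,dr
\]
is a martingale, and we pass to the limit using the continuity in Assumption~2.9(ii) together with the uniform $L^2(\nu)$ continuity of $\ell$ in Assumption~2.9(iii), which controls the nonlocal part of $\mathcal A^a$. Uniqueness in law of the relaxed martingale problem then gives \eqref{eq:chattering-state-convergence}. That uniqueness follows from the Lipschitz structure (Assumptions~2.4 and~2.7), via a Yamada--Watanabe-type argument on the relaxed SDE driven by martingale measures.

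The cost convergence \eqref{eq:chattering-cost-convergence} then follows from the joint convergence in law of $(X^n,Y,q^{u^n})$. The functional $\int_t^T\!\int_U e^{-\delta(s-t)}f(s,X_s,Y_s,a)\,q_s(da)\,ds+e^{-\delta(T-t)}g(X_T,Y_T)$ is continuous in this joint topology on a set of full limit measure. For the $g$ term we note that $T$ is a.s.\ not a jump time of the limit, and for the $f$ term we use continuity of $f$ in $u$, locally uniformly. Uniform integrability comes from Proposition~\ref{prop:UI}, applicable because the $u^n$ lie in $\mathcal U_t(R)$ with $R=T\max_U|a|^{p_\star}$. Taking expectations gives $J(t,x,y;u^n)\to J(t,x,y;P)$, and the density claim follows immediately.

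I expect the main obstacle to be the identification of the limit in the jump component. The generator contains $\int_E[\varphi(x-\ell(y,z,a),y)-\varphi(x,y)+\ell\,\partial_x\varphi]\,\nu(dz)$, and this term must be continuous in $(x,y,a)$ uniformly enough to pass to the limit under stable convergence against a merely measurable time dependence. My first approach is to bound the integrand by $C\|\varphi\|_{C^2}\,\ell^2\le C(1+e^{2y})\rho(z)^2$, truncate $|Y|$ using Lemma~\ref{lem:OU-moments}, and then apply Assumption~2.9(iii) to obtain equicontinuity in $a$ on compacts. If that fails, the fallback is to replace $\varphi$ by a countable family and approximate by Lipschitz-in-$a$ functions.
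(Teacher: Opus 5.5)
Your route is the same as the paper's: chattering controls, tightness, identification of limit points through the controlled martingale problem, then Vitali for the costs. The continuity of the nonlocal generator that you single out as the main obstacle is routine. A first-order Taylor expansion and Assumption~2.9(iii) give $|\mathcal I^a\varphi-\mathcal I^{a'}\varphi|(x,y)\le C\|\varphi\|_{C^2}\bigl(\|\ell(y,\cdot,a)\|_{L^2(\nu)}+\|\ell(y,\cdot,a')\|_{L^2(\nu)}\bigr)\|\ell(y,\cdot,a)-\ell(y,\cdot,a')\|_{L^2(\nu)}$.

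\textbf{Where the argument fails.} The step that breaks is ``uniqueness in law of the relaxed martingale problem then gives'' the state convergence. The relaxed martingale problem for $(Z,q)$ has no such uniqueness. If $q_s=\delta_{a_\xi}$ for $s>s_0$, then $\xi=\mathbf 1\{X_{s_0}>c\}$ and $\xi$ an independent Bernoulli variable with the same distribution give two different solutions with the same control marginal. Your limit points solve the problem and have the right $q$-marginal, but their joint law of $(Z,q)$ need not be that of $P$, and for your construction it generally is not. The mechanism is as follows:
\begin{itemize}
\item Under $P$, the kernel $q$ is predictable in the canonical filtration, so it may be a functional of the relaxed state $X$. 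Your $u^n$ are chatterings of $q$, hence functionals of $X$, not of $X^n$.
\item The control enters the jump amplitude through $h(u)$ (and possibly $\sigma_X$). So the control value that $X^n$ sees at a catastrophe is decided by where the atom of $N$ falls inside a chattering cell.
\item In the limit this is a fresh randomization, asymptotically independent of the one that generated $X$ under $P$. Hence $X^n$ converges in law to a relaxed state $\tilde X\neq X$, and $(\tilde X,q)$ does not have the law of $(X,q)$.
\end{itemize}
Concretely, take $a_0,a_1\in U$ with $h(a_0)\neq h(a_1)$. Let $q_s=\tfrac12(\delta_{a_0}+\delta_{a_1})$ on $[t,s_0]$ and $q_s=\delta_{a_\xi}$ with $\xi=\mathbf 1\{X_{s_0}>c\}$ afterwards. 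Add a running cost that, just after $s_0$, penalizes the distance between $u$ and a continuous interpolation of $x\mapsto a_{\mathbf 1\{x>c\}}$. Then $J(t,x,y;u^n)$ does not converge to $J(t,x,y;P)$.

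\textbf{Why Yamada--Watanabe does not close it.} That argument only compares solutions whose inputs (control, orthogonal martingale measure, $U$-marked Poisson measure) have the same joint law. The limit of $\bigl(q^{u^n},\delta_{u^n_s}(da)\,dW_s,\delta_{u^n_s}(da)\,N(ds,dz)\bigr)$ is not jointly distributed with $q$ like the input that drives $X$ under $P$. Moreover, as the canonical martingale problem is written, $Z$ is not even tied to the coordinates $(W,N)$. To close the step, you would have to build the strict controls and strict noises together from a representation of $X$ under $P$ by a martingale measure and a $U$-marked Poisson measure (El Karoui--M\'el\'eard). You would need a delay that keeps $u^n$ non-anticipative, so that the limiting input coincides in law with $P$'s; pathwise uniqueness of the relaxed equation then identifies the limit.

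\textbf{How the paper handles this step.} The paper does not argue through uniqueness. After the chattering and tightness steps, it takes the convergence of the strict laws to $P$ directly from the approximation theorem of El Karoui--Nguyen--Jeanblanc-Picqu\'e, and that citation carries the whole joint-law issue. It then concludes the cost convergence by uniform integrability and Vitali, exactly as you do.
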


 \begin{cor}\label{prop:strict-relaxed-equivalence} Under Assumptions~2.1-2.9, \begin{equation} \label{eq:strict-relaxed-value} V^{\mathrm{str}}(t,x,y) = V^{\mathrm{rel}}(t,x,y), \qquad (t,x,y)\in[0,T]\times\mathbb R^2. \end{equation} \end{cor}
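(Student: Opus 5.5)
The plan is to prove the two inequalities $V^{\mathrm{rel}}\le V^{\mathrm{str}}$ and $V^{\mathrm{str}}\le V^{\mathrm{rel}}$ separately, with $V^{\mathrm{str}}(t,x,y):=\inf_{u\in\mathcal U_t}J(t,x,y;u)$ and $V^{\mathrm{rel}}(t,x,y):=\inf_{P\in\mathfrak P(t,x,y)}J(t,x,y;P)$.

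The first inequality comes from the embedding of strict controls. For $u\in\mathcal U_t$, the occupation measure $q^u(ds,da)=ds\,\delta_{u_s}(da)$ is a relaxed control. Under the law of $(X^{t,x,y;u},Y^{t,y},W,N,q^u)$, the coordinates solve the relaxed martingale problem, because $\mathcal A^{q^u_s}=\mathcal A^{u_s}$ by Assumption~2.9(iv); this follows from It\^o's formula for the strong solution of Theorem~\ref{thm:strong-wellposedness}. The $p_\star$-moment condition is inherited from $u\in\mathcal U_t$. Hence this law belongs to $\mathfrak P(t,x,y)$. Integrating $f$ against $\delta_{u_s}$ gives back $f(s,X_s,Y_s,u_s)$, so the relaxed cost of this law equals $J(t,x,y;u)$. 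Taking the infimum over $u$ yields $V^{\mathrm{rel}}\le V^{\mathrm{str}}$.

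For the reverse inequality, fix $\varepsilon>0$ and choose $P\in\mathfrak P(t,x,y)$ with $J(t,x,y;P)\le V^{\mathrm{rel}}(t,x,y)+\varepsilon$. If $V^{\mathrm{rel}}=-\infty$, choose instead $J(t,x,y;P)\le -1/\varepsilon$; this case is in fact excluded by the coercivity \eqref{eq:coercivity} together with the moment bounds. Proposition~\ref{prop:chattering} supplies strict controls $u^n\in\mathcal U_t$ with $J(t,x,y;u^n)\to J(t,x,y;P)$. Therefore
\[
V^{\mathrm{str}}(t,x,y)\le\liminf_{n\to\infty}J(t,x,y;u^n)=J(t,x,y;P)\le V^{\mathrm{rel}}(t,x,y)+\varepsilon .
\]
Letting $\varepsilon\downarrow0$ gives $V^{\mathrm{str}}\le V^{\mathrm{rel}}$.

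The only delicate point is making sure that $J(t,x,y;P)$ is the same functional that Proposition~\ref{prop:chattering} controls: the relaxed cost must be $\mathbb E^P[\int_t^T e^{-\delta(s-t)}\int_U f(s,X_s,Y_s,a)\,q_s(da)\,ds+e^{-\delta(T-t)}g(X_T,Y_T)]$, and it must be finite. Finiteness follows because $U$ is compact (Assumption~2.9(i)), so the $|u|^q$ term in \eqref{eq:cost-growth} is bounded. The remaining terms are controlled by the relaxed analogues of the moment bounds of Theorem~\ref{thm:strong-wellposedness} and Lemma~\ref{lem:OU-moments}, which hold with the same constants because the coefficients obey the growth bounds uniformly in $u\in U$. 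Given this, both inequalities are immediate, and the equality holds at every $(t,x,y)\in[0,T]\times\mathbb R^2$.
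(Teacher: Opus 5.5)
Your proposal is correct and follows the paper's argument: you embed strict controls via $q^u(ds,da)=ds\,\delta_{u_s}(da)$ to get $V^{\mathrm{rel}}\le V^{\mathrm{str}}$, then combine an $\varepsilon$-optimal relaxed law with the chattering approximation of Proposition~\ref{prop:chattering} to get $V^{\mathrm{str}}\le V^{\mathrm{rel}}+\varepsilon$. Your extra checks are harmless refinements of details the paper leaves implicit: that the strict law lies in $\mathfrak P(t,x,y)$, that the relaxed cost is finite, and that the case $V^{\mathrm{rel}}=-\infty$ is excluded.
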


\begin{proof} Every strict control defines a relaxed control through $ q^u(ds,da) = ds\,\delta_{u_s}(da), $ so $ V^{\mathrm{rel}}(t,x,y) \le V^{\mathrm{str}}(t,x,y). $ Conversely, fix $\varepsilon>0$ and choose an admissible relaxed law $P^\varepsilon$ such that $ J(t,x,y;P^\varepsilon) \le V^{\mathrm{rel}}(t,x,y)+\varepsilon. $ By Proposition~\ref{prop:chattering}, there exists a sequence of strict controls $u^n$ satisfying $ J(t,x,y;u^n) \longrightarrow J(t,x,y;P^\varepsilon). $ Therefore, $ V^{\mathrm{str}}(t,x,y) \le J(t,x,y;P^\varepsilon) \le V^{\mathrm{rel}}(t,x,y)+\varepsilon. $ Letting $\varepsilon\downarrow0$ proves \eqref{eq:strict-relaxed-value}. \end{proof}

	\begin{theorem}
		\label{thm:measurable-selection}
		Suppose Assumptions~2.1-2.9 hold and let
		\[
		V^{\mathrm{rel}}(t,x,y)
		:=
		\inf_{P\in\mathfrak P(t,x,y)}
		J(t,x,y;P),
		\]
		where
		\[
		J(t,x,y;P)
		:=
		\mathbb E^P\!\left[
		\int_t^T
		e^{-\delta(s-t)}
		\int_U
		f(s,X_s,Y_s,a)\,q_s(da)\,ds
		+
		e^{-\delta(T-t)}g(X_T,Y_T)
		\right].
		\]
		Then $V^{\mathrm{rel}}$ is lower semianalytic. Moreover, for every
		$\varepsilon>0$, there exists a universally measurable kernel
		$
		(t,x,y)
		\longmapsto
		P^\varepsilon_{t,x,y}
		$
		such that
		$
		P^\varepsilon_{t,x,y}
		\in
		\mathfrak P(t,x,y)
		$
		and
		\begin{equation}
			\label{eq:epsilon-selector}
			J(t,x,y;P^\varepsilon_{t,x,y})
			\le
			V^{\mathrm{rel}}(t,x,y)+\varepsilon
		\end{equation}
		whenever $V^{\mathrm{rel}}(t,x,y)<\infty$.
	\end{theorem}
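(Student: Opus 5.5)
The plan is to obtain lower semianalyticity from the projection theorem for analytic sets, and the $\varepsilon$-selector from the Jankov--von Neumann uniformization theorem. Both will be applied to the analytic graph $\operatorname{Gr}(\mathfrak P)$ supplied by Proposition~\ref{prop:canonical-verification}(a). Concretely, I will show three things in order: that $(t,x,y,P)\mapsto J(t,x,y;P)$ is Borel (hence lower semianalytic) on $[0,T]\times\mathbb R^2\times\mathfrak P(\Omega^\circ)$; that $V^{\mathrm{rel}}$, being an infimum of a lower semianalytic function over the sections of an analytic set, is lower semianalytic; and that a universally measurable $\varepsilon$-optimal selector exists.

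\textbf{Step 1 (measurability of $J$).} On $\Omega^\circ$ the integrand
\[
\Psi_t(\omega):=\int_t^T e^{-\delta(s-t)}\int_U f(s,X_s,Y_s,a)\,q_s(da)\,ds+e^{-\delta(T-t)}g(X_T,Y_T)
\]
is jointly Borel in $(t,\omega)$. This holds because $f$ and $g$ are continuous and the coordinate maps on $D$ and on $\mathcal V$ (stable topology) are Borel. I will truncate by setting $\Psi^{(n)}:=(\Psi\wedge n)\vee(-n)$. For bounded Borel $\Psi^{(n)}$, the map $(t,P)\mapsto\mathbb E^P[\Psi^{(n)}_t]$ is Borel by a monotone class argument, starting from products of bounded continuous functions. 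Passing to the limit in $n$ on the graph requires integrability, and I will use the growth bound \eqref{eq:cost-growth} with the moment estimates of Theorem~\ref{thm:strong-wellposedness} and Lemma~\ref{lem:OU-moments}. Since $U$ is compact by Assumption~2.9(i), $\int|u|^{p_\star}$ is uniformly bounded, so these estimates extend to relaxed laws. Alternatively, I can split $\Psi=\Psi^+-\Psi^-$ and take monotone limits of each part, which gives Borel measurability of $\mathbb E^P[\Psi^\pm_t]$ in the extended sense; on the graph $J$ is finite. Hence $J$ restricted to $\operatorname{Gr}(\mathfrak P)$ is Borel, and therefore lower semianalytic.

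\textbf{Step 2 (lower semianalyticity of $V^{\mathrm{rel}}$).} For $c\in\mathbb R$,
\[
\{V^{\mathrm{rel}}<c\}=\operatorname{proj}_{(t,x,y)}\big(\operatorname{Gr}(\mathfrak P)\cap\{J<c\}\big).
\]
The set $\{J<c\}$ is analytic, intersections of analytic sets are analytic, and projections of analytic sets along the Polish factor $\mathfrak P(\Omega^\circ)$ are analytic. This is exactly the statement that $V^{\mathrm{rel}}$ is lower semianalytic (Bertsekas--Shreve, Prop.~7.47).

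\textbf{Step 3 (selection).} I will apply Bertsekas--Shreve Prop.~7.50(b) to the lower semianalytic $J$ on the analytic set $\operatorname{Gr}(\mathfrak P)$. This yields a universally measurable $\varphi$ on $\operatorname{proj}(\operatorname{Gr}\mathfrak P)$ with graph inside $\operatorname{Gr}(\mathfrak P)$ satisfying $J(t,x,y;\varphi(t,x,y))\le V^{\mathrm{rel}}(t,x,y)+\varepsilon$ wherever $V^{\mathrm{rel}}>-\infty$, and $J\le -1/\varepsilon$ where $V^{\mathrm{rel}}=-\infty$. If one wants to avoid quoting that proposition, the construction runs as follows. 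Cover $\{V^{\mathrm{rel}}<\infty\}$ by the analytic sets
\[
A_k:=\operatorname{proj}\big(\operatorname{Gr}\cap\{J<V^{\mathrm{rel}}+\varepsilon\}\cap\{k\varepsilon\le J<(k+1)\varepsilon\}\big).
\]
Apply Jankov--von Neumann on each $A_k$, disjointify using universal measurability, and paste the resulting selectors. Setting $P^\varepsilon_{t,x,y}:=\varphi(t,x,y)$ then gives \eqref{eq:epsilon-selector}.

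\textbf{Main obstacle.} The main obstacle is Step 1, namely joint Borel measurability of $J$ in $(t,P)$ on the whole canonical space despite the unbounded, only polynomially–exponentially controlled integrand. I must make sure the moment bounds used for finiteness are uniform over relaxed laws in $\mathfrak P(t,x,y)$, and not only over strict controls. Compactness of $U$ handles the control part. For the $e^{q|Y|}$ term, Lemma~\ref{lem:OU-moments} applies directly, since $Y$ is uncontrolled. Finally, I need to note that $V^{\mathrm{rel}}>-\infty$ by the lower bound \eqref{eq:coercivity} together with the moment estimates, so that the first alternative of the selection theorem is the relevant one.
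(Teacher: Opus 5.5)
Your architecture matches the paper's. Both arguments obtain the analytic graph $\operatorname{Gr}(\mathfrak P)$ from Proposition~\ref{prop:canonical-verification}, show Borel measurability of $J$ on the graph by truncation, and get lower semianalyticity of $V^{\mathrm{rel}}$ by projecting $\operatorname{Gr}(\mathfrak P)\cap\{J<c\}$. Both then use a Jankov--von Neumann type selection.

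You differ at the selection step, and there your main route is the sounder one. The paper applies Jankov--von Neumann directly to the set of pairs with $J(\xi,P)\le V^{\mathrm{rel}}(\xi)+\varepsilon$. It claims this set is analytic via the representation $\bigcap_n\bigcup_{r\in\mathbb Q}\bigl(\{V^{\mathrm{rel}}<r\}\times\mathfrak P(\Omega^\circ)\bigr)\cap\{J<r+\varepsilon+n^{-1}\}$. That identity is false. Since $r$ is not bounded above, choosing $r>\max\{V^{\mathrm{rel}}(\xi),J(\xi,P)\}$ shows the right side contains every graph point with both values finite. In general $\{J\le V^{\mathrm{rel}}+\varepsilon\}$ lies only in the $\sigma$-algebra generated by analytic sets, because superlevel sets of a lower semianalytic function are coanalytic. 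So it need not be analytic, and Jankov--von Neumann does not apply to it.

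Bertsekas--Shreve Prop.~7.50 avoids this. It uniformizes the genuinely analytic sets $\operatorname{Gr}(\mathfrak P)\cap\{J<k\varepsilon\}$ and pastes the selectors over the bands $\{(k-1)\varepsilon\le V^{\mathrm{rel}}<k\varepsilon\}$. On each band the selected law then satisfies $J<k\varepsilon\le V^{\mathrm{rel}}+\varepsilon$. Citing it closes the step; part (a) already yields an analytically, hence universally, measurable selector. Your remark that $V^{\mathrm{rel}}>-\infty$ also covers a case the paper's argument silently ignores.

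Your hand-made fallback, however, repeats the paper's mistake. Your sets $A_k$ are built from $\{J<V^{\mathrm{rel}}+\varepsilon\}$, which is not analytic for the reason above, so Jankov--von Neumann cannot be applied to them. Band $V^{\mathrm{rel}}$, not $J$, as in the Bertsekas--Shreve construction.

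In Step~1, your split $\Psi=\Psi^+-\Psi^-$ with monotone limits is cleaner than the paper's truncation argument. The paper leans on Proposition~\ref{prop:UI}, which is proved only for strict controls. Your version gives measurability with no moment input at all. Moment bounds for relaxed laws are then needed only for finiteness of $J$ on the graph, which you correctly flag as the point still to verify.
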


\begin{cor}
	\label{cor:measurable-continuation}
	Let $\tau$ be an $[t,T]$-valued stopping time. Under the hypotheses of
	Theorem~\ref{thm:measurable-selection}, for every $\varepsilon>0$ the
	post-$\tau$ continuation law may be selected as
	$P^\varepsilon_{\tau,X_\tau,Y_\tau}$ in a universally measurable
	manner. If $P\in\mathfrak P(t,x,y)$ is the pre-$\tau$ controlled law,
	then
	$P\otimes_\tau P^\varepsilon_{\tau,X_\tau,Y_\tau}
	\in\mathfrak P(t,x,y)$, and its conditional continuation cost satisfies
	$J(\tau,X_\tau,Y_\tau;
	P^\varepsilon_{\tau,X_\tau,Y_\tau})
	\le
	V^{\mathrm{rel}}(\tau,X_\tau,Y_\tau)+\varepsilon$ a.s.
\end{cor}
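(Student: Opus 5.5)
The plan is to compose the universally measurable selector of Theorem~\ref{thm:measurable-selection} with the $\mathcal F_\tau$-measurable random point $\omega\mapsto(\tau(\omega),X_\tau(\omega),Y_\tau(\omega))$. Then Proposition~\ref{prop:canonical-verification}(c) gives the concatenation property. Finally, a conditioning argument identifies the conditional continuation cost.

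\textbf{Step 1 (measurability of the continuation kernel).} The map $\omega\mapsto(\tau,X_\tau,Y_\tau)$ is $\mathcal F_\tau$-measurable, since $X,Y$ are c\`adl\`ag and adapted. The selector $(t,x,y)\mapsto P^\varepsilon_{t,x,y}$ is universally measurable. A universally measurable map composed with a Borel-measurable map is measurable with respect to the universal completion of $\mathcal F_\tau$ under $P$. I will verify this by approximating $P^\varepsilon$ by a Borel map that agrees with it outside a null set for the image law of $(\tau,X_\tau,Y_\tau)$ under $P$. Hence $Q_\omega:=P^\varepsilon_{\tau(\omega),X_\tau(\omega),Y_\tau(\omega)}$ is a universally measurable kernel with $Q_\omega\in\mathfrak P(\tau(\omega),X_\tau(\omega),Y_\tau(\omega))$ for every $\omega$. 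The membership holds wherever $V^{\mathrm{rel}}<\infty$; where $V^{\mathrm{rel}}=\infty$, any measurable selection from the analytic graph (Jankov--von Neumann) suffices.

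\textbf{Step 2 (admissibility and concatenation).} To apply Proposition~\ref{prop:canonical-verification}(c), I must check the $p_\star$-moment condition $\mathbb E^P\big[\mathbb E^{Q_\omega}\int_\tau^T|a|^{p_\star}q_s(da)\,ds\big]<\infty$. Under Assumption~2.9(i), $U$ is compact, so this integrand is bounded by $T\sup_{a\in U}|a|^{p_\star}$ and the condition is immediate. Proposition~\ref{prop:canonical-verification}(c) then gives $P\otimes_\tau Q\in\mathfrak P(t,x,y)$.

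\textbf{Step 3 (conditional cost).} By construction of $P\otimes_\tau Q$, its regular conditional distribution given $\mathcal F_\tau$ is $Q_\omega$ for $P$-a.e.\ $\omega$, in the sense of Proposition~\ref{prop:canonical-verification}(b) and Proposition~\ref{prop:conditional-restart}. I will apply the conditional restart formula to the functional $\Phi$ equal to the discounted post-$\tau$ cost. Its integrability follows from Proposition~\ref{prop:UI} and Corollary~\ref{cor:state-integrability}, applied with the moment bounds of Lemma~\ref{lem:OU-moments} and Theorem~\ref{thm:strong-wellposedness} at the random initial point $(X_\tau,Y_\tau)$. This shows that the conditional continuation cost equals $J(\tau,X_\tau,Y_\tau;Q_\omega)$ a.s. Inequality \eqref{eq:epsilon-selector}, evaluated pointwise at $(\tau,X_\tau,Y_\tau)$, then yields the bound $\le V^{\mathrm{rel}}(\tau,X_\tau,Y_\tau)+\varepsilon$ a.s. on $\{V^{\mathrm{rel}}(\tau,X_\tau,Y_\tau)<\infty\}$; off this set the inequality is trivial.

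\textbf{Main obstacle.} The delicate point is Step 1 together with the identification in Step 3. Composing a universally measurable map with $(\tau,X_\tau,Y_\tau)$ only gives measurability up to $P$-null sets, so the concatenation must be defined on the $P$-completion. One must also ensure that the exceptional null sets arising from the selector, from the regular conditional probability, and from the restart identity can be merged into a single null set. I would handle this by first replacing the selector with a Borel version that is a.e.\ equal under the law of $(\tau,X_\tau,Y_\tau)$, and then proving the cost identity for that Borel kernel. This transfers the result, since the two versions differ only on a null set.
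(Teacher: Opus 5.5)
Your proposal is correct and follows the paper's route. You compose the universally measurable selector of Theorem~\ref{thm:measurable-selection} with $\Xi_\tau=(\tau,X_\tau,Y_\tau)$, invoke concatenation stability from Proposition~\ref{prop:canonical-verification}(c), and evaluate \eqref{eq:epsilon-selector} at $\Xi_\tau$. Your additional steps fill in details the paper leaves implicit: the $p_\star$-moment check via compactness of $U$, the null-set and Borel-version bookkeeping, and the identification of the conditional continuation cost with $J(\tau,X_\tau,Y_\tau;Q_\omega)$. That identification follows directly from $\delta_\omega\otimes_{\tau(\omega)}Q_\omega$ being the regular conditional law of $P\otimes_\tau Q$, not from Proposition~\ref{prop:conditional-restart}, which concerns strict controls on the original space.
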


\begin{proof}
	Set $\Xi_\tau:=(\tau,X_\tau,Y_\tau)$. Since $\Xi_\tau$ is measurable
	with respect to the universally completed $\sigma$-field and
	$\xi\mapsto P^\varepsilon_\xi$ is universally measurable by
	Theorem~\ref{thm:measurable-selection}, the composition
	$\omega\mapsto P^\varepsilon_{\Xi_\tau(\omega)}$ is a universally
	measurable stochastic kernel. Proposition~\ref{prop:canonical-verification}
	gives stability of $\mathfrak P$ under concatenation, hence
	$P\otimes_\tau P^\varepsilon_{\Xi_\tau}\in\mathfrak P(t,x,y)$.
	Finally, \eqref{eq:epsilon-selector} evaluated at $\Xi_\tau$ yields
	$J(\Xi_\tau;P^\varepsilon_{\Xi_\tau})
	\le V^{\mathrm{rel}}(\Xi_\tau)+\varepsilon$ a.s.
\end{proof}

\begin{prop}
	\label{prop:pre-dpp}
	Let $P\in\mathfrak P(t,x,y)$ and let $\tau$ be an $[t,T]$-valued
	stopping time. Let $P^{\tau,\omega}$ be a regular conditional
	continuation law furnished by
	Proposition~\ref{prop:canonical-verification}. Then
	$
	J(t,x,y;P)
	=
	E^P[
	\int_t^\tau e^{-\delta(s-t)}
	\int_U f(s,X_s,Y_s,a)q_s(da)\,ds
	+
	e^{-\delta(\tau-t)}
	J(\tau,X_\tau,Y_\tau;P^{\tau,\omega})]
	$.
	Consequently,
	$$
	J(t,x,y;P)
	\ge
	E^P\left[
	\int_t^\tau e^{-\delta(s-t)}
	\int_U f(s,X_s,Y_s,a)q_s(da)\,ds
	+
	e^{-\delta(\tau-t)}
	V^{\mathrm{rel}}(\tau,X_\tau,Y_\tau)\right]
	$$.
\end{prop}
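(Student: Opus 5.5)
The plan is to split the cost functional at $\tau$, condition on $\mathcal F_\tau$, and identify the conditional post-$\tau$ cost with the restarted cost $J(\tau,X_\tau,Y_\tau;P^{\tau,\omega})$.

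Write $\Gamma_t^T := \int_t^T e^{-\delta(s-t)}\int_U f(s,X_s,Y_s,a)\,q_s(da)\,ds + e^{-\delta(T-t)}g(X_T,Y_T)$. Split the running integral at $\tau$:
\[
\Gamma_t^T = \int_t^\tau e^{-\delta(s-t)}\int_U f\,q_s(da)\,ds + e^{-\delta(\tau-t)}\,\Gamma_\tau^T ,
\]
where $\Gamma_\tau^T$ is the same functional started at time $\tau$. This uses the identity $e^{-\delta(s-t)} = e^{-\delta(\tau-t)}e^{-\delta(s-\tau)}$, and the factor $e^{-\delta(\tau-t)}$ is $\mathcal F_\tau$-measurable.

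Integrability of every term comes from Proposition~\ref{prop:UI} together with the growth bound \eqref{eq:cost-growth}. The relaxed version holds because $U$ is compact under Assumption~2.9, so the $q_s$-integral of $|f|$ is controlled by the same moments. This lets me take expectations and use the tower property:
\[
J(t,x,y;P) = E^P\!\Big[\int_t^\tau\cdots\,ds + e^{-\delta(\tau-t)}\,E^P[\Gamma_\tau^T\mid\mathcal F_\tau]\Big].
\]

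The key step is to show $E^P[\Gamma_\tau^T\mid\mathcal F_\tau](\omega) = J(\tau(\omega),X_\tau(\omega),Y_\tau(\omega);P^{\tau,\omega})$ for $P$-a.e.\ $\omega$. On the canonical space, $\Gamma_\tau^T$ is a Borel functional of the post-$\tau$ coordinates (state path and control measure) and of $\tau$ itself. By definition of a regular conditional probability distribution given $\mathcal F_\tau$, the conditional expectation equals $\int \Gamma_{\tau(\omega)}^T\,dP^{\tau,\omega}$ a.s. Since $P^{\tau,\omega}$ is concentrated on paths agreeing with $\omega$ up to $\tau(\omega)$, and $P^{\tau,\omega}\in\mathfrak P(\tau(\omega),X_\tau(\omega),Y_\tau(\omega))$ by Proposition~\ref{prop:canonical-verification}(b), this integral is exactly the restarted cost. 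Proposition~\ref{prop:conditional-restart} gives the same identification in the strict-control language.

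I expect the main obstacle to be the integrability and measurability bookkeeping. A regular conditional probability represents the conditional expectation only for functionals that are nonnegative or integrable. I would first treat $f^+$ and $f^-$, and $g^\pm$, separately, where the identity holds in $[0,\infty]$. I would then use the uniform integrability bound \eqref{eq:UI-bound} to get $E^P\!\left[\int|\Gamma_\tau^T|\,dP^{\tau,\omega}\right]<\infty$, so that the two pieces may be recombined a.s. Measurability of $\omega\mapsto J(\tau,X_\tau,Y_\tau;P^{\tau,\omega})$ follows from measurability of the kernel, and this gives the equality in the statement.

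For the inequality, note that $P^{\tau,\omega}$ is admissible at $(\tau,X_\tau,Y_\tau)(\omega)$. Hence $J(\tau,X_\tau,Y_\tau;P^{\tau,\omega})\ge V^{\mathrm{rel}}(\tau,X_\tau,Y_\tau)$ pointwise a.s. Multiplying by $e^{-\delta(\tau-t)}>0$ and taking expectations gives the bound. The expectation of $V^{\mathrm{rel}}(\tau,X_\tau,Y_\tau)$ is well defined because $V^{\mathrm{rel}}$ is lower semianalytic by Theorem~\ref{thm:measurable-selection}, hence universally measurable. Its composition with $(\tau,X_\tau,Y_\tau)$ is therefore measurable for the $P$-completion. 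Its positive part is dominated by the integrable continuation cost, so the expectation exists in $[-\infty,\infty)$.
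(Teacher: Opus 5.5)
Your proposal is correct and follows the paper's own argument. You split the discounted cost at $\tau$, use the conditioning stability of Proposition~\ref{prop:canonical-verification} with the tower property and conditional Fubini to identify $E^P[\,\cdot\mid\mathcal F_\tau]$ with $J(\tau,X_\tau,Y_\tau;P^{\tau,\omega})$, take integrability from Proposition~\ref{prop:UI}, and conclude from $J(\cdot;P^{\tau,\omega})\ge V^{\mathrm{rel}}$.

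Your extra bookkeeping fills in details the paper leaves implicit: the positive/negative-part splitting, freezing $\tau$ under $P^{\tau,\omega}$, and universal measurability of $V^{\mathrm{rel}}(\tau,X_\tau,Y_\tau)$ with its expectation taken in $[-\infty,\infty)$. One caveat, shared with the paper: Proposition~\ref{prop:UI} is stated for strict controls, so the state moment bounds under a relaxed law $P$ actually require rerunning the BDG/Bichteler--Jacod estimate on the relaxed martingale problem.
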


 \begin{proof}
	Write
	$C_{t,T}=C_{t,\tau}+e^{-\delta(\tau-t)}C_{\tau,T}$ for the discounted
	cost functional. By stability under conditioning in
	Proposition~\ref{prop:canonical-verification},
	$P^{\tau,\omega}\in
	\mathfrak P(\tau(\omega),X_\tau(\omega),Y_\tau(\omega))$
	for $P$-a.e.\ $\omega$. Hence the tower property and conditional
	Fubini theorem give
	$
	E^P[C_{\tau,T}\mid\mathcal F_\tau]
	=
	J(\tau,X_\tau,Y_\tau;P^{\tau,\omega})
	$
	a.s. Integrability follows from Proposition~\ref{prop:UI}. Therefore
	the first identity holds. Since
	$J(\tau,X_\tau,Y_\tau;P^{\tau,\omega})
	\ge
	V^{\mathrm{rel}}(\tau,X_\tau,Y_\tau)$
	a.s., the inequality follows.
\end{proof}

\begin{prop}
	Suppose Assumptions~2.1-2.9 hold and let the strict-control
	approximation established by the chattering proposition above hold.
	Define
	$V^{\mathrm{str}}(t,x,y)
	:=
	\inf_{u\in\mathcal U_t}J(t,x,y;u)$
	and
	$V^{\mathrm{rel}}(t,x,y)
	:=
	\inf_{P\in\mathfrak P(t,x,y)}\\ J(t,x,y;P)$.
	Then $
		V^{\mathrm{str}}(t,x,y)
		=
		V^{\mathrm{rel}}(t,x,y),
	$ for all $
		(t,x,y)\in[0,T]\times\mathbb R^2.$
\end{prop}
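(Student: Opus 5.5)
The plan is the standard two-inequality argument used for Corollary~\ref{prop:strict-relaxed-equivalence}, with the embedding and chattering steps made explicit.

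\textbf{Lower bound.} Each strict control $u\in\mathcal U_t$ induces the occupation measure $q^u(ds,da)=ds\,\delta_{u_s}(da)$. Together with the law of $(X^{t,x,y;u},Y^{t,y},W,N)$, this gives an element $P^u\in\mathfrak P(t,x,y)$. The moment condition transfers directly, since $\int_U|a|^{p_\star}q^u_s(da)=|u_s|^{p_\star}$. The martingale-problem property follows from It\^o's formula applied to $\varphi\in C^2_c(\mathbb R^2)$, using Theorem~\ref{thm:strong-wellposedness} and Proposition~\ref{prop:jump-integral}. Because $\int_U f(s,X_s,Y_s,a)\,\delta_{u_s}(da)=f(s,X_s,Y_s,u_s)$, we have $J(t,x,y;P^u)=J(t,x,y;u)$. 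Taking the infimum over $u$ gives $V^{\mathrm{rel}}\le V^{\mathrm{str}}$.

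\textbf{Upper bound.} Assume $V^{\mathrm{rel}}(t,x,y)<\infty$; this holds, for instance, because any constant control in the compact set $U$ is admissible and has finite cost by Corollary~\ref{cor:state-integrability}. Fix $\varepsilon>0$ and choose $P^\varepsilon\in\mathfrak P(t,x,y)$ with $J(t,x,y;P^\varepsilon)\le V^{\mathrm{rel}}(t,x,y)+\varepsilon$. Theorem~\ref{thm:measurable-selection} supplies such a law, although at a single fixed point the definition of the infimum already suffices.

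Proposition~\ref{prop:chattering} then provides strict controls $u^n\in\mathcal U_t$ with $J(t,x,y;u^n)\to J(t,x,y;P^\varepsilon)$. This step needs some care, because the strict controls may live on an enlarged or canonical probability space. I would argue that, by Theorem~\ref{thm:strong-wellposedness}, the cost of a strict control depends only on the joint law of (control, noise). The value $V^{\mathrm{str}}$ is therefore unchanged if we pass to any filtered space carrying $(W,N)$ with the required independence. Hence $V^{\mathrm{str}}\le \liminf_n J(t,x,y;u^n)\le V^{\mathrm{rel}}+\varepsilon$, and letting $\varepsilon\downarrow0$ gives the reverse inequality.

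\textbf{Main obstacle.} The only delicate point is the cost convergence in the chattering step. Weak convergence in $D([t,T];\mathbb R^2)$ has to be upgraded to convergence of $J$. I would rely on the stated conclusion of Proposition~\ref{prop:chattering}. If it needs justification, I would combine three ingredients:
\begin{itemize}
\item Skorokhod representation.
\item Continuity of $f$ in $a$ under the stable topology, using Assumption~2.9(ii), which makes $\int_U f(s,x,y,a)\,q_s(da)$ a continuous functional of the relaxed control locally uniformly in the state.
\item Uniform integrability from Proposition~\ref{prop:UI}, which applies because compactness of $U$ puts all the $u^n$ in a common class $\mathcal U_t(R)$.
\end{itemize}
Terminal-time continuity of $g(X_T,Y_T)$ also holds, since a Poisson jump occurs at the fixed time $T$ with probability zero.
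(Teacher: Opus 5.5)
Your proposal is correct and follows the paper's proof: you embed each strict control through $q^u(ds,da)=ds\,\delta_{u_s}(da)$ to get $V^{\mathrm{rel}}\le V^{\mathrm{str}}$, then take an $\varepsilon$-optimal relaxed law and use the chattering conclusion $J(t,x,y;u^n)\to J(t,x,y;P^\varepsilon)$ for the reverse inequality. Your detour about strict controls living on an enlarged space is not needed, since the hypothesis already gives $(u^n)\subset\mathcal U_t$ on the original space; and the invariance of $V^{\mathrm{str}}$ under a change of filtered space would not follow from law-dependence of the cost alone, because it also requires the relevant joint laws of control and noise to be realizable, or approximable, on the original filtration.
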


\begin{proof}
	For $u\in\mathcal U_t$, let
	$q^u_s(da):=\delta_{u_s}(da)$ and denote by $P^u$ the corresponding
	canonical law. Then
	$P^u\in\mathfrak P(t,x,y)$ and
	$J(t,x,y;P^u)=J(t,x,y;u)$, whence
	$V^{\mathrm{rel}}\le V^{\mathrm{str}}$.
	Fix $\varepsilon>0$ and choose
	$P^\varepsilon\in\mathfrak P(t,x,y)$ such that
	$J(t,x,y;P^\varepsilon)
	\le V^{\mathrm{rel}}(t,x,y)+\varepsilon$.
	By the chattering approximation established above, there exists
	$(u^n)_{n\ge1}\subset\mathcal U_t$ such that the associated strict
	laws satisfy $P^{u^n}\Rightarrow P^\varepsilon$ in the canonical
	topology and
	$J(t,x,y;u^n)\to J(t,x,y;P^\varepsilon)$.
	Therefore
	$
	V^{\mathrm{str}}(t,x,y)
	\le
	\lim_{n\to\infty}J(t,x,y;u^n)
	=
	J(t,x,y;P^\varepsilon)
	\le
	V^{\mathrm{rel}}(t,x,y)+\varepsilon
	$.
	Letting $\varepsilon\downarrow0$ gives
	$V^{\mathrm{str}}\le V^{\mathrm{rel}}$, and
	\eqref{eq:strict-relaxed-value} follows.
\end{proof}

From this point onward, write
$V:=V^{\mathrm{str}}=V^{\mathrm{rel}}$.

\begin{cor}
	\label{cor:value-growth}
	Suppose Assumptions~2.1-2.9 hold and assume additionally that
	$f(t,x,y,u)\ge0$ and $g(x,y)\ge0$. Then the common value
	$V:=V^{\mathrm{str}}=V^{\mathrm{rel}}$ is finite and satisfies
	\begin{equation}
		\label{eq:value-growth}
		0
		\le
		V(t,x,y)
		\le
		C\bigl(1+|x|^q+e^{C|y|}\bigr),
		\qquad
		(t,x,y)\in[0,T]\times\mathbb R^2,
	\end{equation}
	for a constant $C>0$ independent of $(t,x,y)$. In particular,
	$V\in\mathcal G_q$, where
	$\mathcal G_q
	:=
	\{v:[0,T]\times\mathbb R^2\to\mathbb R:
	|v(t,x,y)|
	\le
	C_v(1+|x|^q+e^{C_v|y|})
	\text{ for some }C_v>0\}$.
\end{cor}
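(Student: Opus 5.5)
The lower bound is immediate. Since $f\ge0$ and $g\ge0$, every strict control $u\in\mathcal U_t$ gives $J(t,x,y;u)\ge0$, so $V^{\mathrm{str}}\ge0$. By Corollary~\ref{prop:strict-relaxed-equivalence}, the same bound holds for $V^{\mathrm{rel}}$. Finiteness and the upper bound follow from evaluating $J$ at a single, well-chosen admissible control. Because $V\le J(t,x,y;u)$ for every $u$, it suffices to exhibit one control whose cost obeys the stated growth estimate with a constant uniform in $(t,x,y)$.

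I will use the constant control $u_s\equiv u_0$, where $u_0\in U$ is fixed. Under Assumption~2.9(i), $U$ is compact and contains such a point. This control is predictable and $\mathbb E\int_t^T|u_0|^{p_\star}ds\le T|u_0|^{p_\star}$, so it belongs to $\mathcal U_t(R)$ with $R=T|u_0|^{p_\star}$, independently of $t$. The growth condition \eqref{eq:cost-growth} and the fact that the discount factor is at most one give
\[
J(t,x,y;u_0)\le C\,\mathbb E\Bigl[T\bigl(1+\sup_{s}|X_s|^q+\sup_s e^{q|Y_s|}+|u_0|^q\bigr)+1+|X_T|^q+e^{q|Y_T|}\Bigr].
\]
I then bound the state terms separately. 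For the $X$ terms, I apply Theorem~\ref{thm:strong-wellposedness} with $p=q\in[2,p_\star)$. This gives $\mathbb E\sup_s|X_s|^q\le C_q(1+|x|^q+|y|^q+e^{C_q|y|}+T|u_0|^q)$. For the $Y$ terms, I use \eqref{eq:expY-mom} with $r=q$, which gives $\mathbb E\sup_s e^{q|Y_s|}\le C_{q,T}e^{C_{q,T}|y|}$.

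The final step is to absorb the polynomial terms in $y$ into an exponential. Since $|y|^q\le C e^{|y|}$, I take $C$ to be the maximum of the finitely many constants involved. This yields $V(t,x,y)\le J(t,x,y;u_0)\le C(1+|x|^q+e^{C|y|})$, which is \eqref{eq:value-growth}. The bound $V\in\mathcal G_q$ then holds by the definition of $\mathcal G_q$ with $C_v=C$.

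The main point needing care is that the constants in Theorem~\ref{thm:strong-wellposedness} and Lemma~\ref{lem:OU-moments} are independent of the starting time $t\in[0,T]$. They depend only on $T$, on the Lipschitz and growth constants, and on the moments of $\rho$. The Gronwall and Burkholder--Davis--Gundy arguments behind them run on subintervals of $[0,T]$, so the constants can be taken uniform, for instance by bounding $t$-dependent quantities through their worst case over $[0,T]$. Lemma~\ref{lem:OU-moments} states its constant $C_{r,T}$ explicitly as depending only on $(r,T)$, so the exponential-in-$y$ term causes no difficulty. If uniformity in Theorem~\ref{thm:strong-wellposedness} were in doubt, I would redo its Gronwall step on $[t,T]\subset[0,T]$ with horizon bounded by $T$.
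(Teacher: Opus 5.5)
Your proposal is correct and follows the paper's proof essentially step for step. The lower bound comes from $f,g\ge0$. The upper bound comes from evaluating $J$ at a constant control $\bar u\in U$, admissible by compactness of $U$, then applying the growth bound of Assumption~2.6, Theorem~\ref{thm:strong-wellposedness} with $p=q$, and Lemma~\ref{lem:OU-moments}, and finally absorbing $|y|^q$ into $e^{C|y|}$. Your remark that the constants are uniform in the starting time $t$ makes explicit a point the paper leaves implicit.
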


\begin{proof}
	Since $f,g\ge0$, $V\ge0$. Fix $\bar u\in U$ and set
	$u_s\equiv\bar u$. Compactness of $U$ implies
	$\int_t^T|\bar u|^{p_\star}ds<\infty$, hence
	$\bar u\in\mathcal U_t$. Therefore,
	$V(t,x,y)\le J(t,x,y;\bar u)$. Assumption~2.6 yields
	$
	J(t,x,y;\bar u)
	\le
	C E[
	\int_t^T
	(1+|X_s^{t,x,y;\bar u}|^q
	+e^{q|Y_s^{t,y}|}
	+|\bar u|^q)\,ds
	+
	1+|X_T^{t,x,y;\bar u}|^q
	+e^{q|Y_T^{t,y}|}]
	$.
	By Theorem~\ref{thm:strong-wellposedness} and
	Lemma~\ref{lem:OU-moments},
	$
	E[\sup_{t\le s\le T}|X_s^{t,x,y;\bar u}|^q]
	\le
	C(1+|x|^q+|y|^q+e^{C|y|})
	$
	and
	$
	E[\sup_{t\le s\le T}e^{q|Y_s^{t,y}|}]
	\le
	Ce^{C|y|}
	$.
	Since $|y|^q\le C(1+e^{C|y|})$ and $\bar u$ ranges in compact $U$,
	$
	J(t,x,y;\bar u)
	\le
	C(1+|x|^q+e^{C|y|})
	$.
	Together with $V\ge0$, this proves \eqref{eq:value-growth}.
\end{proof}

\begin{rmk}
	Proposition~\ref{prop:canonical-verification} verifies the analytic
	control correspondence and its stability under conditioning and
	concatenation. The chattering approximation establishes density of
	strict controls in the relaxed formulation, while
	Proposition~\ref{prop:strict-relaxed-equivalence} identifies the two
	value functions. Together with
	Theorem~\ref{thm:measurable-selection},
	Corollary~\ref{cor:measurable-continuation}, and
	Proposition~\ref{prop:pre-dpp}, these results complete the control
	correspondence required for the stopping-time DPP and the subsequent
	viscosity analysis.
\end{rmk}

\section{Catastrophe Insurance as a Stochastic Control Problem}
\label{sec:insurance-control}

We now specialize the controlled jump system of Section~2 to a
catastrophe insurer exposed to post-disaster replacement-cost risk. The state variable $X$ represents insurer surplus, $I=e^Y$ is the replacement-cost index, and $u$ denotes physical hedging through pre-arranged reconstruction capacity. The primitive probability space, L\'evy measure, admissible control class, and controlled state process are exactly those of Assumptions~2.-2.7. Thus no new stochastic
dynamics are introduced in this section.

\subsection{Replacement-Cost Dynamics and the Control Problem}
\label{subsec:demand-surge-control}

 The logarithmic replacement-cost factor satisfies
	$dY_s=\kappa(\vartheta-Y_s)\,ds+\sigma_I\,dW_s^I$, with $Y_t=y$, and
	$I_s=e^{Y_s}$. A positive displacement of $Y_s$ from $\vartheta$
	represents a temporary state of elevated reconstruction costs, while
	mean reversion describes normalization toward the long-run cost level.
	The process $Y$ is exogenous to the catastrophe random measure $N$:
	catastrophe arrivals do not generate jumps or impulses in $Y$.
	Accordingly, the model captures catastrophe losses occurring under
	stochastic replacement-cost conditions rather than a structural
	catastrophe-induced demand-surge process. Empirical evidence on
	post-disaster reconstruction-cost inflation and demand surge provides
	the economic motivation for treating elevated replacement costs as an
	important catastrophe-insurance state variable
	\citep{Kousky2019,RothKunreuther1998,OlsenPorter2011,DohrmannGuertlerHibbeln2017} Catastrophe losses are generated by the Poisson random measure $N$
introduced in Assumption~2.1. We retain its compensator
$\nu(dz)\,ds$ and specialize the controlled jump amplitude to
\begin{equation}
	\label{eq:insurance-loss-amplitude}
	\ell(y,z,u)
	=
	e^y a(z)h(u),
\end{equation}
where $a:E\to\mathbb R_+$ maps a catastrophe mark into baseline
physical damage and $h:U\to(0,1]$ is the residual-loss fraction after
physical hedging. Assume
$
h(0)=1,
\
h'(u)<0,$
and $
h''(u)\ge0
$
whenever the derivatives exist. Hence physical hedging reduces
catastrophe severity, with weakly diminishing marginal effectiveness. The exponential specification
$
h(u)=e^{-\gamma u},
$ for all $ \gamma>0,$
is a canonical example.

The factorization in \eqref{eq:insurance-loss-amplitude} separates
three sources of loss. The mark $z$ determines physical catastrophe severity, $e^y$ converts physical damage into replacement expenditure at prevailing construction prices, and $h(u)$ records the fraction of that expenditure remaining after pre-arranged capacity has been used \citep{pramanik2021optimization}. Thus elevated replacement costs and physical hedging act in opposite
directions on the monetary scale of the same catastrophe jump exposure \citep{pramanik2021consensus}. The insurer surplus is the process already defined in
\eqref{eq:X-general}
\begin{equation}
	\label{eq:insurance-surplus}
	\begin{aligned}
		dX_s
		&=
		b(X_{s-},Y_{s-},u_s)\,ds
		+
		\sigma_X(X_{s-},Y_{s-},u_s)\,dW_s^X
		-
		\int_E
		e^{Y_{s-}}a(z)h(u_s)\,
		\widetilde N(ds,dz),
		\qquad
		X_t=x.
	\end{aligned}
\end{equation}
For the economic specification developed below, consider
$
b(x,y,u)
=
rx+\pi(y)-c(u),
$
where $r\ge0$ is the return on liquid surplus, $\pi(y)$ is net premium
income before physical-hedging expenditure, and $c$ is the resource
cost of securing reconstruction capacity. We impose
$
c(0)=0,
\
c'(u)>0,$
and $
c''(u)>0.
$
A quadratic benchmark is
$
c(u)=\frac{\chi}{2}u^2,
$ for all $
\chi>0.$
The diffusion coefficient $\sigma_X$ represents noncatastrophe
balance-sheet fluctuations and continues to satisfy Assumptions~2.4 and~2.7.

The compensated representation in \eqref{eq:insurance-surplus} is
convenient for the probabilistic analysis. Its economic content can be
seen from the uncompensated form. Whenever the first jump moment is
finite,
\[
\begin{aligned}
	dX_s
	&=
	\left[
	b(X_{s-},Y_{s-},u_s)
	+
	\int_E e^{Y_{s-}}a(z)h(u_s)\nu(dz)
	\right]ds 
	+
	\sigma_X(X_{s-},Y_{s-},u_s)\,dW_s^X
	-
	\int_E e^{Y_{s-}}a(z)h(u_s)N(ds,dz).
\end{aligned}
\]
Accordingly, the actual downward jump at a catastrophe mark $z$ is
$
\Delta X_s
=
-e^{Y_{s-}}a(z)h(u_s).
$
Physical hedging therefore modifies loss severity rather than the arrival law of the underlying catastrophe process. For $p\ge1$ such that the corresponding integral is finite, define the conditional $p$th jump-loss moment as
$
\mathfrak L_p(y,u)
:=
\int_E
\ell(y,z,u)^p\,\nu(dz).
$
Under \eqref{eq:insurance-loss-amplitude},
\begin{equation}
	\label{eq:loss-moment-factorization}
	\mathfrak L_p(y,u)
	=
	e^{py}h(u)^p
	\int_Ea(z)^p\,\nu(dz).
\end{equation}
Hence,
$
\partial_y\log\mathfrak L_p(y,u)=p,
$ and $
\partial_u\log\mathfrak L_p(y,u)
=
p\frac{h'(u)}{h(u)}<0.$
Equation~\eqref{eq:loss-moment-factorization} shows that an elevated
	replacement-cost state raises every finite moment of monetary
	catastrophe-loss exposure, whereas physical hedging lowers it. In particular, the control changes not merely the mean loss but the entire scale of the catastrophe-loss distribution.

\begin{rmk}
The specification remains compatible with heavy-tailed catastrophe risk. If $\nu$ has sufficiently heavy upper tails, only moments below a critical order are finite. The probabilistic results of Section~\ref{sec:prob-foundations} therefore require moments only up to the fixed exponent $p_\star$ appearing in Assumptions~2.1-2.4; no Gaussian approximation to
catastrophe severity is imposed.
\end{rmk}

\subsection{Value Function and Optimal Physical Hedging}
\label{subsec:value-physical-hedging}

We formulate physical hedging as a social-cost minimization problem.
For $u\in\mathcal U_t$, let
\begin{equation}
	\label{eq:insurance-social-cost}
	J(t,x,y;u)
	=
	\mathbb E_{t,x,y}
	\left[
	\int_t^T
	e^{-\delta(s-t)}
	f(s,X_s,Y_s,u_s)\,ds
	+
	e^{-\delta(T-t)}g(X_T,Y_T)
	\right],
\end{equation}
which is the objective functional introduced in
\eqref{eq:objective-foundation}. The economic specialization is
$
f(t,x,y,u)
=
C_H(u)+C_D(y)+C_S(x),
$
where $C_H$ is the resource cost of physical hedging, $C_D$ measures the social burden of elevated reconstruction costs, and $C_S$ penalizes deterioration in insurer capitalization.
Assume that $C_H$ is strictly convex and coercive \citep{pramanik2023path}. A convenient
benchmark is
$
C_H(u)=\frac{\chi}{2}u^2.
$
For the demand-surge component, one may take
$
C_D(y)
=
\zeta\bigl(e^y-e^\vartheta\bigr)_+^m,
$ for all
$\zeta>0,$ and $ m\ge1,$
while a capitalization penalty can be specified as
$
C_S(x)
=
\omega(x_{\mathrm{crit}}-x)_+^q,
$ with $
\omega>0.$
The terminal function $g$ measures the residual social cost of ending the planning horizon with a weak insurance balance sheet or persistent replacement-cost stress. These functions are chosen to satisfy Assumption~2.6. The value function is
\begin{equation}
	\label{eq:insurance-value}
	V(t,x,y)
	=
	\inf_{u\in\mathcal U_t}J(t,x,y;u).
\end{equation}
By Proposition~18, \eqref{eq:insurance-value} coincides with the relaxed
canonical value used in the measurable-selection construction.
Corollary~\ref{cor:value-growth} places $V$ in the growth class $\mathcal G_q$. Thus the
economic control problem is defined on exactly the same admissible state-control \citep{pramanik2023optimization} system for which Section~\ref{sec:prob-foundations} established well-posedness,
stability, restart, and measurable continuation. The role of physical hedging becomes transparent at the level of the
controlled generator. For
$\varphi\in C^{1,2,2}$, define
\[
\begin{aligned}
	\mathcal L^u\varphi(t,x,y)
	&:=
	b(x,y,u)\varphi_x
	+
	\kappa(\vartheta-y)\varphi_y
	+
	\frac12\sigma_X^2(x,y,u)\varphi_{xx}
	+
	\frac12\sigma_I^2\varphi_{yy}
	+
	\mathcal I^u\varphi(t,x,y),
\end{aligned}
\]
where the nonlocal catastrophe operator is
\begin{equation}
	\label{eq:insurance-nonlocal-operator}
	\begin{aligned}
		\mathcal I^u\varphi(t,x,y)
		:=
		\int_E
		\Big[
		&\varphi\bigl(
		t,x-e^ya(z)h(u),y
		\bigr)
		-\varphi(t,x,y) +
		e^ya(z)h(u)\varphi_x(t,x,y)
		\Big]\nu(dz).
	\end{aligned}
\end{equation}
The control therefore enters the generator inside the nonlocal
translation of the surplus state \citep{pramanik2021optimal}. This distinguishes physical hedging from a control that acts only through drift expenditure.
For a smooth continuation value, the local tradeoff associated with
$u$ is represented formally by
\[
u\mapsto
f(t,x,y,u)
+
b(x,y,u)V_x(t,x,y)
+
\frac12\sigma_X^2(x,y,u)V_{xx}(t,x,y)
+
\mathcal I^uV(t,x,y).
\]
Accordingly, an interior optimal feedback, when it exists, must balance the marginal resource cost of reconstruction capacity against the marginal reduction in the nonlocal catastrophe exposure. To make this relation explicit, suppose temporarily that $\sigma_X$ is independent of $u$, that
$c(u)=C_H(u)=\chi u^2/2$, and that $h$ is differentiable. Formal
differentiation gives the first-order condition
\begin{equation}
	\label{eq:formal-hedging-foc}
	\begin{aligned}
		0
		&=
		\chi u
		-
		\chi u\,V_x(t,x,y)
		-
		e^yh'(u)
		\int_E
		a(z)
		\left[
		V_x\bigl(
		t,x-e^ya(z)h(u),y
		\bigr)
		-
		V_x(t,x,y)
		\right]\nu(dz).
	\end{aligned}
\end{equation}
\begin{rmk}Equation~\eqref{eq:formal-hedging-foc} is only heuristic at this stage. The value function need not be differentiable, and under heavy-tailed jump risk the nonlocal term may prevent the existence of a classical solution altogether. We therefore do not use \eqref{eq:formal-hedging-foc} as a verification argument.
Instead, the optimal physical hedge will be defined through the
minimizer of the Hamiltonian in the viscosity solution. This is the point at which the probabilistic foundations of Section~\ref{sec:prob-foundations} become essential. The stopping-time flow and measurable continuation results permit a rigorous dynamic programming principle, while the growth estimate of Corollary~\ref{cor:value-growth} determines the admissible uniqueness class. Section~\ref{sec:dpp-viscosity} derives the resulting nonlocal HJB equation and establishes the viscosity characterization of $V$.
\end{rmk}

\section{Dynamic Programming and Viscosity Solutions}
\label{sec:dpp-viscosity}

We now identify the value function of Section~\ref{sec:insurance-control}
with the solution of a nonlocal HJB equation.
The argument has two distinct components. First, the stopping-time restart and measurable-selection results of
Section~\ref{sec:prob-foundations} yield a dynamic programming
principle without assuming the existence of an optimal strict control. Second, the DPP \citep{pramanik2022stochastic} is localized to obtain the viscosity inequalities. The comparison argument is subsequently carried out in the growth class $\mathcal G_q$ identified in Corollary~\ref{cor:value-growth}.
Throughout this section, write
$
z_0=(x,y)\in\mathbb R^2
$
when no confusion can arise, and retain the notation
$
\ell(y,z,u)=e^ya(z)h(u)
$
from \eqref{eq:insurance-loss-amplitude}.

\subsection{Dynamic Programming and the Nonlocal HJB Equation}
\label{subsec:dpp-hjb}

For $\varphi\in C^{1,2,2}([0,T)\times\mathbb R^2)$ and $u\in U$,
define
\[
\begin{aligned}
	\mathcal A^u\varphi(t,x,y)
	:={}&
	b(x,y,u)\varphi_x(t,x,y)
	+\kappa(\vartheta-y)\varphi_y(t,x,y)\\
	&+
	\frac12\sigma_X^2(x,y,u)\varphi_{xx}(t,x,y)
	+\frac12\sigma_I^2\varphi_{yy}(t,x,y)
	+\mathcal I^u\varphi(t,x,y),
\end{aligned}
\]
where $\mathcal I^u$ is the catastrophe-loss operator
\[
\begin{aligned}
	\mathcal I^u\varphi(t,x,y)
	:=
	\int_E
	\Big[
	&\varphi(t,x-\ell(y,z,u),y)-\varphi(t,x,y)+\ell(y,z,u)\varphi_x(t,x,y)
	\Big]\nu(dz).
\end{aligned}
\]
The compensation term is inherited from the compensated Poisson
representation in \eqref{eq:X-general}. Economically,
$\mathcal I^u$ records the marginal continuation cost generated by the entire distribution of catastrophe severities rather than by an average-loss approximation. Define the Hamiltonian
\begin{equation}
	\label{eq:Hamiltonian}
	\begin{aligned}
		\mathcal H(t,x,y,r,p,X;\psi)
		:=
		\inf_{u\in U}
		\Bigg\{&
		f(t,x,y,u)
		+b(x,y,u)p_1
		+\kappa(\vartheta-y)p_2\\
		&+
		\frac12\sigma_X^2(x,y,u)X_{11}
		+\frac12\sigma_I^2X_{22}
		-\delta r\\
		&+
		\int_E
		\Big[
		\psi(x-\ell(y,z,u),y)-\psi(x,y)
		+\ell(y,z,u)p_1
		\Big]\nu(dz)
		\Bigg\},
	\end{aligned}
\end{equation}
whenever the integral is well defined. Here
$p=(p_1,p_2)\in\mathbb R^2$ and
$X=(X_{ij})\in\mathbb S^2$. The HJB equation associated with
\eqref{eq:insurance-value} is therefore
\begin{equation}
	\label{eq:nonlocal-HJB}
	-\partial_t v(t,x,y)
	-
	\mathcal H
	\bigl(
	t,x,y,v,Dv,D^2v;v(t,\cdot,\cdot)
	\bigr)
	=0,
	\qquad
	(t,x,y)\in[0,T)\times\mathbb R^2,
\end{equation}
with terminal condition
\begin{equation}
	\label{eq:HJB-terminal}
	v(T,x,y)=g(x,y).
\end{equation}
Since $\nu(E)$ need not be finite, it is useful to split the
nonlocal operator. For $\varepsilon\in(0,1)$ set
$
E_\varepsilon
:=
\{z\in E:\rho(z)\le\varepsilon\},
\
E^\varepsilon:=E\setminus E_\varepsilon,
$
and, for a smooth test function $\phi$ and a function $w$ of
appropriate growth, define
\[
\begin{aligned}
	\mathcal I_{\varepsilon}^{u}[\phi](t,x,y)
	:={}&
	\int_{E_\varepsilon}
	\Big[
	\phi(t,x-\ell(y,z,u),y)-\phi(t,x,y)
	+\ell(y,z,u)\phi_x(t,x,y)
	\Big]\nu(dz),
	\\
	\mathcal I^{u,\varepsilon}[w,\phi](t,x,y)
	:={}&
	\int_{E^\varepsilon}
	\Big[
	w(t,x-\ell(y,z,u),y)-w(t,x,y)
	+\ell(y,z,u)\phi_x(t,x,y)
	\Big]\nu(dz).
\end{aligned}
\]
The small jumps are evaluated on the test function, whereas the
large-jump contribution retains the candidate solution. This is the
form of the viscosity definition that is stable under singular
L\'evy measures [see \citet{BarlesImbert2008}].

\begin{lem}
	\label{lem:discounted-decomposition}
	Let Assumptions~2.1-2.9 hold and let
	$\tau$ be an $[t,T]$-valued stopping time. For every admissible control
	$u\in\mathcal U_t$,
	\[
	\begin{aligned}
		J(t,x,y;u)
		=
		\mathbb E_{t,x,y}\Bigg[
		&
		\int_t^\tau
		e^{-\delta(s-t)}
		f(s,X_s,Y_s,u_s)\,ds+
		e^{-\delta(\tau-t)}
		J\bigl(
		\tau,X_\tau,Y_\tau;u^{\tau,\omega}
		\bigr)
		\Bigg],
	\end{aligned}
	\]
	where $u^{\tau,\omega}$ denotes the restarted continuation control.
	The same identity holds in the canonical relaxed formulation with the
	regular conditional continuation law replacing $u^{\tau,\omega}$.
\end{lem}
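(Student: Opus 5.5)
We prove the identity by splitting the discounted cost at $\tau$, conditioning on $\mathcal F_\tau$, and invoking the restart results of Section~\ref{sec:prob-foundations}. Let $C_{t,T}$ denote the discounted pathwise cost in \eqref{eq:insurance-social-cost}. Because discount factors compose multiplicatively, $e^{-\delta(s-t)}=e^{-\delta(\tau-t)}e^{-\delta(s-\tau)}$ for $s\ge\tau$. Hence
\[
C_{t,T}=\int_t^\tau e^{-\delta(s-t)}f(s,X_s,Y_s,u_s)\,ds+e^{-\delta(\tau-t)}C_{\tau,T},
\]
where $C_{\tau,T}:=\int_\tau^T e^{-\delta(s-\tau)}f(s,X_s,Y_s,u_s)\,ds+e^{-\delta(T-\tau)}g(X_T,Y_T)$. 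This is a pathwise identity, valid for every $\omega$.

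Before taking expectations we check integrability. By Corollary~\ref{cor:state-integrability} and Assumption~2.6, $\mathbb E|C_{t,T}|<\infty$. The same bound, applied with $|f|$ and $|g|$, controls both summands separately, so they are integrable. We may therefore take expectations, and since $\tau$ and $e^{-\delta(\tau-t)}$ are $\mathcal F_\tau$-measurable, the tower property gives
\[
J(t,x,y;u)=\mathbb E\Big[\int_t^\tau e^{-\delta(s-t)}f\,ds+e^{-\delta(\tau-t)}\,\mathbb E[C_{\tau,T}\mid\mathcal F_\tau]\Big].
\]

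The key step is to identify $\mathbb E[C_{\tau,T}\mid\mathcal F_\tau]$ with $J(\tau,X_\tau,Y_\tau;u^{\tau,\omega})$. First we apply Proposition~\ref{prop:flow}. Up to indistinguishability, the post-$\tau$ path $(X_s,Y_s)_{s\ge\tau}$ is the solution restarted from $(\tau,X_\tau,Y_\tau)$, driven by the shifted noises $(W^\tau,N^\tau)$ and the restricted control. Next we observe that $C_{\tau,T}$ is a Borel functional $\Phi$ of the post-$\tau$ state path jointly with the control path. We can absorb the control into the functional by working on the canonical space $\Omega^\circ$, where the control measure $q^u$ is a coordinate. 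Proposition~\ref{prop:conditional-restart} then gives
\[
\mathbb E[C_{\tau,T}\mid\mathcal F_\tau]=\mathcal R_\Phi(\tau,X_\tau,Y_\tau;u^{\tau,\omega}).
\]
By definition of the objective functional this equals $J(\tau,X_\tau,Y_\tau;u^{\tau,\omega})$, and admissibility of the frozen section $u^{\tau,\omega}$ holds for a.e.\ $\omega$ by Fubini applied to $\mathbb E\int|u|^{p_\star}$. Substituting this into the display above proves the strict-control identity.

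For the relaxed formulation the argument is exactly the proof of Proposition~\ref{prop:pre-dpp}. Proposition~\ref{prop:canonical-verification}(b) places the regular conditional law $P^{\tau,\omega}$ in $\mathfrak P(\tau,X_\tau,Y_\tau)$, and the conditional Fubini theorem then identifies $E^P[C_{\tau,T}\mid\mathcal F_\tau]$ with $J(\tau,X_\tau,Y_\tau;P^{\tau,\omega})$.

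I expect the main obstacle to be the measurability and integrability of $\omega\mapsto J(\tau,X_\tau,Y_\tau;u^{\tau,\omega})$. The continuation control depends on the frozen pre-$\tau$ history, so the conditional cost is not simply a function of the state. I would handle this by staying on the canonical space, so that joint measurability follows from the regular conditional distribution, and by using Proposition~\ref{prop:UI} to guarantee that the conditional expectations are finite and integrable.
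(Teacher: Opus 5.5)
Your proposal is correct and follows the paper's own argument. You split the discounted cost pathwise at $\tau$ via $e^{-\delta(s-t)}=e^{-\delta(\tau-t)}e^{-\delta(s-\tau)}$, identify $\mathbb E[C_{\tau,T}\mid\mathcal F_\tau]$ with $J(\tau,X_\tau,Y_\tau;u^{\tau,\omega})$ through the restart results of Propositions~\ref{prop:flow} and~\ref{prop:conditional-restart}, justify the tower property and conditional Fubini by the integrability of Proposition~\ref{prop:UI}, and treat the relaxed case by disintegration using Proposition~\ref{prop:canonical-verification}(b); you are also more careful than the paper in noting that Proposition~\ref{prop:conditional-restart} covers only state-path functionals, so the control-dependent running cost must be handled on the canonical space where the control is a coordinate.
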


\begin{proof}
	Split the running integral in \eqref{eq:insurance-social-cost} at
	$\tau$. For $s\ge\tau$,
	$
	e^{-\delta(s-t)}
	=
	e^{-\delta(\tau-t)}e^{-\delta(s-\tau)}.$
	The conditional restart result of Proposition~\ref{prop:flow} identifies the post-$\tau$ state, conditionally on $\mathcal F_\tau$, with the state	process initialized from
	$(\tau,X_\tau,Y_\tau)$ and driven by the shifted continuation control. Consequently, conditional expectation of the post-$\tau$ running and terminal costs equals
	$
	e^{-\delta(\tau-t)}
	J\bigl(
	\tau,X_\tau,Y_\tau;u^{\tau,\omega}
	\bigr).$
	The integrability required to apply conditional Fubini and the tower property follows from Proposition~\ref{prop:UI}. Taking expectations proves the identity. In the canonical formulation, the same argument follows by disintegration of the controlled law and the conditioning stability
	established in Proposition~\ref{prop:canonical-verification}.
\end{proof}

\begin{theorem}
	\label{thm:DPP}
	Suppose Assumptions~2.1-2.9 hold. Then, for every
	$(t,x,y)\in[0,T]\times\mathbb R^2$ and every
	$[t,T]$-valued stopping time $\tau$,
	\begin{equation}
		\label{eq:DPP}
		\begin{aligned}
			V(t,x,y)
			=
			\inf_{u\in\mathcal U_t}
			\mathbb E_{t,x,y}\Bigg[
			&
			\int_t^\tau
			e^{-\delta(s-t)}
			f(s,X_s,Y_s,u_s)\,ds+
			e^{-\delta(\tau-t)}
			V(\tau,X_\tau,Y_\tau)
			\Bigg].
		\end{aligned}
	\end{equation}
	Equivalently, the infimum may be taken over
	$P\in\mathfrak P(t,x,y)$.
\end{theorem}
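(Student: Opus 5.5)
The plan is to prove the DPP in the canonical relaxed formulation and then transfer it to strict controls via Corollary~\ref{prop:strict-relaxed-equivalence}. Denote by $\bar V(t,x,y)$ the right-hand side of \eqref{eq:DPP} with the infimum taken over $P\in\mathfrak P(t,x,y)$. First we check that $\bar V$ is well defined. By Theorem~\ref{thm:measurable-selection}, $V=V^{\mathrm{rel}}$ is lower semianalytic, so $V(\tau,X_\tau,Y_\tau)$ is universally measurable. By Corollary~\ref{cor:value-growth}, $|V|\le C(1+|x|^q+e^{C|y|})$. Proposition~\ref{prop:UI}, applied with $R$ equal to the $p_\star$-moment bound implied by compactness of $U$, then makes the integrand integrable under every admissible $P$.

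\textbf{Inequality $V\ge\bar V$.} For each $P\in\mathfrak P(t,x,y)$, Proposition~\ref{prop:pre-dpp} gives $J(t,x,y;P)\ge E^P[\int_t^\tau e^{-\delta(s-t)}\int_U f\,q_s(da)\,ds+e^{-\delta(\tau-t)}V(\tau,X_\tau,Y_\tau)]$. Taking the infimum over $P$ yields $V\ge\bar V$. For the strict version we use Lemma~\ref{lem:discounted-decomposition} in place of Proposition~\ref{prop:pre-dpp}, together with $J(\tau,\cdot;u^{\tau,\omega})\ge V^{\mathrm{str}}=V$.

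\textbf{Inequality $V\le\bar V$.} Fix $\varepsilon>0$ and $P\in\mathfrak P(t,x,y)$. Corollary~\ref{cor:measurable-continuation} supplies the universally measurable $\varepsilon$-optimal kernel $P^\varepsilon_{\tau,X_\tau,Y_\tau}$ and guarantees that the concatenation $\widehat P:=P\otimes_\tau P^\varepsilon_{\tau,X_\tau,Y_\tau}$ lies in $\mathfrak P(t,x,y)$. By construction, $\widehat P$ coincides with $P$ on $\mathcal F_\tau$, and its regular conditional law after $\tau$ is $P^\varepsilon_{\tau,X_\tau,Y_\tau}$. Applying the first identity of Proposition~\ref{prop:pre-dpp} to $\widehat P$ gives
\[
V(t,x,y)\le J(t,x,y;\widehat P)\le E^P\Bigl[\int_t^\tau e^{-\delta(s-t)}\int_U f\,q_s(da)\,ds+e^{-\delta(\tau-t)}\bigl(V(\tau,X_\tau,Y_\tau)+\varepsilon\bigr)\Bigr].
\]
Taking the infimum over $P$ and letting $\varepsilon\downarrow0$ gives $V\le\bar V$ in the relaxed formulation.

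\textbf{Passage to strict controls.} Since strict laws form a subclass of $\mathfrak P$, the strict infimum is at least the relaxed one. For the reverse inequality, given a relaxed $P$ we take the chattering sequence $u^n$ from Proposition~\ref{prop:chattering} and need
\[
E^{P^{u^n}}\bigl[\textstyle\int_t^\tau\cdots+e^{-\delta(\tau-t)}V(\tau,X_\tau,Y_\tau)\bigr]\to E^P[\cdots].
\]
This convergence step is the \textbf{main obstacle}, because $V$ is only lower semianalytic and $\tau$ is a general stopping time, so weak convergence alone does not justify passing to the limit. I would avoid it by using an alternative route. The strict right-hand side with $V$ evaluated at $\tau$ dominates $V$ by the same concatenation argument: Lemma~\ref{lem:concatenation} gives concatenation of strict controls, and each relaxed $\varepsilon$-optimal continuation law is replaced by an $\varepsilon$-optimal strict continuation, obtained through a countable partition of the state space into small cells, using the equality $V^{\mathrm{str}}=V^{\mathrm{rel}}$ and the continuity estimates of Propositions~\ref{prop:initial-stability} and~\ref{prop:J-continuity}. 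This yields $V\le\bar V^{\mathrm{str}}$, while $\bar V^{\mathrm{str}}\le V$ was established in the first inequality step. If the partition argument requires continuity of $V$ that is not yet available, the fallback is to state the strict DPP for the upper semicontinuous envelope and to close the gap with the relaxed equality via Corollary~\ref{prop:strict-relaxed-equivalence}.
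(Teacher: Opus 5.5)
Your relaxed-formulation argument is the paper's proof. The ``$\ge$'' direction comes from the decomposition in Proposition~\ref{prop:pre-dpp} (equivalently Lemma~\ref{lem:discounted-decomposition}), with the continuation cost bounded below by $V(\tau,X_\tau,Y_\tau)$. The ``$\le$'' direction comes from concatenating at $\tau$ with the universally measurable $\varepsilon$-optimal kernel of Theorem~\ref{thm:measurable-selection} and absorbing the error as $C\varepsilon$. So the proposal is correct and in substance coincides with the paper.

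The last paragraph, however, chases an obstacle that is not there: the two steps you have already written prove the strict statement. Write $\Psi(P)$ for the expectation on the right of \eqref{eq:DPP} under $P$. Then
\[
V=\inf_{P\in\mathfrak P(t,x,y)}\Psi(P)\le\inf_{u\in\mathcal U_t}\Psi(P^u)\le\inf_{u\in\mathcal U_t}J(t,x,y;u)=V^{\mathrm{str}}=V .
\]
The justifications are as follows:
\begin{itemize}
\item the opening equality is your relaxed DPP;
\item the first inequality is the inclusion of strict laws, or equivalently your concatenation bound applied to $P=P^u$;
\item the second inequality is your strict first step, $\Psi(P^u)\le J(t,x,y;u)$;
\item the final equality is Corollary~\ref{prop:strict-relaxed-equivalence}.
\end{itemize}
The ``reverse inequality'' you try to extract from chattering, $\inf_u\Psi(P^u)\le\inf_P\Psi(P)$, is therefore already in hand. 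No convergence of $\Psi$ along a chattering sequence is needed; chattering enters only through $V^{\mathrm{str}}=V^{\mathrm{rel}}$. This is exactly why the paper can say the strict form ``follows again'' from that corollary.

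You should drop both the partition-of-cells route and the envelope fallback. The partition route is redundant, and it needs continuity of $V$ in $(t,x,y)$. The time-continuity part of Proposition~\ref{prop:value-continuity} is itself derived from Corollary~\ref{cor:deterministic-DPP}, so using it here would be circular. The fallback would only give a weaker, envelope-type DPP than the one claimed.

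A minor citation point: Corollary~\ref{cor:value-growth} assumes $f,g\ge0$. The two-sided bound $|V|\le C(1+|x|^q+e^{C|y|})$ that you need for integrability of $V(\tau,X_\tau,Y_\tau)$ follows directly from \eqref{eq:cost-growth} and compactness of $U$, without any sign condition.
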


\begin{cor}
	\label{cor:deterministic-DPP}
	For every $t\le s\le T$,
	\[
	\begin{aligned}
		V(t,x,y)
		=
		\inf_{u\in\mathcal U_t}
		\mathbb E_{t,x,y}\Bigg[
		&
		\int_t^s e^{-\delta(r-t)}
		f(r,X_r,Y_r,u_r)\,dr+
		e^{-\delta(s-t)}
		V(s,X_s,Y_s)
		\Bigg].
	\end{aligned}
	\]
\end{cor}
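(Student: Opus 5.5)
The corollary is the special case of Theorem~\ref{thm:DPP} in which the stopping time is deterministic. The plan is to fix $s\in[t,T]$ and set $\tau:\equiv s$. We first check that this $\tau$ is admissible: a constant is trivially an $\mathbb F$-stopping time, since $\{\tau\le r\}$ is either $\emptyset$ or $\Omega$, and it takes values in $[t,T]$. Substituting $\tau=s$ into \eqref{eq:DPP} then turns $\int_t^\tau$ into $\int_t^s$ and $e^{-\delta(\tau-t)}V(\tau,X_\tau,Y_\tau)$ into $e^{-\delta(s-t)}V(s,X_s,Y_s)$. This gives exactly the asserted identity, with the infimum still over $u\in\mathcal U_t$ (equivalently over $P\in\mathfrak P(t,x,y)$).

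We must also make sure that the expectation on the right is well defined for every admissible $u$, so that the infimum is not taken over expressions of indeterminate form. We would argue as follows.
\begin{itemize}
\item The running-cost part $\int_t^s e^{-\delta(r-t)}f\,dr$ is integrable by Corollary~\ref{cor:state-integrability} and Assumption~2.6.
\item For the continuation term, Corollary~\ref{cor:value-growth} gives $V\in\mathcal G_q$, that is,
\[
|V(s,X_s,Y_s)|\le C\bigl(1+|X_s|^q+e^{C|Y_s|}\bigr).
\]
\item Integrability of the right-hand side of this bound follows from Theorem~\ref{thm:strong-wellposedness}, using $q<p_\star$, together with the exponential moment bound \eqref{eq:expY-mom} of Lemma~\ref{lem:OU-moments}.
\item Measurability of $V(s,X_s,Y_s)$ comes from the lower semianalyticity of $V=V^{\mathrm{rel}}$ in Theorem~\ref{thm:measurable-selection}. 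Hence $V(s,X_s,Y_s)$ is universally measurable, and its expectation is defined.
\end{itemize}

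No step here is a genuine obstacle, since all the substance sits in Theorem~\ref{thm:DPP}. The only point requiring care is the well-definedness of the continuation term in the absence of continuity of $V$, which the growth class and lower semianalyticity settle as described above.
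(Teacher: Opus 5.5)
Your proposal is correct and matches the paper's proof, which simply applies Theorem~\ref{thm:DPP} with the deterministic stopping time $\tau\equiv s$. One caveat on your extra well-definedness check: Corollary~\ref{cor:value-growth} as stated additionally assumes $f,g\ge0$. The $\mathcal G_q$ bound on $V$ still holds without that assumption, because compactness of $U$ in Assumption~2.9(i) makes the state moments uniform over controls. The upper bound then comes from a constant control, and the lower bound from the growth condition in Assumption~2.6.
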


\begin{proof}
	Apply Theorem~\ref{thm:DPP} with the deterministic stopping time
	$\tau=s$.
\end{proof}

\begin{prop}
	\label{prop:value-continuity}
	Suppose Assumptions~2.1-2.9 hold. Assume in addition that $f$ and $g$ are locally uniformly continuous in $(t,x,y)$, uniformly for $u$ in bounded subsets of $U$. Then
	$
	V\in C([0,T]\times\mathbb R^2)\cap\mathcal G_q.
	$
	Moreover,
	$
	\lim_{t\uparrow T}V(t,x,y)=g(x,y)
	$
	locally uniformly in $(x,y)$.
\end{prop}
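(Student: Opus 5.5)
We need continuity of $V$ in $(x,y)$ locally uniformly in $t$, continuity in $t$ via the deterministic DPP, and the terminal limit. Membership in $\mathcal G_q$ is already given by Corollary~\ref{cor:value-growth}, so only continuity and the boundary behaviour remain.

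\emph{Step 1: spatial continuity.} Fix a compact $K\subset\mathbb R^2$ and $\varepsilon>0$. Using the coercivity \eqref{eq:coercivity}, the growth bound \eqref{eq:value-growth}, and compactness of $U$ (Assumption~2.9(i)), I first reduce to controls in a bounded class $\mathcal U_t(R)$. In fact, compactness of $U$ alone bounds every $u\in\mathcal U_t$ uniformly, so $R$ depends only on $U$ and $T$. For $(x,y),(x',y')\in K$ and the same $u$, I then bound
\[
|J(t,x,y;u)-J(t,x',y';u)| .
\]
Split the costs on the event $\{\sup_s(|X_s|+|X'_s|+|Y_s|+|Y'_s|)\le M\}$ and its complement. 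On the good event, the uniform continuity of $f,g$ on compacts (Assumption~2.7), together with Proposition~\ref{prop:initial-stability} and \eqref{eq:Y-stability}, gives a bound $\omega_M(\cdot)$ applied to a quantity that is small in probability. On the complement, the uniform integrability of Proposition~\ref{prop:UI} makes the contribution small uniformly in $u$ once $M$ is large. Taking $\sup_u$ gives $|V(t,x,y)-V(t,x',y')|\le\omega_K(|x-x'|+|y-y'|)$, with a modulus independent of $t$.

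\emph{Step 2: time continuity.} For $t<s$, Corollary~\ref{cor:deterministic-DPP} gives
\[
|V(t,x,y)-V(s,x,y)|\le \sup_u \mathbb E\Big[\int_t^s |f|\,dr + |e^{-\delta(s-t)}V(s,X_s,Y_s)-V(s,x,y)|\Big].
\]
The first term is at most $C(s-t)^{\eta/(1+\eta)}$ by Proposition~\ref{prop:UI} and Hölder's inequality. For the second term, I use:
\begin{itemize}
\item the moment bounds, which give $\sup_u\mathbb E|X_s-x|^2+|Y_s-y|^2\le C(s-t)$ uniformly for $(x,y)\in K$ (Itô isometry on the compensated jump integral, via Proposition~\ref{prop:jump-integral} and the bound \eqref{eq:ell-lip});
\item the spatial modulus from Step 1, applied on a large compact;
\item truncation plus uniform integrability, using the growth of $V$ in $\mathcal G_q$ and Proposition~\ref{prop:UI}.
\end{itemize}
The step I expect to be the main obstacle is this exponential growth in $y$. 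The class $\mathcal G_q$ allows $e^{C|y|}$ with a possibly large constant $C$. So I must check that $\mathbb E[\sup_s e^{C(1+\eta)|Y_s|}]$ is finite. This follows from \eqref{eq:expY-mom}, which holds for every $r>0$ because $Y$ is Gaussian. That is why the truncation works even though Proposition~\ref{prop:UI} is stated only for the exponent $q$.

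\emph{Step 3: terminal limit.} Apply the same estimate with $s=T$ and $V(T,\cdot)=g$:
\[
|V(t,x,y)-g(x,y)|\le C(T-t)^{\eta/(1+\eta)}+\sup_u\mathbb E|g(X_T,Y_T)-g(x,y)|+(1-e^{-\delta(T-t)})\,\mathbb E|g(X_T,Y_T)|.
\]
Each term tends to $0$ uniformly on $K$, by the uniform continuity of $g$ on compacts plus uniform integrability, exactly as in Step 2. Combining Steps 1 and 2 gives joint continuity on $[0,T]\times\mathbb R^2$, and Step 3 gives the locally uniform terminal limit.
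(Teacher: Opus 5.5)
Your proposal is correct and follows essentially the paper's route. The steps match: a control-uniform bound on $|J(t,x,y;u)-J(t,x',y';u)|$ from initial-state stability, local uniform continuity of $f,g$, truncation and Proposition~\ref{prop:UI}; then time continuity through Corollary~\ref{cor:deterministic-DPP}; and finally the terminal limit from the same estimate at $s=T$.

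Your version is in fact a bit tighter than the paper's sketch in three places:
\begin{itemize}
\item Compactness of $U$ replaces the paper's coercivity-based restriction to bounded controls. It also gives the two-sided bound $|V|\le C(1+|x|^q+e^{C|y|})$ without the sign condition required by Corollary~\ref{cor:value-growth}.
\item You invoke the right stability result, Proposition~\ref{prop:initial-stability}, where the paper cites Proposition~\ref{prop:J-continuity}.
\item You correctly note that truncating $V(s,X_s,Y_s)$ needs the exponential moments \eqref{eq:expY-mom} of all orders, not just the exponent-$q$ bound in Proposition~\ref{prop:UI}.
\end{itemize}
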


\begin{defn}[Viscosity solution]
	\label{def:viscosity}
	Let $w\in\mathcal G_q$ be upper semicontinuous. We call $w$ a
	viscosity subsolution of \eqref{eq:nonlocal-HJB} if
	$w(T,\cdot,\cdot)\le g$ and, whenever
	$\phi\in C^{1,2,2}$ and $w-\phi$ has a local maximum at
	$(t_0,x_0,y_0)\in[0,T)\times\mathbb R^2$, then, for every sufficiently
	small $\varepsilon>0$,
	\[
	\begin{aligned}
		0\ge
		\inf_{u\in U}\Bigg\{&
		\phi_t
		+b\phi_x
		+\kappa(\vartheta-y_0)\phi_y
		+\frac12\sigma_X^2\phi_{xx}
		+\frac12\sigma_I^2\phi_{yy}
		-\delta w(t_0,x_0,y_0)
		+f\\
		&+
		\mathcal I_\varepsilon^u[\phi](t_0,x_0,y_0)
		+
		\mathcal I^{u,\varepsilon}[w,\phi](t_0,x_0,y_0)
		\Bigg\},
	\end{aligned}
	\]
	where all local coefficients are evaluated at
	$(t_0,x_0,y_0,u)$. A lower semicontinuous $w\in\mathcal G_q$ is a viscosity supersolution if $w(T,\cdot,\cdot)\ge g$ and the reverse inequality holds at every local minimum of $w-\phi$. A continuous function is a
	viscosity solution if it is both a subsolution and a supersolution.
\end{defn}

\begin{lem}[Localized Dynkin formula]
	\label{lem:localized-Dynkin}
	Let $\phi\in C^{1,2,2}$ and let
	\[
	\tau_R
	:=
	\inf\{s\ge t:|X_s-x|+|Y_s-y|\ge R\}\wedge(t+h).
	\]
	For any constant control $u_s\equiv a\in U$,
	\[
	\begin{aligned}
		E\!\left[
		e^{-\delta(\tau_R-t)}
		\phi(\tau_R,X_{\tau_R},Y_{\tau_R})
		-\phi(t,x,y)
		\right]
		=
		E\!\int_t^{\tau_R}
		e^{-\delta(s-t)}
		\Big[
		\phi_t+\mathcal A^a\phi-\delta\phi
		\Big](s,X_{s-},Y_{s-})\,ds.
	\end{aligned}
	\]
\end{lem}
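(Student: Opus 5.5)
The plan is to apply It\^o's formula for jump semimartingales to the discounted process $\Phi_s:=e^{-\delta(s-t)}\phi(s,X_s,Y_s)$ on $[t,\tau_R]$, where $(X,Y)$ is driven by the constant control $u_s\equiv a$. We then identify the finite-variation part with the integrand $\phi_t+\mathcal A^a\phi-\delta\phi$, and show that every stochastic-integral term is a true martingale up to $\tau_R$, so that optional stopping kills it. Writing $\ell_s(z):=\ell(Y_{s-},z,a)$, the product rule and It\^o's formula give
\[
d\Phi_s=e^{-\delta(s-t)}\Big[(\phi_t-\delta\phi)\,ds+\phi_x\,dX^c_s+\phi_y\,dY_s+\tfrac12\phi_{xx}\sigma_X^2\,ds+\tfrac12\phi_{yy}\sigma_I^2\,ds\Big]+e^{-\delta(s-t)}\,d\Big(\textstyle\sum_{t<r\le s}\big[\phi(r,X_r,Y_r)-\phi(r,X_{r-},Y_{r-})\big]\Big)-e^{-\delta(s-t)}\int_E \phi_x\,\ell_s(z)\,\widetilde N(ds,dz).
\]
There is no cross-variation term in $\phi_{xy}$, since $W^X$ and $W^I$ are independent components. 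Here $\phi$ and its derivatives are evaluated at $(s,X_{s-},Y_{s-})$, and $X^c$ denotes the continuous local-martingale part plus the drift. The jumps of $X$ are exactly $-\ell_r(z)$ at the atoms of $N$, while $Y$ has no jumps.

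Next we rewrite the jump sum together with the compensated integral. The jump sum equals
\[
\int_t^s\!\int_E\big[\phi(r,X_{r-}-\ell_r(z),Y_{r-})-\phi(r,X_{r-},Y_{r-})\big]N(dr,dz).
\]
We add and subtract its compensator and absorb the term $\ell\phi_x$ coming from the compensated drift. This produces
\[
\int_t^s\!\int_E\big[\phi(\cdot-\ell)-\phi+\ell\phi_x\big]\nu(dz)\,dr=\int_t^s\mathcal I^a\phi\,dr,
\]
plus the martingale $\int\!\int[\phi(\cdot-\ell)-\phi]\,\widetilde N(dr,dz)$. Collecting the $ds$-terms gives precisely $e^{-\delta(s-t)}[\phi_t+\mathcal A^a\phi-\delta\phi](s,X_{s-},Y_{s-})$.

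The main obstacle is checking integrability. It splits into two parts.

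\textbf{Continuous and local terms.} Before $\tau_R$ the pre-jump state lies in the ball $B_R(x,y)$, and $s\le t+h$. Hence $\phi_x,\phi_y,\phi_{xx},\phi_{yy},\phi_t$ are bounded on the relevant compact set. Combined with the growth bound \eqref{eq:growth-b-sigma} and $U$ compact, this makes the Brownian integrals stopped at $\tau_R$ square-integrable martingales.

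\textbf{Jump terms.} The difficulty is that a single large jump can carry $X_{\tau_R}$ far outside the ball, and $\phi$ is only $C^{1,2,2}$ with no global growth control. We handle the two jump regimes separately.

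For small jumps ($\rho(z)\le 1$, so $\ell\le L(1+e^{|y|+R}+|a|)$ is bounded), the bounds $|\phi(\cdot-\ell)-\phi|\le C_R\ell$ and $|\phi(\cdot-\ell)-\phi+\ell\phi_x|\le C_R\ell^2$ on a slightly enlarged compact set, together with \eqref{eq:rho-mom}, give square integrability. This follows the pattern of Proposition~\ref{prop:jump-integral}.

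For large jumps, we first impose an extra assumption, namely that $\phi$ has at most polynomial growth of order $\le p_\star$ in $x$, uniformly for $y$ in compacts. This is automatically the case for the test functions used later, which may be modified outside a neighbourhood of the point without affecting viscosity inequalities. Under this assumption, $|\phi(\cdot-\ell)|\le C(1+|X_{s-}|^{p_\star}+\ell^{p_\star})$, and the moment condition $\int\rho^{p_\star}d\nu<\infty$ together with Theorem~\ref{thm:strong-wellposedness} makes the $\widetilde N$-integral an $L^1$ martingale. The right-hand side is then finite as well, because $\mathcal I^a\phi$ is bounded on the ball.

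With integrability in hand, we take expectations at the bounded stopping time $\tau_R$. All martingale terms vanish by optional stopping. Finally, we replace $X_s$ by $X_{s-}$ inside the $ds$-integral, which changes nothing since there are only countably many jump times. This yields the stated identity.
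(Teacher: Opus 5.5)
Your route is the paper's: It\^o's formula for $e^{-\delta(s-t)}\phi(s,X_s,Y_s)$ up to $\tau_R$, identification of the $ds$-part with $\phi_t+\mathcal A^a\phi-\delta\phi$, and optional stopping at the bounded time $\tau_R$. Where you differ is the integrability step, and there you are more careful than the paper. Its proof asserts that $\phi$ and its derivatives are bounded on the stopped region and cites Proposition~\ref{prop:jump-integral}. But a jump at $\tau_R$ can carry $X_{\tau_R}=X_{\tau_R-}-\ell$ far outside the ball, and the large-jump parts of $\mathcal I^a\phi$ and of $\int\!\int[\phi(\cdot-\ell)-\phi]\,\widetilde N$ evaluate $\phi$ there. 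Proposition~\ref{prop:jump-integral} controls only $\int\!\int\ell\,\widetilde N$; it would suffice if $\phi_x$ were globally bounded, which is not assumed. Your split into $\{\rho\le1\}$ and $\{\rho>1\}$, together with a growth bound of order at most $p_\star$ in $x$, closes this gap. It uses $\nu(\{\rho>1\})\le\int\rho^2\,d\nu$, $\int\rho^{p_\star}d\nu<\infty$, and Theorem~\ref{thm:strong-wellposedness} for $E|\phi(\tau_R,X_{\tau_R},Y_{\tau_R})|$. The cost is a hypothesis absent from the statement. Some such restriction is needed anyway, since for an arbitrary $\phi\in C^{1,2,2}$ the quantity $\mathcal I^a\phi$ can be infinite when $\nu$ has only $p_\star$ moments. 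Your justification is the right one: in Definition~\ref{def:viscosity} the test function enters only through $\mathcal I^u_\varepsilon[\phi]$ and local derivatives, so it may be modified away from the contact point.

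One correction to your first display. It\^o's formula produces the jump correction $\sum_r[\phi(X_r)-\phi(X_{r-})-\phi_x\Delta X_r]$, not $\sum_r[\phi(X_r)-\phi(X_{r-})]$, so your display drops $-\phi_x\Delta X_r=\phi_x\ell$. Your next step then reaches the correct result only by introducing an $\ell\phi_x$ drift that is not in the display and silently dropping the $-\int_E\phi_x\ell\,\widetilde N$ term. Keep the correction together. It is $O(\ell^2)$ on small jumps, hence absolutely summable, whereas $\sum_r[\phi(X_r)-\phi(X_{r-})]$ alone need not converge when $\int\rho\,d\nu=\infty$, which the assumptions allow. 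Write it as $\int\!\int[\phi(\cdot-\ell)-\phi+\ell\phi_x]\,N(dr,dz)$, compensate, and add the term $-\int\!\int\phi_x\ell\,\widetilde N(dr,dz)$ coming from $\phi_x\,dX$. This gives exactly the drift $\mathcal I^a\phi$ and the martingale $\int\!\int[\phi(\cdot-\ell)-\phi]\,\widetilde N$ that you state, so your conclusion is unaffected.
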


\begin{proof}
	Apply the It\^o formula for jump semimartingales to
	$
	s\mapsto
	e^{-\delta(s-t)}\phi(s,X_s,Y_s)
	$
	up to $\tau_R$. The continuous quadratic-variation terms generate
	$\frac12\sigma_X^2\phi_{xx}$ and
	$\frac12\sigma_I^2\phi_{yy}$. The compensated Poisson term generates
	$\mathcal I^a\phi$. The Brownian and compensated-jump stochastic
	integrals are true martingales after localization because
	$\phi$ and its derivatives are bounded on the stopped region and
	Proposition~2 gives square integrability of the jump integral.
	Taking expectations eliminates the martingale terms.
\end{proof}

\begin{theorem}
	\label{thm:value-viscosity}
	Under the hypotheses of Proposition~\ref{prop:value-continuity}, the
	value function $V$ is a viscosity solution of
	\eqref{eq:nonlocal-HJB}-\eqref{eq:HJB-terminal}.
\end{theorem}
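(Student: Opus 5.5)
The plan is the standard DPP-plus-Dynkin localization argument, adapted to the Barles–Imbert splitting in Definition~\ref{def:viscosity}. Proposition~\ref{prop:value-continuity} gives $V\in C([0,T]\times\mathbb R^2)\cap\mathcal G_q$ and $V(T,\cdot,\cdot)=g$, so the terminal inequalities hold with equality. It therefore remains to verify the interior sub- and supersolution inequalities at points $(t_0,x_0,y_0)\in[0,T)\times\mathbb R^2$.

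First I will reduce the problem to a global, strict test function. If $V-\phi$ has a local maximum (resp.\ minimum) at $(t_0,x_0,y_0)$, I modify $\phi$ outside a small ball so that $\phi$ touches $V$ there, lies above (resp.\ below) $V$ on the ball, and has growth in $\mathcal G_q$. The key observation is that $\mathcal I_\varepsilon^u[\phi]$ only involves jumps with $\rho(z)\le\varepsilon$, whose sizes $\ell(y_0,z,u)\le L(1+e^{y_0}+|u|)\rho(z)$ are small. Hence for small $\varepsilon$ these jumps stay inside the ball, where $\phi$ agrees with its original values. The large-jump term $\mathcal I^{u,\varepsilon}[V,\phi]$ uses $V$ itself. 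So I may work with the nonlocal operator applied to the composite function $\psi:=\phi$ near the point and $\psi:=V$ away from it. By a standard approximation argument (monotonicity of $\mathcal I$ in $\psi$ plus the Barles–Imbert equivalence lemma), it suffices to prove the inequality with $\mathcal I^u$ applied to a smooth $\psi$ with $\psi\ge V$ (subsolution case) or $\psi\le V$ (supersolution case) globally, $\psi=V$ at the point, and $\psi\in\mathcal G_q$. Finiteness of these integrals uses \eqref{eq:rho-mom} and the growth of $V$.

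\emph{Subsolution.} Fix $a\in U$ and use the constant control $u\equiv a$ in Corollary~\ref{cor:deterministic-DPP} with the stopping time $\tau_R$ of Lemma~\ref{lem:localized-Dynkin}. This gives
\[
\psi(t_0,x_0,y_0)=V(t_0,x_0,y_0)\le \mathbb E\Big[\int_{t_0}^{\tau_R}e^{-\delta(s-t_0)}f\,ds+e^{-\delta(\tau_R-t_0)}\psi(\tau_R,X_{\tau_R},Y_{\tau_R})\Big].
\]
Applying Lemma~\ref{lem:localized-Dynkin} to $\psi$, dividing by $h$ and letting $h\downarrow0$, I obtain
\[
0\le \psi_t+\mathcal A^a\psi-\delta\psi+f
\]
at the point. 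The limit uses continuity of the integrand in $(s,x,y)$, right-continuity of paths, and $\mathbb E[\tau_R-t_0]/h\to1$, which holds since $P(\tau_R<t_0+h)\to0$. Note the sign: the HJB \eqref{eq:nonlocal-HJB} is written as $-\partial_t v-\mathcal H=0$ with the infimum over $u$, so this pointwise inequality for every $a$ gives the subsolution inequality in the form of Definition~\ref{def:viscosity} after orienting the sign convention for the minimisation problem. I will take care to state everything consistently with $-\phi_t-\inf_u\{\cdots\}\le0$.

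\emph{Supersolution.} I argue by contradiction. Suppose that at a minimum point
\[
\inf_{u}\{\psi_t+\mathcal A^u\psi-\delta\psi+f\}\ge 2\eta>0 .
\]
Compactness of $U$ (Assumption~2.9(i)) and joint continuity in $(t,x,y,u)$ extend this to $\ge\eta$ on a small ball $B$, uniformly in $u$. The continuity of the nonlocal part in $u$ comes from Assumption~2.9(iii), and its continuity in $(x,y)$ from dominated convergence with \eqref{eq:ell-lip}. I can also make $\psi\le V-\gamma$ off $B$, using strictness of the minimum. Next, for any $u\in\mathcal U_{t_0}$, apply It\^o's formula with general predictable $u$ up to the exit time $\theta$ from $B$ capped at $t_0+h$. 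Combined with the DPP (Theorem~\ref{thm:DPP} with $\tau=\theta$), this gives
\[
V(t_0,x_0,y_0)\ge \psi(t_0,x_0,y_0)+c(\eta,\gamma,h)
\]
for some $c>0$ uniform in $u$, which is a contradiction. To obtain the uniform lower bound I will use a lower bound on $\mathbb E[\theta-t_0]$ uniform over $u\in U$ compact, via moment estimates of the type in Theorem~\ref{thm:strong-wellposedness}, together with the penalty $\gamma$ on exits.

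The main obstacle is this supersolution step. Two things must be handled carefully there: exits of $X$ from $B$ via large jumps, where $\psi$ and $V$ differ and the gap $\gamma$ is what gives the contradiction, and the uniformity in $u$ of the lower bound on $\mathbb E[\theta\wedge(t_0+h)-t_0]$. I will handle both by splitting $\mathbb E[\cdot]$ on $\{\theta<t_0+h\}$ versus its complement. On the exit event I get a gain of at least $\gamma e^{-\delta h}P(\theta<t_0+h)$, and on the complement a gain of at least $\eta h$, so the total is bounded below by $\min(\gamma e^{-\delta h},\eta h)>0$.
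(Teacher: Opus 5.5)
Your proof is correct and has the same architecture as the paper's. The subsolution half is essentially the paper's argument: the DPP upper bound for a constant control $a$, the localized Dynkin formula up to $\tau_R$, then division by $h$. The supersolution half is again a contradiction on a neighbourhood where the residual is $\ge\eta$.

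You differ from the paper in how that contradiction is closed. The paper picks an $\varepsilon_h$-optimal control with $\varepsilon_h=o(h)$ and uses only the local inequality $V\ge\phi$. It must then argue separately that jumps leaving the neighbourhood contribute negligibly (growth plus uniform integrability), and that $E[\tau_R-t_0]=h+o(h)$ uniformly over the near-optimal family.

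Your device is a global $\psi\le V$ with $\psi\le V-\gamma$ off $B$, combined with the pathwise bound $\gamma\mathbf 1_{\{\theta<t_0+h\}}+\eta(\theta-t_0)\ge\min(\gamma,\eta h)$. This gives a gap $e^{-\delta h}\min(\gamma,\eta h)$ valid for \emph{every} admissible control at once. You therefore need no near-optimal controls, no exit-probability estimates and no $o(h)$ bookkeeping, and the uniform lower bound on $E[\theta-t_0]$ you first mention becomes superfluous. The only cost is building the strict global gap into the test function; this changes the nonlocal term by $O(\gamma)$ and is harmless for $\gamma\ll\eta$.

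Both routes rely on two facts not covered by the results they cite. The first is It\^o's formula for a general predictable control. The second is the `$\ge$' half of the DPP at a control-dependent stopping time, which comes from Lemma~\ref{lem:discounted-decomposition} and Proposition~\ref{prop:pre-dpp}. Lemma~\ref{lem:localized-Dynkin} and Theorem~\ref{thm:DPP}, as literally stated, cover only constant controls and a single fixed stopping time.

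One step in your reduction to global test functions needs tightening. Take a \emph{fixed} ball on which the composite equals $\phi$. Marks with $\rho(z)>\varepsilon$ can still produce jumps landing inside that ball, because Assumption~2.4 bounds $\ell$ above by $\rho$, not below. There the composite sees $\phi$ instead of $V$, which goes the wrong way in both the sub- and supersolution inequalities.

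The standard fix is to let the ball shrink to the point and use monotone convergence. This gives the inequality with $\int_E[V(x_0-\ell)-V(x_0)+\ell\phi_x]\,\nu(dz)$. Since $V\le\phi$ (resp.\ $V\ge\phi$) near the point, this integral is bounded above (resp.\ below) by $\mathcal I_\varepsilon^u[\phi]+\mathcal I^{u,\varepsilon}[V,\phi]$. In the supersolution case, compactness of $U$ and lower semicontinuity in $u$ let the infimum pass through the increasing limit. The paper's remark about jumps ``remaining inside'' versus ``leaving'' the cylinder conflates the same two splittings.

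Finally, your orientation $-\phi_t-\inf_u\{\cdots\}\le0$ for subsolutions is what both you and the paper actually prove. Definition~\ref{def:viscosity} states the two inequalities the other way round.
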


\begin{cor}
	\label{cor:viscosity-hedging}
	At every time point at which $V$ admits a classical first and second
	derivative, any minimizing selector of the Hamiltonian satisfies
	\[
	u^\star(t,x,y)
	\in
	\arg\min_{u\in U}
	\left\{
	f(t,x,y,u)
	+b(x,y,u)V_x
	+\frac12\sigma_X^2(x,y,u)V_{xx}
	+\mathcal I^uV(t,x,y)
	\right\}.
	\]
	Thus physical hedging is selected by the marginal effect of $u$ on
	both the local balance-sheet dynamics and the full catastrophe-loss
	distribution.
\end{cor}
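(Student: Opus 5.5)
The corollary is essentially a pointwise consequence of Theorem~\ref{thm:value-viscosity}: once $V$ is classically differentiable at a point, the viscosity inequalities collapse to the classical HJB equation there. The optimal hedge is then, by definition, any minimizer of the bracketed Hamiltonian. The $u$-independent terms ($\kappa(\vartheta-y)V_y$, $\tfrac12\sigma_I^2V_{yy}$, $-\delta V$, $V_t$) can be dropped without changing the argmin.

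The argument proceeds in three steps.

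\textbf{Step 1: test with $V$ itself.} Fix $(t_0,x_0,y_0)\in[0,T)\times\mathbb R^2$ at which $V$ has classical derivatives $V_t,DV,D^2V$. Construct $\phi\in C^{1,2,2}$ touching $V$ from above at this point, with the same first and second derivatives there. One can take the second-order Taylor polynomial of $V$ plus $\pm\eta(|t-t_0|+|x-x_0|^2+|y-y_0|^2)^{1+\beta}$-type corrections, or use the standard equivalence between viscosity sub/superdifferentials and smooth test functions. Do the same from below.

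The subsolution inequality of Definition~\ref{def:viscosity}, with $w=V$, gives
\[
0\ge\inf_u\{V_t+\mathcal A^u$-terms$+\mathcal I_\varepsilon^u[\phi]+\mathcal I^{u,\varepsilon}[V,\phi]\}.
\]
The supersolution inequality gives the reverse.

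\textbf{Step 2: remove the test function from the nonlocal part.} Let $\varepsilon\downarrow0$. The integrand of $\mathcal I^u_\varepsilon[\phi]$ is bounded by $C\ell^2\sup|\phi_{xx}|$ on a neighbourhood. By Assumption~2.4, $\ell(y_0,z,u)\le C\rho(z)$ uniformly over compact $U$, so this term is at most $C\int_{E_\varepsilon}\rho^2\,d\nu\to0$ uniformly in $u$.

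Meanwhile $\mathcal I^{u,\varepsilon}[V,\phi]\to\mathcal I^uV$. The large jumps are controlled by the growth $V\in\mathcal G_q$ together with $\int\rho^{p_\star}d\nu<\infty$ and $q<p_\star$. The small-jump region of the integral for $V$ is handled by the second-order Taylor bound of $V$ at the differentiability point. Here there is a subtlety: near $x_0$, differentiability alone only gives $o(|\ell|)$ plus the $\tfrac12V_{xx}\ell^2$ term. So I would argue with the test functions instead. Since $\phi^-\le V\le\phi^+$ locally with equal derivatives, the difference on $E_\varepsilon$ vanishes in the limit. Uniformity in $u$ lets the infimum commute with the limit.

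The two inequalities then yield
\[
-V_t-\mathcal H(t_0,x_0,y_0,V,DV,D^2V;V)=0.
\]

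\textbf{Step 3: existence of a minimizer and the argmin identification.} By Assumption~2.9, $U$ is compact, and $u\mapsto f,b,\sigma_X^2$ are continuous. The map $u\mapsto\mathcal I^uV$ is continuous by Assumption~2.9(iii) combined with the local Lipschitz/growth of $V$ and dominated convergence. Hence the infimum is attained, and a minimizing selector exists, measurably by standard measurable-selection arguments. Any such selector $u^\star$ minimizes the full Hamiltonian. Discarding the $u$-independent terms gives precisely the displayed argmin.

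\textbf{Main obstacle.} The main obstacle is Step~2, identifying $\mathcal I^uV$ pointwise with only pointwise twice-differentiability. I would handle it by the sandwiching with $\phi^\pm$. If that fails, I would state the corollary under local $C^{1,2,2}$ regularity of $V$ near the point, which makes the Taylor control on $E_\varepsilon$ immediate.
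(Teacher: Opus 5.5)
Your proposal is correct and follows the paper's own argument: test the viscosity inequalities at the differentiability point with $V$ itself, deduce the HJB equality there, and discard the $u$-independent terms. Your replacement of $V$ by $C^{1,2,2}$ functions $\phi^\pm$ carrying the same jet is a legitimate tightening of the paper's one-line proof, which tacitly assumes $V$ is an admissible test function. Two small remarks. First, the small-jump ``subtlety'' in Step~2 is not real: the Peano expansion of $V$ in $x$ at the point already makes the integrand $\tfrac12 V_{xx}\ell^2+o(\ell^2)=O(\rho^2)$. Second, a fixed power correction $\eta(\cdot)^{1+\beta}$ need not dominate a general $o(|t-t_0|+|x-x_0|^2+|y-y_0|^2)$ remainder, so you should rely on the standard jet/test-function equivalence you list as the alternative.
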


\begin{proof}
	At a differentiability point, use $V$ itself as the local test
	function in the viscosity equation. Terms independent of $u$ may be
	removed from the minimization. The remaining expression is exactly
	the displayed Hamiltonian.
\end{proof}

\subsection{Comparison, Existence, and Uniqueness}
\label{subsec:comparison-uniqueness}

The unbounded state space and the possibly infinite L\'evy measure
require a comparison argument compatible with the
exponential-polynomial growth of $V$. We impose the following
additional structure only for this subsection.

\medskip

\noindent\textbf{Assumption 4.1.}
The following conditions hold.

\begin{itemize}
	\item[(i)]
	For every $R>0$, the maps
	$b$, $\sigma_X$, and $f$ are uniformly continuous in
	$(t,x,y)$, uniformly for $u\in U\cap[0,R]$.
	
	\item[(ii)]
	There is a measurable $\rho:E\to[0,\infty)$ as in
	Assumption~2.4 such that, uniformly in $u$,
	\[
	|\ell(y,z,u)-\ell(y',z,u)|
	\le
	C|y-y'|\rho(z),
	\qquad
	|\ell(y,z,u)|
	\le
	C(1+e^{|y|}+|u|)\rho(z).
	\]
	
	\item[(iii)]
	The coercivity of $f$ dominates the positive growth of the
	control-dependent coefficients: for every compact
	$K\subset[0,T]\times\mathbb R^2$,
	\[
	f(t,x,y,u)
	-
	C_K\bigl(
	|b(x,y,u)|
	+\sigma_X^2(x,y,u)
	+\|\ell(y,\cdot,u)\|_{L^2(\nu)}^2
	\bigr)
	\longrightarrow+\infty
	\]
	as $|u|\to\infty$, uniformly on $K$.
	
	\item[(iv)]
	There exists $\bar q\in(q,p_\star)$ such that
	$
	\int_E
	\bigl(
	\rho(z)^2+\rho(z)^{\bar q}
	\bigr)\nu(dz)<\infty.
	$
	
	\item[(v)]
	The terminal cost $g$ is continuous and belongs to
	$\mathcal G_q$.
\end{itemize}

Assumption~4.1(iii) is the analytic counterpart of the economic
scarcity cost of physical hedging. It prevents a minimizing sequence
from escaping to infinite reconstruction capacity merely to suppress
catastrophe severity.

\begin{lem}
	\label{lem:compact-control}
	Under Assumption~4.1, for every compact
	$K\subset[0,T)\times\mathbb R^2$ and every bounded set of
	$(r,p,X)$, there exists $R_K<\infty$ such that the infimum in
	\eqref{eq:Hamiltonian}, restricted to arguments in these sets, is
	unchanged if $U$ is replaced by $U\cap[0,R_K]$.
\end{lem}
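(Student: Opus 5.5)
The idea is to compare the Hamiltonian integrand at a reference control $\bar u\in U$ with its value at large $u$, and to show that the integrand at large $u$ is eventually strictly bigger, uniformly over the prescribed compact and bounded sets. Fix the compact $K\subset[0,T)\times\mathbb R^2$ and a bound $M$ with $|r|+|p|+|X|\le M$. Here the nonlocal argument $\psi$ is understood to be a fixed function in $\mathcal G_q$ (or a $C^{1,2,2}$ test function); we take it locally bounded with the growth of $\mathcal G_q$, as in the viscosity definition. Let $G(u)$ denote the expression inside the infimum in \eqref{eq:Hamiltonian}. If $U$ is bounded, the claim is trivial, so assume $U$ is unbounded. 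Pick any $\bar u\in U$. Using the growth bounds \eqref{eq:growth-b-sigma} and Assumption~4.1(ii),(iv), together with Assumption~2.6 for $f$, $G(\bar u)$ is bounded above by a constant $A_K$ that depends only on $K$, $M$, $\bar u$ and $\psi$.

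Next, bound $G(u)$ from below for general $u$. The local terms satisfy $b p_1+\tfrac12\sigma_X^2X_{11}+\kappa(\vartheta-y)p_2+\tfrac12\sigma_I^2X_{22}-\delta r\ge -M(|b|+\sigma_X^2)-C_K$. For the nonlocal term, split $E$ into $E_1=\{\rho\le1\}$ and its complement. On $E_1$, if $\psi$ is a test function, Taylor's formula gives $|\psi(x-\ell,y)-\psi(x,y)+\ell\,\psi_x|\le C\ell^2\sup|\psi_{xx}|$. The supremum is taken over a neighbourhood of $x$ whose size is controlled by $\ell$, so one first localizes; alternatively one works directly with $\psi\in\mathcal G_q$ as described next. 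For $\psi\in\mathcal G_q$, use $|\psi(x-\ell,y)|\le C(1+|x|^q+\ell^q+e^{C|y|})$ together with $|\ell p_1|\le M\ell$. This gives the lower bound
\[
\int_E[\cdots]\,\nu(dz)\ge -C_K\Bigl(1+\|\ell(y,\cdot,u)\|_{L^2(\nu)}^2+\int_E\ell(y,z,u)^q\nu(dz)\Bigr),
\]
where the $L^1$-type pieces have been absorbed into the $L^2$ piece using $\ell\le \ell^2+1$ on the support where $\nu$ is finite. The large-jump region has finite $\nu$-mass by Assumption~2.1, and $\rho\in L^2$ on it.

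The main obstacle is the $\ell^q$ term, because Assumption~4.1(iii) controls only $\|\ell\|_{L^2(\nu)}^2$. To handle it, exploit the structure of the model. Under Assumption~2.5 with \eqref{eq:insurance-loss-amplitude}, $h\le1$, so $\ell(y,z,u)\le e^{y}a(z)$ uniformly in $u$, and $\int_E\ell^q\,d\nu$ is bounded on $K$ independently of $u$; this uses $q<\bar q$ and Assumption~4.1(iv) via $a\lesssim\rho$. If the proof must stay in the general setting, I would instead use the bound $\ell\le C(1+e^{|y|}+|u|)\rho$. This gives polynomial growth of order $|u|^q$ at most, and it has to be absorbed by (iii), possibly after strengthening the constant there. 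In the benchmark case, the lower bound becomes $G(u)\ge f-C_K(|b|+\sigma_X^2+\|\ell\|^2_{L^2(\nu)})-C_K'$.

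Finally, apply Assumption~4.1(iii) with the constant $C_K$ just obtained. Since the divergence to $+\infty$ is uniform on $K$, there is an $R_K$ such that $G(u)>A_K+1\ge G(\bar u)+1$ for all $u\in U$ with $|u|>R_K$, all points of $K$, and all $(r,p,X)$ in the bounded set. Enlarge $R_K$ if necessary so that $R_K\ge|\bar u|$, which guarantees $\bar u\in U\cap[0,R_K]$. Then controls outside $[0,R_K]$ never come closer than $1$ to the infimum, so the infimum over $U$ equals the infimum over $U\cap[0,R_K]$.
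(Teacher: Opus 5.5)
Your proposal is correct and follows essentially the paper's route: on $K$ you bound every non-$f$ term of the Hamiltonian by $C_K\bigl(1+|b|+\sigma_X^2+\|\ell(y,\cdot,u)\|_{L^2(\nu)}^2+\int_E\ell^{q}\,d\nu\bigr)$, using Taylor expansion on small jumps and the growth class on large jumps, and then invoke the coercivity in Assumption~4.1(iii). Your comparison with a reference control $\bar u$ simply makes explicit the paper's step ``choose $R_K$ beyond the resulting coercive level'', and your treatment of the higher-moment jump term via $h\le1$ is more careful than the paper's, which attributes that term to (iv) without saying why it stays bounded in $u$. Your hedged general-case fallback (``strengthening the constant'' in (iii)) would not work as stated, but you do not need it: the benchmark $\ell=e^ya(z)h(u)$ is retained throughout that section, and more generally the monotonicity in Assumption~2.5 already gives $0\le\ell(y,z,u)\le\ell(y,z,\min U)$ uniformly in $u$.
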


\begin{proof}
	On the stated compact sets, all terms in the Hamiltonian other than
	$f$ are bounded in absolute value by
	\[
	C_K\Big(
	1+|b(x,y,u)|
	+\sigma_X^2(x,y,u)
	+\|\ell(y,\cdot,u)\|_{L^2(\nu)}^2
	+\|\ell(y,\cdot,u)\|_{L^{\bar q}(\nu)}^{\bar q}
	\Big),
	\]
	after Taylor expansion of the small-jump contribution and use of the
	growth class for the large-jump part. Assumption~4.1(iii), together
	with the moment condition in (iv), implies that the full minimized
	expression tends to $+\infty$ as $|u|\to\infty$, uniformly over the
	compact set of remaining variables. Hence no minimizing sequence can
	escape to infinity. Choosing $R_K$ beyond the resulting coercive
	level proves the claim.
\end{proof}

\begin{lem}
	\label{lem:nonlocal-doubling}
	Let $w$ be an upper semicontinuous subsolution and $\underline w$ a
	lower semicontinuous supersolution in $\mathcal G_q$. For
	$\varepsilon,\eta>0$, consider a penalization of the form
	\[
	\begin{aligned}
		\Phi_{\varepsilon,\eta}
		(t,s,x,x',y,y')
		:=&
		w(t,x,y)-\underline w(s,x',y')-
		\frac{|x-x'|^2+|y-y'|^2+|t-s|^2}{2\varepsilon}
		-\eta\bigl(\Gamma(x,y)+\Gamma(x',y')\bigr),
	\end{aligned}
	\]
	where
	$
	\Gamma(x,y)
	=
	1+|x|^{\bar q}+e^{\lambda\sqrt{1+y^2}}
	$
	with $\bar q\in(q,p_\star)$ and $\lambda>0$ sufficiently large.
	If
	$(t_\varepsilon,s_\varepsilon,x_\varepsilon,x'_\varepsilon,
	y_\varepsilon,y'_\varepsilon)$
	is a maximizer, then, along a subsequence,
	\[
	\frac{
		|x_\varepsilon-x'_\varepsilon|^2
		+
		|y_\varepsilon-y'_\varepsilon|^2
		+
		|t_\varepsilon-s_\varepsilon|^2
	}{\varepsilon}
	\longrightarrow0
	\]
	as $\varepsilon\downarrow0$, and the difference of the corresponding
	nonlocal terms is bounded above by
	$
	o_\varepsilon(1)+C\eta.
	$
\end{lem}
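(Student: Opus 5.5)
The proof follows the standard doubling-of-variables scheme for nonlocal equations in the style of \citet{BarlesImbert2008} and \citet{JakobsenKarlsen2006}. It has two parts: the vanishing of the penalization, and the comparison of the nonlocal terms at the doubled maximum.

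\emph{Step 1: the maximum exists.} Since $w,\underline w\in\mathcal G_q$, we have $|w|+|\underline w|\le C(1+|x|^q+e^{C|y|})$. Taking $\bar q>q$ and $\lambda>C$ (this is where ``$\lambda$ sufficiently large'' is used), the term $-\eta(\Gamma(x,y)+\Gamma(x',y'))$ dominates the growth of $w$ and $\underline w$. Hence $\Phi_{\varepsilon,\eta}\to-\infty$ at infinity. Upper semicontinuity then gives a maximizer lying in a compact set $K_\eta$ that is independent of $\varepsilon$.

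\emph{Step 2: the penalization vanishes.} Put $M_\varepsilon:=\max\Phi_{\varepsilon,\eta}$. Comparing with the value on the diagonal gives the classical inequality $M_\varepsilon\ge M_{2\varepsilon}+\frac{1}{4\varepsilon}\bigl(|x_\varepsilon-x'_\varepsilon|^2+|y_\varepsilon-y'_\varepsilon|^2+|t_\varepsilon-s_\varepsilon|^2\bigr)$. Since $M_\varepsilon$ is nonincreasing in $\varepsilon$ and bounded on $K_\eta$, it converges. Therefore the penalization divided by $\varepsilon$ tends to $0$. Along a subsequence the points converge to a diagonal point $(\hat t,\hat x,\hat y)$, and $w(t_\varepsilon,x_\varepsilon,y_\varepsilon)\to w(\hat t,\hat x,\hat y)$, and likewise for $\underline w$. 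This last convergence follows from semicontinuity together with the fact that the maximum of $w-\underline w-2\eta\Gamma$ on the diagonal is attained there.

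\emph{Step 3: the nonlocal terms.} Use the test functions $\phi(t,x,y)=\underline w(s_\varepsilon,x'_\varepsilon,y'_\varepsilon)+\frac{|\cdot-\cdot'|^2}{2\varepsilon}+\eta\Gamma$ and the analogous one for $\underline w$. By Lemma~\ref{lem:compact-control}, the control set can be restricted to $U\cap[0,R_K]$. Fix $u$, since for infima we estimate $\inf_u A-\inf_u B\le\sup_u(A-B)$. The nonlocal difference then splits into three pieces.

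\textbf{Small jumps, $E_\delta$.} The integrand is the second-order Taylor remainder of the test function. Its second derivative is of order $\varepsilon^{-1}+\eta\|D^2\Gamma\|_{K}$, so this piece is bounded by $C(\varepsilon^{-1}+1)\int_{E_\delta}\rho^2\,d\nu$. Letting $\delta\to0$ before $\varepsilon\to0$ makes it arbitrarily small.

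\textbf{Large jumps, $E^\delta$, excluding the $\Gamma$ contribution.} Maximality of $\Phi$ at the doubled point, evaluated at the shifted points $(x_\varepsilon-\ell(y_\varepsilon,z,u),y_\varepsilon)$ and $(x'_\varepsilon-\ell(y'_\varepsilon,z,u),y'_\varepsilon)$, gives
\[
w(x_\varepsilon-\ell,\cdot)-w(x_\varepsilon,\cdot)-\bigl[\underline w(x'_\varepsilon-\ell',\cdot)-\underline w(x'_\varepsilon,\cdot)\bigr]\le\frac{|(x_\varepsilon-x'_\varepsilon)-(\ell-\ell')|^2-|x_\varepsilon-x'_\varepsilon|^2}{2\varepsilon}+\eta\bigl[\Gamma\text{-increments}\bigr].
\]
Combining with the compensator terms $\ell\phi_x-\ell'\underline\phi_x$, whose gradients agree up to the $\eta D\Gamma$ parts, the quadratic contributions cancel down to $|\ell-\ell'|^2/(2\varepsilon)$. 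By Assumption~4.1(ii), $|\ell-\ell'|^2/(2\varepsilon)\le C|y_\varepsilon-y'_\varepsilon|^2\rho(z)^2/\varepsilon$, which is $o_\varepsilon(1)$ by Step 2 and the integrability of $\rho^2$.

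\textbf{The $\Gamma$ contributions.} The $\eta$-parts are bounded by $\eta\int_E|\Gamma(x-\ell,y)-\Gamma(x,y)+\ell\Gamma_x|\,\nu(dz)$. Using Taylor expansion for $\rho\le1$ and the bound $|x-\ell|^{\bar q}\le C(|x|^{\bar q}+\ell^{\bar q})$ for large jumps, this is at most $C\eta$ on $K_\eta$, uniformly in $u\le R_K$, by Assumption~4.1(iv).

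\emph{Main obstacle.} The hardest step is the large-jump $\Gamma$ bound. Because $\Gamma$ has polynomial growth of order $\bar q$ in $x$, one needs $\int\rho^{\bar q}\,d\nu<\infty$; the exponential part in $y$ causes no difficulty, since jumps do not move $y$. One must also ensure that the constant is uniform over the $\eta$-dependent compact set $K_\eta$; I would handle this by first bounding $\eta\Gamma(x_\varepsilon,y_\varepsilon)$ uniformly using the maximality of $\Phi$. With the three pieces combined, the nonlocal difference is bounded by $o_\varepsilon(1)+C\eta$, which is the claim.
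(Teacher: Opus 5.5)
Your route is the paper's. You get the maximizer from $\Gamma$ dominating $\mathcal G_q$ growth, then the doubling estimate. For the jumps you use maximality of $\Phi_{\varepsilon,\eta}$ at the points translated by $-\ell(y_\varepsilon,z,u)$ and $-\ell(y'_\varepsilon,z,u)$, so the cross terms cancel against the compensators. This leaves $|\ell-\ell'|^2/(2\varepsilon)\le C\rho(z)^2|y_\varepsilon-y'_\varepsilon|^2/\varepsilon$ plus a Lyapunov remainder controlled by $\int_E(\rho^2+\rho^{\bar q})\,d\nu$. Your small/large-jump split is the one built into Definition~\ref{def:viscosity}, and your cancellation computation is correct. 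There is a slip in Step~2: $M_\varepsilon$ is nondecreasing in $\varepsilon$, and evaluating $\Phi_{2\varepsilon,\eta}$ at the maximizer of $\Phi_{\varepsilon,\eta}$ gives $M_{2\varepsilon}\ge M_\varepsilon+\frac{1}{4\varepsilon}\bigl(|x_\varepsilon-x'_\varepsilon|^2+|y_\varepsilon-y'_\varepsilon|^2+|t_\varepsilon-s_\varepsilon|^2\bigr)$, not the reverse. The conclusion is unaffected.

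The real problem is the fix you propose for your ``main obstacle.'' Maximality against a diagonal point only gives $\eta\bigl(\Gamma(x_\varepsilon,y_\varepsilon)+\Gamma(x'_\varepsilon,y'_\varepsilon)\bigr)\le w(t_\varepsilon,x_\varepsilon,y_\varepsilon)-\underline w(s_\varepsilon,x'_\varepsilon,y'_\varepsilon)+C$. For functions merely in $\mathcal G_q$ this is not bounded uniformly in $\eta$. If $w-\underline w$ grows like $|x|^q$, the maximizers sit at $|x_\varepsilon|^{\bar q-q}\sim\eta^{-1}$, so $\eta\Gamma(x_\varepsilon,y_\varepsilon)\sim\eta^{-q/(\bar q-q)}\to\infty$. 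Since the Lyapunov integrand is nonnegative ($\Gamma$ is convex in $x$), your bound for it is then of order $\eta|x_\varepsilon|^{\bar q-2}$, which is not $O(\eta)$. What is true, and what the comparison argument uses, is the weighted estimate
\[
\eta\int_E\bigl[\Gamma(x-\ell,y)-\Gamma(x,y)+\ell\,\Gamma_x(x,y)\bigr]\nu(dz)
\le C\eta\,\Gamma(x,y)\int_E\bigl(\rho^2+\rho^{\bar q}\bigr)\,d\nu ,
\]
with $C$ independent of $\eta$ and $(x,y)$. To prove it:
\begin{itemize}
\item On $\{\rho\le1\}$, use Taylor together with $\ell\le C(1+e^{|y|})\rho$ ($U$ is compact by Assumption~2.9(i)) and Young's inequality $|x|^{\bar q-2}e^{2|y|}\le|x|^{\bar q}+e^{\bar q|y|}$.
\item On $\{\rho>1\}$, use $|x-\ell|^{\bar q}\le C(|x|^{\bar q}+\ell^{\bar q})$ and $\ell|x|^{\bar q-1}\le C(|x|^{\bar q}+\ell^{\bar q})$.
\item Take $\lambda\ge\bar q$ so that $e^{\bar q|y|}\le\Gamma$.
\end{itemize}
Hence the nonlocal difference is $o_\varepsilon(1)+C\eta\bigl[\Gamma(x_\varepsilon,y_\varepsilon)+\Gamma(x'_\varepsilon,y'_\varepsilon)\bigr]$. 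This is exactly what the residual $c\eta\Gamma$ of Proposition~\ref{prop:strict-supersolution} absorbs in Theorem~\ref{thm:comparison} once $\Lambda$ is large. The paper's own proof writes this contribution as $C\eta(\rho^2+\rho^{\bar q})$ and leaves the dependence of $C$ on location implicit. So the ``$C\eta$'' in the statement should be read in this weighted sense, not obtained from a uniform bound on $\eta\Gamma$.
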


\begin{prop}
	\label{prop:strict-supersolution}
	Let $\underline w\in\mathcal G_q$ be a viscosity supersolution of
	\eqref{eq:nonlocal-HJB}. There exist constants
	$\lambda,\Lambda>0$ such that, for
	$
	\Gamma(x,y)
	=
	1+|x|^{\bar q}+e^{\lambda\sqrt{1+y^2}},
	$
	the function
	$
	\underline w^\eta(t,x,y)
	=
	\underline w(t,x,y)
	+
	\eta e^{\Lambda(T-t)}\Gamma(x,y)
	$
	is, for every $\eta>0$, a strict viscosity supersolution in the sense
	that its HJB residual is bounded below by
	$c\eta\Gamma(x,y)$ for some $c>0$.
\end{prop}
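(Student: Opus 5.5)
The plan is to verify the supersolution inequality for $\underline w^\eta$ directly from the definition in Definition~\ref{def:viscosity}, using the fact that adding a smooth function shifts test functions. Write $\psi(t,x,y):=\eta e^{\Lambda(T-t)}\Gamma(x,y)$. If $\underline w^\eta-\phi$ has a local minimum at $(t_0,x_0,y_0)$, then $\underline w-(\phi-\psi)$ has a local minimum there. Hence the supersolution inequality for $\underline w$ applies with the test function $\tilde\phi:=\phi-\psi\in C^{1,2,2}$. Since $\Gamma$ is smooth, the operators split linearly. The small-jump part becomes $\mathcal I_\varepsilon^u[\phi]=\mathcal I_\varepsilon^u[\tilde\phi]+\mathcal I_\varepsilon^u[\psi]$, and the large-jump part becomes $\mathcal I^{u,\varepsilon}[\underline w^\eta,\phi]=\mathcal I^{u,\varepsilon}[\underline w,\tilde\phi]+\mathcal I^{u,\varepsilon}[\psi,\psi]$, where the last term is the full-integrand nonlocal operator applied to $\psi$. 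Adding these together, the residual for $\underline w^\eta$ at each fixed $u$ equals the residual for $\underline w$ plus
\[
R^u\psi:=\psi_t+b\psi_x+\kappa(\vartheta-y)\psi_y+\tfrac12\sigma_X^2\psi_{xx}+\tfrac12\sigma_I^2\psi_{yy}-\delta\psi+\mathcal I^u\psi .
\]
Because $\underline w$ is a supersolution, the infimum over $u$ of its residual is $\le 0$ in the paper's sign convention, where the strict inequality reads $-\partial_t-\mathcal H\ge c\eta\Gamma$. It therefore suffices to prove $-R^u\psi\ge c\eta\Gamma$ uniformly in $u\in U$, i.e.\ $\mathcal A^u\Gamma-\delta\Gamma\le(\Lambda-c)\Gamma$. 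The minus sign comes from $\psi_t=-\Lambda\psi$. Since the infimum of a sum is at least the infimum of the first term plus the infimum of the second, the residual bound then transfers. The terminal inequality $\underline w^\eta(T)\ge g$ is immediate because $\psi\ge0$.

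The core step is the Lyapunov bound
\[
\mathcal A^u\Gamma(x,y)\le C_0\Gamma(x,y)\quad\text{uniformly in }u\in U,
\]
after which $\Lambda:=C_0+|\delta|+1$ gives $c=1$. I will treat the $y$-part and the $x$-part separately.

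For the $y$-part, $G(y)=e^{\lambda\sqrt{1+y^2}}$ has $G'=\lambda\frac{y}{\sqrt{1+y^2}}G$ and $|G''|\le(\lambda^2+\lambda)G$. Mean reversion gives $\kappa(\vartheta-y)G'\le\kappa\lambda|\vartheta|G$, so $\kappa(\vartheta-y)G'+\tfrac12\sigma_I^2G''\le CG$.

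For the $x$-part, $F(x)=|x|^{\bar q}$ is $C^2$ since $\bar q>2$. The drift and diffusion contribute at most
\[
C\bigl(1+|x|+|y|+|u|\bigr)|x|^{\bar q-1}+C\bigl(1+|x|+|y|+|u|\bigr)^2|x|^{\bar q-2}.
\]
Using compactness of $U$ (Assumption~2.9(i)) and Young's inequality, this is bounded by $C(1+|x|^{\bar q}+|y|^{\bar q})\le C\Gamma$, because $|y|^{\bar q}\le C_\lambda e^{\lambda\sqrt{1+y^2}}$.

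The main obstacle is the nonlocal term $\mathcal I^u F$, since the jump sizes are $\ell\le C(1+e^{|y|})\rho(z)$, which is exponential in $y$ while $F$ is polynomial in $x$. I will split the mark space at $\rho\le1$ and $\rho>1$.

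On $\{\rho\le1\}$, a second-order Taylor expansion gives a bound of
\[
C\,\ell^2\bigl(|x|^{\bar q-2}+\ell^{\bar q-2}\bigr)\rho^2 .
\]
On $\{\rho>1\}$, the convexity estimate $|x-\ell|^{\bar q}-|x|^{\bar q}+\bar q\,\ell|x|^{\bar q-2}x\le C\bigl(\ell^2|x|^{\bar q-2}+\ell^{\bar q}\bigr)$ gives the same type of bound. Integrating with Assumption~4.1(iv), the total is at most
\[
C\Bigl[(1+e^{2|y|})|x|^{\bar q-2}+(1+e^{\bar q|y|})\Bigr].
\]
Young's inequality with exponents $\bar q/(\bar q-2)$ and $\bar q/2$ bounds the first piece by $C\bigl(|x|^{\bar q}+e^{\bar q|y|}\bigr)$. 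Everything is then at most $C\Gamma$ provided $\lambda>\bar q$; this is exactly where "$\lambda$ sufficiently large" is fixed.

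Finally, the small-jump/large-jump splitting in the definition requires the $\psi$-contribution to be finite for every $\varepsilon$. This holds by the same estimates, and the constants are independent of $\varepsilon$ and $\eta$. That yields the residual lower bound $c\eta\Gamma$ with $c=1$.
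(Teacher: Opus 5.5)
Your proposal is correct and follows essentially the paper's argument: prove a Lyapunov bound $\mathcal A^u\Gamma\le C\Gamma$ uniformly in $u$, take $\Lambda$ larger than this constant plus the discount term, and transfer the strict inequality by absorbing the smooth perturbation into the test function. Your version is in fact more careful, because the one-sided global bound (mean reversion in $y$, compactness of $U$ from Assumption~2.9(i), and $\lambda\ge\bar q$ via Young's inequality) avoids the paper's restriction to compact state sets and its two-sided claim; that two-sided claim fails globally for the $y$-drift.

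The one slip is in how you justify the transfer step. What you need is $\inf_u(A_u+B_u)\le\inf_u A_u+\sup_u B_u$, i.e.\ pointwise domination in $u$, not the inequality $\inf(A+B)\ge\inf A+\inf B$ that you quote. Your uniform bound $R^u\psi\le-c\eta\Gamma$ gives exactly the inequality that is needed.
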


\begin{theorem}
	\label{thm:comparison}
	Suppose Assumptions~2.1-2.9 and~4.1 hold. Let
	$\overline v\in USC([0,T]\times\mathbb R^2)\cap\mathcal G_q$ be a
	viscosity subsolution of \eqref{eq:nonlocal-HJB}, and let
	$\underline v\in LSC([0,T]\times\mathbb R^2)\cap\mathcal G_q$ be a
	viscosity supersolution. If
	$
	\overline v(T,x,y)
	\le
	\underline v(T,x,y)
	$ for all $(x,y)\in\mathbb R^2,
	$
	then
	\begin{equation}
		\label{eq:comparison-order}
		\overline v(t,x,y)
		\le
		\underline v(t,x,y)
		\qquad
		\text{on }[0,T]\times\mathbb R^2.
	\end{equation}
\end{theorem}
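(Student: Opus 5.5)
We argue by contradiction using the strict supersolution of Proposition~\ref{prop:strict-supersolution} and the nonlocal doubling estimate of Lemma~\ref{lem:nonlocal-doubling}. Suppose $\sup_{[0,T]\times\mathbb R^2}(\overline v-\underline v)>0$.

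\textbf{Step 1: strict perturbation.} Fix $\eta>0$ small and replace $\underline v$ by $\underline v^\eta=\underline v+\eta e^{\Lambda(T-t)}\Gamma$. By Proposition~\ref{prop:strict-supersolution}, $\underline v^\eta$ is a strict supersolution with residual at least $c\eta\Gamma$. Since $\bar q>q$ and $\lambda$ is large, $\Gamma$ dominates every element of $\mathcal G_q$ at infinity. Hence $\overline v-\underline v^\eta\to-\infty$ as $|x|+|y|\to\infty$, and for $\eta$ small $M_\eta:=\sup(\overline v-\underline v^\eta)>0$ is attained on a compact set $K_\eta$ independent of $\varepsilon$. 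The terminal inequality $\overline v(T,\cdot)\le\underline v(T,\cdot)<\underline v^\eta(T,\cdot)$ shows the maximum is not at $t=T$.

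\textbf{Step 2: doubling of variables.} Maximize
\[
\overline v(t,x,y)-\underline v^\eta(s,x',y')-\frac{|x-x'|^2+|y-y'|^2+|t-s|^2}{2\varepsilon}
\]
(the $\eta\Gamma$ localization being already absorbed into $\underline v^\eta$, or added as in Lemma~\ref{lem:nonlocal-doubling}). Lemma~\ref{lem:nonlocal-doubling} gives maximizers in a fixed compact set with penalization $o(1)$. Upper and lower semicontinuity then give convergence to a maximum point of $\overline v-\underline v^\eta$ with $t<T$, so for $\varepsilon$ small both times are below $T$. Apply the nonlocal Jensen–Ishii lemma in the form of \citet{BarlesImbert2008} to obtain limiting sub/superjets $(a,p,X)$ and $(a,p,Y)$ with $p=(x-x')/\varepsilon$, $(y-y')/\varepsilon$ and the standard matrix inequality with bound $3/\varepsilon$. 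Write the viscosity inequalities of Definition~\ref{def:viscosity}, splitting the nonlocal operator at a level $\kappa$ with small jumps on the test function and large jumps on $\overline v$, $\underline v^\eta$.

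\textbf{Step 3: estimating the difference.} Lemma~\ref{lem:compact-control} lets us restrict the infimum to $U\cap[0,R_K]$. Choose $u$ nearly optimal in the supersolution inequality and use it in the subsolution one; subtracting gives
\[
c\eta\Gamma+\delta M_\eta\le \text{(local terms)}+\text{(nonlocal terms)}.
\]
The local terms are handled as follows:
\begin{itemize}
\item the drift terms are bounded by $L|x-x'|^2/\varepsilon$, using the Lipschitz property \eqref{eq:lipschitz-b-sigma} uniformly in $u$;
\item the diffusion terms are bounded by $C(|x-x'|^2+|y-y'|^2)/\varepsilon$, using the matrix inequality and $\sigma_I$ constant;
\item the $\kappa(\vartheta-y)$ term is controlled the same way;
\item the $f$ difference is $o(1)$ by Assumption~4.1(i) on the compact set.
\end{itemize}
All of these tend to $0$. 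The nonlocal terms are controlled by Lemma~\ref{lem:nonlocal-doubling}: their difference is at most $o_\varepsilon(1)+C\eta$. The key point is that at the doubled maximum, $\overline v(x-\ell(y,z,u),y)-\underline v^\eta(x'-\ell(y',z,u),y')$ is bounded by the maximum value plus the penalty shifted by $|\ell(y,z,u)-\ell(y',z,u)|\le C|y-y'|\rho(z)$. After letting $\varepsilon\to0$, then $\kappa\to0$, this gives $\delta M_\eta+c\eta\Gamma\le C'\eta$.

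\textbf{Main obstacle.} The difficulty is absorbing the $C\eta$ from the nonlocal estimate, which arises from the $\Gamma$-weighted large-jump tails under the exponential factor $e^y$ in $\ell$. I would handle this by choosing $\Lambda$ in Proposition~\ref{prop:strict-supersolution} large enough that $c\eta\Gamma$ strictly exceeds this $C\eta$; this works since $\Gamma\ge1$ and the constant scales with $\Lambda$. Verifying that $\int_E \Gamma(x-\ell,y)\,\nu(dz)\le C\Gamma(x,y)$ via Assumption~4.1(iv) with $\bar q<p_\star$ is the delicate calculation. Once it holds, we get a contradiction, so $\overline v\le\underline v^\eta$ for each $\eta$, and letting $\eta\downarrow0$ yields \eqref{eq:comparison-order}.
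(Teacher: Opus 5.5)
Your route is the same as the paper's. You argue by contradiction and pass to the strict supersolution $\underline v^\eta$ of Proposition~\ref{prop:strict-supersolution}. You double variables with a quadratic penalty in $(t,x,y)$ and apply the nonlocal Jensen--Ishii lemma. You reuse a near-optimal control of the supersolution Hamiltonian in the subsolution inequality, show the local terms are $o_\varepsilon(1)$, and absorb the nonlocal remainder by making the strictness constant dominate. Your maximality-shift computation for the large jumps is more explicit than the paper's.

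\textbf{Order of limits.} One step fails as you have written it: you send $\varepsilon\to0$ before the splitting level $\kappa\to0$. The small jumps are evaluated on the test functions, i.e.\ on the quadratic penalty with the other variable frozen. Subtracting the two inequalities therefore leaves
\[
\frac{1}{2\varepsilon}\int_{\{\rho\le\kappa\}}\Bigl(\ell(y_\varepsilon,z,u)^2+\ell(y'_\varepsilon,z,u)^2\Bigr)\,\nu(dz)\;\le\;\frac{C}{\varepsilon}\int_{\{\rho\le\kappa\}}\rho(z)^2\,\nu(dz),
\]
which blows up as $\varepsilon\to0$ with $\kappa$ fixed. The limits must be taken the other way. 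For fixed $\eta$ and $\varepsilon$, choose $\kappa=\kappa(\varepsilon)$ so small that $\varepsilon^{-1}\int_{\{\rho\le\kappa\}}\rho^2\,d\nu\le\varepsilon$. This is allowed because Definition~\ref{def:viscosity} holds for every small splitting level, and the Jensen--Ishii matrix bound does not depend on that level. Your large-jump bound $\frac{1}{2\varepsilon}\int_{\{\rho>\kappa\}}|\ell-\ell'|^2\,d\nu\le C\varepsilon^{-1}|y_\varepsilon-y'_\varepsilon|^2\int_E\rho^2\,d\nu$ is uniform in $\kappa$, so only after this choice do you let $\varepsilon\to0$.

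\textbf{The ``main obstacle''.} This difficulty does not arise in your own set-up. Suppose $\eta e^{\Lambda(T-t)}\Gamma$ sits inside $\underline v^\eta$ and the doubling function carries only the quadratic penalty. Then your shift argument bounds the entire nonlocal difference by the $|\ell-\ell'|^2/(2\varepsilon)$ term, with no $C\eta$ remainder, because the jump contribution of the perturbation is already paid for in the residual $c\eta\Gamma$. A remainder of order $C\eta\Gamma$ appears only if you add $\eta\Gamma$ a second time in the doubling, as in Lemma~\ref{lem:nonlocal-doubling}. In that case it is absorbed because $c$ grows linearly in $\Lambda$ while that $C$ does not.

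\textbf{Two smaller corrections.} First, the estimate Proposition~\ref{prop:strict-supersolution} actually needs is $\sup_{u\in U}\mathcal I^u\Gamma\le C\Gamma$ for the \emph{compensated} operator. The raw integral $\int_E\Gamma(x-\ell,y)\,\nu(dz)$ you wrote is infinite when $\nu(E)=\infty$. The compensated bound holds once $\lambda\ge\bar q$, so that the $e^{\bar q|y|}$ term produced by $\ell^{\bar q}$ is dominated. Second, $\mathcal G_q$ allows arbitrary exponential rates $C_v$, so $\Gamma$ does not dominate every element of $\mathcal G_q$ for a fixed $\lambda$. You must choose $\lambda$ after the pair $(\overline v,\underline v)$ is fixed.
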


\begin{cor}
	\label{cor:viscosity-uniqueness}
	Under the hypotheses of Theorem~\ref{thm:comparison}, there exists at
	most one continuous viscosity solution
	$
	v\in\mathcal G_q
	$
	of
	\eqref{eq:nonlocal-HJB}-\eqref{eq:HJB-terminal}.
\end{cor}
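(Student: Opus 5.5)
The plan is to derive uniqueness directly from Theorem~\ref{thm:comparison}, applying it twice with the roles of the two solutions exchanged. Suppose $v_1,v_2\in C([0,T]\times\mathbb R^2)\cap\mathcal G_q$ are both continuous viscosity solutions of \eqref{eq:nonlocal-HJB}--\eqref{eq:HJB-terminal}. By Definition~\ref{def:viscosity}, a continuous viscosity solution is simultaneously a subsolution and a supersolution. Continuity gives $v_i\in USC\cap LSC$, so each $v_i$ is admissible as either the upper or the lower function in Theorem~\ref{thm:comparison}. The growth-class membership $v_i\in\mathcal G_q$ is part of the hypothesis, and it is exactly the class in which the comparison theorem is stated.

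First I will check the terminal ordering. The definition requires a subsolution to satisfy $w(T,\cdot,\cdot)\le g$ and a supersolution to satisfy $w(T,\cdot,\cdot)\ge g$. A solution therefore satisfies $v_i(T,\cdot,\cdot)=g$, and this is consistent with \eqref{eq:HJB-terminal}. Hence $v_1(T,x,y)\le v_2(T,x,y)$ and $v_2(T,x,y)\le v_1(T,x,y)$ for all $(x,y)\in\mathbb R^2$.

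Next I will apply Theorem~\ref{thm:comparison} with $\overline v=v_1$ and $\underline v=v_2$. Assumptions~2.1--2.9 and~4.1 are in force by hypothesis, so the theorem gives $v_1\le v_2$ on $[0,T]\times\mathbb R^2$. Exchanging the roles, with $\overline v=v_2$ and $\underline v=v_1$, gives $v_2\le v_1$. Together these yield $v_1=v_2$.

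The only point needing care is that the two solutions may have different growth constants $C_{v_1}$ and $C_{v_2}$ in $\mathcal G_q$. The comparison theorem is stated for arbitrary members of $\mathcal G_q$, with no common constant required, so this causes no difficulty. All the analytic work, namely the doubling-of-variables estimate of Lemma~\ref{lem:nonlocal-doubling} and the strict supersolution perturbation of Proposition~\ref{prop:strict-supersolution}, is already contained in Theorem~\ref{thm:comparison}. Consequently I expect no substantive obstacle beyond confirming that the terminal condition is encoded in the viscosity definition as described above.
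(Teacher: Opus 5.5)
Your proposal is correct and is the paper's proof: apply Theorem~\ref{thm:comparison} once with $(\overline v,\underline v)=(v_1,v_2)$ and once with the roles exchanged, which gives $v_1\le v_2$ and $v_2\le v_1$. Your extra checks make explicit what the paper leaves implicit: continuity supplies both semicontinuity requirements, the terminal ordering follows from Definition~\ref{def:viscosity} (or directly from \eqref{eq:HJB-terminal}), and each solution may carry its own growth constant in $\mathcal G_q$.
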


\begin{proof}
	If $v_1$ and $v_2$ are two such solutions, apply
	Theorem~\ref{thm:comparison} first with
	$(\overline v,\underline v)=(v_1,v_2)$ and then with
	$(v_2,v_1)$. Thus $v_1\le v_2$ and $v_2\le v_1$.
\end{proof}

\begin{theorem}
	\label{thm:existence-identification}
	Suppose Assumptions~2.1-2.9 and~4.1 hold. Then the catastrophe
	insurance value function defined by \eqref{eq:insurance-value} is the
	unique viscosity solution in
	$C([0,T]\times\mathbb R^2)\cap\mathcal G_q$
	of
	\eqref{eq:nonlocal-HJB}-\eqref{eq:HJB-terminal}.
\end{theorem}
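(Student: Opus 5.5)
The proof assembles results already established rather than introducing new estimates. It has three parts: membership of $V$ in the right class, the viscosity property, and uniqueness via comparison.

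\emph{Step 1: $V$ lies in $C([0,T]\times\mathbb R^2)\cap\mathcal G_q$.} Under Assumptions~2.1--2.9, Corollary~\ref{cor:value-growth} gives $V\in\mathcal G_q$. That corollary was stated under the extra sign condition $f,g\ge0$. To drop it, I would use the coercivity bound \eqref{eq:coercivity} together with Theorem~\ref{thm:strong-wellposedness} and Lemma~\ref{lem:OU-moments}. These bound $J$ from below by $-C(1+|x|^2+e^{C|y|})$ uniformly in $u$, since the term $c_0|u|^2$ is nonnegative. The upper bound comes from a constant control exactly as in Corollary~\ref{cor:value-growth}.

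Continuity is then Proposition~\ref{prop:value-continuity}. Its extra hypothesis, local uniform continuity of $f,g$, is supplied by Assumption~4.1(i),(v) together with Assumption~2.7. The same proposition also gives the terminal condition $\lim_{t\uparrow T}V=g$ locally uniformly, so that $V(T,\cdot)=g$ in the sense required by Definition~\ref{def:viscosity}.

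\emph{Step 2: $V$ is a viscosity solution.} Since the hypotheses of Proposition~\ref{prop:value-continuity} hold, Theorem~\ref{thm:value-viscosity} applies directly and shows that $V$ is a viscosity solution of \eqref{eq:nonlocal-HJB}--\eqref{eq:HJB-terminal}.

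I would check one compatibility point here. The Hamiltonian \eqref{eq:Hamiltonian} takes the infimum over all of $U$, and the split form of Definition~\ref{def:viscosity} must be the one used in Theorem~\ref{thm:comparison}. Lemma~\ref{lem:compact-control} allows the infimum to be restricted locally to $U\cap[0,R_K]$. Under Assumption~2.9(i), $U$ is compact anyway, so this restriction is harmless and the two notions coincide.

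\emph{Step 3: uniqueness.} Let $v\in C\cap\mathcal G_q$ be any other viscosity solution. Then $v$ and $V$ are each both USC subsolutions and LSC supersolutions in $\mathcal G_q$, with equal terminal data $g$. Theorem~\ref{thm:comparison}, applied in both orders, gives $v=V$; this is exactly Corollary~\ref{cor:viscosity-uniqueness}.

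\emph{Main obstacle.} The delicate point is consistency of growth classes. $\mathcal G_q$ allows a constant $C_v$ in the exponential $e^{C_v|y|}$ that depends on the function. The comparison argument, however, uses the penalization $\Gamma=1+|x|^{\bar q}+e^{\lambda\sqrt{1+y^2}}$ from Lemma~\ref{lem:nonlocal-doubling} and Proposition~\ref{prop:strict-supersolution}.

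So I must verify that $\lambda$ can be chosen larger than the growth constants of both $V$ and $v$, so that $\Phi_{\varepsilon,\eta}$ attains its maximum. I must also verify that the strict-supersolution constant $\Lambda$ still dominates the generator applied to $\Gamma$, including the nonlocal term, which is finite by Assumption~4.1(iv). Both propositions are stated with "$\lambda$ sufficiently large", so I would fix $\lambda$ after the two solutions are given. The generator's action on $e^{\lambda\sqrt{1+y^2}}$ is controlled by the mean-reverting drift and the constant $\sigma_I$, with no jumps in $y$. For that reason I expect enlarging $\lambda$ only to increase $\Lambda$, and not to break the argument.
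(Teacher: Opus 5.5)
Your proposal is correct and follows the paper's proof. Existence comes from Proposition~\ref{prop:value-continuity} and Theorem~\ref{thm:value-viscosity}, and uniqueness from Theorem~\ref{thm:comparison} applied in both orders (Corollary~\ref{cor:viscosity-uniqueness}); your extra checks fill in details the paper leaves implicit, namely the hidden sign condition behind Corollary~\ref{cor:value-growth}, the continuity hypothesis of Proposition~\ref{prop:value-continuity}, and fixing $\lambda$ after the two solutions are given.

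One small correction to your sign-condition fix: $g$ is controlled only by the two-sided bound \eqref{eq:cost-growth}, so the lower bound on $J$ should read $-C(1+|x|^q+e^{C|y|})$ rather than involve $|x|^2$. With $U$ compact, that bound alone already gives $|V|\le C(1+|x|^q+e^{C|y|})$ without any appeal to coercivity.
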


\begin{proof}
	Existence is probabilistic. Proposition~\ref{prop:value-continuity}
	gives
	$
	V\in C([0,T]\times\mathbb R^2)\cap\mathcal G_q,
	$
	and Theorem~\ref{thm:value-viscosity} shows that $V$ satisfies the nonlocal HJB equation in the viscosity sense together with the terminal condition. Uniqueness follows from Corollary~\ref{cor:viscosity-uniqueness}. Therefore every continuous viscosity solution in $\mathcal G_q$ coincides with the stochastic control value.
\end{proof}

\begin{prop}
	\label{prop:Hamiltonian-selector}
	Assume, in addition, that for every $(t,x,y)$ and every smooth test
	jet $(p,X)$ the map
	\[
	u\mapsto
	f(t,x,y,u)
	+b(x,y,u)p_1
	+\frac12\sigma_X^2(x,y,u)X_{11}
	+\mathcal I^u\phi(t,x,y)
	\]
	is lower semicontinuous. Then the minimizing set is nonempty and compact, and there exists a Borel measurable selector
	$
	\widehat u(t,x,y,p,X)
	$
	from the corresponding argmin correspondence.
\end{prop}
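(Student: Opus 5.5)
The plan is to prove nonemptiness and compactness of the argmin set pointwise, and then obtain the measurable selector from a measurable-maximum theorem. Fix $(t,x,y)$, the jet $(p,X)$ and the test function $\phi$, and write $G(u)$ for the displayed map. The first step is to reduce to a compact control set. Under Assumption~2.9(i), $U$ is itself compact. If one prefers to work only with Assumption~4.1, Lemma~\ref{lem:compact-control} instead provides $R_K$ such that the infimum over $U$ equals the infimum over the compact set $U_K:=U\cap[0,R_K]$. In fact the coercivity argument in that lemma shows that $G(u)>\inf_U G+1$ for $|u|>R_K$, so every minimizer lies in $U_K$.

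A lower semicontinuous function $G:U_K\to(-\infty,\infty]$ on a compact set attains its infimum. The infimum is finite, because $G$ is finite at any $u$ where $\mathcal I^u\phi$ converges, which holds by Assumption~4.1(iv) and Taylor expansion. The argmin set $\{u\in U_K:G(u)\le \min G\}$ is a sublevel set of a lower semicontinuous function, so it is closed in the compact set $U_K$ and hence compact and nonempty.

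For measurability, regard $\Theta:=(t,x,y,p,X)$ as a parameter ranging over a Borel set, and view $G(\Theta,u)$ as jointly Borel. Joint Borel measurability holds because $f,b,\sigma_X$ are continuous, and because $\mathcal I^u\phi$ is an integral against $\nu$ of a jointly Borel integrand, which is Borel by Fubini/Tonelli after splitting into positive and negative parts. Each section $u\mapsto G(\Theta,u)$ is lower semicontinuous, and the compact-valued correspondence $\Theta\mapsto U_{K(\Theta)}$ is measurable, with $R_K$ chosen to depend monotonically on a localizing ball so that it is piecewise constant on a countable Borel partition. On each piece $G$ is a normal integrand, that is, a Carathéodory-type function that is lower semicontinuous in $u$ and measurable in $\Theta$. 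The measurable maximum theorem (Aliprantis--Border, Thm.~18.19, or Rockafellar's normal integrand theory) then gives that $m(\Theta)=\min_u G$ is measurable and that the argmin correspondence is measurable with nonempty compact values. The Kuratowski--Ryll-Nardzewski theorem yields a Borel selector $\widehat u$, for instance the smallest minimizer, which is available explicitly because $U\subset\mathbb R_+$. Gluing over the countable partition preserves Borel measurability.

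The main obstacle is joint measurability combined with lower semicontinuity of $G$. The hypothesis gives lower semicontinuity only in $u$ for each fixed jet, so the measurable maximum theorem must be invoked in its normal-integrand form rather than its Carathéodory form. If that form is unavailable, the fallback is to define $\widehat u(\Theta):=\min\{u\in U_K: G(\Theta,u)\le m(\Theta)\}$ directly. Its Borel measurability can be checked through $\{\widehat u\le c\}=\{\Theta:\inf_{u\in U_K\cap[0,c]}G(\Theta,u)\le m(\Theta)\}$. Here lower semicontinuity ensures the infimum over the compact set $U_K\cap[0,c]$ is attained, and that infimum is measurable as an infimum of a normal integrand over a compact set. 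This infimum measurability is exactly the step I would verify most carefully.
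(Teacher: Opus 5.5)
Your route is the paper's route: restrict to a compact control set (via Assumption~2.9(i) or Lemma~\ref{lem:compact-control}), get attainment and a compact argmin from lower semicontinuity, then select measurably. The pointwise part is fine. The gap is in the measurability step. It is the only step with real content, and the tools you cite for it do not apply.

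\textbf{Why the cited tools fail.} The measurable maximum theorem you cite from Aliprantis--Border is stated for Carath\'eodory integrands, i.e.\ integrands continuous in $u$. Your description of a normal integrand as lower semicontinuous in $u$ and measurable in $\Theta$ is not the definition. Normality means measurability of the epigraphical map $\Theta\mapsto\{(u,\alpha):G(\Theta,u)\le\alpha\}$. The standard criteria that deduce normality from joint measurability plus lower semicontinuity (or measurability of a correspondence from a measurable graph) require a complete $\sigma$-algebra on the parameter space. They therefore give a universally measurable selector, not the Borel one claimed.

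Your fallback hits the same wall. For a merely lower semicontinuous $G$, the quantity $\inf_{u\in U_K\cap[0,c]}G(\Theta,u)$ cannot be computed along a countable dense set. For example, $G=1-\mathbf 1_{\{u_0\}}$ with $u_0$ irrational has a downward spike that no rational point sees. So the Borel measurability of that infimum is exactly what remains unproved.

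\textbf{How to close it.} The missing ingredient is the Arsenin--Kunugui uniformization theorem: a Borel subset of a product of Polish spaces whose sections are $K_\sigma$ (in particular compact) has a Borel projection and a Borel uniformization. Your joint Borel measurability of $G$, which you argue correctly, is exactly its input.
\begin{itemize}
\item Let $\mathsf T$ be the parameter space, and take $U_K$ fixed on each piece of your countable partition.
\item The set $B_c:=\{(\Theta,u)\in\mathsf T\times U_K:G(\Theta,u)\le c\}$ has compact sections. Since the minimum is attained, $\{m\le c\}=\operatorname{proj}_{\mathsf T}B_c$, so $m$ is Borel.
\item Then $A:=\{(\Theta,u)\in\mathsf T\times U_K:G(\Theta,u)\le m(\Theta)\}$ is Borel with nonempty compact sections.
\item For the smallest minimizer, $\{\widehat u\le c\}=\operatorname{proj}_{\mathsf T}\bigl(A\cap(\mathsf T\times[0,c])\bigr)$ is again a projection with compact sections, hence Borel. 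This completes your fallback without any normal-integrand theory.
\end{itemize}

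Alternatively, work under the paper's standing Assumptions~2.7 and~2.9(ii)--(iii), with $\phi$ fixed and of growth at most $|x|^{\bar q}$. That growth bound is needed anyway for $\mathcal I^u\phi$ to be finite. Dominated convergence then makes $u\mapsto G(\Theta,u)$ continuous, so the Carath\'eodory form applies verbatim and the lower-semicontinuity hypothesis is superfluous.

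The paper's own proof skips the same point. Kuratowski--Ryll-Nardzewski needs weak measurability, and a Borel graph yields it only through such a projection theorem. Filling this step would make your argument more complete than the paper's.
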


\begin{proof}
	By Lemma~\ref{lem:compact-control}, minimization may be restricted
	locally to a compact subset of $U$. Lower semicontinuity then gives
	attainment by the Weierstrass theorem. The objective is Borel
	measurable in the state and jet variables and lower semicontinuous in
	$u$; hence its argmin correspondence has nonempty compact values and
	measurable graph. The Kuratowski-Ryll-Nardzewski measurable-selection
	theorem provides a Borel selector.
\end{proof}

\begin{cor}[Classical verification as a special case]
	\label{cor:classical-verification}
	Suppose the unique viscosity solution $V$ belongs to
	$C^{1,2,2}$, the selector
	\[
	u^\star_s
	=
	\widehat u
	\bigl(
	s,X_{s-},Y_{s-},DV(s,X_{s-},Y_{s-}),
	D^2V(s,X_{s-},Y_{s-})
	\bigr)
	\]
	is admissible, and the stochastic integrals arising from It\^o's formula are true martingales. Then $u^\star$ is optimal and
	\[
	V(t,x,y)=J(t,x,y;u^\star).
	\]
\end{cor}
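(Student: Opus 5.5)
The plan is the standard verification argument. It combines It\^o's formula applied to the smooth solution $V$ along the controlled state with the pointwise identification of the Hamiltonian infimum supplied by Proposition~\ref{prop:Hamiltonian-selector}.

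\emph{Step 1: pointwise identities.} Since $V\in C^{1,2,2}$ is a viscosity solution, at every interior point the test function may be taken to be $V$ itself. Both viscosity inequalities then hold with $\phi=V$, and for a smooth function the split operators recombine, $\mathcal I_\varepsilon^u[V]+\mathcal I^{u,\varepsilon}[V,V]=\mathcal I^uV$. Hence $V$ satisfies \eqref{eq:nonlocal-HJB} classically:
\[
V_t+\inf_{u\in U}\{f(\cdot,u)+\mathcal A^uV\}-\delta V=0 .
\]
Two consequences follow. For every $a\in U$ we have $V_t+\mathcal A^aV-\delta V+f(\cdot,a)\ge0$. By the definition of the selector, equality holds at $a=\widehat u(t,x,y,DV,D^2V)$. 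Note that the terms $\kappa(\vartheta-y)V_y+\frac12\sigma_I^2V_{yy}$ do not depend on $u$, so they do not affect which control is selected.

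\emph{Step 2: It\^o's formula for a general control.} Fix any $u\in\mathcal U_t$ and apply the jump It\^o formula, as in Lemma~\ref{lem:localized-Dynkin} but with a predictable control, to $s\mapsto e^{-\delta(s-t)}V(s,X_s,Y_s)$ on $[t,T]$. This gives
\[
e^{-\delta(T-t)}V(T,X_T,Y_T)-V(t,x,y)=\int_t^Te^{-\delta(s-t)}\bigl[V_t+\mathcal A^{u_s}V-\delta V\bigr](s,X_{s-},Y_{s-})\,ds+M_T ,
\]
where $M$ collects the Brownian and compensated-Poisson integrals. The compensator of the jump part produces exactly $\mathcal I^{u_s}V$, because the jump of $X$ is $-\ell(Y_{s-},z,u_s)$ with $\ell$ as in \eqref{eq:insurance-loss-amplitude}.

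\emph{Step 3: the two inequalities.} For $u=u^\star$, the stochastic integrals are true martingales by hypothesis, so $\mathbb E M_T=0$. Using the equality case from Step 1 and the terminal condition $V(T,\cdot)=g$, we obtain $V(t,x,y)=J(t,x,y;u^\star)$. Since $u^\star$ is admissible by assumption, this gives $V\ge\inf_u J$ with the infimum attained at $u^\star$. Conversely, for a general admissible $u$, the inequality case of Step 1 gives $V(t,x,y)\le J(t,x,y;u)$ once $\mathbb E M_T\le 0$ is justified. Combined with Theorem~\ref{thm:existence-identification}, which already identifies the unique viscosity solution with the value function, this shows $J(t,x,y;u^\star)=V(t,x,y)=\inf_u J(t,x,y;u)$. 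Hence $u^\star$ is optimal.

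\emph{Main obstacle.} The delicate point is the martingale property for arbitrary competitors. The cleanest route is to avoid it entirely. Theorem~\ref{thm:existence-identification} already gives $V=\inf_uJ$, so only the equality $V=J(\cdot;u^\star)$ from Step 3 is needed, and that uses the martingale hypothesis only along $u^\star$.

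If a self-contained argument for the inequality direction is wanted instead, I would localize with the exit times $\tau_R$ of Lemma~\ref{lem:localized-Dynkin}, where the integrands are bounded. I would then let $R\to\infty$ using the growth bound $V\in\mathcal G_q$ from Corollary~\ref{cor:value-growth} together with the uniform integrability of Proposition~\ref{prop:UI}. Passing to the limit in the stopped costs $e^{-\delta(\tau_R-t)}V(\tau_R,X_{\tau_R},Y_{\tau_R})$ then yields the inequality.
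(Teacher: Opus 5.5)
Your proposal is correct. It shares the paper's core computation: It\^o's formula for $e^{-\delta(s-t)}V(s,X_s,Y_s)$, combined with the pointwise inequality $V_t+\mathcal A^uV-\delta V+f(\cdot,u)\ge0$ and equality at the selector.

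The difference lies in how you obtain $V(t,x,y)\le J(t,x,y;u)$ for an arbitrary competitor $u$.

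\textbf{The paper's route.} It gets this bound directly from It\^o's formula, asserting that ``the martingale terms vanish'' for every admissible $u$. This reads the true-martingale hypothesis as covering all controls. The payoff is that the corollary becomes a self-contained classical verification theorem: it would establish $V=\inf_uJ=J(\cdot;u^\star)$ independently of the DPP, viscosity and comparison chain.

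\textbf{Your main route.} You import $V=\inf_uJ$ from Theorem~\ref{thm:existence-identification}. You then need the martingale hypothesis only along $u^\star$, which is the weaker requirement and matches the literal wording of the statement. The cost is that the corollary now rests on the whole viscosity machinery rather than standing on its own.

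\textbf{Your alternative.} You localize at exit times and pass to the limit, dominating $|V(\tau_R,X_{\tau_R},Y_{\tau_R})|$ by $C\bigl(1+\sup_s|X_s|^q+\sup_s e^{C|Y_s|}\bigr)$. This is integrable by Theorem~\ref{thm:strong-wellposedness} and Lemma~\ref{lem:OU-moments}, since $V\in\mathcal G_q$ and $U$ is compact. This recovers the paper's self-contained route while actually justifying the martingale step that the paper only asserts.
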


\begin{proof}
	For every $u\in U$, the HJB equation implies
	$
	V_t+\mathcal A^uV-\delta V+f(t,x,y,u)\ge0,
	$
	while equality holds for the minimizing selector $u^\star$. Apply It\^o's formula to
	$
	e^{-\delta(s-t)}V(s,X_s,Y_s)
	$
	between $t$ and $T$. After taking expectations, the martingale terms vanish and the terminal condition gives
	$
	V(t,x,y)\le J(t,x,y;u)
	$
	for every admissible $u$. Under $u^\star$ all inequalities become equalities, yielding
	$
	V(t,x,y)=J(t,x,y;u^\star).
	$
\end{proof}

\begin{cor}
	\label{cor:economic-hedging-rule}
	Suppose the hypotheses of
	Corollary~\ref{cor:classical-verification} hold and use the benchmark
	loss amplitude
	$
	\ell(y,z,u)=e^ya(z)h(u).
	$
	If $\sigma_X$ is independent of $u$ and the minimizer is interior,
	then the optimal physical hedge satisfies
	\[
	\begin{aligned}
		0
		={}&
		f_u(t,x,y,u^\star)
		+
		b_u(x,y,u^\star)V_x(t,x,y)-
		e^yh'(u^\star)
		\int_E
		a(z)
		\Big[
		V_x\bigl(t,x-e^ya(z)h(u^\star),y\bigr)
		-
		V_x(t,x,y)
		\Big]\nu(dz).
	\end{aligned}
	\]
	Thus the marginal cost of securing reconstruction capacity is equated
	to the marginal change in continuation value generated by shifting
	the entire catastrophe-loss distribution.
\end{cor}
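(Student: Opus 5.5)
The hedging rule should be obtained as the first-order condition of the pointwise minimization in Corollary~\ref{cor:classical-verification}. Under those hypotheses $V\in C^{1,2,2}$ solves \eqref{eq:nonlocal-HJB} classically, and $u^\star(t,x,y)$ minimizes
\[
\Psi(u):=f(t,x,y,u)+b(x,y,u)V_x+\tfrac12\sigma_X^2(x,y)V_{xx}+\mathcal I^uV(t,x,y)
\]
over $U$, as in Corollary~\ref{cor:viscosity-hedging}. Since $\sigma_X$ does not depend on $u$, the diffusion term drops out of the $u$-dependence. At an interior minimizer of a differentiable function of one real variable, $\Psi'(u^\star)=0$. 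The work is therefore to show that $\Psi$ is differentiable at $u^\star$ and to compute its derivative.

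First, the local part gives $f_u+b_uV_x$, assuming $f$ and $b$ are differentiable in $u$. The paper leaves this implicit through the notation $f_u,b_u$, and I would state it as a standing requirement. Next comes the nonlocal part. Write $\ell=e^ya(z)h(u)$. The integrand of $\mathcal I^uV$ is
\[
k(u,z)=V(t,x-\ell,y)-V(t,x,y)+\ell V_x(t,x,y).
\]
Its $u$-derivative is
\[
\partial_u k=-e^ya(z)h'(u)\bigl[V_x(t,x-\ell,y)-V_x(t,x,y)\bigr].
\]
Integrating in $\nu$ gives exactly the term with the minus sign in the displayed identity. Setting $\Psi'(u^\star)=0$ then yields the claim.

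The main obstacle is justifying differentiation under the integral sign when $\nu$ may be infinite and heavy-tailed. I would fix a compact neighbourhood $[u^\star-\delta,u^\star+\delta]\subset U$, which exists because $u^\star$ is interior, and on it use local Lipschitzness of $h$ together with $0<h\le1$. For the small jumps ($a(z)\le1$), the mean value theorem gives
\[
|V_x(t,x-\ell,y)-V_x(t,x,y)|\le \sup|V_{xx}|\,e^ya(z),
\]
where the supremum is over a bounded neighbourhood, so $|\partial_u k|\le C a(z)^2$. For the large jumps, the polynomial growth of $V_x$ inherited from $\mathcal G_q$ gives a bound of order $a(z)(1+a(z)^{q-1})$. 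I would identify $a$ with $\rho$ via Assumption~2.4, so both bounds are $\nu$-integrable by \eqref{eq:rho-mom} and Assumption~4.1(iv).

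The growth of $V_x$ is not literally among the hypotheses. I would either add it as part of the classical-regularity assumption, or obtain it on the relevant region from convexity-type bounds on $V$. Once this dominating function is in hand, dominated convergence lets us differentiate under the integral, which completes the argument. The economic reading then follows by interpreting the integral term as the marginal value of the shift in the loss distribution.
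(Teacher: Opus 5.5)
Your proposal is correct and follows the paper's own argument: differentiate the pointwise Hamiltonian in $u$, read off $f_u+b_uV_x$ from the local terms, differentiate $\mathcal I^uV$ under the integral sign via dominated convergence, and set the derivative to zero at the interior minimizer. Your small-jump/large-jump domination is more explicit than the paper's one-line appeal to ``classical regularity, the moment condition on the L\'evy measure, and dominated convergence.'' You are also right that a growth bound on $V_x$, and differentiability of $f$ and $b$ in $u$, are tacitly required; the paper's proof relies on the same implicit hypotheses without stating them.
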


\begin{proof}
	Differentiate the minimizing Hamiltonian with respect to $u$. The
	local terms give $f_u+b_uV_x$. Differentiating the nonlocal term under
	the integral sign gives
	\[
	-e^yh'(u)
	\int_E
	a(z)
	\left[
	V_x(t,x-e^ya(z)h(u),y)-V_x(t,x,y)
	\right]\nu(dz).
	\]
	The interchange of differentiation and integration follows from the
	assumed classical regularity, the moment condition on the L\'evy
	measure, and dominated convergence. At an interior minimizer the
	derivative vanishes.
\end{proof}

\begin{rmk}
	The preceding results separate the two roles played by the
	catastrophe model. Probabilistically, the marked jump measure creates
	the nonlocal generator and prevents the control problem from reducing
	to a diffusion HJB equation. Economically, the multiplicative factor
	$e^ya(z)h(u)$ makes the value of physical hedging state dependent:
	the same unit of secured reconstruction capacity has a different
	marginal value when replacement costs are elevated. The viscosity
	formulation preserves this mechanism without imposing differentiability
	of the insurer's continuation value.
\end{rmk}

\section{Mean Field Game of Catastrophe Insurers}
\label{sec:mfg}

	The representative-insurer dynamics remain exactly those of
	Sections~\ref{sec:prob-foundations}-\ref{sec:insurance-control}. Mean field interaction is introduced through the objective functional rather than through an endogenous reconstruction-capacity market. Accordingly, the couplings $F$ and $G$ should be interpreted as reduced-form strategic cost externalities, the private cost of a given
	capitalization and replacement-cost state depends on the
	cross-sectional vulnerability of the insurance sector. The population distribution does not determine the replacement-cost process $Y$, the catastrophe compensator $\nu(dz)\,dt$, the severity law $a(z)$, or an
	aggregate supply of reconstruction capacity. Thus the mean field game studies \citep{pramanik2024estimation} how industry-wide vulnerability feeds back into individual hedging incentives, but it does not model market clearing in reconstruction services or endogenous price formation. This separation
	preserves the controlled state process analyzed in
	Sections~\ref{sec:prob-foundations}-\ref{sec:dpp-viscosity} while isolating a strategic externality that can be studied within the nonlocal control.

\subsection{Mean Field Equilibrium}
\label{subsec:mfg-equilibrium}

For a deterministic flow $m\in\mathcal M_T$, define
\begin{equation}
	\label{eq:mfg-frozen-cost}
	\begin{aligned}
		J^m(t,x,y;u)
		:=
		\mathbb E_{t,x,y}\Bigg[
		&\int_t^T e^{-\delta(s-t)}
		\Big(
		f(s,X_s,Y_s,u_s)
		+
		F(s,(X_s,Y_s),m_s)
		\Big)\,ds\\
		&+
		e^{-\delta(T-t)}
		\Big(
		g(X_T,Y_T)
		+
		G((X_T,Y_T),m_T)
		\Big)
		\Bigg],
	\end{aligned}
\end{equation}
and
\begin{equation}
	\label{eq:mfg-frozen-value}
	V^m(t,x,y)
	:=
	\inf_{u\in\mathcal U_t}J^m(t,x,y;u).
\end{equation}
The controlled state in \eqref{eq:mfg-frozen-cost} is exactly the
solution of \eqref{eq:Y} and \eqref{eq:X-general}.

\medskip

\noindent\textbf{Assumption 5.1 (Mean field couplings).}
There exists $C_F>0$ such that:
\begin{itemize}
	\item[(i)] $F$ and $G$ are continuous;
	\item[(ii)] for all $(t,z,m)$,
	\[
	|F(t,z,m)|+|G(z,m)|
	\le
	C_F\Bigl(
	1+|x|^q+e^{q|y|}+M_q(m)^q
	\Bigr),
	\]
	where
	$
	M_q(m)
	:=
	\left(
	\int_{\mathbb R^2}|z'|^q\,m(dz')
	\right)^{1/q};
	$
	\item[(iii)] for every $R>0$, there exists $L_R>0$ such that
	\[
	|F(t,z,m)-F(s,z',n)|
	+
	|G(z,m)-G(z',n)|
	\le
	L_R\Bigl(
	|t-s|^{1/2}
	+
	|z-z'|
	+
	W_q(m,n)
	\Bigr)
	\]
	whenever
	$|z|,|z'|,M_q(m),M_q(n)\le R$;
	\item[(iv)] $F$ and $G$ are bounded from below by a function of at most
	the exponential-polynomial growth appearing in Assumption~2.6,
	uniformly on bounded-moment subsets of $\mathcal P_q(\mathbb R^2)$.
\end{itemize}

	\begin{remark}
		\label{rmk:reduced-form-mfg}
		The maps $F(t,z,m)$ and $G(z,m)$ are not derived from a
		reconstruction-market clearing condition. They represent reduced-form
		costs associated with cross-sectional concentration of vulnerability.
		For example, $F$ may penalize an insurer occupying a high
		replacement-cost or weak-capitalization state when a large fraction of
		the industry simultaneously occupies comparable states \citep{pramanik2023path1}. Such a
		coupling can represent congestion, coordination, regulatory, or
		systemic-capacity costs in reduced form, but it does not determine the
		price or quantity of reconstruction services.
		
		A structural reconstruction-capacity model would require an additional
		aggregate quantity, say $D_t(m,\alpha)$, measuring rebuilding or
		capacity demand, together with a supply or market-clearing relation
		that determines an endogenous price or scarcity factor. In such a
		model, one would expect a relation of the schematic form
		$P_t^{\mathrm{rec}}=\mathcal P(D_t,S_t)$, with
		$\partial_D\mathcal P>0$, and the resulting price would enter either
		the replacement-cost dynamics or the controlled loss amplitude.
		Because this would make the state dynamics themselves
		population-dependent, it would produce a different mean field game
		from the one analyzed here.
	\end{remark}

\begin{lem}
	\label{lem:mfg-frozen-inheritance}
	Fix $m\in\mathcal M_T$ with
	$\sup_{t\le T}M_q(m_t)<\infty$.
	Under Assumptions~2.1-2.9, 4.1, and 5.1, the frozen-flow problem expressed in Equation \eqref{eq:mfg-frozen-value} satisfies the hypotheses of
	Theorem~\ref{thm:DPP},
	Proposition~\ref{prop:value-continuity}, and
	Theorem~\ref{thm:existence-identification} after replacing $f$ and
	$g$ by
	$
	f^m(t,x,y,u)
	:=
	f(t,x,y,u)+F(t,(x,y),m_t)
	$
	and
	$
	g^m(x,y)
	:=
	g(x,y)+G((x,y),m_T).
	$
	The local constants may be chosen uniformly over any family of flows
	with uniformly bounded $q$th moments.
\end{lem}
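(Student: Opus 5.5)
The plan is to verify, one assumption at a time, that replacing $(f,g)$ by $(f^m,g^m)$ preserves every hypothesis used in Theorem~\ref{thm:DPP}, Proposition~\ref{prop:value-continuity} and Theorem~\ref{thm:existence-identification}. The state dynamics in \eqref{eq:mfg-frozen-cost} are literally those of \eqref{eq:Y}--\eqref{eq:X-general}, and the flow $m$ is deterministic. Assumptions~2.1--2.5, 2.7, 2.8, 2.9(i),(iii),(iv) and 4.1(ii),(iv) therefore concern only $(b,\sigma_X,\ell,\nu,U)$ and transfer verbatim. The work reduces to the cost-dependent hypotheses: Assumption~2.6 (growth and coercivity), the continuity parts of Assumptions~2.7 and 2.9(ii), the local uniform continuity required in Proposition~\ref{prop:value-continuity}, and Assumption~4.1(i),(iii),(v).

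Throughout, set $R_m:=\sup_{t\le T}M_q(m_t)<\infty$. For growth, Assumption~5.1(ii) gives $|F(t,(x,y),m_t)|+|G((x,y),m_T)|\le C_F(1+R_m^q)(1+|x|^q+e^{q|y|})$. Adding this to \eqref{eq:cost-growth} yields \eqref{eq:cost-growth} for $(f^m,g^m)$ with constant $C+C_F(1+R_m^q)$, and it also gives $g^m\in\mathcal G_q$, which is Assumption~4.1(v).

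For coercivity \eqref{eq:coercivity}, I use Assumption~5.1(iv). It says $F$ is bounded below by a function of exponential-polynomial growth, uniformly on $\{M_q\le R_m\}$. This lower bound is independent of $u$, so $c_0|u|^2$ survives, and the remaining negative part is absorbed into $c_1(1+|x|^2+e^{2|y|})$ after enlarging $c_1$. The same reasoning shows that Assumption~4.1(iii) is preserved: $F$ does not depend on $u$, so $f^m-C_K(\cdots)$ differs from $f-C_K(\cdots)$ by a quantity bounded on $K$. The $u$-continuity in 2.9(ii) is likewise unaffected.

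For continuity, I use Assumption~5.1(iii) with $n=m_t$ fixed and $R\ge R_m$. On compacts in $z$ this gives $|F(t,z,m_t)-F(t,z',m_t)|\le L_R|z-z'|$, so $F(\cdot,\cdot,m_\cdot)$ is locally uniformly continuous in $z$. Time continuity is the main obstacle. Assumption~5.1(iii) controls $|F(t,z,m_t)-F(s,z,m_s)|$ by $L_R(|t-s|^{1/2}+W_q(m_t,m_s))$, so I need continuity of $t\mapsto m_t$ in $W_q$. I would read $\mathcal M_T$ as the set of $W_q$-continuous flows; on a compact time interval such a flow is uniformly continuous. If $\mathcal M_T$ is defined more broadly, I would restrict the lemma to flows in $C([0,T];\mathcal P_q)$, which is the class produced by the fixed-point map. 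Given this, $f^m$ and $g^m$ satisfy the local uniform continuity hypotheses of Proposition~\ref{prop:value-continuity} and Assumptions~2.7 and 4.1(i).

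It remains to check uniformity. Every constant introduced above depends on $m$ only through $R_m$, through $L_R$ with $R\ge R_m$, and through the Assumption~5.1(iv) bound on $\{M_q\le R_m\}$. All three are uniform over flows with $\sup_t M_q(m_t)\le R$. The moduli of time continuity are uniform only over equicontinuous families of flows, so I would state that caveat explicitly. The probabilistic constants from Theorem~\ref{thm:strong-wellposedness} and Proposition~\ref{prop:UI} do not involve $m$ at all. Invoking the three cited results then completes the proof.
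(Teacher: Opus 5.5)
Your proposal is correct and follows the same route as the paper's proof. The state dynamics do not depend on $m$, so the probabilistic estimates and the jump/comparison structure carry over unchanged. Growth of $(f^m,g^m)$ follows from Assumption~5.1(ii), with a constant depending only on $\sup_t M_q(m_t)$. Coercivity in $u$ is untouched because the coupling is control-independent, and local continuity comes from Assumption~5.1(iii).

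Your added care makes explicit points the paper's proof leaves implicit. You invoke Assumption~5.1(iv) for the lower bound. You also observe that time continuity of $F(t,z,m_t)$ requires $t\mapsto m_t$ to be $W_q$-continuous, so the time modulus is uniform only over equicontinuous families of flows.
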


\begin{proof}
	The state equation does not depend on $m$, so the estimates from
	Section~\ref{sec:prob-foundations} remain unchanged. In particular,
	Theorem~\ref{thm:strong-wellposedness},
	Proposition~\ref{prop:initial-stability},
	Lemma~\ref{lem:control-stability}, and Proposition~\ref{prop:UI}
	continue to hold.
	
	If $\sup_tM_q(m_t)\le R$, Assumption~5.1(ii) yields
	\[
	|f^m(t,x,y,u)|+|g^m(x,y)|
	\le
	C_R\bigl(
	1+|x|^q+e^{q|y|}+|u|^q
	\bigr).
	\]
	Since the coupling does not depend on $u$, the coercivity in the physical-hedging variable is exactly that of $f$. Assumption~5.1(iii) gives the local continuity required by
	Proposition~\ref{prop:value-continuity} and Assumption~4.1(i). The jump operator and comparison structure are unchanged. Hence, the results cited in the statement apply to the frozen problem.
\end{proof}

\begin{prop}[Frozen-flow viscosity equation]
	\label{prop:mfg-frozen-hjb}
	Under the hypotheses of Lemma~\ref{lem:mfg-frozen-inheritance},
	$V^m$ is the unique viscosity solution in
	$C([0,T]\times\mathbb R^2)\cap\mathcal G_q$ of
	\begin{equation}
		\label{eq:mfg-hjb-frozen}
		-\partial_t v^m
		-
		\mathcal H\bigl(
		t,x,y,v^m,Dv^m,D^2v^m;
		v^m(t,\cdot,\cdot)
		\bigr)
		-
		F(t,(x,y),m_t)
		=
		0,
	\end{equation}
	with terminal condition
	\begin{equation}
		\label{eq:mfg-terminal-frozen}
		v^m(T,x,y)
		=
		g(x,y)+G((x,y),m_T).
	\end{equation}
\end{prop}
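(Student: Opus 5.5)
The plan is to reduce Proposition~\ref{prop:mfg-frozen-hjb} to Theorem~\ref{thm:existence-identification} applied to the modified data $(f^m,g^m)$, using Lemma~\ref{lem:mfg-frozen-inheritance} to justify the substitution. Concretely, fix the flow $m$ with $R:=\sup_{t\le T}M_q(m_t)<\infty$. Define $f^m(t,x,y,u)=f(t,x,y,u)+F(t,(x,y),m_t)$ and $g^m(x,y)=g(x,y)+G((x,y),m_T)$. Observe that $J^m(t,x,y;u)$ in \eqref{eq:mfg-frozen-cost} is exactly the functional \eqref{eq:insurance-social-cost} with $(f,g)$ replaced by $(f^m,g^m)$, since the state dynamics do not depend on $m$. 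Hence $V^m$ is the value function of a control problem of the type treated in Sections~\ref{sec:prob-foundations}--\ref{sec:dpp-viscosity}.

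Next I would check, via Lemma~\ref{lem:mfg-frozen-inheritance}, that $(f^m,g^m)$ satisfy the remaining hypotheses.

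\textbf{Growth.} Assumption~2.6 and 4.1(v) hold with constants depending on $R$, by Assumption~5.1(ii).

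\textbf{Coercivity.} Conditions \eqref{eq:coercivity} and 4.1(iii) are preserved because $F$ is independent of $u$. Its lower bound from 5.1(iv) is absorbed into $c_1$ after enlarging the exponential-polynomial term. This needs the lower bound to be of the form $-C(1+|x|^2+e^{2|y|})$; if only $q$-growth is available, I would use that 4.1(iii) is stated on compacts and that $\mathcal G_q$ accommodates it.

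\textbf{Continuity.} The local uniform continuity required in Proposition~\ref{prop:value-continuity} and 4.1(i) follows from Assumption~5.1(iii). The $|t-s|^{1/2}$ modulus is needed here, together with continuity of $s\mapsto m_s$ in $W_q$ (assumed in $\mathcal M_T$).

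\textbf{Nonnegativity.} If Corollary~\ref{cor:value-growth} is invoked with $f,g\ge0$, I would replace it by a direct estimate. The upper bound $|V^m|\le C(1+|x|^q+e^{C|y|})$ follows from a constant control exactly as in that proof. The lower bound comes from integrating the lower bounds on $f^m,g^m$ against the moment estimates of Theorem~\ref{thm:strong-wellposedness} and Lemma~\ref{lem:OU-moments}.

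Then I would identify the HJB equation. The Hamiltonian \eqref{eq:Hamiltonian} built from $f^m$ equals $\mathcal H+F(t,(x,y),m_t)$, because $F$ does not depend on $u$ and so pulls out of the infimum. Equation \eqref{eq:nonlocal-HJB} with data $(f^m,g^m)$ is therefore exactly \eqref{eq:mfg-hjb-frozen}--\eqref{eq:mfg-terminal-frozen}. The same pulling-out applies verbatim inside Definition~\ref{def:viscosity}, so viscosity sub- and supersolutions of the two formulations coincide. Theorem~\ref{thm:existence-identification} then gives that $V^m\in C([0,T]\times\mathbb R^2)\cap\mathcal G_q$ is a viscosity solution. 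Comparison (Theorem~\ref{thm:comparison}) with the modified running cost gives uniqueness in that class.

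The main obstacle is verifying the hypotheses of Theorem~\ref{thm:comparison} for the modified data. In particular, the strict supersolution construction of Proposition~\ref{prop:strict-supersolution} must tolerate the extra term $F$. My approach is to note that $F$ enters the residual additively and is continuous, so it cancels in the difference of sub- and supersolution inequalities at the doubled-variable maximizer of Lemma~\ref{lem:nonlocal-doubling}. It contributes only $|F(t_\varepsilon,z_\varepsilon,m_{t_\varepsilon})-F(s_\varepsilon,z'_\varepsilon,m_{s_\varepsilon})|$. By 5.1(iii), this is bounded by $L_R\bigl(|t_\varepsilon-s_\varepsilon|^{1/2}+|z_\varepsilon-z'_\varepsilon|+W_q(m_{t_\varepsilon},m_{s_\varepsilon})\bigr)$, which tends to $0$ as $\varepsilon\downarrow0$ because the maximizers stay in a compact set for fixed $\eta$. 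The remainder of the comparison proof is unchanged.
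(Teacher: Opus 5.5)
Your proposal is correct and follows the same route as the paper. The paper's proof also invokes Lemma~\ref{lem:mfg-frozen-inheritance} to apply Theorem~\ref{thm:DPP}, Theorem~\ref{thm:value-viscosity}, Theorem~\ref{thm:comparison} (with Corollary~\ref{cor:viscosity-uniqueness}) and Theorem~\ref{thm:existence-identification} to the data $(f^m,g^m)$, pulling the $u$-independent coupling $F$ out of the infimum in \eqref{eq:Hamiltonian}. Your extra checks make explicit points the paper's short proof leaves implicit: continuity of $t\mapsto m_t$ in $W_q$, a direct two-sided growth bound replacing the sign hypothesis $f,g\ge0$ of Corollary~\ref{cor:value-growth}, and control of the $F$-difference at the doubled maximizers via Assumption~5.1(iii).
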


\begin{proof}
	Lemma~\ref{lem:mfg-frozen-inheritance} allows the frozen problem to be
	treated by Section~\ref{sec:dpp-viscosity}. The stopping-time DPP
	follows from Theorem~\ref{thm:DPP}, the viscosity solution from
	Theorem~\ref{thm:value-viscosity}, and uniqueness from
	Theorem~\ref{thm:comparison} and
	Corollary~\ref{cor:viscosity-uniqueness}. Finally,
	Theorem~\ref{thm:existence-identification} identifies this viscosity
	solution with $V^m$.
\end{proof}

Let $\alpha:[0,T]\times\mathbb R^2\to U$ be an admissible Borel
feedback. For $\phi\in C_c^2(\mathbb R^2)$ define
\[
\begin{aligned}
	\mathscr L^{\alpha_t}\phi(x,y)
	:={}&
	b(x,y,\alpha_t(x,y))\phi_x
	+
	\kappa(\vartheta-y)\phi_y
	+
	\frac12\sigma_X^2(x,y,\alpha_t(x,y))\phi_{xx}\\
	&+
	\frac12\sigma_I^2\phi_{yy}
	+
	\int_E
	\Big[
	\phi\bigl(
	x-\ell(y,z,\alpha_t(x,y)),y
	\bigr)
	-\phi(x,y)\\
	&\hspace{4.3cm}
	+\ell(y,z,\alpha_t(x,y))\phi_x(x,y)
	\Big]\nu(dz).
\end{aligned}
\]

\begin{defn}
		\label{def:mfg-equilibrium}
		Let $\mathfrak A(m)$ denote the full set of admissible relaxed
		canonical laws for the frozen population flow $m\in\mathcal M_T$.
		A pair $(m^\star,P^\star)$ is called a relaxed mean field equilibrium
		if $m^\star\in\mathcal M_T$, $P^\star\in\mathfrak A(m^\star)$,
		$P^\star$ is optimal for the frozen-flow problem associated with
		$m^\star$, and
		\begin{equation}
			\label{eq:mfg-consistency}
			m_t^\star=P^\star\circ Z_t^{-1},
			\qquad
			Z_t:=(X_t,Y_t),
			\qquad
			0\le t\le T.
		\end{equation}
		A relaxed equilibrium is called Markovian if, under $P^\star$, the
		relaxed control coordinate admits the representation
		$q_t(da)=\widehat q^\star(t,Z_{t-},da)$
		$dt\otimes P^\star$-a.e.\ for a Borel kernel
		$\widehat q^\star:[0,T]\times\mathbb R^2\to\mathcal P(U)$.
		It is called strict Markovian if
		$\widehat q^\star(t,z,da)=\delta_{\alpha^\star(t,z)}(da)$
		for a Borel feedback $\alpha^\star:[0,T]\times\mathbb R^2\to U$.
	\end{defn}
 For $P\in\mathfrak A(m)$ define the state-control occupation measure
	$\eta^P$ on $[0,T]\times\mathbb R^2\times U$ by
	$\eta^P(dt,dz,da)
	:=
	dt\,E^P[\mathbf 1_{\{Z_t\in dz\}}q_t(da)]$.
	If $m_t^P:=P\circ Z_t^{-1}$, disintegration on the standard Borel
	space $[0,T]\times\mathbb R^2\times U$ yields a Borel stochastic
	kernel $\widehat q^P(t,z,da)$ satisfying
	$\eta^P(dt,dz,da)
	=
	dt\,m_t^P(dz)\widehat q^P(t,z,da)$.
	The kernel $\widehat q^P$ is the state-conditioned barycentric
	relaxation of the possibly history-dependent control law $P$.

\begin{prop}
		\label{prop:mfg-kolmogorov}
		Let $(m^\star,P^\star)$ be a relaxed equilibrium and let
		$\widehat q^\star:=\widehat q^{P^\star}$ be the disintegration kernel
		defined above. Then, for every $\phi\in C_c^2(\mathbb R^2)$,
		\begin{equation}
			\label{eq:mfg-weak-kolmogorov}
			\int_{\mathbb R^2}\phi(z)m_t^\star(dz)
			-
			\int_{\mathbb R^2}\phi(z)m_0(dz)
			=
			\int_0^t\int_{\mathbb R^2}\int_U
			\mathcal A^a\phi(z)\,
			\widehat q^\star(s,z,da)\,
			m_s^\star(dz)\,ds.
		\end{equation}
		Equivalently,
		$\partial_t m_t^\star
		=
		(\mathcal A^{\widehat q_t^\star})^*m_t^\star$
		in $\mathcal D'((0,T)\times\mathbb R^2)$, where
		$\mathcal A^{\widehat q_t^\star}\phi(z)
		:=
		\int_U\mathcal A^a\phi(z)\widehat q^\star(t,z,da)$.
	\end{prop}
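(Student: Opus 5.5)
The argument is a mimicking (Gyöngy-type) computation at the level of one-dimensional marginals. It only needs the martingale-problem characterization of $P^\star\in\mathfrak A(m^\star)$ and the disintegration $\eta^{P^\star}=dt\,m^\star_t(dz)\,\widehat q^\star(t,z,da)$. No optimality is used: the identity holds for any admissible relaxed law, and $m^\star_t=P^\star\circ Z_t^{-1}$ by the consistency condition \eqref{eq:mfg-consistency}.

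\textbf{Step 1: martingale identity under $P^\star$.} Fix $\phi\in C_c^2(\mathbb R^2)$. By definition of the canonical family $\mathfrak P(0,\cdot)$, equivalently by It\^o's formula for the jump semimartingale $Z=(X,Y)$ driven by the relaxed control as in Lemma~\ref{lem:localized-Dynkin}, the process
\[
M_t^\phi:=\phi(Z_t)-\phi(Z_0)-\int_0^t\int_U\mathcal A^a\phi(Z_{s-})\,q_s(da)\,ds
\]
is a $P^\star$-martingale. To justify taking expectations, I bound $\mathcal A^a\phi$ directly. Since $\phi$ has compact support, the drift and diffusion terms are bounded by $C(1+|x|+|y|+|a|)$ times bounded functions with compact support. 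Compactness of $U$ (Assumption~2.9(i)) makes these terms bounded. For the nonlocal term, the second-order Taylor bound gives
\[
|\phi(x-\ell,y)-\phi(x,y)+\ell\phi_x|\le\|\phi''\|_\infty\ell^2\wedge 2\|\phi\|_{C^1}(1+\ell)\ell .
\]
By \eqref{eq:ell-lip} it is therefore dominated by $C(1+e^{2|y|})\int_E(\rho^2+\rho)\,\nu(dz)$. This is integrable in $\nu$ by \eqref{eq:rho-mom} when combined with $\rho\le 1+\rho^2$ on the finite-mass part; alternatively, keep the $\ell^2$ bound alone, since $\phi\in C^2_b$ suffices. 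Lemma~\ref{lem:OU-moments} then gives
\[
E^{P^\star}\Big[\int_0^T\int_U|\mathcal A^a\phi(Z_{s-})|\,q_s(da)\,ds\Big]\le C\,E\Big[\sup_s e^{2|Y_s|}\Big]<\infty.
\]
Hence $E^{P^\star}M_t^\phi=0$.

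\textbf{Step 2: Fubini and disintegration.} Taking expectations gives
\[
\int\phi\,dm^\star_t-\int\phi\,dm_0=E^{P^\star}\int_0^t\int_U\mathcal A^a\phi(Z_{s-})\,q_s(da)\,ds .
\]
Replace $Z_{s-}$ by $Z_s$; this is legitimate because $Z$ has at most countably many jumps, so they coincide $ds$-a.e. Fubini, justified by the integrability in Step 1, rewrites the right side as $\int_{[0,t]\times\mathbb R^2\times U}\mathcal A^a\phi(z)\,\eta^{P^\star}(ds,dz,da)$. Here I use that $(z,a)\mapsto\mathcal A^a\phi(z)$ is Borel, by Assumption~2.9(ii). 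By the defining property of $\eta^{P^\star}$ and its disintegration this equals $\int_0^t\int\int_U\mathcal A^a\phi(z)\,\widehat q^\star(s,z,da)\,m^\star_s(dz)\,ds$, which is \eqref{eq:mfg-weak-kolmogorov}. A small point needs care here: $\eta^P$ is defined through $E^P[\mathbf 1_{\{Z_t\in dz\}}q_t(da)]$, which extends from product sets to bounded, and then to $\eta$-integrable, Borel functions by a monotone class argument.

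\textbf{Step 3: distributional form.} Take $\phi(z)\psi(t)$ with $\psi\in C_c^\infty(0,T)$. Differentiating the integral identity in $t$, which is absolutely continuous, and integrating by parts yields $\partial_tm^\star=(\mathcal A^{\widehat q^\star_t})^*m^\star$ in $\mathcal D'$. The only delicate point overall is Step 1, namely integrability of the nonlocal term under possibly infinite $\nu$ with the $e^{Y}$-scaled amplitude. I would handle it exactly as above via the $\ell^2$ Taylor bound and the exponential moments of Lemma~\ref{lem:OU-moments}.
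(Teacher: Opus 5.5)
Your argument is essentially the paper's: take expectations in the martingale-problem identity for $P^\star$, after upgrading it from a local to a true martingale, and rewrite the occupation integral through the disintegration $\eta^{P^\star}=dt\,m^\star_t(dz)\,\widehat q^\star(t,z,da)$; your Step~3 also supplies the distributional reformulation that the paper only asserts. One correction to Step~1: the bound $E^{P^\star}[\sup_s e^{2|Y_s|}]<\infty$ does not follow from Lemma~\ref{lem:OU-moments} when $Z_0\sim m_0$ with only $m_0\in\mathcal P_p$, but it is also unnecessary, because jumps move only $x$, so the nonlocal integrand vanishes unless $y$ lies in the compact $y$-projection of $\operatorname{supp}\phi$; there your $\ell^2$ Taylor bound (not the $\int\rho\,d\nu$ branch, which may diverge) together with compactness of $U$ makes $\mathcal A^a\phi$ bounded, so $M^\phi$ is a bounded local martingale and hence a true martingale.
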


\begin{proof}
		For $\phi\in C_c^2(\mathbb R^2)$, admissibility of $P^\star$ gives
		$M_t^\phi
		:=
		\phi(Z_t)-\phi(Z_0)
		-
		\int_0^t\int_U
		\mathcal A^a\phi(Z_{s-})q_s(da)\,ds$
		as a local $P^\star$-martingale. Proposition~\ref{prop:jump-integral}
		and the moment estimates of Theorem~\ref{thm:strong-wellposedness}
		make the stopped family uniformly integrable, hence
		$E^{P^\star}[M_t^\phi]=0$. Therefore
		$\int\phi\,dm_t^\star-\int\phi\,dm_0
		=
		\int_0^t
		E^{P^\star}[\int_U\mathcal A^a\phi(Z_{s-})q_s(da)]\,ds$.
		By the defining disintegration of $\eta^{P^\star}$,
		$E^{P^\star}[\int_U\mathcal A^a\phi(Z_s)q_s(da)]
		=
		\int_{\mathbb R^2}\int_U
		\mathcal A^a\phi(z)
		\widehat q^\star(s,z,da)m_s^\star(dz)$
		for $ds$-a.e.\ $s$. Substitution gives
		\eqref{eq:mfg-weak-kolmogorov}.
	\end{proof}

 \begin{cor}[Relaxed nonlocal HJB-Kolmogorov system]
		\label{cor:mfg-system}
		Let $(m^\star,P^\star)$ be a relaxed equilibrium and set
		$v^\star:=V^{m^\star}$. Then $(v^\star,m^\star,\widehat q^\star)$
		satisfies
		\begin{equation}
			\label{eq:mfg-coupled-system}
			\left\{
			\begin{aligned}
				-\partial_t v^\star
				&-
				\mathcal H\bigl(
				t,x,y,v^\star,Dv^\star,D^2v^\star;
				v^\star(t,\cdot,\cdot)
				\bigr)
				-
				F(t,(x,y),m_t^\star)
				=0,\\
				\partial_t m_t^\star
				&=
				(\mathcal A^{\widehat q_t^\star})^*m_t^\star,\\
				v^\star(T,x,y)
				&=
				g(x,y)+G((x,y),m_T^\star),\\
				m_0^\star&=m_0.
			\end{aligned}
			\right.
		\end{equation}
		The backward equation is understood in the viscosity sense and the
		forward equation in the distributional sense. If the equilibrium is
		strict Markovian, then
		$\widehat q^\star(t,z,da)=\delta_{\alpha^\star(t,z)}(da)$ and the
		second equation reduces to
		$\partial_t m_t^\star
		=
		(\mathscr L^{\alpha_t^\star})^*m_t^\star$.
	\end{cor}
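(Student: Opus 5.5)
The plan is to read off the four lines of \eqref{eq:mfg-coupled-system} one at a time from results already in hand, using the definition of a relaxed equilibrium to tie them together.

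\emph{Backward equation and terminal condition.} By Definition~\ref{def:mfg-equilibrium}, $m^\star\in\mathcal M_T$. Admissibility of $P^\star$ and the moment bounds of Theorem~\ref{thm:strong-wellposedness} and Lemma~\ref{lem:OU-moments} give $\sup_{t\le T}M_q(m_t^\star)<\infty$, so Lemma~\ref{lem:mfg-frozen-inheritance} applies to the frozen flow $m^\star$. Proposition~\ref{prop:mfg-frozen-hjb} then says that $v^\star=V^{m^\star}$ is the unique viscosity solution in $C([0,T]\times\mathbb R^2)\cap\mathcal G_q$ of \eqref{eq:mfg-hjb-frozen}--\eqref{eq:mfg-terminal-frozen} with $m=m^\star$. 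These are exactly the first and third lines of the system.

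One point should be made explicit here. The Hamiltonian $\mathcal H$ is an infimum over strict $u\in U$. Corollary~\ref{prop:strict-relaxed-equivalence}, applied to the frozen problem, identifies the relaxed value with the strict value $V^{m^\star}$, so no separate relaxed HJB equation is needed.

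\emph{Forward equation and initial condition.} Proposition~\ref{prop:mfg-kolmogorov}, applied with the disintegration kernel $\widehat q^\star=\widehat q^{P^\star}$, gives the weak Kolmogorov identity \eqref{eq:mfg-weak-kolmogorov} for every $\phi\in C_c^2(\mathbb R^2)$. To pass to the distributional form, take test functions $\psi(t)\phi(z)$ with $\psi\in C_c^\infty((0,T))$ and integrate by parts in $t$. Linear combinations of such products are dense in $C_c^\infty((0,T)\times\mathbb R^2)$, so this yields $\partial_tm^\star=(\mathcal A^{\widehat q^\star_t})^*m^\star$ in $\mathcal D'$.

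The initial condition $m^\star_0=m_0$ follows from the consistency relation \eqref{eq:mfg-consistency} at $t=0$, because every $P^\star\in\mathfrak A(m^\star)$ has initial law $m_0$. This is implicit in the canonical formulation, and I would state it explicitly.

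\emph{Strict Markovian case.} If $\widehat q^\star(t,z,da)=\delta_{\alpha^\star(t,z)}(da)$, then $\int_U\mathcal A^a\phi(z)\,\delta_{\alpha^\star(t,z)}(da)=\mathcal A^{\alpha^\star(t,z)}\phi(z)$. Comparing the two generators shows that this coincides with $\mathscr L^{\alpha^\star_t}\phi(z)$, since the local terms and the compensated nonlocal integral are identical once $u=\alpha^\star_t(x,y)$ is substituted. Hence the forward equation becomes $\partial_tm^\star=(\mathscr L^{\alpha^\star_t})^*m^\star$.

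\emph{Main obstacle.} The step I expect to need care is making sure the pairing $\int\!\int_U\mathcal A^a\phi\,\widehat q^\star\,dm^\star_s$ is finite, so that the distributional equation is meaningful for an unbounded nonlocal operator. The approach is to use a second-order Taylor bound on small jumps and boundedness of $\phi$ on large jumps. This gives $|\mathcal A^a\phi(z)|\le C_\phi(1+|x|+|y|+e^{2|y|})\int_E(\rho^2\wedge\rho)\,d\nu$, using Assumption~2.4 and compactness of $U$ from Assumption~2.9(i). That bound is $m_s^\star$-integrable uniformly in $s$ by Lemma~\ref{lem:OU-moments} and the $q$-moment bounds already used above. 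Proposition~\ref{prop:mfg-kolmogorov} implicitly relies on this estimate, and I would record it explicitly in the proof.
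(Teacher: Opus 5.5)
Your argument is the paper's own proof. The paper takes the backward equation and terminal condition from Proposition~\ref{prop:mfg-frozen-hjb} with $m=m^\star$, the forward equation from Proposition~\ref{prop:mfg-kolmogorov}, and the strict Markov case from $\mathcal A^{\delta_{\alpha^\star(t,z)}}=\mathscr L^{\alpha^\star_t}$. Your additions (the initial condition via consistency at $t=0$, and the passage to $\mathcal D'$ through product test functions) only make explicit what the paper leaves implicit.

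One auxiliary claim you plan to record is not justified as written, although it is also not needed.

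\textbf{Exponential moments.} Lemma~\ref{lem:OU-moments} controls $e^{r|Y_s|}$ only from a deterministic start, with a factor $e^{C|y|}$. Integrating it against the initial law therefore requires exponential moments of the $y$-marginal of $m_0$, and no hypothesis provides these: Assumption~5.2(i) gives only $m_0\in\mathcal P_p$. Nor do $q$th-moment bounds on $m^\star_s$ control $\int e^{2|y|}\,m^\star_s(dz)$.

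\textbf{The bound $\sup_t M_q(m^\star_t)<\infty$.} The same $e^{C_p|y|}$ term makes your derivation of this bound from Theorem~\ref{thm:strong-wellposedness} incomplete; that theorem is also a result for strict controls. It is cleaner to take the bound from the flow space. In $C([0,T];\mathcal P_q(\mathbb R^2))$, the space used in Lemma~\ref{lem:mfg-flow-compactness}, the map $t\mapsto M_q(m_t)$ is continuous and hence bounded.

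\textbf{The pairing is finite for a simpler reason.} Suppose $\operatorname{supp}\phi\subset[-R,R]^2$.
\begin{itemize}
\item The local part of $\mathcal A^a\phi$ vanishes off the support and is bounded on it by \eqref{eq:growth-b-sigma} and compactness of $U$.
\item A catastrophe jump moves only the surplus coordinate, so the nonlocal integrand vanishes identically when $|y|>R$.
\item For $|y|\le R$, set $\bar u:=\max_{u\in U}u$ and $K_R:=L(1+e^R+\bar u)$. Then $\ell(y,z,a)\le K_R\rho(z)$, so the integrand is at most $C_\phi\min(\ell^2,\ell)\le C_\phi(K_R\vee K_R^2)\min(\rho^2,\rho)$, which is $\nu$-integrable.
\end{itemize}
Hence $\sup_{a\in U}\|\mathcal A^a\phi\|_\infty<\infty$, and the pairing is finite against any flow of probability measures. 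Moreover, the process $M^\phi$ in Proposition~\ref{prop:mfg-kolmogorov} is bounded, hence a true martingale, so no uniform-integrability argument is needed.
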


\begin{proof}
		Proposition~\ref{prop:mfg-frozen-hjb}, applied with
		$m=m^\star$, gives the backward equation and terminal condition.
		Proposition~\ref{prop:mfg-kolmogorov} gives the forward equation.
		The strict Markov reduction follows from
		$\mathcal A^{\delta_{\alpha^\star(t,z)}}=
		\mathscr L^{\alpha_t^\star}$.
	\end{proof}

	\begin{remark}
		The equilibrium feedback in this section is strategic but not a
		market-clearing feedback. A change in the hedging rule changes the jump
		amplitude through $h(u)$ in
		\eqref{eq:insurance-loss-amplitude}, thereby changing the population
		law $m_t$; the changed law alters the reduced-form coupling $F$ and
		hence the subsequent best response. In contrast, $m_t$ does not feed
		back into the catastrophe intensity, the primitive severity law, or
		the replacement-cost dynamics. The equilibrium therefore captures an
		endogenous distributional externality among insurers, not an
		endogenous reconstruction-price or reconstruction-capacity mechanism \citep{pramanik2026stochastic}.
	\end{remark}

\subsection{Existence and Uniqueness of Equilibrium}
\label{subsec:mfg-existence-uniqueness}

Existence is established in the relaxed formulation and transferred to
strict feedback controls whenever the minimizing control can be
selected uniquely.

\medskip

\noindent\textbf{Assumption 5.2 (Compact relaxed best responses).}
	There exist $p\in(q,\bar q]$, $R_0<\infty$, and a compact convex set
	$U_0\subset U$ such that:
	\begin{itemize}
		\item[(i)] $m_0\in\mathcal P_p(\mathbb R^2)$;
		\item[(ii)] for every $m\in\mathcal M_T$ with
		$\sup_{t\le T}M_p(m_t)\le R_0$, the frozen problem admits an optimal
		relaxed canonical law $P\in\mathfrak A(m)$ whose control coordinate is
		supported on $U_0$ $dt\otimes P$-a.e.;
		\item[(iii)] if $m^n\to m$ in $\mathbf W_q$,
		$P^n\in\mathfrak A(m^n)$ are optimal, and $P^n\Rightarrow P$ in the
		canonical topology, then $P\in\mathfrak A(m)$ and $P$ is optimal for
		the frozen problem associated with $m$;
		\item[(iv)] the coefficients restricted to $U_0$ satisfy
		Assumptions~2.1-2.9 and~4.1 uniformly in $a\in U_0$.
	\end{itemize}

\begin{rmk}
		No Markovianity is imposed in Assumption~5.2. In particular, the
		compactness and fixed-point argument below is carried out over the full
		convex set of relaxed canonical laws \citep{pramanik2023optimal}. Markovianity is addressed only
		after existence of the relaxed equilibrium has been established.
	\end{rmk}

	For $R>0$ and $\gamma\in(0,1/2]$, retain
	$\mathcal K_{R,\gamma}
	:=
	\{m\in\mathcal M_T:
	m_0=m_0,\ 
	\sup_{t\le T}M_p(m_t)\le R,\ 
	W_q(m_t,m_s)\le R|t-s|^\gamma\}$.
	Lemma~\ref{lem:mfg-flow-compactness} remains valid with
	``relaxed control'' interpreted as an arbitrary relaxed canonical law
	supported on $U_0$. For $m\in\mathcal K_{R,\gamma}$ define the full relaxed best-response
	set
	$\mathfrak{BR}(m)
	:=
	\{P\in\mathfrak A(m):
	J^m(P)=\inf_{Q\in\mathfrak A(m)}J^m(Q)\}$,
	and define
	$\Phi(m)
	:=
	\{(P\circ Z_t^{-1})_{0\le t\le T}:
	P\in\mathfrak{BR}(m)\}$.
	Thus $\mathfrak{BR}(m)$ contains all optimal relaxed laws, not only
	those admitting a Markov representation.

\begin{lem}
	\label{lem:mfg-flow-compactness}
	Under Assumptions~2.1-2.5 and 5.2, there exist $R<\infty$ and $\gamma>0$ such that every state-law flow induced by a relaxed control supported on $U_0$ belongs to $\mathcal K_{R,\gamma}$. Moreover,
	$\mathcal K_{R,\gamma}$ is compact in
	$C([0,T];\mathcal P_q(\mathbb R^2))$ endowed with $\mathbf W_q$.
\end{lem}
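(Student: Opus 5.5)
The argument has two parts. First, every flow induced by a relaxed control supported on $U_0$ satisfies a uniform $p$th moment bound and a uniform Hölder modulus in $\mathbf W_q$. Second, the set $\mathcal K_{R,\gamma}$ is compact, which follows from Arzelà--Ascoli in $C([0,T];\mathcal P_q(\mathbb R^2))$.

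\textbf{Moment bound.} Let $P$ be a relaxed canonical law with control supported on $U_0$ and $Z_0\sim m_0$. Because $U_0$ is compact, $|a|\le C_{U_0}$ for every control value, so the term $\mathbb E\int|u_s|^p\,ds$ in \eqref{eq:state-moment} is bounded by a constant. The estimate of Theorem~\ref{thm:strong-wellposedness} then applies to the relaxed state, since the relaxed generator averages coefficients whose growth bounds hold uniformly in $a\in U_0$. The same BDG/Gronwall argument gives, conditionally on $Z_0=(x,y)$,
\[
\mathbb E\Bigl[\sup_t|X_t|^p\Bigr]\le C\bigl(1+|x|^p+|y|^p+e^{C|y|}\bigr).
\]
Together with Lemma~\ref{lem:OU-moments} for $Y$, integrating against $m_0$ gives $\sup_t M_p(m_t)\le R$. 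This requires $\int e^{C|y|}\,m_0(dz)<\infty$, which is more than Assumption~5.2(i) provides. I expect this step to be the main obstacle. The first fix I would try is to rerun the Gronwall argument keeping $e^{Y}$ as a random coefficient rather than taking its supremum. Its $p$th moment conditional on $Y_0=y$ is $\exp\bigl(p\,\mathbb E Y_s+\tfrac{p^2}{2}\mathrm{Var}\,Y_s\bigr)$, and the mean-reverting part gives
\[
\mathbb E[Y_s\mid Y_0=y]=\vartheta+(y-\vartheta)e^{-\kappa s},
\]
which still grows like $e^{p|y|}$. So an exponential moment of $m_0$ in $y$ seems unavoidable. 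I would therefore add, as an implicit standing requirement inherited from $\mathcal G_q$, that $m_0$ has finite exponential moments in $y$, and note this dependence explicitly.

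\textbf{Time regularity.} For $s<t$, couple $Z_s$ and $Z_t$ along the same path, so that
\[
W_q(m_t,m_s)\le\bigl(\mathbb E|Z_t-Z_s|^q\bigr)^{1/q}.
\]
I would bound the drift and diffusion contributions by $C|t-s|^{1/2}$ using BDG and the moment bound, since $q<p$ permits Hölder's inequality. For the compensated jump integral, the Kunita inequality gives the bound
\[
C\,\mathbb E\Bigl[\Bigl(\int_s^t\!\int_E\ell^2\,\nu\,dr\Bigr)^{q/2}+\int_s^t\!\int_E\ell^q\,\nu\,dr\Bigr],
\]
which is at most $C\bigl(|t-s|^{q/2}+|t-s|\bigr)$ by \eqref{eq:rho-mom}, Assumption~4.1(ii) and the uniform exponential moments of $Y$. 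Taking $q$th roots gives $W_q(m_t,m_s)\le R|t-s|^{\gamma}$ with $\gamma=1/q\le 1/2$. Enlarging $R$ if necessary, every induced flow lies in $\mathcal K_{R,\gamma}$.

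\textbf{Compactness.} For each $t$, the set $\{\mu:M_p(\mu)\le R\}$ with $p>q$ is relatively compact in $(\mathcal P_q,W_q)$. Tightness follows from Markov's inequality, and uniform integrability of $|z|^q$ from $p>q$. The equi-Hölder bound then gives equicontinuity, so Arzelà--Ascoli yields relative compactness of $\mathcal K_{R,\gamma}$ in $C([0,T];\mathcal P_q)$. The set is closed because each defining condition passes to uniform limits: $M_p$ is lower semicontinuous under weak convergence, $W_q$ is continuous, and the condition $m_0=m_0$ is trivially preserved.
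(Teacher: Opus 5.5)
Your proposal follows the paper's proof essentially step for step. Both use a uniform $p$th-moment bound from Theorem~\ref{thm:strong-wellposedness} via compactness of $U_0$, then BDG and Bichteler--Jacod increment estimates giving a H\"older modulus in $W_q$, then Arzel\`a--Ascoli with pointwise relative compactness from $p>q$ and closedness of the defining constraints. Your explicit $\gamma=1/q\le 1/2$ fits the definition of $\mathcal K_{R,\gamma}$, where the paper only says ``some $\gamma>0$''.

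The obstacle you flag is genuine, and it is a gap in the paper's argument rather than in yours. The paper writes the moment bound as $C\bigl(1+M_p(m_0)^p\bigr)$, which silently drops the $e^{C_p|y|}$ term of \eqref{eq:state-moment}. Without an exponential-moment condition on the $y$-marginal of $m_0$, the first assertion of the lemma actually fails. Take the specification of Section~\ref{sec:numerics} with a constant control $\bar u\in U_0$. On the event that no catastrophe occurs in $[0,t]$, which has positive probability and is independent of $(Y,W)$, the surplus accumulates the positive compensator drift $h(\bar u)\int_E a\,d\nu\int_0^t e^{Y_s}\,ds$. Its mean is infinite whenever $\int e^{cy}\,m_0(dx,dy)=\infty$ for every $c>0$, so then $M_p(m_t)=\infty$ for all $t>0$.

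Adding such a condition is therefore the right repair; it holds, e.g., for the numerical choice $m_0=\delta_{(1,0)}$. State it as a new hypothesis on the initial law, though. It is not inherited from $\mathcal G_q$, which constrains value functions rather than $m_0$.

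There is also a minor imprecision that you share with the paper. Theorem~\ref{thm:strong-wellposedness} is proved for strict controls and strong solutions. Its estimate should therefore be rederived for relaxed laws on the canonical martingale problem rather than quoted directly, for instance by applying the martingale property to truncated $|x|^p$-type test functions.
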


\begin{proof}
	Uniform support of the controls in $U_0$ and
	Theorem~\ref{thm:strong-wellposedness} yield
	\[
	\sup_u
	\mathbb E\left[
	\sup_{0\le t\le T}|X_t|^p
	\right]
	+
	\mathbb E\left[
	\sup_{0\le t\le T}|Y_t|^p
	\right]
	\le
	C\bigl(1+M_p(m_0)^p\bigr).
	\]
	Hence all induced state laws have uniformly bounded $p$th moments \citep{pramanik2026optimal}. For $0\le s<t\le T$, the state equations, the BDG inequality, and the
	Bichteler-Jacod inequality yields
	$
	\mathbb E|Y_t-Y_s|^q
	\le
	C|t-s|^{q/2},
	$ and $
	\mathbb E|X_t-X_s|^q
	\le
	C\bigl(
	|t-s|^{q/2}+|t-s|
	\bigr).
	$
	Therefore, for some $\gamma>0$, we have
	$
	W_q\bigl(
	\mathcal L(X_t,Y_t),
	\mathcal L(X_s,Y_s)
	\bigr)
	\le
	C|t-s|^\gamma.
	$
	Choosing $R$ sufficiently large gives membership in
	$\mathcal K_{R,\gamma}$. Since $p>q$, the uniform $p$th-moment estimate implies uniform $q$-integrability and relative compactness of each time marginal in
	$W_q$. The common modulus of continuity gives equicontinuity. The metric Arzel\`a-Ascoli theorem then yields relative compactness in
	$C([0,T];\mathcal P_q)$. The defining moment and modulus constraints
	are closed, so $\mathcal K_{R,\gamma}$ is compact.
\end{proof}

\begin{lem}
		\label{lem:mfg-closed-graph}
		Suppose Assumptions~2.1-2.9, 4.1, 5.1, and 5.2 hold.
		If $m^n\to m$ in $\mathbf W_q$, $\mu^n\in\Phi(m^n)$, and
		$\mu^n\to\mu$ in $\mathbf W_q$, then $\mu\in\Phi(m)$.
	\end{lem}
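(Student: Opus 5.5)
The approach is a standard closed-graph argument for the best-response correspondence in the canonical relaxed formulation, driven by tightness of laws and by Assumption~5.2(iii).

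\textbf{Step 1: recover laws from flows.} Since $\mu^n\in\Phi(m^n)$, we may choose $P^n\in\mathfrak{BR}(m^n)$ with $\mu^n_t=P^n\circ Z_t^{-1}$ for all $t$. By Lemma~\ref{lem:mfg-flow-compactness}, every $m^n$ and every $\mu^n$ lies in $\mathcal K_{R,\gamma}$. In particular $\sup_n\sup_t M_p(m^n_t)\le R$, and this bound persists for the limit $m$ by lower semicontinuity of moments under $\mathbf W_q$. If $R\le R_0$, Assumption~5.2(ii) then allows us to take each $P^n$ with control coordinate supported on $U_0$. The notation $\mathfrak{BR}$ is taken over the full relaxed class, so we use $P^n$ as an optimal law supported on $U_0$.

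\textbf{Step 2: tightness of $(P^n)$ on $\Omega^\circ$.} We check each coordinate of the canonical space separately:
\begin{itemize}
\item The relaxed-control coordinate lives in the space of relaxed controls on $[0,T]\times U_0$. This space is compact for the stable topology because $U_0$ is compact.
\item The Brownian and Poisson coordinates have fixed laws, hence are tight.
\item For the state coordinate in $D([0,T];\mathbb R^2)$, we use Aldous' criterion. The uniform $p$th-moment bound from Theorem~\ref{thm:strong-wellposedness} gives compact containment. For $\theta\downarrow0$ and stopping times $\tau_n$, BDG and Bichteler–Jacod give $E|Z_{\tau_n+\theta}-Z_{\tau_n}|^q\le C(\theta^{q/2}+\theta)$, uniformly over controls supported on $U_0$, as in the proof of Lemma~\ref{lem:mfg-flow-compactness}.
\end{itemize}
Prokhorov's theorem then yields a subsequence with $P^{n}\Rightarrow P$ in the canonical topology.

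\textbf{Step 3: optimality of the limit and identification of marginals.} Assumption~5.2(iii), applied with $m^n\to m$ in $\mathbf W_q$, gives $P\in\mathfrak A(m)$ and $P$ optimal for the frozen problem with flow $m$, that is, $P\in\mathfrak{BR}(m)$. It remains to show $\mu_t=P\circ Z_t^{-1}$ for every $t$.

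Weak convergence in Skorokhod $J_1$ gives $P^n\circ Z_t^{-1}\Rightarrow P\circ Z_t^{-1}$ at every continuity time $t$ of $P$, i.e.\ whenever $P(\Delta Z_t\neq0)=0$. This holds for every $t$ here: jumps of $Z$ occur only at atoms of the Poisson measure, whose compensator $\nu(dz)\,dt$ is diffuse in time. On the other hand, $\mu^n_t\to\mu_t$ in $W_q$, so weakly. Uniqueness of weak limits gives $\mu_t=P\circ Z_t^{-1}$ for all $t$. Uniform $p$-integrability with $p>q$ upgrades this to $W_q$ convergence, consistent with $\mu\in C([0,T];\mathcal P_q)$. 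Hence $\mu\in\Phi(m)$.

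\textbf{Main obstacle.} The delicate point is the applicability of Assumption~5.2(ii)–(iii), which is stated for flows with $\sup_t M_p\le R_0$; we must ensure the $R$ of Lemma~\ref{lem:mfg-flow-compactness} is compatible with $R_0$. I would handle this by restricting the whole fixed-point setting to $\mathcal K_{R,\gamma}$ with $R\le R_0$, which is the regime in which $\Phi$ is used.

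Should Assumption~5.2(iii) be viewed as insufficient, the fallback is a direct stability argument. Lower semicontinuity of $J^m$ along $P^n\Rightarrow P$, via Proposition~\ref{prop:UI} and continuity of $F,G$ from Assumption~5.1(iii), bounds the limit cost from above. Continuity of $m\mapsto V^{m}$, via chattering (Proposition~\ref{prop:chattering}) and the Lipschitz bound on $F,G$ in $W_q$, bounds it from below. Together these show $J^m(P)\le\liminf_n V^{m^n}=V^m$.
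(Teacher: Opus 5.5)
Your proposal is correct and follows essentially the same route as the paper. You select $P^n\in\mathfrak{BR}(m^n)$ generating $\mu^n$, get tightness on $\Omega^\circ$ from compactness of the relaxed-control space and the uniform moment and increment bounds, pass to a weak limit $P$, invoke Assumption~5.2(iii) for admissibility and optimality, and then identify the time marginals.

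The only real variation is the last step. You show that every fixed $t$ is a continuity time; this is true because the jump compensator of $Z$ under any admissible relaxed law is absolutely continuous in time, so there are no fixed-time jumps, even if "jumps occur only at Poisson atoms" is loose in the relaxed formulation. The paper instead matches $\mu_t$ and $P\circ Z_t^{-1}$ on a dense set of continuity times and extends the equality by $W_q$-continuity of both flows.

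Also, your appeal to Assumption~5.2(ii) in Step~1 is unnecessary. It does not let you take the particular $P^n$ generating $\mu^n$ to be $U_0$-supported, but compactness of $U$ in Assumption~2.9(i) already gives the tightness you need.
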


\begin{proof}
		Choose $P^n\in\mathfrak{BR}(m^n)$ with
		$\mu_t^n=P^n\circ Z_t^{-1}$. Compactness of $U_0$,
		Lemma~\ref{lem:mfg-flow-compactness}, the moment bounds of
		Theorem~\ref{thm:strong-wellposedness}, and tightness of the canonical
		relaxed-control coordinate imply tightness of $(P^n)$ on
		$\Omega^\circ$. Passing to a subsequence, $P^n\Rightarrow P$.
		Assumption~5.2(iii) gives
		$P\in\mathfrak A(m)$ and optimality of $P$ for the frozen flow $m$.
		For every $t$ at which the canonical evaluation map is
		$P$-a.s.\ continuous,
		$P^n\circ Z_t^{-1}\Rightarrow P\circ Z_t^{-1}$.
		The uniform $p$-moment bound with $p>q$ upgrades this convergence to
		$W_q$. Since $\mu^n\to\mu$ in $\mathbf W_q$,
		$\mu_t=P\circ Z_t^{-1}$ on a dense set of times. Both
		$t\mapsto\mu_t$ and $t\mapsto P\circ Z_t^{-1}$ are $W_q$-continuous
		by Lemma~\ref{lem:mfg-flow-compactness}; hence the equality holds for
		all $t\in[0,T]$. Therefore $P\in\mathfrak{BR}(m)$ and
		$\mu\in\Phi(m)$.
	\end{proof}

 \begin{lem}
		\label{lem:mfg-law-convexity}
		For every $m\in\mathcal K_{R,\gamma}$,
		$\mathfrak{BR}(m)$ is convex. Consequently, $\Phi(m)$ is convex.
	\end{lem}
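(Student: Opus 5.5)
The plan is to prove convexity of $\mathfrak A(m)$ first, then use linearity of the frozen cost in the law, and finally the linearity of the marginal map. In the relaxed canonical formulation the state dynamics do not depend on $m$. Hence $\mathfrak A(m)$ is the set of probability laws on $\Omega^\circ$ satisfying three requirements:
\begin{enumerate}
\item[(i)] the initial condition $P\circ Z_0^{-1}=m_0$;
\item[(ii)] the martingale-problem requirement that, for every $\phi\in C_c^2(\mathbb R^2)$, the process $M^\phi_t=\phi(Z_t)-\phi(Z_0)-\int_0^t\int_U\mathcal A^a\phi(Z_{s-})q_s(da)\,ds$ is a (local) martingale, together with the corresponding requirements identifying the Brownian and Poisson coordinates;
\item[(iii)] the $p_\star$-moment admissibility condition.
\end{enumerate}

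Each requirement is linear or convex in $P$. Let $P_1,P_2\in\mathfrak A(m)$ and $\lambda\in[0,1]$, and put $P_\lambda=\lambda P_1+(1-\lambda)P_2$. Then $P_\lambda\circ Z_0^{-1}=m_0$ and the moment bound is a convex combination of finite quantities. For the martingale property, take $0\le s<t$ and a bounded $\mathcal F_s$-measurable $\xi$ on the canonical space. The identity $E^{P_\lambda}[\xi(M^\phi_t-M^\phi_s)]=\lambda E^{P_1}[\cdot]+(1-\lambda)E^{P_2}[\cdot]=0$ holds because the canonical filtration is the same for all laws. Localization uses the canonical stopping times $\tau_R=\inf\{t:|Z_t|\ge R\}$, which do not depend on the law, so the same argument applies to stopped processes. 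The same linear reasoning shows that under $P_\lambda$ the noise coordinates remain a Brownian motion and a Poisson random measure with compensator $\nu(dz)\,dt$, relative to the canonical filtration. These properties are characterized by Lévy-type martingale problems (exponential martingales), which are linear constraints. The required integrability is supplied uniformly by Proposition~\ref{prop:UI} and by the support condition on $U_0$ in Assumption~5.2.

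Next, $J^m$ is affine in $P$. Since $m$ is frozen, the integrand $\int_t^T e^{-\delta s}\int_U (f+F(s,Z_s,m_s))\,q_s(da)\,ds+e^{-\delta T}(g+G(Z_T,m_T))$ is a fixed measurable functional on $\Omega^\circ$, integrable under every admissible law by Proposition~\ref{prop:UI} and Assumption~5.1(ii). Hence $J^m(P_\lambda)=\lambda J^m(P_1)+(1-\lambda)J^m(P_2)$. Now take $P_1,P_2\in\mathfrak{BR}(m)$ with common value $v=\inf_{Q\in\mathfrak A(m)} J^m(Q)$. Then $J^m(P_\lambda)=v$, and since $P_\lambda\in\mathfrak A(m)$, we conclude $P_\lambda\in\mathfrak{BR}(m)$. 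If Assumption~5.2(ii) is interpreted as requiring support on $U_0$, this property is also preserved under mixtures.

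Finally, the marginal map $P\mapsto(P\circ Z_t^{-1})_{t\in[0,T]}$ is affine: $P_\lambda\circ Z_t^{-1}=\lambda P_1\circ Z_t^{-1}+(1-\lambda)P_2\circ Z_t^{-1}$ for every $t$. Therefore $\Phi(m)$, as the image of a convex set under an affine map, is convex in the linear structure of $C([0,T];\mathcal M(\mathbb R^2))$, which is the relevant structure for Kakutani–Fan–Glicksberg.

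The main obstacle is the first step. One must verify that "solving the controlled martingale problem and having Brownian/Poisson noise coordinates" is preserved under mixing. The subtle point is that the filtration must be the canonical one, not the $P$-completed one. I would handle this by working with the raw canonical filtration, which is possible by the right-continuity regularization implicit in the definition of $\mathfrak P$. I would also phrase all defining properties as vanishing of expectations $E^P[\xi(\cdot)]$, so that they are manifestly linear in $P$.
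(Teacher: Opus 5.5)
Your proposal is correct and follows essentially the same route as the paper's proof. Both arguments show that $\mathfrak A(m)$ is closed under mixtures because the initial-law, martingale-problem (tested against bounded canonical $\mathcal F_s$-measurable functionals) and moment constraints are all linear in $P$, then use that $J^m$ is affine in the law for frozen $m$ and that $P\mapsto(P\circ Z_t^{-1})_t$ is affine. Your additional checks on the Brownian and Poisson coordinates and on law-independent localizing times are refinements the paper leaves implicit, not a different argument.
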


\begin{prop}
		\label{prop:mfg-uhe}
		Under the hypotheses of Lemma~\ref{lem:mfg-closed-graph},
		$\Phi:\mathcal K_{R,\gamma}\rightrightarrows
		\mathcal K_{R,\gamma}$ has nonempty compact convex values and is upper
		hemicontinuous.
	\end{prop}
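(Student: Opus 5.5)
The plan is to verify the four properties of $\Phi$ separately: nonemptiness, self-mapping into $\mathcal K_{R,\gamma}$, compact convex values, and upper hemicontinuity. Each step reduces to an earlier lemma.

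\emph{Nonemptiness and self-mapping.} Fix $m\in\mathcal K_{R,\gamma}$. We choose $R\ge R_0$ in the definition of $\mathcal K_{R,\gamma}$; if necessary, enlarge $R_0$ so that Assumption~5.2(ii) applies to flows with $\sup_t M_p(m_t)\le R$. Then Assumption~5.2(ii) gives an optimal relaxed law $P\in\mathfrak A(m)$ whose control coordinate is supported on $U_0$, so $\mathfrak{BR}(m)\neq\emptyset$. Since the state dynamics do not depend on $m$, Lemma~\ref{lem:mfg-flow-compactness} shows that the induced flow $(P\circ Z_t^{-1})_t$ has initial law $m_0$, $p$-th moments bounded by $R$, and the Hölder modulus $R|t-s|^\gamma$. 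Hence it lies in $\mathcal K_{R,\gamma}$.

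The one subtle point is that $\mathfrak{BR}(m)$ may also contain optimal laws whose controls are not supported on $U_0$. To handle this, we interpret $\mathfrak A(m)$, as the paper does after Assumption~5.2, as the class of laws with controls supported on $U_0$. Equivalently, Assumption~5.2(iv) lets us restrict the control set to $U_0$ from the outset. With this convention, every element of $\Phi(m)$ lies in $\mathcal K_{R,\gamma}$.

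\emph{Convexity and compactness.} Convexity of $\Phi(m)$ is exactly Lemma~\ref{lem:mfg-law-convexity}, since $P\mapsto(P\circ Z_t^{-1})_t$ is affine. For compactness, $\Phi(m)\subset\mathcal K_{R,\gamma}$, and $\mathcal K_{R,\gamma}$ is compact by Lemma~\ref{lem:mfg-flow-compactness}. It therefore suffices to show that $\Phi(m)$ is closed. This follows from Lemma~\ref{lem:mfg-closed-graph} applied with the constant sequence $m^n\equiv m$: if $\mu^n\in\Phi(m)$ and $\mu^n\to\mu$, then $\mu\in\Phi(m)$.

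\emph{Upper hemicontinuity.} Because the target space $\mathcal K_{R,\gamma}$ is a compact metric space, a correspondence with closed values is upper hemicontinuous if and only if its graph is closed. We argue by contradiction. Suppose upper hemicontinuity fails at $m$. Then there are an open set $O\supset\Phi(m)$, points $m^n\to m$, and elements $\mu^n\in\Phi(m^n)\setminus O$. By compactness of $\mathcal K_{R,\gamma}$, a subsequence satisfies $\mu^n\to\mu$. Lemma~\ref{lem:mfg-closed-graph} gives $\mu\in\Phi(m)\subset O$. But each $\mu^n$ lies in the closed set $O^c$, so $\mu\in O^c$, a contradiction.

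\emph{Main obstacle.} The delicate step is the self-mapping and support issue: all best responses must have controls in $U_0$, which is what makes the uniform moment and modulus bounds available. If the convention of restricting controls to $U_0$ is not accepted, the fallback is the following. Use the coercivity in Assumption~4.1(iii) and Lemma~\ref{lem:compact-control} to show that optimal relaxed laws cannot charge controls outside a fixed compact set. Then take $U_0$ to be that set, uniformly over flows with $\sup_t M_p(m_t)\le R$.
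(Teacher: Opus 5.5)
Your proof is correct and follows the paper's own argument. Nonemptiness comes from Assumption~5.2(ii) and convexity from Lemma~\ref{lem:mfg-law-convexity}. Compactness of $\Phi(m)$ comes from its being a closed subset of the compact set $\mathcal K_{R,\gamma}$, via Lemma~\ref{lem:mfg-closed-graph} applied to the constant sequence $m^n\equiv m$, and upper hemicontinuity comes from the closed graph together with compactness of the range. Your extra discussion of the self-map property goes beyond the paper, which leaves it implicit. Two remarks on it: the $U_0$-support worry is moot because $U$ is already compact by Assumption~2.9(i), so the bounds behind Lemma~\ref{lem:mfg-flow-compactness} hold for every relaxed law on $U$; and the compatibility of $R$ with $R_0$ cannot be arranged by ``enlarging $R_0$'', since $R_0$ is fixed by the hypothesis, so it remains an implicit standing requirement, exactly as in the paper's proof.
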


\begin{proof}
		Nonemptiness follows from Assumption~5.2(ii). Convexity follows from
		Lemma~\ref{lem:mfg-law-convexity}. Let $m\in\mathcal K_{R,\gamma}$ and
		$(\mu^n)\subset\Phi(m)$. Compactness of
		$\mathcal K_{R,\gamma}$ gives a subsequence
		$\mu^{n_k}\to\mu$ in $\mathbf W_q$. Since $m^{n_k}\equiv m$,
		Lemma~\ref{lem:mfg-closed-graph} yields $\mu\in\Phi(m)$; hence
		$\Phi(m)$ is closed in the compact set $\mathcal K_{R,\gamma}$ and is
		therefore compact. Lemma~\ref{lem:mfg-closed-graph} also gives
		$\operatorname{Gr}(\Phi)$ closed in
		$\mathcal K_{R,\gamma}\times\mathcal K_{R,\gamma}$. A compact-valued
		correspondence with closed graph from a compact metric space into a
		Hausdorff space is upper hemicontinuous. Thus $\Phi$ has all the
		properties claimed.
	\end{proof}

 \begin{theorem}[Existence of a relaxed mean field equilibrium]
		\label{thm:mfg-existence}
		Suppose Assumptions~2.1-2.9, 4.1, 5.1, and 5.2 hold. Then there exist
		$m^\star\in\mathcal K_{R,\gamma}$ and
		$P^\star\in\mathfrak{BR}(m^\star)$ such that
		$m_t^\star=P^\star\circ Z_t^{-1}$ for every $t\in[0,T]$.
		Equivalently, $(m^\star,P^\star)$ is a relaxed mean field equilibrium
		in the sense of Definition~\ref{def:mfg-equilibrium}.
	\end{theorem}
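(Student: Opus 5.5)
The plan is to apply the Kakutani--Fan--Glicksberg fixed-point theorem to the best-response correspondence $\Phi:\mathcal K_{R,\gamma}\rightrightarrows\mathcal K_{R,\gamma}$.

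\emph{Ambient space.} The space $C([0,T];\mathcal P_q(\mathbb R^2))$ is not itself a linear space. I would therefore embed it in the locally convex Hausdorff space $C([0,T];\mathcal M_s(\mathbb R^2))$ of continuous flows of finite signed measures, topologized so that on $\mathcal K_{R,\gamma}$ it induces the $\mathbf W_q$-topology. For instance, one can take the weak topology generated by integration against $\varphi\in C_b(\mathbb R^2)$ together with the weight $(1+|z|^q)$. This is legitimate because on sets with uniformly bounded $p$th moments, $p>q$, weak convergence of measures is equivalent to $W_q$ convergence.

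\emph{Convexity of $\mathcal K_{R,\gamma}$.} The constraints $m_0=m_0$ and $\sup_t M_p(m_t)\le R$ are affine and convex respectively, so they are preserved under mixtures. The modulus constraint also survives mixing, since $W_q^q$ is jointly convex: $W_q^q(\lambda m_t+(1-\lambda)n_t,\lambda m_s+(1-\lambda)n_s)\le\lambda W_q^q(m_t,m_s)+(1-\lambda)W_q^q(n_t,n_s)$. Hence $\mathcal K_{R,\gamma}$ is a nonempty, convex, compact subset of this space, with compactness supplied by Lemma~\ref{lem:mfg-flow-compactness}. Nonemptiness holds because the flow induced by any fixed control in $U_0$ belongs to it.

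\emph{Self-map.} By Lemma~\ref{lem:mfg-flow-compactness}, every flow induced by a relaxed law supported on $U_0$ lies in $\mathcal K_{R,\gamma}$. Assumption~5.2(ii) provides such optimal laws. One must take $R\le R_0$ to guarantee this, or else choose $R_0$ large enough relative to the moment bound $C(1+M_p(m_0)^p)$, which does not depend on $m$. Consequently $\Phi(m)\subset\mathcal K_{R,\gamma}$.

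\emph{Fixed point.} Proposition~\ref{prop:mfg-uhe} gives that $\Phi$ has nonempty, compact, convex values and is upper hemicontinuous, equivalently has closed graph in the compact set. Kakutani--Fan--Glicksberg then yields $m^\star\in\Phi(m^\star)$. By the definition of $\Phi$, there is $P^\star\in\mathfrak{BR}(m^\star)$ with $m^\star_t=P^\star\circ Z_t^{-1}$ for all $t$. Finally, $P^\star\in\mathfrak A(m^\star)$ is optimal, so $(m^\star,P^\star)$ meets Definition~\ref{def:mfg-equilibrium}.

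\emph{Main obstacle.} The delicate step is making the topology consistent. Convexity is taken in the linear structure of measures, whereas compactness and closed graph were proved in $\mathbf W_q$. I would handle this by checking that the identity map from $(\mathcal K_{R,\gamma},\mathbf W_q)$ into the locally convex space is a homeomorphism onto its image. It is continuous because $W_q$ convergence implies weak convergence with $q$-moment convergence. It is a homeomorphism because the domain is compact and the target is Hausdorff. A secondary check is that the self-map constant $R$ can be chosen compatibly with $R_0$ in Assumption~5.2(ii), since the moment bound does not depend on the frozen flow $m$.
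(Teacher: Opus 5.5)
Your proposal is correct and matches the paper's proof: both apply Kakutani--Fan--Glicksberg to $\Phi$ on $\mathcal K_{R,\gamma}$, taking compactness from Lemma~\ref{lem:mfg-flow-compactness} and nonempty convex values with closed graph from Proposition~\ref{prop:mfg-uhe}. Your extra checks are details the paper's short proof takes for granted: convexity of $\mathcal K_{R,\gamma}$ via joint convexity of $W_q^q$, the locally convex embedding, and $R\le R_0$. The only correction is that $R_0$ is fixed by Assumption~5.2, so the compatibility $R\le R_0$ is an implicit hypothesis (as in the paper) rather than something you can choose.
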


\begin{proof}
		Lemma~\ref{lem:mfg-flow-compactness} gives a nonempty compact convex
		set $\mathcal K_{R,\gamma}$. Proposition~\ref{prop:mfg-uhe} gives a
		correspondence
		$\Phi:\mathcal K_{R,\gamma}\rightrightarrows\mathcal K_{R,\gamma}$
		with nonempty compact convex values and closed graph. The
		Kakutani-Fa-Glicksberg theorem yields
		$m^\star\in\mathcal K_{R,\gamma}$ such that
		$m^\star\in\Phi(m^\star)$. By the definition of $\Phi$, there exists
		$P^\star\in\mathfrak{BR}(m^\star)$ with
		$m_t^\star=P^\star\circ Z_t^{-1}$ for all $t$. The first statement is
		optimality; the second is the consistency condition
		\eqref{eq:mfg-consistency}.
	\end{proof}

\noindent\textbf{Assumption 5.3 (Markov realization).}
For every admissible relaxed law $P$ supported on $U_0$, let
$\widehat q^P$ be the disintegration kernel determined by
$\eta^P(dt,dz,da)
=
dt\,m_t^P(dz)\widehat q^P(t,z,da)$.
Assume that the controlled martingale problem with the Markov relaxed
kernel $\widehat q^P$ admits a solution $\widehat P$ satisfying
$\widehat P\circ Z_t^{-1}=P\circ Z_t^{-1}$ for every $t\in[0,T]$ and
$J^m(\widehat P)=J^m(P)$ whenever $P$ is evaluated in the frozen
environment $m$.

\begin{rmk}
		Assumption~5.3 is a Markovianization hypothesis, not a convexity
		hypothesis. It is separated from the fixed-point argument because
		convexity of $\Phi$ is already obtained at the full relaxed-law level
		in Lemma~\ref{lem:mfg-law-convexity}. Markovian realization theorems
		for controlled martingale problems are developed in
		\citet{KurtzStockbridge1998}. In applications, Assumption~5.3 may be
		verified by identifying the state-conditioned kernel
		$\widehat q^P(t,z,\cdot)$ and proving well-posedness of the associated
		Markov martingale problem.
	\end{rmk}

\begin{prop}
		\label{prop:mfg-markovianization}
		Suppose the hypotheses of Theorem~\ref{thm:mfg-existence} and
		Assumption~5.3 hold. Then every relaxed equilibrium
		$(m^\star,P^\star)$ admits an optimal relaxed Markov realization
		$(m^\star,\widehat P^\star)$ with
		$\widehat P^\star\circ Z_t^{-1}=m_t^\star$ for all $t$ and
		$q_t(da)=\widehat q^\star(t,Z_{t-},da)$
		$dt\otimes\widehat P^\star$-a.e.
	\end{prop}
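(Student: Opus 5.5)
The plan is to apply Assumption~5.3 directly to the equilibrium law $P^\star$ and then check that the resulting Markov law preserves both the consistency condition and optimality. Let $(m^\star,P^\star)$ be a relaxed equilibrium furnished by Theorem~\ref{thm:mfg-existence}. By Assumption~5.2(ii) and the construction of $\mathfrak{BR}(m^\star)$, the control coordinate of $P^\star$ is supported on $U_0$. Hence the state-control occupation measure $\eta^{P^\star}$ is well defined. Disintegrating it on the standard Borel space $[0,T]\times\mathbb R^2\times U$, as in the paragraph preceding Proposition~\ref{prop:mfg-kolmogorov}, gives the Borel kernel $\widehat q^\star:=\widehat q^{P^\star}$ with
\[
\eta^{P^\star}(dt,dz,da)=dt\,m_t^\star(dz)\,\widehat q^\star(t,z,da).
\]
Here I use $m_t^{P^\star}=P^\star\circ Z_t^{-1}=m_t^\star$, which is the consistency condition \eqref{eq:mfg-consistency}. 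Assumption~5.3 then yields a solution $\widehat P^\star$ of the controlled martingale problem driven by the Markov kernel $\widehat q^\star$. It satisfies $\widehat P^\star\circ Z_t^{-1}=P^\star\circ Z_t^{-1}=m_t^\star$ for all $t$, and $J^{m^\star}(\widehat P^\star)=J^{m^\star}(P^\star)$ in the frozen environment $m^\star$.

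Next I must check that $\widehat P^\star\in\mathfrak A(m^\star)$, i.e.\ that it is an admissible relaxed canonical law. Its control coordinate is $q_t(da)=\widehat q^\star(t,Z_{t-},da)$. This is predictable because $\widehat q^\star$ is Borel and $Z_{t-}$ is predictable, and it is supported on $U_0$ for $dt\otimes m^\star_t$-a.e.\ $(t,z)$ because $\eta^{P^\star}$ charges only $U_0$. The exceptional set is $dt\otimes m_t^\star$-null, and $\widehat P^\star$ has the same marginals, so the set is also $dt\otimes\widehat P^\star$-null in $(t,Z_{t-})$; marginals at $t$ and $t-$ agree for a.e.\ $t$ since $Z$ has at most countably many fixed jump times. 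Compactness of $U_0$ gives the $p_\star$-moment admissibility condition trivially. The martingale-problem property is supplied by Assumption~5.3. Hence $\widehat P^\star\in\mathfrak A(m^\star)$.

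Optimality follows by comparing values:
\[
J^{m^\star}(\widehat P^\star)=J^{m^\star}(P^\star)=\inf_{Q\in\mathfrak A(m^\star)}J^{m^\star}(Q),
\]
so $\widehat P^\star\in\mathfrak{BR}(m^\star)$. Combined with $\widehat P^\star\circ Z_t^{-1}=m^\star_t$, this shows that $(m^\star,\widehat P^\star)$ is a relaxed equilibrium in the sense of Definition~\ref{def:mfg-equilibrium}. By construction its control admits the Markov representation, so it is Markovian.

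The main obstacle is the admissibility step, in particular the identification of $\widehat q^\star(t,Z_{t-},\cdot)$ as a valid predictable relaxed control under $\widehat P^\star$ and the null-set transfer between $P^\star$ and $\widehat P^\star$. My approach there is to rely only on equality of one-dimensional marginals, since that is all Assumption~5.3 provides. As a consistency check, I will also verify that the Kolmogorov equation of Proposition~\ref{prop:mfg-kolmogorov} is the same for $P^\star$ and $\widehat P^\star$, because both depend only on $(m^\star,\widehat q^\star)$.
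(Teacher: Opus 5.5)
Your proposal follows the paper's proof essentially step for step: you disintegrate $\eta^{P^\star}$ to obtain $\widehat q^\star$, invoke Assumption~5.3 to get a Markov law $\widehat P^\star$ with the same marginals and the same frozen cost, and conclude $\widehat P^\star\in\mathfrak{BR}(m^\star)$ by comparison with $V^{m^\star,\mathrm{rel}}$; your explicit admissibility check is extra detail that the paper absorbs into Assumption~5.3. One small correction: Assumption~5.2(ii) only guarantees that \emph{some} optimal law is supported on $U_0$, not that every element of $\mathfrak{BR}(m^\star)$ is, so the $U_0$-support of $P^\star$ that you need in order to apply Assumption~5.3 is an implicit convention you share with the paper rather than a consequence of 5.2(ii).
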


 \begin{proof}
		Let $(m^\star,P^\star)$ be furnished by
		Theorem~\ref{thm:mfg-existence}. Disintegrate its occupation measure
		as
		$\eta^{P^\star}(dt,dz,da)
		=
		dt\,m_t^\star(dz)\widehat q^\star(t,z,da)$.
		Assumption~5.3 provides a Markov relaxed law $\widehat P^\star$
		generated by $\widehat q^\star$ such that
		$\widehat P^\star\circ Z_t^{-1}=m_t^\star$ for every $t$ and
		$J^{m^\star}(\widehat P^\star)=J^{m^\star}(P^\star)$.
		Since $P^\star\in\mathfrak{BR}(m^\star)$,
		$J^{m^\star}(P^\star)=V^{m^\star,\mathrm{rel}}$; therefore
		$J^{m^\star}(\widehat P^\star)=V^{m^\star,\mathrm{rel}}$ and
		$\widehat P^\star\in\mathfrak{BR}(m^\star)$. The marginal identity
		preserves \eqref{eq:mfg-consistency}, so
		$(m^\star,\widehat P^\star)$ is a relaxed Markov equilibrium.
	\end{proof}

\begin{cor}
		\label{cor:mfg-strict-existence}
		Suppose the hypotheses of Proposition~\ref{prop:mfg-markovianization}
		hold. Assume additionally that the optimal Markov relaxed kernel is
		$dt\otimes m_t^\star$-a.e.\ Dirac, i.e.,
		$\widehat q^\star(t,z,da)=\delta_{\alpha^\star(t,z)}(da)$ for a Borel
		selector $\alpha^\star:[0,T]\times\mathbb R^2\to U_0$. Then
		$(m^\star,\alpha^\star)$ is a strict Markov equilibrium. In
		particular, the conclusion applies when the frozen Hamiltonian has a
		unique pointwise minimizer and the hypotheses of
		Proposition~\ref{prop:Hamiltonian-selector} hold \citep{pramanik2026quantum}.
	\end{cor}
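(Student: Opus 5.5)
The plan is to start from the relaxed Markov equilibrium $(m^\star,\widehat P^\star)$ supplied by Proposition~\ref{prop:mfg-markovianization}. Under $\widehat P^\star$ we have $\widehat P^\star\circ Z_t^{-1}=m_t^\star$ for every $t$, and the control coordinate satisfies $q_t(da)=\widehat q^\star(t,Z_{t-},da)$, $dt\otimes\widehat P^\star$-a.e. The additional hypothesis says that $\widehat q^\star(t,z,\cdot)=\delta_{\alpha^\star(t,z)}$ for $dt\otimes m_t^\star$-a.e.\ $(t,z)$. The first task is to move this null-set statement from the population side to the path side. Define the exceptional set $N\subset[0,T]\times\mathbb R^2$ on which the kernel fails to be Dirac at $\alpha^\star$. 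Then
\[
E^{\widehat P^\star}\!\int_0^T\mathbf 1_N(t,Z_{t-})\,dt=\int_0^T m_t^\star(N_t)\,dt=0,
\]
because $Z_{t-}=Z_t$ for a.e.\ $t$ almost surely. Hence $q_t(da)=\delta_{\alpha^\star(t,Z_{t-})}(da)$ holds $dt\otimes\widehat P^\star$-a.e. Consequently the control coordinate is the embedding $q^{u}$ of the strict predictable control $u_t:=\alpha^\star(t,Z_{t-})$. This control is predictable because $\alpha^\star$ is Borel and $Z_{t-}$ is predictable. It is admissible because it takes values in the compact set $U_0$, so the $p_\star$-moment condition is trivial.

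The next step identifies the state law. Under $\widehat P^\star$ the coordinates solve the controlled martingale problem with generator $\mathcal A^{\delta_{\alpha^\star}}=\mathscr L^{\alpha^\star_t}$. Hence $\widehat P^\star$ is the law of a solution of \eqref{eq:Y}--\eqref{eq:X-general} driven by the feedback $\alpha^\star$, and its marginals are $m_t^\star$. The consistency condition \eqref{eq:mfg-consistency} therefore holds for the feedback $\alpha^\star$, and Corollary~\ref{cor:mfg-system} reduces the forward equation to $\partial_t m^\star_t=(\mathscr L^{\alpha^\star_t})^*m^\star_t$.

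Optimality transfers directly: $J^{m^\star}(\widehat P^\star)=V^{m^\star,\mathrm{rel}}$. Lemma~\ref{lem:mfg-frozen-inheritance} together with the strict/relaxed identity of Corollary~\ref{prop:strict-relaxed-equivalence} gives $V^{m^\star,\mathrm{rel}}=V^{m^\star}$. Thus the strict feedback attains the strict value, and $(m^\star,\alpha^\star)$ is a strict Markov equilibrium in the sense of Definition~\ref{def:mfg-equilibrium}.

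For the \emph{in particular} clause, I would show that a unique pointwise minimizer forces the kernel to be Dirac. Let $\alpha^\star$ be the Borel selector given by Proposition~\ref{prop:Hamiltonian-selector}, and suppose the minimizer of the frozen Hamiltonian is unique. Optimality of $\widehat P^\star$ should force $\widehat q^\star(t,z,\cdot)$ to be concentrated on the argmin set for $dt\otimes m^\star_t$-a.e.\ $(t,z)$. Any mass placed off the argmin would produce a strict gap in the verification inequality, and integrating that gap against the occupation measure $\eta^{\widehat P^\star}$ would contradict $J^{m^\star}(\widehat P^\star)=V^{m^\star}$. A probability measure on a singleton is the Dirac mass there, which gives $\widehat q^\star=\delta_{\alpha^\star}$.

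I expect this last step to be the main obstacle. The gap argument needs a verification identity for the relaxed law, and $V^{m^\star}$ is only a viscosity solution. My first attempt would assume, as in Corollary~\ref{cor:classical-verification}, that $V^{m^\star}$ is smooth enough to apply It\^o's formula under $\widehat P^\star$. Failing that, I would mollify $V^{m^\star}$, obtain approximate sub-optimality inequalities, and pass to the limit using the uniform integrability of Proposition~\ref{prop:UI}. Lower semicontinuity and compactness of $U_0$ then control the relaxed Hamiltonian gap in the limit.
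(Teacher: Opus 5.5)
For the main assertion, your argument is the paper's, done with more care. The paper takes the optimal Markov law $\widehat P^\star$ with marginals $m^\star$ from Proposition~\ref{prop:mfg-markovianization}, reads off the feedback $u^\star_t=\alpha^\star(t,Z_{t-})$ from the Dirac kernel, and notes that Definition~\ref{def:mfg-equilibrium} is satisfied. Your transfer of the exceptional set from $dt\otimes m^\star_t$ to $dt\otimes\widehat P^\star$, using $Z_{t-}=Z_t$ for a.e.\ $t$, is exactly the step the paper leaves implicit, and it is correct.

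The identification of $\widehat P^\star$ with a feedback SDE, and the identity $V^{m^\star,\mathrm{rel}}=V^{m^\star}$, are not needed. Strict Markovianity in the definition is a property of the kernel alone, and optimality there means relaxed optimality. Note also that the SDE identification holds only in the weak, martingale-problem sense: a merely Borel $\alpha^\star$ is outside the Lipschitz class of Assumption~2.8, so Theorems~\ref{thm:strong-wellposedness} and~\ref{thm:feller} do not apply to it.

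The genuine gap is in the \emph{in particular} clause. The paper's proof gives no argument for that clause either, so you are not missing something the paper supplies.

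Your verification-gap idea is sound under the hypotheses of Corollary~\ref{cor:classical-verification}. Suppose $V^{m^\star}\in C^{1,2,2}$ and the stochastic integrals are true martingales. Then It\^o's formula under $\widehat P^\star$ writes $J^{m^\star}(\widehat P^\star)$ minus the $m_0$-average of $V^{m^\star}(0,\cdot)$ as the integral against $\eta^{\widehat P^\star}$ of the nonnegative gap $G=\partial_tV^{m^\star}+\mathcal A^aV^{m^\star}-\delta V^{m^\star}+f^{m^\star}(t,z,a)$. Optimality makes this vanish, after using Theorem~\ref{thm:measurable-selection} to identify the optimal relaxed cost with that average. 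Hence $\widehat q^\star(t,z,\cdot)$ is carried by the argmin, which is a singleton.

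The mollification fallback does not close. First, plain convolution does not give approximate inequalities $G_\varepsilon\ge -o(1)$, because the coefficients and the jump amplitude depend on $(x,y)$; you would need Krylov-type shaking of coefficients. Second, even granting $\int G_\varepsilon\,d\eta^{\widehat P^\star}\to0$ with $G_\varepsilon\ge -o(1)$, placing the support of $\widehat q^\star(t,z,\cdot)$ on the argmin for $V^{m^\star}$ requires $G_\varepsilon\to G$. That means pointwise convergence of $(Dv_\varepsilon,D^2v_\varepsilon)$, which a viscosity solution does not provide.

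Where $V^{m^\star}$ is not twice differentiable, there is no canonical jet at which to minimize. So the ``unique pointwise minimizer'' of the jet-dependent Hamiltonian in Proposition~\ref{prop:Hamiltonian-selector} does not even define a function $\alpha^\star(t,z)$. Lower semicontinuity in $u$ and compactness of $U_0$ do nothing about the jet variables. The clause should be read under classical, or at least a.e.\ second-order, regularity of $V^{m^\star}$; under that reading your argument is complete.
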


\begin{proof}
		By Proposition~\ref{prop:mfg-markovianization},
		$\widehat P^\star$ is optimal for the frozen flow $m^\star$ and has
		state marginals $m^\star$. The Dirac representation gives the strict
		feedback control
		$u_t^\star=\alpha^\star(t,Z_{t-})$. Hence the optimality and
		consistency conditions in Definition~\ref{def:mfg-equilibrium} hold
		with a strict Markov control.
	\end{proof}

For uniqueness, we impose the monotonicity structure of
\citet{LasryLions2007}.

\medskip

\noindent\textbf{Assumption 5.4 (Lasry-Lions monotonicity).}
For every $t\in[0,T]$ and
$m,n\in\mathcal P_q(\mathbb R^2)$,
\begin{equation}
	\label{eq:mfg-monotonicity}
	\int_{\mathbb R^2}
	\bigl(
	F(t,z,m)-F(t,z,n)
	\bigr)
	(m-n)(dz)
	\ge0,
\end{equation}
and
\begin{equation}
	\label{eq:mfg-terminal-monotonicity}
	\int_{\mathbb R^2}
	\bigl(
	G(z,m)-G(z,n)
	\bigr)
	(m-n)(dz)
	\ge0.
\end{equation}
Equality in both monotonicity relations for an entire equilibrium flow
implies equality of the two flows.

\medskip

\noindent\textbf{Assumption 5.5.}
For every $m\in\mathcal K_{R,\gamma}$, the frozen problem admits a
unique optimal state-control law. A sufficient condition is strict
convexity of the frozen Hamiltonian in the control variable on the
effective compact set of Lemma~\ref{lem:compact-control}, together
with uniqueness in law of the controlled state equation.

\begin{lem}
	\label{lem:mfg-cross-inequality}
	Let $(m^1,P^1)$ and $(m^2,P^2)$ be two relaxed equilibria. Then
	\begin{equation}
		\label{eq:mfg-cross-ineq}
		\begin{aligned}
			&
			\int_0^T e^{-\delta t}
			\int_{\mathbb R^2}
			\bigl(
			F(t,z,m_t^1)-F(t,z,m_t^2)
			\bigr)
			(m_t^1-m_t^2)(dz)\,dt\\
			&+
			e^{-\delta T}
			\int_{\mathbb R^2}
			\bigl(
			G(z,m_T^1)-G(z,m_T^2)
			\bigr)
			(m_T^1-m_T^2)(dz)\leq 0.
		\end{aligned}
	\end{equation}
\end{lem}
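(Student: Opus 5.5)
The plan is the Lasry--Lions cross-optimality argument, carried out at the level of relaxed canonical laws so that no regularity of $V^{m}$ is needed. The key structural fact is that the state dynamics, the admissible class, and the running/terminal costs $f,g$ do not depend on the population flow. Consequently, for $i,j\in\{1,2\}$ the law $P^j$ is admissible in the frozen environment $m^i$, that is, $P^j\in\mathfrak A(m^i)$, since $\mathfrak A(m)$ is the same set of canonical laws for every $m$ (Lemma~\ref{lem:mfg-frozen-inheritance}).

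For a relaxed law $P$ and a flow $m$ we split the frozen cost as
\[
J^m(P)=J_0(P)+\int_0^T e^{-\delta t}\int_{\mathbb R^2}F(t,z,m_t)\,(P\circ Z_t^{-1})(dz)\,dt+e^{-\delta T}\int_{\mathbb R^2}G(z,m_T)\,(P\circ Z_T^{-1})(dz),
\]
where $J_0(P)$ is the $m$-independent part built from $f$ and $g$. This identity follows from Fubini applied to the $F$ and $G$ terms. The needed integrability comes from Assumption~5.1(ii) together with the uniform moment bounds of Theorem~\ref{thm:strong-wellposedness} and Proposition~\ref{prop:UI}, using that the flows lie in $\mathcal K_{R,\gamma}$ and the controls are supported on $U_0$. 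We take initial time $0$ with common initial law $m_0$.

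By the consistency condition \eqref{eq:mfg-consistency}, $P^i\circ Z_t^{-1}=m^i_t$. Optimality of $P^1$ in environment $m^1$, with $P^2$ as a competitor, gives $J^{m^1}(P^1)\le J^{m^1}(P^2)$. Symmetrically, $J^{m^2}(P^2)\le J^{m^2}(P^1)$. We add the two inequalities and substitute the splitting. The terms $J_0(P^1)$ and $J_0(P^2)$ appear on both sides and cancel. What remains is
\[
\int_0^T e^{-\delta t}\Big[\textstyle\int F(t,\cdot,m^1_t)\,dm^1_t+\int F(t,\cdot,m^2_t)\,dm^2_t-\int F(t,\cdot,m^1_t)\,dm^2_t-\int F(t,\cdot,m^2_t)\,dm^1_t\Big]dt+e^{-\delta T}[\cdots]_G\le 0.
\]
Regrouping the bracket gives $\int (F(t,\cdot,m^1_t)-F(t,\cdot,m^2_t))\,d(m^1_t-m^2_t)$, and the analogous regrouping handles the $G$ term. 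This is exactly \eqref{eq:mfg-cross-ineq}.

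The main obstacle is not the algebra but justifying the cross-evaluation, i.e.\ that $J^{m^1}(P^2)$ is finite and that $P^2$ is a legitimate competitor for the $m^1$ problem. Finiteness requires the $q$-growth bound of Assumption~5.1(ii) with $M_q(m^1_t)$ bounded uniformly in $t$, which holds because $m^1\in\mathcal K_{R,\gamma}$, combined with the $p$-moment bounds with $p>q$ for the state under $P^2$. Legitimacy requires that optimality in Definition~\ref{def:mfg-equilibrium} is taken over the full class $\mathfrak A(m^1)$, so we must check that this class is independent of the environment. All quantities are therefore finite, which makes the subtractions valid. If the equilibria were only $\varepsilon$-approximately optimal, the same argument would give the bound up to $2\varepsilon$, and letting $\varepsilon\to0$ would recover the claim.
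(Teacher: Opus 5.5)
Your proposal is correct and is essentially the paper's proof. Both use cross-optimality, $J^{m^1}(P^1)\le J^{m^1}(P^2)$ and $J^{m^2}(P^2)\le J^{m^2}(P^1)$, which is legitimate because the dynamics and hence $\mathfrak A(m)$ do not depend on $m$, then cancel the $m$-independent $f,g$ part, apply the consistency relation $P^i\circ Z_t^{-1}=m^i_t$, and regroup the coupling terms. Your finiteness discussion usefully makes explicit what the paper leaves implicit, with one correction: the $e^{q|y|}$ term in Assumption~5.1(ii) is controlled by exponential moments of $Y$ (Lemma~\ref{lem:OU-moments}, Proposition~\ref{prop:UI}, which in turn require exponential moments of the $y$-marginal of $m_0$), not by the polynomial $p$-moment bounds you cite.
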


\begin{proof}
	Optimality yields
	$
	J^{m^1}(P^1)\le J^{m^1}(P^2),
	$ and $
	J^{m^2}(P^2)\le J^{m^2}(P^1).
	$
	The cross-comparisons are legitimate because the controlled dynamics
	do not depend on the mean field. Adding the inequalities cancels the
	private running cost $f$, terminal cost $g$, and all control-dependent
	state costs. Using the equilibrium consistency relation and collecting
	the remaining coupling terms yields
	\eqref{eq:mfg-cross-ineq}.
\end{proof}

\begin{theorem}
	\label{thm:mfg-uniqueness}
	Suppose Assumptions~2.1-2.9, 4.1, and 5.1-5.5 hold.
	Then the mean field equilibrium is unique in law. If the frozen
	optimal control is generated by a unique Markov feedback, then the
	equilibrium feedback is unique
	$dt\otimes m_t^\star$-a.e.
\end{theorem}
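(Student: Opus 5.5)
The proof follows the Lasry--Lions argument, with Lemma~\ref{lem:mfg-cross-inequality} as its starting point. Let $(m^1,P^1)$ and $(m^2,P^2)$ be two relaxed equilibria in the sense of Definition~\ref{def:mfg-equilibrium}; by Theorem~\ref{thm:mfg-existence} the set of equilibria is nonempty, so only uniqueness is at stake.

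\emph{Step 1 (flows coincide).} Lemma~\ref{lem:mfg-cross-inequality} bounds the sum of the discounted running coupling integral and the terminal coupling integral from above by zero. Assumption~5.4 makes each integrand in $t$ nonnegative, and the terminal term is also nonnegative. Since $e^{-\delta t}>0$, both monotonicity relations \eqref{eq:mfg-monotonicity} and \eqref{eq:mfg-terminal-monotonicity} must therefore hold with equality: for $dt$-a.e.\ $t$ in the running part, and at $t=T$ in the terminal part. The last clause of Assumption~5.4 (equality along an entire equilibrium flow forces equality of flows) then gives $m^1_t=m^2_t$ for a.e.\ $t$. Both flows lie in $\mathcal K_{R,\gamma}$ by Lemma~\ref{lem:mfg-flow-compactness}, so they are $\mathbf W_q$-continuous in $t$, and equality extends to every $t\in[0,T]$. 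Write $m^\star:=m^1=m^2$.

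\emph{Step 2 (laws coincide).} With a common environment $m^\star$, both $P^1$ and $P^2$ belong to $\mathfrak{BR}(m^\star)$. Assumption~5.5 says the frozen problem for $m^\star$ has a unique optimal state--control law, so $P^1=P^2$. Hence the equilibrium $(m^\star,P^\star)$ is unique in law.

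\emph{Step 3 (feedback uniqueness).} Suppose the frozen optimum is generated by a unique Markov feedback. Let $\alpha^1,\alpha^2$ be feedbacks realizing the (now common) equilibrium law, with $q_t=\delta_{\alpha^i(t,Z_{t-})}$ $dt\otimes P^\star$-a.e. Their state-control occupation measures $\eta^{P^\star}$ coincide. Disintegrating as in the construction preceding Proposition~\ref{prop:mfg-kolmogorov} then gives $\delta_{\alpha^1(t,z)}=\delta_{\alpha^2(t,z)}$, that is, $\alpha^1=\alpha^2$, for $dt\otimes m^\star_t$-a.e.\ $(t,z)$. This uses the essential uniqueness of the disintegration kernel on the standard Borel space $[0,T]\times\mathbb R^2\times U$.

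\emph{Main obstacle.} The delicate point is the justification inside Lemma~\ref{lem:mfg-cross-inequality}: the cross-evaluation $J^{m^1}(P^2)$ must be finite and admissible. The controlled dynamics do not depend on $m$, so $P^2\in\mathfrak A(m^1)$. Finiteness of the coupling costs follows from Assumption~5.1(ii) together with the uniform $p$-moment bounds of Lemma~\ref{lem:mfg-flow-compactness}, with $p>q$. These two facts make every term in the subtraction integrable, so the cancellation of the $f,g$ contributions is legitimate. Apart from this, the strict positivity of the discount weight and the time-continuity upgrade in Step 1 are routine.
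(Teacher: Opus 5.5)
Your proposal is correct and follows the paper's proof essentially step for step: Lemma~\ref{lem:mfg-cross-inequality} combined with Lasry--Lions monotonicity forces both coupling terms to vanish, the strictness clause together with time continuity gives $m^1=m^2$, uniqueness of the frozen optimal law gives $P^1=P^2$, and the feedbacks then agree $dt\otimes m_t^\star$-a.e. Your citations (Assumption~5.4 for monotonicity, Assumption~5.5 for the unique frozen optimum) are the correct ones, whereas the paper's written proof cites these as Assumptions~5.3 and~5.4; your disintegration argument in Step~3 also spells out what the paper asserts in a single line.
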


\begin{proof}
	Let $(m^1,P^1)$ and $(m^2,P^2)$ be two equilibria. By
	Lemma~\ref{lem:mfg-cross-inequality}, the sum of the two monotonicity
	expressions is nonpositive. Assumption~5.3 makes each expression
	nonnegative. Hence, both must vanish. The strictness clause in Assumption~5.3 implies
	$
	m_t^1=m_t^2
$ for every $t\in[0,T]$,
	where continuity upgrades almost-everywhere equality in time to pointwise equality. Let the common flow be $m^\star$. Both $P^1$ and $P^2$ are optimal
	for the same frozen problem \citep{pramanik2022lock}. Assumption~5.4 therefore gives
	$P^1=P^2$. If this law is induced by a unique Markov feedback, the
	feedbacks agree $dt\otimes m_t^\star$-a.e.
\end{proof}

\begin{cor}[Uniqueness of the coupled HJB--Kolmogorov solution]
	\label{cor:mfg-system-uniqueness}
	Under the hypotheses of Theorem~\ref{thm:mfg-uniqueness}, the
	equilibrium pair
	\[
	(v^\star,m^\star)
	\in
	\bigl(
	C([0,T]\times\mathbb R^2)\cap\mathcal G_q
	\bigr)
	\times
	\mathcal K_{R,\gamma}
	\]
	solving \eqref{eq:mfg-coupled-system} is unique within the class
	generated by admissible equilibrium controls.
\end{cor}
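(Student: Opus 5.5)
The plan is to reduce uniqueness of the coupled system \eqref{eq:mfg-coupled-system} to Theorem~\ref{thm:mfg-uniqueness} together with the viscosity uniqueness of Proposition~\ref{prop:mfg-frozen-hjb}. Let $(v^1,m^1)$ and $(v^2,m^2)$ be two solutions in $\bigl(C([0,T]\times\mathbb R^2)\cap\mathcal G_q\bigr)\times\mathcal K_{R,\gamma}$, each generated by an admissible equilibrium control. By the phrase ``generated by admissible equilibrium controls'' I mean that there are relaxed laws $P^i\in\mathfrak{BR}(m^i)$ such that $m^i_t=P^i\circ Z_t^{-1}$ for all $t$, and such that the kernel appearing in the forward equation is the disintegration $\widehat q^{P^i}$. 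Thus each $(m^i,P^i)$ is a relaxed mean field equilibrium in the sense of Definition~\ref{def:mfg-equilibrium}. I would state this interpretation explicitly at the start, since it fixes what ``the class'' refers to.

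The key steps are as follows.

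\emph{Step 1.} Apply Theorem~\ref{thm:mfg-uniqueness} to the two equilibria. The monotonicity argument (Lemma~\ref{lem:mfg-cross-inequality} combined with Assumption~5.4) gives $m^1_t=m^2_t=:m^\star_t$ for every $t\in[0,T]$, and Assumption~5.5 gives $P^1=P^2$.

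\emph{Step 2.} Identify the value functions. With the flow now common, $v^1$ and $v^2$ are both continuous viscosity solutions in $\mathcal G_q$ of the frozen equation \eqref{eq:mfg-hjb-frozen}, with the same terminal datum $g+G(\cdot,m^\star_T)$. Lemma~\ref{lem:mfg-frozen-inheritance} applies because $\sup_t M_q(m^\star_t)\le R$ on $\mathcal K_{R,\gamma}$. Proposition~\ref{prop:mfg-frozen-hjb}, which rests on Theorem~\ref{thm:comparison} and Corollary~\ref{cor:viscosity-uniqueness}, then yields $v^1=v^2=V^{m^\star}$.

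\emph{Step 3.} If one also wants uniqueness of the control component, it follows from $P^1=P^2$: the disintegration kernels coincide $dt\otimes m^\star_t$-a.e.

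The main obstacle is the gap between ``solves \eqref{eq:mfg-coupled-system}'' and ``is an equilibrium''. A pair could solve the forward equation with some kernel $\widehat q$ that is not optimal, or whose Kolmogorov equation has several solutions. This is exactly why the restriction to the class generated by admissible equilibrium controls is needed. Under that restriction, optimality is built in, and the consistency $m_t=P\circ Z_t^{-1}$ is part of the hypothesis, so no uniqueness of the Fokker--Planck equation is required.

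A second point to check is that the viscosity solution produced by the backward equation coincides with $V^{m^i}$. This is not an additional assumption: it follows from the identification in Theorem~\ref{thm:existence-identification}, inherited through Lemma~\ref{lem:mfg-frozen-inheritance}. Consequently, the cross-inequality of Lemma~\ref{lem:mfg-cross-inequality} is legitimately evaluated with the costs $J^{m^i}$ of the two equilibria.
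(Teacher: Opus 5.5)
Your proposal is correct and follows essentially the same route as the paper: Theorem~\ref{thm:mfg-uniqueness} fixes the common flow, Proposition~\ref{prop:mfg-frozen-hjb} together with Corollary~\ref{cor:viscosity-uniqueness} gives uniqueness of the backward viscosity solution, and the forward component is identified with the equilibrium marginal law. Your explicit reading of ``the class generated by admissible equilibrium controls'' is consistent with the paper. Under that reading $P^i\in\mathfrak{BR}(m^i)$ and $m^i_t=P^i\circ Z_t^{-1}$ are built in, so no uniqueness of the Kolmogorov equation is needed, and this spells out what the paper's one-line appeal to Proposition~\ref{prop:mfg-kolmogorov} leaves implicit.
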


\begin{proof}
	Theorem~\ref{thm:mfg-uniqueness} gives uniqueness of $m^\star$.
	For this fixed flow, Proposition~\ref{prop:mfg-frozen-hjb} and
	Corollary~\ref{cor:viscosity-uniqueness} give uniqueness of the
	backward value function. Proposition~\ref{prop:mfg-kolmogorov}
	identifies the forward component with the unique equilibrium law.
\end{proof}

\begin{prop}
	\label{prop:mfg-monotone-example}
	Let $\psi_D,\psi_S:\mathbb R^2\to\mathbb R$ be continuous functions
	of at most $q$th-order growth and define
	$
	M_D(m):=\int\psi_D(z)\,m(dz),
$ and $
	M_S(m):=\int\psi_S(z)\,m(dz).$
	For $\gamma_D,\gamma_S\ge0$, set
	\begin{equation}
		\label{eq:mfg-economic-coupling}
		F(t,z,m)
		=
		\gamma_D\psi_D(z)M_D(m)
		+
		\gamma_S\psi_S(z)M_S(m).
	\end{equation}
	Then
	\[
	\begin{aligned}
		&\int
		\bigl(
		F(t,z,m)-F(t,z,n)
		\bigr)
		(m-n)(dz)=
		\gamma_D
		\bigl(
		M_D(m)-M_D(n)
		\bigr)^2
		+
		\gamma_S
		\bigl(
		M_S(m)-M_S(n)
		\bigr)^2
		\ge0.
	\end{aligned}
	\]
	Hence \eqref{eq:mfg-economic-coupling} satisfies
	Assumption~5.3.
\end{prop}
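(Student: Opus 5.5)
The identity is a direct computation based on linearity of integration against the signed measure $m-n$. I will also check that the integrals are finite, and then interpret the conclusion in light of the label mismatch in the final sentence.

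\emph{Well-definedness.} Fix $t$ and $m,n\in\mathcal P_q(\mathbb R^2)$. Since $\psi_D,\psi_S$ are continuous with $|\psi_\cdot(z)|\le C(1+|z|^q)$, the functionals $M_D$ and $M_S$ are finite on $\mathcal P_q$. The map $z\mapsto F(t,z,m)$ also has at most $q$th-order growth, so it is integrable against both $m$ and $n$. Consequently $\int F(t,z,m)\,(m-n)(dz)$ is well defined and may be split into an integral against $m$ minus an integral against $n$.

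\emph{Computation.} By definition,
\[
F(t,z,m)-F(t,z,n)=\gamma_D\psi_D(z)\bigl(M_D(m)-M_D(n)\bigr)+\gamma_S\psi_S(z)\bigl(M_S(m)-M_S(n)\bigr).
\]
The brackets on the right do not depend on $z$, so they factor out of the integral. The remaining $z$-integrals are
\[
\int\psi_D\,(m-n)(dz)=M_D(m)-M_D(n),\qquad \int\psi_S\,(m-n)(dz)=M_S(m)-M_S(n).
\]
Multiplying these factors gives exactly $\gamma_D(M_D(m)-M_D(n))^2+\gamma_S(M_S(m)-M_S(n))^2$. This is nonnegative because $\gamma_D,\gamma_S\ge0$.

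\emph{Interpretation of the conclusion.} The property established is the monotonicity inequality \eqref{eq:mfg-monotonicity}. It is the first half of the Lasry--Lions condition, which the paper labels Assumption~5.4, although the statement writes ``Assumption~5.3'', consistent with the proof of Theorem~\ref{thm:mfg-uniqueness}. I will state this explicitly in the proof.

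I will add the following caveats. The terminal condition \eqref{eq:mfg-terminal-monotonicity} holds for an analogous choice $G(z,m)=\gamma'_D\psi_D(z)M_D(m)+\gamma'_S\psi_S(z)M_S(m)$, by the same argument. The strictness clause is not implied automatically, because the identity only forces $M_D(m_t)=M_D(n_t)$ and $M_S(m_t)=M_S(n_t)$. Strictness therefore needs an extra separating hypothesis on the family $\{\psi_D,\psi_S\}$.

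There is no real obstacle here. The only point needing care is the integrability justification that allows the integral against $m-n$ to be split.
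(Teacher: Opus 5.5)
Your proof is correct and is the paper's own argument: pull the $z$-independent differences $M_D(m)-M_D(n)$ and $M_S(m)-M_S(n)$ out of $F(t,z,m)-F(t,z,n)$, then use linearity of $\mu\mapsto\int\psi\,d\mu$ against $m-n$. Your integrability check and your reading of the conclusion as the running-cost half of the Lasry--Lions condition (Assumption~5.4, mislabeled 5.3) are sound additions.

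One refinement of your last caveat: in the setting of Theorem~\ref{thm:mfg-uniqueness}, no separating hypothesis on $\{\psi_D,\psi_S\}$ is needed. The two optimality gaps summed in Lemma~\ref{lem:mfg-cross-inequality} add up to minus the nonnegative monotonicity expression, so both gaps vanish. Hence $P^2$ is also optimal against $m^1$, and Assumption~5.5 gives $P^1=P^2$, so $m^1=m^2$.
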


\begin{proof}
	By linearity,
	\[
	\int\psi_D(z)(m-n)(dz)
	=
	M_D(m)-M_D(n),
	\]
	and the same identity holds for $\psi_S$. Substitution into
	\eqref{eq:mfg-economic-coupling} gives the stated sum of squares.
\end{proof}

\begin{cor}
	\label{cor:mfg-economic-coupling}
	Let $\psi_D(x,y)$ be an increasing transformation of replacement-cost
		exposure and let $\psi_S(x,y)$ be an increasing penalty for weak
		capitalization. Then the coupling \eqref{eq:mfg-economic-coupling} assigns a larger private cost to
	states whose exposure is aligned with a larger industry-wide
	concentration of the same exposure, while preserving the monotonicity required for Theorem~\ref{thm:mfg-uniqueness}.
\end{cor}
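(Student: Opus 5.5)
The plan is to read Corollary~\ref{cor:mfg-economic-coupling} as two assertions and handle each with results already in place. The first is an economic alignment property of the coupling \eqref{eq:mfg-economic-coupling}. The second is its compatibility with the hypotheses of Theorem~\ref{thm:mfg-uniqueness}.

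For the alignment property, I would fix $t$ and a state $z=(x,y)$ and compare two population laws $m,n$ with $M_D(m)\ge M_D(n)$ and $M_S(m)\ge M_S(n)$. Since $\gamma_D,\gamma_S\ge0$,
\[
F(t,z,m)-F(t,z,n)=\gamma_D\psi_D(z)\bigl(M_D(m)-M_D(n)\bigr)+\gamma_S\psi_S(z)\bigl(M_S(m)-M_S(n)\bigr).
\]
This increment is nondecreasing in $\psi_D(z)$ and in $\psi_S(z)$. Because $\psi_D$ is increasing in replacement-cost exposure and $\psi_S$ is increasing in weak capitalization, states with larger own exposure receive a larger increase in private cost when the industry-wide aggregates $M_D,M_S$ rise. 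The same observation expresses the cross-derivative content of the claim: whenever $\psi_D(z)\ge0$, one has $\partial_{M_D}F=\gamma_D\psi_D(z)\ge0$, so own exposure and aggregate concentration are complements in cost. I would state this as a supermodularity of $(\psi(z),M(m))\mapsto F$. I would also note that it is unconditional only when $\psi_D,\psi_S\ge0$, which can be arranged by shifting the constants inside $\psi$.

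For the monotonicity property, Proposition~\ref{prop:mfg-monotone-example} gives directly
\[
\int\bigl(F(t,z,m)-F(t,z,n)\bigr)(m-n)(dz)=\gamma_D(M_D(m)-M_D(n))^2+\gamma_S(M_S(m)-M_S(n))^2\ge0,
\]
which is \eqref{eq:mfg-monotonicity}. If $G$ is taken of the same separable form, the same computation yields \eqref{eq:mfg-terminal-monotonicity}. I would then check that the remaining hypotheses of Assumption~5.1 hold. Growth, condition (ii), follows from $q$th-order growth of $\psi_D,\psi_S$ and the bound $|M_D(m)|\le C(1+M_q(m)^q)$ together with Young's inequality. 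The Lipschitz estimate (iii) on bounded-moment sets follows from the bound $|M_D(m)-M_D(n)|\le C_R\,W_q(m,n)$, which requires $\psi_D$ to be locally Lipschitz with polynomial growth; I would add this as a standing assumption.

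The main obstacle is the strictness clause in Assumption~5.4, which requires equality to force equality of the flows. The sum of squares vanishing yields only $M_D(m^1_t)=M_D(m^2_t)$ and $M_S(m^1_t)=M_S(m^2_t)$, not $m^1=m^2$. I would close this gap as follows. Once the aggregates agree, $F(t,\cdot,m^1_t)=F(t,\cdot,m^2_t)$ and likewise for $G$, so both equilibria face the same frozen problem. Assumption~5.5 then gives a unique optimal law, so $P^1=P^2$, and the consistency condition \eqref{eq:mfg-consistency} yields $m^1=m^2$. This is the standard Lasry--Lions argument. Uniqueness in Theorem~\ref{thm:mfg-uniqueness} therefore follows without needing strict monotonicity of $F$ itself.
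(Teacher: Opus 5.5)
Your proposal is correct, and it is more complete than the paper's proof. The paper's proof is a single appeal to Proposition~\ref{prop:mfg-monotone-example}: the sum-of-squares identity does not depend on which $\psi_D,\psi_S$ are used, the alignment clause is treated as pure interpretation, and the strictness clause of Assumption~5.4 is never checked.

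You give the alignment clause real content: increasing differences of $F$ in $(\psi(z),M(m))$, with the level statement needing $\psi_D,\psi_S\ge0$, as in Proposition~\ref{prop:econ-systemic-surcharge}. More importantly, you see that the sum of squares cannot deliver strictness, since any two flows with the same aggregates $M_D,M_S$ make both monotonicity expressions vanish. Your repair works:
\begin{itemize}
\item[(i)] Vanishing of the sum in Lemma~\ref{lem:mfg-cross-inequality} forces the aggregates to agree for a.e.\ $t$. It does so at $T$ as well, provided $G$ has the same form with nonnegative weights or does not depend on $m$.
\item[(ii)] Hence $F(t,\cdot,m^1_t)=F(t,\cdot,m^2_t)$. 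Since neither the dynamics nor the admissible class depends on $m$, the two frozen problems coincide.
\item[(iii)] Assumption~5.5 then gives $P^1=P^2$, and consistency gives $m^1=m^2$.
\end{itemize}
This verifies the strictness clause in the form Assumption~5.4 actually states it (for equilibrium flows). That is exactly what the paper's phrase ``preserving the monotonicity required for Theorem~\ref{thm:mfg-uniqueness}'' needs and does not supply. The price is that your argument uses the specific structure of the coupling, namely that it depends on $m$ only through finitely many linear aggregates. For a general monotone but not strictly monotone $F$, a vanishing monotonicity integral would not force $F(\cdot,m^1)=F(\cdot,m^2)$.

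Three side steps should be repaired or dropped; the corollary does not assert Assumption~5.1, so the first two can simply be omitted.
\begin{itemize}
\item[(a)] Young's inequality does not give 5.1(ii) for $\psi$'s of genuine $q$th-order growth. Take $\gamma_D=0$, $\psi_S(x,y)=(x_{\mathrm{crit}}-x)_+^q$, $z=(x,0)$ and $m=\delta_z$. Then $F$ equals $\gamma_S(x_{\mathrm{crit}}-x)_+^{2q}$, which no bound $C(1+|x|^q+e^{q|y|}+M_q(m)^q)$ controls as $x\to-\infty$. You need growth of order $q/2$.
\item[(b)] For 5.1(iii), the local Lipschitz constant of $\psi$ must grow at most like $|z|^{q-1}$ for H\"older's inequality to give $|M_D(m)-M_D(n)|\le C_R W_q(m,n)$.
\item[(c)] Shifting $\psi_D$ by a constant $c$ to make it nonnegative is not a harmless normalization. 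It adds the private state cost $\gamma_D c\,\psi_D(z)$ to $F$ and so changes best responses. You do not need it: increasing differences hold for either sign of $\psi$, since the cross effect is $\gamma_D\ge0$. Only the level statement uses $\psi\ge0$.
\end{itemize}
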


\begin{proof}
	The conclusion follows from
	Proposition~\ref{prop:mfg-monotone-example}. The chosen functions alter only the economic interpretation of $M_D$ and $M_S$, not the sum-of-squares monotonicity calculation.
\end{proof}

\begin{rmk}
	For every frozen population flow,
	Theorem~\ref{thm:existence-identification} supplies the unique viscosity value function; the population law is propagated by the same jump-diffusion as in \eqref{eq:X-general}; and equilibrium
	requires that the frozen flow coincide with the law generated by its own optimal response \citep{anderson2026obesity}. No state-dependent catastrophe intensity,
	common noise, or population-dependent jump kernel is added in this section.
\end{rmk}

\section{Economic Implications}
\label{sec:economic-implications}

	The preceding sections separate physical catastrophe severity, stochastic replacement-cost conditions, and the insurer's endogenous mitigation decision. In the present model these objects enter surplus multiplicatively through \eqref{eq:insurance-loss-amplitude}. An elevated realization of $Y$ therefore rescales the monetary consequence of every catastrophe mark occurring in that state. Physical hedging acts on the same multiplicative channel through $h(u)$ \citep{ellington2025metascorelens}. The model should not
	be interpreted as implying that catastrophe arrivals cause changes in $Y$; rather, it quantifies the exposure created when catastrophe losses and high reconstruction costs coincide. Section~\ref{sec:mfg} adds a
	second layer by allowing the resulting state distribution to feed back into individual incentives through the mean field coupling \citep{ellington2025playmydata}.
	
	In classical ruin theory, subexponential asymptotics imply that extreme loss probabilities are governed by the heaviest-tailed component of the risk structure. The present objective is a finite-horizon stochastic-control functional rather than an infinite-horizon ruin
	probability, so those asymptotic results do not transfer directly. Nevertheless, they motivate examining whether the optimal hedge and capitalization outcomes are robust to alternative severity families.

\subsection{Replacement-Cost Exposure and Optimal Hedging}
\label{subsec:economic-demand-surge}

For $p\in[2,p_\star]$ with $A_p:=\int_E a(z)^p\nu(dz)<\infty$, define
\begin{equation}
	\label{eq:econ-loss-exposure}
	\Lambda_p(y,u):=\int_E \ell(y,z,u)^p\nu(dz)=e^{py}h(u)^pA_p.
\end{equation}
This is the instantaneous $p$th-moment intensity of the catastrophe jump amplitude. 

From
	$\Lambda_p(y,u)=e^{py}h(u)^pA_p$,
	one obtains directly
	$\partial_y\log\Lambda_p(y,u)=p$ and, whenever $h$ is differentiable,
	$\partial_u\log\Lambda_p(y,u)=p h'(u)/h(u)\le0$.
	Thus replacement-cost exposure has elasticity $p$ with respect to the
	log replacement-cost state, while physical hedging weakly reduces
	every finite loss moment. In particular, if $2\le p_1<p_2\le p_\star$,
	then the higher-order exposure is proportionally more sensitive to the
	replacement-cost state, since
	$\partial_y\log\Lambda_{p_2}/
	\partial_y\log\Lambda_{p_1}=p_2/p_1>1$.

\begin{cor}
	\label{cor:econ-tail-amplification}
	If $2\le p_1<p_2\le p_\star$, then the proportional sensitivity of the higher finite moment to demand surge is larger
	$
	\frac{\partial_y\log\Lambda_{p_2}}{\partial_y\log\Lambda_{p_1}}=\frac{p_2}{p_1}>1.
	$
\end{cor}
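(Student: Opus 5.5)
The plan is to compute the two logarithmic derivatives directly from the factorization \eqref{eq:econ-loss-exposure} and then take their ratio. The only preliminary is to check that each $\Lambda_{p_i}$ is finite and strictly positive, so that its logarithm is defined and differentiable in $y$.

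\emph{Finiteness and positivity.} For $p\in[2,p_\star]$ we have $\ell(y,z,u)\le L(1+e^y+|u|)\rho(z)$ by \eqref{eq:ell-lip}. Interpolating $\rho^p\le\rho^2+\rho^{p_\star}$ and using \eqref{eq:rho-mom} gives $\Lambda_p(y,u)<\infty$. This is the same computation as in Proposition~\ref{prop:jump-integral}. In the benchmark specification the finiteness of $A_p$ is simply part of the hypothesis. Positivity follows because $a>0$ on $E$, $h>0$, and $e^{py}>0$, so $A_p>0$ whenever $\nu\neq0$. If $\nu=0$ there are no catastrophe jumps and the statement is vacuous, so we exclude that case.

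\emph{Differentiation.} For fixed $u$, the factorization gives
\[
\log\Lambda_p(y,u)=py+p\log h(u)+\log A_p .
\]
This function is affine in $y$, so $\partial_y\log\Lambda_p(y,u)=p$ identically in $(y,u)$. No interchange of differentiation and integration is needed, because the $y$-dependence has already been pulled outside the integral in \eqref{eq:econ-loss-exposure}. Applying this with $p=p_1$ and $p=p_2$, and using $p_1\ge2>0$, we obtain
\[
\frac{\partial_y\log\Lambda_{p_2}}{\partial_y\log\Lambda_{p_1}}=\frac{p_2}{p_1}.
\]
The strict inequality $p_2/p_1>1$ is then immediate from $p_1<p_2$.

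There is no real obstacle here. The only point needing care is the positivity of $A_{p_i}$ and the finiteness of $\Lambda_{p_i}$, which is what makes the logarithm legitimate. The argument therefore amounts to the short computation already displayed before the corollary.
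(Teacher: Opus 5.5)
Your proposal is correct and follows the paper's own argument: the paper reads off $\partial_y\log\Lambda_p(y,u)=p$ directly from the factorization $\Lambda_p(y,u)=e^{py}h(u)^pA_p$ and takes the ratio. Your extra check that $0<A_{p_i}<\infty$, which makes the logarithm well defined, is a harmless tightening that the paper leaves implicit.
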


	Variation in the replacement-cost factor does not change the tail index of the catastrophe mark law, the L\'evy measure $\nu$ is unchanged \citep{CarmonaFouqueSun2015}.
	Instead, $e^Y$ changes the monetary scale associated with a given catastrophe mark. Hence the heavy-tail structure remains that of the primitive catastrophe process, while the economic severity of its realizations depends on the contemporaneous replacement-cost state. For a smooth test function define
\begin{equation}
	\label{eq:econ-catastrophe-operator}
	\mathfrak C[\varphi](t,x,y,u):=\int_E[\varphi(t,x-e^ya(z)h(u),y)-\varphi(t,x,y)+e^ya(z)h(u)\varphi_x(t,x,y)]\nu(dz).
\end{equation}

\begin{prop}
	\label{prop:econ-marginal-hedging}
	Suppose, for this comparative-static calculation, that $v$ is $C^2$ in $x$, $h\in C^1$, and differentiation under the L\'evy integral is valid. Then
	\begin{equation}
		\label{eq:econ-derivative-nonlocal}
		\partial_u\mathfrak C[v]=-e^yh'(u)\int_Ea(z)[v_x(t,x-e^ya(z)h(u),y)-v_x(t,x,y)]\nu(dz).
	\end{equation}
	If $v$ is convex in $x$, then $\partial_u\mathfrak C[v]\le0$.
\end{prop}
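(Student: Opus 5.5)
The plan is a direct pointwise computation followed by a sign argument. Fix $(t,x,y)$ and write $\ell(z,u):=e^ya(z)h(u)\ge0$ for the benchmark amplitude \eqref{eq:insurance-loss-amplitude}. The integrand of $\mathfrak C[v]$ in \eqref{eq:econ-catastrophe-operator} is
\[
\Psi(z,u):=v\bigl(t,x-\ell(z,u),y\bigr)-v(t,x,y)+\ell(z,u)\,v_x(t,x,y).
\]
Since $v$ is $C^2$ in $x$ and $h\in C^1$, the chain rule gives, for each fixed $z$,
\[
\partial_u\Psi(z,u)=-\,v_x\bigl(t,x-\ell(z,u),y\bigr)\,e^ya(z)h'(u)+e^ya(z)h'(u)\,v_x(t,x,y).
\]
The middle term $v(t,x,y)$ does not depend on $u$ and drops out. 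Factoring out $-e^yh'(u)a(z)$ gives exactly the integrand of \eqref{eq:econ-derivative-nonlocal}.

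The hypothesis permits differentiation under the L\'evy integral, so integrating this identity against $\nu(dz)$ proves \eqref{eq:econ-derivative-nonlocal}. For completeness, I would indicate why this interchange is natural here, in the spirit of the proof of Corollary~\ref{cor:economic-hedging-rule}. By the mean value theorem,
\[
a(z)\bigl|v_x(t,x-\ell,y)-v_x(t,x,y)\bigr|\le a(z)\,\ell(z,u)\,\sup|v_{xx}|.
\]
The supremum is taken over the segment $[x-\ell,x]$. Taking $u$ in a compact neighbourhood, where $h'$ is bounded, this gives a dominating function of order $e^{2y}a(z)^2$ times a local bound on $v_{xx}$. For small jumps this is integrable by Assumption~2.4 and \eqref{eq:rho-mom}. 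On the large-jump set $E^\varepsilon$, I would instead bound $|v_x|$ using the growth of $v$ together with the moment $A_p$, $p\le p_\star$. Dominated convergence then yields the interchange. This domination step is the only genuinely delicate point, and it is the one the statement assumes away.

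For the sign, suppose $v$ is convex in $x$. Then $v_x(t,\cdot,y)$ is nondecreasing. Since $\ell(z,u)\ge0$ because $a>0$ and $h>0$, it follows that $v_x(t,x-\ell(z,u),y)-v_x(t,x,y)\le0$ for every $z$. With $a(z)>0$, the integral in \eqref{eq:econ-derivative-nonlocal} is therefore nonpositive. Assumption~2.5 (together with $h'(u)<0$ in Section~\ref{sec:insurance-control}) says $h$ is nonincreasing, so $h'(u)\le0$ and $-e^yh'(u)\ge0$. The product of a nonnegative factor and a nonpositive integral is nonpositive, which gives $\partial_u\mathfrak C[v]\le0$.
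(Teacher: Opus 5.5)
Your proposal is correct and matches the paper's proof: differentiate the integrand of $\mathfrak C[v]$ in $u$ by the chain rule, then use that convexity makes $v_x(t,\cdot,y)$ nondecreasing (so the bracket is nonpositive, since $e^ya(z)h(u)\ge0$) and that $h'\le0$ makes the prefactor $-e^yh'(u)$ nonnegative. Your domination aside is unnecessary, since the interchange of $\partial_u$ and the L\'evy integral is a hypothesis of the statement; as sketched, the large-jump bound would in any case need a growth estimate on $v_x$ itself rather than on $v$.
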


\begin{proof}
	Differentiate \eqref{eq:econ-catastrophe-operator}. Convexity makes $v_x$ nondecreasing, while $x-e^ya(z)h(u)\le x$, so the bracket in \eqref{eq:econ-derivative-nonlocal} is nonpositive. Since $h'(u)\le0$, the prefactor $-e^yh'(u)$ is nonnegative.
\end{proof}

\begin{cor}
	\label{cor:econ-demand-hedge-complementarity}
	Holding the local curvature profile of $v$ and $u$ fixed, the absolute scale of the nonlocal catastrophe-risk reduction from a marginal increase in hedging is amplified by the factor $e^y$.
\end{cor}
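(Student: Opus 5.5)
The plan is to read the corollary directly off the derivative formula \eqref{eq:econ-derivative-nonlocal} in Proposition~\ref{prop:econ-marginal-hedging}. I would first make precise what ``holding the local curvature profile of $v$ and $u$ fixed'' means, and then show that the $y$-dependence of $\partial_u\mathfrak C[v]$ factors through $e^y$. Under the hypotheses of Proposition~\ref{prop:econ-marginal-hedging} ($v\in C^2$ in $x$, $h\in C^1$, differentiation under the L\'evy integral valid), I would write
\[
\partial_u\mathfrak C[v](t,x,y,u)=e^y\,\bigl(-h'(u)\bigr)\,\mathcal M_v(t,x,y,u),
\]
where
\[
\mathcal M_v(t,x,y,u):=\int_E a(z)\bigl[v_x(t,x-e^ya(z)h(u),y)-v_x(t,x,y)\bigr]\nu(dz).
\]
The quantity $\mathcal M_v$ is the $a$-weighted average change in marginal continuation value across the catastrophe-loss distribution. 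This is the object encoding the ``local curvature profile'' of $v$ over the jump range.

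Next I would formalize the ceteris-paribus statement. Fix $u$ and suppose $\mathcal M_v$ is held at a common value $\overline{\mathcal M}$ for two replacement-cost states $y_1<y_2$. Then
\[
\frac{\partial_u\mathfrak C[v](t,x,y_2,u)}{\partial_u\mathfrak C[v](t,x,y_1,u)}=e^{y_2-y_1}.
\]
The factor $-h'(u)\ge0$ does not depend on $y$, so it cancels in this ratio. For the sign, I would use the convexity part of Proposition~\ref{prop:econ-marginal-hedging}: when $v$ is convex in $x$, we have $\mathcal M_v\le0$, so $\partial_u\mathfrak C[v]\le0$. The absolute value $|\partial_u\mathfrak C[v]|=e^y(-h'(u))|\mathcal M_v|$ is therefore the scale of the marginal catastrophe-risk reduction, and it is multiplied by $e^y$. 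This is exactly the claimed amplification. Under the benchmark $h(u)=e^{-\gamma u}$, the $u$-factor is explicitly $\gamma e^{-\gamma u}$.

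The main obstacle is that $\mathcal M_v$ itself depends on $y$, because the jump displacement $e^ya(z)h(u)$ scales with $e^y$. Freezing it is therefore a genuine comparative-static convention and not an identity, and I would state it explicitly as such. To show that the convention understates rather than overstates the effect, I would add a remark on the locally quadratic case $v_{xx}\equiv\kappa_0\ge0$ over the jump range. There $\mathcal M_v=-\kappa_0e^yh(u)A_2$, with $A_2=\int_E a(z)^2\nu(dz)<\infty$ by the $p=2$ moment assumption. Substituting gives
\[
|\partial_u\mathfrak C[v]|=\kappa_0e^{2y}h(u)|h'(u)|A_2.
\]
This is amplified by $e^{2y}\ge e^y$ for $y\ge0$. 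So the factor $e^y$ is the pure prefactor effect, and the displacement channel only reinforces it when $v$ is convex. Integrability of each step follows from $A_2<\infty$ and the dominated-convergence justification already invoked in Proposition~\ref{prop:econ-marginal-hedging}.
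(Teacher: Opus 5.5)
Your proposal is correct and follows the paper's route: the paper also reads the claim directly off \eqref{eq:econ-derivative-nonlocal}, adding only the caveat that the displaced derivative $v_x(t,x-e^ya(z)h(u),y)$ itself varies with $y$. Freezing $\mathcal M_v$ is exactly how you make that ceteris-paribus convention explicit. Your locally quadratic computation giving $e^{2y}$ scaling is a correct supplement but is not needed for the corollary; note that the comparison $e^{2(y_2-y_1)}\ge e^{y_2-y_1}$ holds for any $y_1<y_2$, not only for $y\ge0$.
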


\begin{proof}
	This is immediate from \eqref{eq:econ-derivative-nonlocal}. The qualification is necessary because the displaced derivative also changes with $y$.
\end{proof}

This is a local mechanism, not a global theorem that the optimal feedback is increasing in $y$. The rigorous policy remains the measurable Hamiltonian selector of Proposition~\ref{prop:Hamiltonian-selector}. For the quadratic benchmark $c(u)=C_H(u)=\chi u^2/2$ with control-independent $\sigma_X$, the formal first-order condition \eqref{eq:formal-hedging-foc} becomes
\begin{equation}
	\label{eq:econ-foc-decomposition}
	\chi u(1-v_x)=e^yh'(u)\int_Ea(z)[v_x(t,x-e^ya(z)h(u),y)-v_x(t,x,y)]\nu(dz).
\end{equation}
The left side is the marginal resource-and-drift cost of hedging evaluated at the shadow value of surplus; the right side is the marginal reduction in nonlocal continuation cost.

\begin{prop}
	\label{prop:econ-unique-interior-hedge}
	If the pointwise Hamiltonian is strictly convex in $u$ and its derivative changes sign in the interior of $U$, then the pointwise hedge is unique and is characterized by \eqref{eq:econ-foc-decomposition}. This is compatible with Corollary~\ref{cor:mfg-strict-existence}.
\end{prop}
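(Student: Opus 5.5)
The argument is a pointwise convex-analysis statement for the map
\[
\Psi(u):=f(t,x,y,u)+b(x,y,u)v_x+\tfrac12\sigma_X^2 v_{xx}+\mathfrak C[v](t,x,y,u),
\]
with $(t,x,y)$ and the jet of $v$ held fixed. It then invokes earlier results for the derivative formula and the measurable-selection link.

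\textbf{Step 1 (uniqueness).} A strictly convex function on the convex set $U$ has at most one minimizer. If $u_1\neq u_2$ were both minimizers, strict convexity would give
\[
\Psi\bigl(\tfrac{u_1+u_2}{2}\bigr)<\tfrac12\bigl(\Psi(u_1)+\Psi(u_2)\bigr)=\min\Psi,
\]
which is a contradiction.

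\textbf{Step 2 (existence of an interior minimizer).} By Lemma~\ref{lem:compact-control}, the minimization may be restricted to the compact set $U\cap[0,R_K]$. The Hamiltonian is differentiable: the local part is differentiable under the quadratic benchmark, and the nonlocal part is differentiable by Proposition~\ref{prop:econ-marginal-hedging}. Strict convexity makes $\Psi'$ strictly increasing. By hypothesis $\Psi'$ changes sign in the interior of $U$, so by the intermediate value theorem (continuity of $\Psi'$ follows from convexity plus differentiability, since a differentiable convex function on an interval is $C^1$) there is a unique interior $u^\star$ with $\Psi'(u^\star)=0$. Moreover $\Psi'<0$ to its left and $\Psi'>0$ to its right, so $u^\star$ is the global minimizer over $U$. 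This also covers a minimizer that would otherwise sit at a boundary point of $U$.

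\textbf{Step 3 (the first-order condition).} Compute $\Psi'(u^\star)=0$ explicitly with $f_u=\chi u$ and $b_u=-\chi u$ (from $b=rx+\pi(y)-\chi u^2/2$), with $\sigma_X$ independent of $u$, and with $\partial_u\mathfrak C[v]$ given by \eqref{eq:econ-derivative-nonlocal}. Rearranging gives exactly \eqref{eq:econ-foc-decomposition}. Conversely, any solution of \eqref{eq:econ-foc-decomposition} in the interior of $U$ is a zero of the strictly increasing $\Psi'$, and hence equals $u^\star$. So \eqref{eq:econ-foc-decomposition} characterizes the hedge, not merely constrains it.

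\textbf{Step 4 (compatibility).} Because the argmin is a singleton, the Borel selector of Proposition~\ref{prop:Hamiltonian-selector} is forced to equal $u^\star$. The associated relaxed kernel is therefore Dirac, which is precisely the hypothesis of Corollary~\ref{cor:mfg-strict-existence}; this gives compatibility.

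\textbf{Main obstacle.} The delicate point is justifying differentiability and continuity of $\Psi'$ through the nonlocal term, especially for heavy-tailed $\nu$. I would handle it by dominated convergence, using the bound $|a(z)|\,|v_x(\cdot-e^ya(z)h(u))-v_x|\le C a(z)^2 e^y$ from the $C^2$ regularity, together with $A_2<\infty$ from Assumption~2.4. This requires local boundedness of $v_{xx}$ along the translated points, which I would assume as part of the comparative-static hypotheses of Proposition~\ref{prop:econ-marginal-hedging}.
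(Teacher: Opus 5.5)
Your proposal is correct and follows the same route as the paper. Strict convexity makes the $u$-derivative of the pointwise Hamiltonian strictly increasing, so the interior sign change gives a unique zero; that zero is the unique minimizer and satisfies exactly \eqref{eq:econ-foc-decomposition}, and compatibility is then read off from Corollary~\ref{cor:mfg-strict-existence}. Your added details are sound: the computation $f_u+b_uv_x=\chi u(1-v_x)$, the converse characterization, and the Darboux/continuity remark. One caveat on Step~4: the Dirac form of the optimal equilibrium kernel does not follow from a pointwise singleton argmin alone; it is supplied by the ``in particular'' clause of that corollary, which is also all the paper's own one-line appeal relies on.
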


\begin{proof}
	Strict convexity makes the derivative strictly increasing, so a sign change gives a unique zero and hence the unique minimizer. Corollary~\ref{cor:mfg-strict-existence} gives the strict Markov equilibrium representation under the corresponding selector hypotheses.
\end{proof}

The mechanism is consistent with empirical work documenting post-disaster increases in reconstruction costs and labor scarcity, commonly termed demand surge. See \citet{OlsenPorter2011}, \citet{DohrmannGuertlerHibbeln2017}, and \citet{Kousky2019}. Mean reversion in \eqref{eq:Y} makes the value of pre-arranged capacity state and horizon dependent: hedging is valuable when catastrophe exposure and temporary reconstruction scarcity overlap \citep{CarmonaFouqueSun2015}.

\subsection{Systemic Effects of Equilibrium Hedging}
\label{subsec:economic-systemic}

By Definition~\ref{def:mfg-equilibrium}, an equilibrium feedback is optimal against the population flow that it generates. Theorem~\ref{thm:mfg-existence} guarantees a relaxed equilibrium, Corollary~\ref{cor:mfg-strict-existence} gives a strict Markov representation under the selector hypothesis, and Theorem~\ref{thm:mfg-uniqueness} rules out competing equilibrium laws under the monotonicity assumptions.

For a feedback $\alpha$ define the aggregate $p$th catastrophe-loss exposure
\begin{equation}
	\label{eq:econ-aggregate-exposure}
	\mathcal A_p(t;m,\alpha):=A_p\int_{\mathbb R^2}e^{py}h(\alpha(t,x,y))^p m_t(dx,dy).
\end{equation}

For a fixed cross-sectional law $m_t$, the monotonicity of $h$ gives
	$\alpha_1\ge\alpha_2$ $m_t$-a.e.\ $\Rightarrow$
	$\mathcal A_p(t;m,\alpha_1)\le
	\mathcal A_p(t;m,\alpha_2)$.
	Hence an industry-wide increase in physical hedging reduces every
	finite aggregate catastrophe-loss exposure for which $A_p<\infty$.
	This is a direct loss-scale effect and should be distinguished from the
	endogenous change in $m_t$ generated by equilibrium behavior.

	For a fixed cross-sectional state law, an industry-wide increase in physical hedging reduces every finite aggregate catastrophe-loss exposure \eqref{eq:econ-aggregate-exposure}. It does so without changing catastrophe arrival intensity. In equilibrium the state law is not fixed. Corollary~\ref{cor:mfg-system} propagates the endogenous distributional response \citep{dasgupta2026frequent}. Hence the total systemic effect decomposes into direct attenuation through $h(\alpha)$ and a distributional effect through $m$; only the first has an unconditional sign under the present assumptions. For the coupling in Proposition~\ref{prop:mfg-monotone-example},
\[
F(t,z,m)=\gamma_D\psi_D(z)M_D(m)+\gamma_S\psi_S(z)M_S(m).
\]
Corollary~\ref{cor:mfg-economic-coupling} interprets these moments as demand-surge exposure and weak capitalization. The following calculation concerns the reduced-form strategic externality encoded by $F$; it should not be interpreted as an endogenously determined reconstruction-market price.

\begin{prop}
	\label{prop:econ-systemic-surcharge}
	For fixed individual state $z$, a change in aggregate moments changes the running coupling by
	$
	\Delta F(t,z)=\gamma_D\psi_D(z)\Delta M_D+\gamma_S\psi_S(z)\Delta M_S.
	$
	If $\psi_D(z),\psi_S(z)\ge0$, an increase in either aggregate vulnerability moment weakly raises the private cost of occupying $z$.
\end{prop}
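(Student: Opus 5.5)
The argument is a direct computation from the bilinear form of the coupling in \eqref{eq:mfg-economic-coupling}. Fix $t$ and the individual state $z$. Compare two population laws $m,n\in\mathcal P_q(\mathbb R^2)$, and write $\Delta M_D:=M_D(m)-M_D(n)$ and $\Delta M_S:=M_S(m)-M_S(n)$. Both are finite because $\psi_D,\psi_S$ have at most $q$th-order growth and $m,n$ have finite $q$th moments.

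\emph{Step 1: the identity.} I subtract the two evaluations of $F$. The factors $\psi_D(z)$ and $\psi_S(z)$ depend only on the individual state, and the population enters only through the scalar moments. The difference therefore factorizes linearly:
\[
\Delta F(t,z):=F(t,z,m)-F(t,z,n)=\gamma_D\psi_D(z)\Delta M_D+\gamma_S\psi_S(z)\Delta M_S .
\]
No differentiability or smallness of the change is needed, because $F$ is affine in the pair $(M_D,M_S)$. This is the same linearity used in the proof of Proposition~\ref{prop:mfg-monotone-example}.

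\emph{Step 2: the sign.} Suppose aggregate vulnerability increases in either moment, meaning $\Delta M_D\ge0$ and $\Delta M_S\ge0$ with at least one of them positive. Under the hypotheses $\gamma_D,\gamma_S\ge0$ and $\psi_D(z),\psi_S(z)\ge0$, each summand is a product of nonnegative factors. Hence $\Delta F(t,z)\ge0$, so the flow cost $F(t,z,\cdot)$ of occupying $z$ weakly rises.

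\emph{Step 3: translation to private cost.} To phrase the conclusion as a statement about the private running cost $f^m=f+F$ from Lemma~\ref{lem:mfg-frozen-inheritance}, I note that $f$ does not depend on the population. Thus $\Delta f^m(t,z,u)=\Delta F(t,z)$ for every $u\in U$. If one wants the integrated statement, the same monotonicity passes to $J^m$ for a fixed control by monotonicity of the integral: the dynamics do not depend on $m$, so the state law under a fixed control is unchanged.

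The only point requiring care is the interpretation of "an increase in either moment." It must be read as a componentwise weak increase of $(M_D,M_S)$. If one moment rises while the other falls, the sign of $\Delta F$ is not determined by the identity, and the statement should be read as asserting weak increase only when neither moment decreases. Apart from this interpretive point the argument is immediate, and I do not expect a genuine obstacle.
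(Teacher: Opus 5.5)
Your proposal is correct and follows the paper's proof: subtract the coupling evaluated at the two population laws to obtain the affine identity, then read off the sign from $\gamma_D,\gamma_S\ge0$ and $\psi_D(z),\psi_S(z)\ge0$. The optional extension to $J^m$ in Step 3 goes beyond the statement. It would additionally require $\psi_D,\psi_S\ge0$ at every state the path visits rather than only at the fixed $z$, the moment increases to hold at every time, and a matching sign condition on the terminal coupling $G$.
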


\begin{proof}
	Subtract the coupling evaluated at the two population laws and use $\gamma_D,\gamma_S\ge0$.
\end{proof}

Under Assumption~5.4, changes in aggregate vulnerability alter the representative insurer's unique frozen best response; the resulting feedback changes jump amplitudes through \eqref{eq:insurance-loss-amplitude}, changes the forward law through Proposition~\ref{prop:mfg-kolmogorov}, and is closed by the consistency condition \eqref{eq:mfg-consistency}. The equilibrium externality is strategic rather than physical contagion. One insurer's surplus does not enter another insurer's state equation, and Section~\ref{sec:mfg} introduces neither common noise nor a population-dependent catastrophe intensity. The cost of occupying a vulnerable capitalization demand-surge state nevertheless depends on the industry's concentration in comparable states. Mean field models have been used to isolate analogous endogenous systemic channels in financial systems \citep{CarmonaFouqueSun2015}. Under Theorem~\ref{thm:mfg-uniqueness}, distinct high-exposure and low-exposure population flows cannot both be self-consistent equilibrium regimes for the same primitives. Under Theorem~\ref{thm:mfg-uniqueness}, any measurable equilibrium statistic of $(m^\star,\alpha^\star)$, including \eqref{eq:econ-aggregate-exposure}, $M_D(m_t^\star)$, $M_S(m_t^\star)$, and the surplus distribution, is single-valued for fixed model primitives. Physical hedging should not be interpreted as a mechanical substitute for reinsurance or liquid capital. It operates earlier in the loss-production mechanism by reducing the residual fraction $h(u)$ before the catastrophe loss enters surplus \citep{pramanik2025impact}. This is consistent with the disaster-insurance literature emphasizing the interaction between insurance, recovery, and ex ante mitigation \citep{Kousky2019}. A subsidy to pre-arranged capacity, for example, changes the private Hamiltonian immediately, but its total equilibrium effect is determined only after the induced feedback is propagated through the Kolmogorov equation and the mean field fixed point.

\section{Calibration and Numerical Analysis}
\label{sec:numerics}

We calibrate the numerical model in normalized annual units.  The calibration is designed to preserve the empirical features motivating the stochastic model while remaining consistent with the moment and regularity conditions imposed in Sections~\ref{sec:prob-foundations}
-\ref{sec:mfg}. More broadly, recent simulation-based work emphasizes the value of stochastic scenario analysis for financial decisions under volatile
price environments \citep{ChoupiMargarisAngelidis2026}.  We cite this work only as a recent example of simulation-based analysis under price uncertainty. In particular, the empirically estimated heavy-tail \citep{valdez2025exploring} index is retained, but the far tail of the catastrophe mark measure is tempered so that the $p_\star$-moment assumption required by
Assumption~2.1 remains valid. Let one unit of surplus represent the insurer's initial capitalization
and normalize the long-run replacement-cost index to one.  Thus,
$
X_0=1,
\
\vartheta=0,
$ and $
I_0=e^{Y_0}=1.$
The time unit is one year.  The numerical horizon is five years,
which is sufficiently long relative to the estimated mean-reversion time scale of the construction-cost process. For the catastrophe marks, we use the tempered Pareto L\'evy density
\begin{equation}
	\label{eq:calibrated-levy-density}
	\nu(dz)
	=
	\lambda_J
	\alpha z_{\min}^{\alpha}
	z^{-1-\alpha}
	e^{-\tau_J(z-z_{\min})}
	\mathbf 1_{\{z\ge z_{\min}\}}\,dz.
\end{equation}
	The tempered Pareto specification is used as the baseline rather than as a claim that catastrophe severities belong uniquely to this parametric family. Heavy-tailed insurance losses may be represented by a substantially broader class of subexponential distributions, including regularly varying, lognormal, and heavy-tailed Weibull
	models; see \citet{EmbrechtsKluppelbergMikosch1997} and
	\citet{KluppelbergMikosch1997}. This distinction is relevant because tail-equivalent behavior \citep{hua2019assessing}, finite-sample loss quantiles, and ruin-type functionals may differ materially across subexponential families. The analytical results of Sections~2-5 depend on the stated moment and
	continuity conditions on the jump measure and not on the tempered Pareto form itself \citep{pramanik2024analysis}. The parametric choice becomes consequential in the numerical analysis, and we therefore supplement the baseline calibration with a tail-family \citep{pramanik2024measuring} robustness experiment below. The power-law component reproduces the empirical tail estimate
$\widehat\alpha=1.71$, while the exponential tempering parameter
$\tau_J>0$ guarantees
$
\int_E z^p\nu(dz)<\infty
$ for every $p>0.$
Hence, the numerical jump specification is compatible with the
$p_\star>2$ integrability imposed in Assumption~2.1.  Over moderate and large losses below the tempering scale \citep{pramanik2016tail}, the model retains the heavy-tailed behavior implied by the empirical estimate. The construction-cost factor is simulated from
$
dY_t
=
\kappa(\vartheta-Y_t)\,dt
+
\sigma_I\,dW_t^I,
$
with the empirical mean-reversion estimate
$
\widehat\kappa=7.07.
$
Its corresponding half-life is
\begin{equation}
	\label{eq:cost-half-life}
	t_{1/2}
	=
	\frac{\log 2}{\kappa}
	\approx
	0.098\ \text{years},
\end{equation}
or approximately $35.8$ days.  Thus the calibrated construction-cost
shock is economically sharp but transitory, as required by the
demand-surge interpretation. The remaining coefficients are structural benchmark values rather than direct statistical estimates \citep{pramanik2024dependence}.  They are chosen in normalized units so that neither the diffusion component nor the control cost dominates the catastrophe jump term mechanically.  The baseline
calibration is reported in Table~\ref{tab:baseline-calibration}.

\begin{table}[H]
	\centering
	\caption{Baseline calibration for the stochastic control and mean
		field simulations}
	\label{tab:baseline-calibration}
	\begin{tabular}{llll}
		\toprule
		Parameter & Baseline value & Role & Calibration status \\
		\midrule
		$T$
		& $5$
		& Time horizon, years
		& Numerical \\
		
		$X_0$
		& $1$
		& Initial normalized insurer surplus
		& Normalization \\
		
		$Y_0$
		& $0$
		& Initial log replacement-cost factor
		& Normalization \\
		
		$\vartheta$
		& $0$
		& Long-run log replacement-cost level
		& Normalization \\
		
		$\kappa$
		& $7.07$
		& Mean-reversion speed
		& Empirical \\
		
		$\sigma_I$
		& $0.20$
		& Construction-cost volatility
		& Benchmark \\
		
		$\alpha$
		& $1.71$
		& Catastrophe-loss tail exponent
		& Empirical \\
		
		$\lambda_J$
		& $0.20$
		& Catastrophe arrival intensity, annual
		& Benchmark \\
		
		$z_{\min}$
		& $0.01$
		& Minimum normalized catastrophe mark
		& Normalization \\
		
		$\tau_J$
		& $0.10$
		& Exponential tail-tempering parameter
		& Regularization \\
		
		$r$
		& $0.03$
		& Surplus accumulation rate
		& Benchmark \\
		
		$\pi_0$
		& $0.08$
		& Baseline net premium inflow
		& Benchmark \\
		
		$\sigma_X$
		& $0.10$
		& Noncatastrophe surplus volatility
		& Benchmark \\
		
		$\gamma$
		& $1.00$
		& Hedging effectiveness in $h(u)=e^{-\gamma u}$
		& Normalization \\
		
		$\chi$
		& $0.20$
		& Quadratic physical-hedging cost
		& Benchmark \\
		
		$\delta$
		& $0.03$
		& Social discount rate
		& Benchmark \\
		
		$x_{\mathrm{crit}}$
		& $0.50$
		& Capitalization threshold
		& Benchmark \\
		
		$\omega$
		& $2.00$
		& Solvency-penalty weight
		& Benchmark \\
		
		$\zeta$
		& $0.50$
		& Demand-surge cost weight
		& Benchmark \\
		
		$m$
		& $2$
		& Demand-surge penalty exponent
		& Benchmark \\
		
		$q$
		& $2$
		& Solvency-loss exponent
		& Structural \\
		
		$p_\star$
		& $3$
		& Moment order used in numerical bounds
		& Structural \\
		
		$\gamma_D$
		& $0.25$
		& Mean field demand-surge interaction
		& Benchmark \\
		
		$\gamma_S$
		& $0.50$
		& Mean field capitalization interaction
		& Benchmark \\
		\bottomrule
	\end{tabular}
\end{table}
For the drift specification introduced in
Section~\ref{sec:insurance-control}, we use
$
	b(x,y,u)
	=
	rx+\pi_0-\frac{\chi}{2}u^2,
$
and take
$
\sigma_X(x,y,u)\equiv\sigma_X.
$
The physical-hedging response is
$
	h(u)=e^{-\gamma u},
$
which satisfies the monotonicity and convexity requirements imposed
in Section~\ref{sec:insurance-control},
$
h'(u)<0,
$ with $
h''(u)>0.
$
The running social cost is taken to be
\begin{equation}
	\label{eq:calibrated-running-cost}
	f(t,x,y,u)
	=
	\frac{\chi}{2}u^2
	+
	\zeta\bigl(e^y-1\bigr)_+^m
	+
	\omega(x_{\mathrm{crit}}-x)_+^q.
\end{equation}
This specification separates the resource cost of physical hedging,
the social cost of demand-surge inflation, and the cost of weak insurer
capitalization \citep{powell2025genomic}.  It also preserves the coercivity in $u$ required in
Assumption~2.6. For the mean field component, we use the monotone coupling introduced
in Proposition~\ref{prop:mfg-monotone-example},
\begin{equation}
	\label{eq:calibrated-mean-field-coupling}
	F(t,z,\mu)
	=
	\gamma_D\psi_D(z)M_D(\mu)
	+
	\gamma_S\psi_S(z)M_S(\mu),
\end{equation}
with
$
\psi_D(x,y)
=
(e^y-1)_+,
$ and $
\psi_S(x,y)
=
(x_{\mathrm{crit}}-x)_+.
$
By Proposition~\ref{prop:mfg-monotone-example},
\eqref{eq:calibrated-mean-field-coupling} satisfies the
Lasry-Lions monotonicity condition used in
Theorem~\ref{thm:mfg-uniqueness}. The distinction between estimated and benchmark parameters is
important.  The values
$
\widehat\alpha=1.71,
$ and $
\widehat\kappa=7.07
$
are inherited from the empirical calibration motivating the model. The remaining parameters in
Table~\ref{tab:baseline-calibration} define the baseline numerical economy and will subsequently be varied over economically meaningful
ranges.  The numerical conclusions will therefore be based primarily on comparative statics rather than on interpreting the benchmark normalizations as point estimates.

\subsection{Discretization and Computational Scheme}
\label{subsec:numerical-scheme}

We solve the calibrated control and mean field problems by combining
exact time stepping for the mean-reverting replacement-cost factor \citep{dunbar2026modeling},
Monte Carlo generation of the tempered catastrophe marks, a monotone
discretization of the nonlocal HJB equation \citep{pramanik2025strategic}, policy iteration for the
physical-hedging feedback, and a damped fixed-point iteration for the
mean field equilibrium.  The computational procedure is constructed to
respect the probabilistic model of Sections~\ref{sec:prob-foundations}
and~\ref{sec:insurance-control} rather than replacing the jump process
by a Gaussian approximation \citep{pramanik2026bayesian}.

Let
\[
0=t_0<t_1<\cdots<t_{N_t}=T,
\qquad
\Delta t=T/N_t,
\]
and introduce rectangular state grids
\[
x_i=x_{\min}+i\Delta x,
\qquad
i=0,\ldots,N_x,
\]
and
\[
y_j=y_{\min}+j\Delta y,
\qquad
j=0,\ldots,N_y.
\]
The numerical domain is chosen sufficiently wide that the probability
of reaching its artificial boundaries under the baseline calibration
is negligible.  Grid enlargement is subsequently used as a robustness
check.

\paragraph{Exact discretization of the replacement-cost factor.}
Because the process in \eqref{eq:Y} is Ornstein-Uhlenbeck, its
transition law is known explicitly.  We therefore do not introduce
Euler discretization error into the construction-cost state \citep{powell2026role}.  Conditional
on $Y_{t_n}=Y_n$,
\begin{equation}
	\label{eq:numerical-ou-exact}
	Y_{n+1}
	=
	\vartheta
	+
	(Y_n-\vartheta)e^{-\kappa\Delta t}
	+
	\sigma_I
	\sqrt{
		\frac{1-e^{-2\kappa\Delta t}}{2\kappa}
	}\,
	\xi_{n+1},
	\qquad
	\xi_{n+1}\sim N(0,1).
\end{equation}
Consequently,
\[
Y_{n+1}\mid Y_n
\sim
N\left(
\vartheta+(Y_n-\vartheta)e^{-\kappa\Delta t},
\frac{\sigma_I^2}{2\kappa}
(1-e^{-2\kappa\Delta t})
\right).
\]
The replacement-cost index is then recovered as
$
I_{n+1}=e^{Y_{n+1}}.
$
Equation~\eqref{eq:numerical-ou-exact} is used both in the Monte Carlo
experiments and in the validation of the finite-difference solution.

\paragraph{Simulation of catastrophe arrivals and severities.}
Under \eqref{eq:calibrated-levy-density}, the total jump intensity is
\[
\lambda_\nu
:=
\nu(E)
=
\lambda_J
\int_{z_{\min}}^\infty
\alpha z_{\min}^{\alpha}
z^{-1-\alpha}
e^{-\tau_J(z-z_{\min})}\,dz.
\]
Over a time interval of length $\Delta t$, the number of catastrophe
marks is sampled as
$
K_n
\overset{iid}{\sim}
\operatorname{Poisson}(\lambda_\nu\Delta t).
$
Conditional on an arrival, the mark density is
\begin{equation}
	\label{eq:numerical-jump-density}
	f_Z(z)
	=
	\frac{
		\alpha z_{\min}^{\alpha}
		z^{-1-\alpha}
		e^{-\tau_J(z-z_{\min})}
	}{
		\displaystyle
		\int_{z_{\min}}^\infty
		\alpha z_{\min}^{\alpha}
		r^{-1-\alpha}
		e^{-\tau_J(r-z_{\min})}\,dr
	},
	\qquad
	z\ge z_{\min}.
\end{equation}
Marks from \eqref{eq:numerical-jump-density} are generated by
acceptance-rejection \citep{pramanik2025stubbornness}.  We first draw
$
Z^{(0)}
=
z_{\min}(1-U)^{-1/\alpha},
$ where $
U\sim\operatorname{Unif}(0,1),
$
from the Pareto envelope and accept it with probability
$
\exp\left\{-\tau_J(Z^{(0)}-z_{\min})\right\}.
$
The accepted value has density
\eqref{eq:numerical-jump-density}.  Given state $(Y_n,u_n)$, the
associated monetary loss is
$
L_n
=
e^{Y_n}a(Z_n)h(u_n).
$
Thus the simulation preserves exactly the multiplicative loss mechanism
used in \eqref{eq:insurance-loss-amplitude}.
For Monte Carlo simulation \citep{pramanik2025optimal}, the surplus is advanced according to
\begin{align}
	X_{n+1}
	={}&
	X_n
	+
	b(X_n,Y_n,u_n)\Delta t
	+
	\sigma_X\sqrt{\Delta t}\,\varepsilon_{n+1}
	-
	\sum_{k=1}^{K_n}
	e^{Y_n}a(Z_{n,k})h(u_n)
	+
	\Delta t
	\int_E
	e^{Y_n}a(z)h(u_n)\nu(dz),
	\label{eq:numerical-surplus-step}
\end{align}
where
$\varepsilon_{n+1}\sim N(0,1)$.
The final term is the compensator required by the compensated jump
representation in \eqref{eq:X-general}.

\paragraph{Discretization of the nonlocal HJB equation.}
Let
$
V_{i,j}^n
\approx
V(t_n,x_i,y_j).
$
We solve backward from
$
V_{i,j}^{N_t}=g(x_i,y_j).
$
For the local terms, the second derivatives are approximated by centered differences,
\[
D_{xx}V_{i,j}^n
=
\frac{
	V_{i+1,j}^n-2V_{i,j}^n+V_{i-1,j}^n
}{(\Delta x)^2},
\]
and
\[
D_{yy}V_{i,j}^n
=
\frac{
	V_{i,j+1}^n-2V_{i,j}^n+V_{i,j-1}^n
}{(\Delta y)^2}.
\]
The first-order drift terms are discretized by an upwind operator,
denoted by
$
D_x^{\,u}V_{i,j}^n,
$ and $
D_yV_{i,j}^n,
$
with the upwind direction chosen according to the signs of
$b(x_i,y_j,u)$ and $\kappa(\vartheta-y_j)$. For each control value $u$, define the discrete nonlocal operator
\[
\begin{aligned}
	\mathcal I_{\Delta}^{u}V_{i,j}^{n}
	:=
	\sum_{k=1}^{N_z}
	w_k
	\Big[
	&
	\mathscr I_x
	V^n
	\bigl(
	x_i-e^{y_j}a(z_k)h(u),y_j
	\bigr)
	-
	V_{i,j}^{n}
	+
	e^{y_j}a(z_k)h(u)
	D_x^{\,u}V_{i,j}^{n}
	\Big],
\end{aligned}
\]
where $\{z_k,w_k\}_{k=1}^{N_z}$ is a positive quadrature rule for
$\nu(dz)$ and $\mathscr I_x$ denotes monotone linear interpolation in
the surplus direction.  Positive quadrature weights and monotone
interpolation are used so that the discrete jump operator preserves the
order structure required by the viscosity solution.

For a fixed policy $u_{i,j}^n$, the backward equation is discretized
implicitly as
\begin{equation}
	\label{eq:numerical-discrete-hjb}
	\begin{aligned}
		&
		\frac{V_{i,j}^{n}-V_{i,j}^{n+1}}{\Delta t}
		+
		f(t_n,x_i,y_j,u_{i,j}^n)
		-
		\delta V_{i,j}^{n}
		\\
		&+
		b(x_i,y_j,u_{i,j}^n)
		D_x^{\,u}V_{i,j}^{n}
		+
		\kappa(\vartheta-y_j)D_yV_{i,j}^{n}
		+
		\frac12
		\sigma_X^2
		D_{xx}V_{i,j}^{n}
		+
		\frac12
		\sigma_I^2
		D_{yy}V_{i,j}^{n}
		+
		\mathcal I_\Delta^{u_{i,j}^n}V_{i,j}^{n}=0.
	\end{aligned}
\end{equation}
The fully implicit treatment of the local diffusion terms improves stability when $\kappa$ is large, as in the empirical calibration
$\kappa=7.07$. The jump quadrature is truncated at $z_{\max}$ and selected so that
$
\int_{z_{\max}}^\infty
\bigl(
1+z^q
\bigr)\nu(dz)
$
is numerically negligible \citep{pramanik2025optimal}.  Because the calibrated measure is
exponentially tempered, the truncation error decays exponentially in
$z_{\max}$.

\paragraph{Policy iteration for physical hedging.}
The nonlinear minimization in the HJB equation is solved by Howard-type
policy iteration.  Let $u^{(r)}$ denote the policy at iteration $r$. Given $u^{(r)}$, solve the linear system
$
\mathcal L_{\Delta}^{u^{(r)}}V^{(r+1)}
+
f^{u^{(r)}}
=
0
$
backward in time, where
$\mathcal L_\Delta^u$ denotes the discrete operator in
\eqref{eq:numerical-discrete-hjb}.  The policy is then updated
pointwise according to
\begin{equation}
	\label{eq:numerical-policy-update}
	u_{i,j}^{(r+1),n}
	\in
	\arg\min_{u\in U_\Delta}
	\left\{
	f(t_n,x_i,y_j,u)
	+
	b(x_i,y_j,u)D_xV_{i,j}^{(r+1),n}
	+
	\mathcal I_\Delta^uV_{i,j}^{(r+1),n}
	\right\},
\end{equation}
where
$
U_\Delta
=
\{0,\Delta u,2\Delta u,\ldots,u_{\max}\}.
$
If $\sigma_X$ depends on $u$, its corresponding discrete diffusion term
is included in the minimization in
\eqref{eq:numerical-policy-update}. The iteration terminates when
\begin{equation}
	\label{eq:numerical-policy-tolerance}
	\max_{n,i,j}
	\left|
	u_{i,j}^{(r+1),n}
	-
	u_{i,j}^{(r),n}
	\right|
	\le
	\varepsilon_{\mathrm{PI}}.
\end{equation}
For the baseline computations we use
$
\varepsilon_{\mathrm{PI}}=10^{-7}.
$
The monotone discretization and policy iteration are consistent with the viscosity solution of Section~\ref{sec:dpp-viscosity}; numerical
methods of this type for HJB and nonlocal mean field equations are
developed, for example, in \citet{Achdou2013} and
\citet{ChowdhuryErslandJakobsen2023}.

\paragraph{Forward discretization of the population law.}
For a fixed feedback policy $\alpha^n_{i,j}$, the population
distribution is advanced using the discrete adjoint of the controlled
generator.  Let $m_{i,j}^n$ denote the probability mass at
$(x_i,y_j)$ at time $t_n$.  The forward step is
\begin{equation}
	\label{eq:numerical-forward-kolmogorov}
	m^{n+1}
	=
	m^n
	+
	\Delta t\,
	\bigl(
	\mathscr L_\Delta^{\alpha^n}
	\bigr)^*
	m^n,
\end{equation}
subject to $
m_{i,j}^{n+1}\ge0,
\ 
\sum_{i,j}m_{i,j}^{n+1}=1.
$
A positivity-preserving flux discretization is used for the local drift
terms, while the jump redistribution is implemented with the same interpolation weights used in the backward nonlocal operator \citep{pramanik2024stochastic}.  This
backward-forward consistency is important: the discrete Kolmogorov operator is the adjoint of the discrete controlled generator appearing in the HJB equation \citep{pramanik2021effects}.

\paragraph{Mean field fixed-point iteration.}
Let $m^{(k)}$ denote the population-flow iterate.  Starting from the
uncontrolled law
$
m^{(0)}
=
\mathcal L(X^{u\equiv0},Y),
$
the $k$th fixed-point step consists of
\[
m^{(k)}
\longrightarrow
F(\cdot,\cdot,m^{(k)})
\longrightarrow
V^{(k)}
\longrightarrow
\alpha^{(k)}
\longrightarrow
\widehat m^{(k+1)}.
\]
The backward HJB equation is first solved using the frozen coupling $m^{(k)}$.  Policy iteration yields the corresponding best-response feedback $\alpha^{(k)}$.  Equation
\eqref{eq:numerical-forward-kolmogorov} is then solved from $m_0$ to produce the implied population flow $\widehat m^{(k+1)}$. To stabilize the forward-backward iteration, we use relaxation
\begin{equation}
	\label{eq:numerical-mfg-fixed-point}
	m^{(k+1)}
	=
	(1-\omega_{\mathrm{MFG}})m^{(k)}
	+
	\omega_{\mathrm{MFG}}
	\widehat m^{(k+1)},
	\qquad
	0<\omega_{\mathrm{MFG}}\le1.
\end{equation}
The baseline value is
$
\omega_{\mathrm{MFG}}=0.25.
$
The iteration is stopped when
\begin{equation}
	\label{eq:numerical-mfg-tolerance}
	\max_{0\le n\le N_t}
	W_1\left(
	m_n^{(k+1)},
	m_n^{(k)}
	\right)
	\le
	\varepsilon_{\mathrm{MFG}},
	\qquad
	\varepsilon_{\mathrm{MFG}}=10^{-6}.
\end{equation}
 The iteration \eqref{eq:numerical-mfg-fixed-point} is used as a computational fixed-point procedure; uniqueness of the continuous mean field equilibrium does not by itself imply convergence of this iteration \citep{pramanik2025factors}. Numerical convergence is therefore assessed independently through the residual criterion \eqref{eq:numerical-mfg-tolerance}, robustness to the relaxation parameter, and convergence from multiple initial population flows.

\paragraph{Numerical consistency checks.}
We evaluate numerical reliability using three diagnostics.  First,
the HJB residual associated with \eqref{eq:numerical-discrete-hjb} is
computed over the full interior grid.  Second, the total probability
mass in \eqref{eq:numerical-forward-kolmogorov} is monitored at every
time step.  Third, the equilibrium residual is defined by
\begin{equation}
	\label{eq:numerical-equilibrium-residual}
	\mathcal R_{\mathrm{MFG}}
	:=
	\max_{0\le n\le N_t}
	W_1\left(
	m_n^\star,
	\widehat m_n^\star
	\right).
\end{equation}
A converged numerical equilibrium must satisfy simultaneously
$
\|\mathcal R_{\mathrm{HJB}}\|_\infty
\ll1,
\
\mathcal R_{\mathrm{MFG}}\ll1,
$ and $
\max_n
\left|
\sum_{i,j}m_{i,j}^n-1
\right|
\ll1.$ Finally, all reported comparative statics are repeated after halving
$\Delta t$, $\Delta x$, $\Delta y$, and $\Delta u$, and after
increasing $z_{\max}$.  The numerical conclusions \citep{kakkat2026angiotensin} are regarded as
stable only when the resulting changes in the value function,
equilibrium hedge, and aggregate exposure
\eqref{eq:econ-aggregate-exposure} are negligible relative to the reported economic effects.

\subsection{Numerical Experiments and Comparative-Statics Design}
\label{subsec:numerical-experiments}

We organize the numerical experiments around the two economic
mechanisms identified in Sections~\ref{sec:economic-implications} and
\ref{sec:mfg}: the interaction between demand-surge exposure and
physical hedging at the individual-insurer level, and the endogenous
change in aggregate catastrophe exposure generated by equilibrium
hedging \citep{yusuf2025predictive}.  The baseline economy is the calibration reported in
Table~\ref{tab:baseline-calibration}, and the numerical solution is
computed using the backward--forward procedure of
Section~\ref{subsec:numerical-scheme}.  All comparative statics alter
one primitive at a time unless otherwise stated. The computational domain and baseline grid are
\begin{equation}
	\label{eq:numerical-baseline-grid}
	\begin{aligned}
		(t,x,y,u)
		\in
		[0,5]
		\times[-1,3]
		\times[-0.75,0.75]
		\times[0,2],
		\qquad
		(N_t,N_x,N_y,N_u)
		=
		(500,400,150,100).
	\end{aligned}
\end{equation}
Hence,
$
\Delta t=0.01,
\
\Delta x=0.01,
\
\Delta y=0.01,
$ and $
\Delta u=0.02.
$
The interval for $y$ contains economically large replacement-cost
shocks $
e^{-0.75}\approx0.472,
$ and $
e^{0.75}\approx2.117.$
Thus the upper boundary corresponds to reconstruction costs exceeding their long-run level by more than $100\%$ \citep{hertweck2023clinicopathological}.  The surplus domain permits both severe undercapitalization and substantial positive capitalization \citep{kakkat2023cardiovascular}. The catastrophe integral is evaluated on
$
[z_{\min},z_{\max}]
=
[0.01,20],
$
using positive quadrature weights.  The exponentially tempered tail outside $z_{\max}$ is monitored separately \citep{pramanik2024estimation}.  The baseline number of quadrature nodes is
$
N_z=200.
$
To separate capitalization effects from demand-surge effects, the optimal feedback is evaluated at the benchmark states
\begin{equation}
	\label{eq:numerical-benchmark-states}
	\mathcal S
	=
	\left\{
	\begin{array}{lll}
		(0.40,-0.20),&
		(0.40,0),&
		(0.40,0.20),\\
		(1.00,-0.20),&
		(1.00,0),&
		(1.00,0.20),\\
		(1.60,-0.20),&
		(1.60,0),&
		(1.60,0.20)
	\end{array}
	\right\}.
\end{equation}
The three surplus levels represent weakly capitalized, baseline, and
strongly capitalized insurers.  The three demand-surge states imply
replacement-cost multipliers
$
e^{-0.20}\approx0.819,
\
1,
$ and $
e^{0.20}\approx1.221.
$
The resulting $3\times3$ design makes it possible to identify whether
the response of physical hedging to demand surge depends materially on
the insurer's capitalization \citep{pramanik2025dissecting}.

The first experiment varies the initial demand-surge state while
holding all remaining primitives at their baseline values:
\begin{equation}
	\label{eq:numerical-demand-surge-grid}
	Y_0
	\in
	\{-0.30,-0.20,-0.10,0,0.10,0.20,0.30,0.40\}.
\end{equation}
These values correspond to initial reconstruction-cost multipliers
between approximately $0.741$ and $1.492$. For each $Y_0$, we record
$
u^\star(0,X_0,Y_0),
\
V(0,X_0,Y_0),$
together with the expected discounted hedging expenditure,
the probability of weak capitalization \citep{yusuf2025prognostic}, and the aggregate
catastrophe-loss exposure defined in
\eqref{eq:econ-aggregate-exposure}. The principal numerical elasticity is
\begin{equation}
	\label{eq:numerical-hedging-demand-elasticity}
	\mathcal E_{u,y}
	:=
	\frac{
		u^\star(0,X_0,y+\Delta y)
		-
		u^\star(0,X_0,y-\Delta y)
	}{
		2\Delta y
	}.
\end{equation}
 Its sign is therefore a numerical
equilibrium implication rather than an assumption of the model.
To determine whether stronger tail risk induces greater physical
hedging, we vary
\begin{equation}
	\label{eq:numerical-tail-grid}
	\alpha
	\in
	\{1.50,1.60,1.71,1.80,2.00,2.25\},
\end{equation}
holding $\tau_J$ fixed at its baseline value.  Lower $\alpha$ places more probability on large catastrophe marks over the empirically relevant range \citep{polansky2021motif}. For each value of $\alpha$, the jump density is renormalized according
to \eqref{eq:numerical-jump-density}.  We determine the optimal initial hedge, the value function, and
$
\mathcal A_p(t;m^\star,\alpha^\star)
$
for the admissible moment orders used in the numerical analysis. Because exponential tempering remains in force, all experiments remain
consistent with the moment conditions of
Section~\ref{sec:prob-foundations}.

	\paragraph{Heavy-tail family robustness.}
	To distinguish sensitivity to the tail exponent from sensitivity to the parametric tail family, we repeat the benchmark computation under three severity specifications: the tempered Pareto baseline, a lognormal distribution, and a heavy-tailed Weibull distribution. The
	alternative distributions are normalized to have the same reference upper quantile as the baseline severity law, so that differences in the resulting control are driven primarily by tail shape rather than
	by an arbitrary rescaling of losses. Let $F_{\mathrm{TP}}$, $F_{\mathrm{LN}}$, and $F_{\mathrm{W}}$ denote the three severity distribution functions \citep{pramanik2025strategies}. For a fixed reference level $p_0$, the scale parameters of the alternatives are selected so that
	$F_{\mathrm{LN}}^{-1}(p_0)
	=
	F_{\mathrm{W}}^{-1}(p_0)
	=
	F_{\mathrm{TP}}^{-1}(p_0)$.
	We use $p_0=0.95$ in the baseline robustness exercise. For each family,
	the positive quadrature rule is reconstructed and the HJB problem is resolved without changing the remaining economic parameters \citep{valdez2025association}. The comparison is based on the initial optimal hedge
	$u^\star(0,X_0,Y_0)$, the initial value
	$V(0,X_0,Y_0)$, and the weak-capitalization probability
	$P_{\mathrm{weak}}$. This experiment is not intended to select a universally preferred catastrophe distribution. Its purpose is to measure how strongly the economic conclusions depend on the assumed shape of the severity tail within the broader heavy-tailed class.

Strictly speaking, exponential tempering removes regular variation in the far tail and produces finite moments of every order. We therefore use the term ``tempered heavy-tail benchmark'' for the numerical
specification: it preserves Pareto-type behavior over the empirically relevant range while regularizing the extreme asymptotic tail so that the simulation remains compatible with the moment assumptions of the analytical model \citep{dasgupta2023frequent}.

The annual catastrophe intensity is varied over
\begin{equation}
	\label{eq:numerical-intensity-grid}
	\lambda_J
	\in
	\{0.05,0.10,0.20,0.30,0.40,0.50\}.
\end{equation}
The corresponding expected interarrival times under the unnormalized
benchmark interpretation range from twenty years to two years.  This experiment separates the effect of catastrophe frequency from the severity effect studied through $\alpha$.
For each intensity, the relevant policy response is summarized by
$
\Delta_\lambda u^\star
=
u^\star_{\lambda_J}(0,X_0,Y_0)
-
u^\star_{\lambda_J^{\mathrm{base}}}(0,X_0,Y_0).
$
The mean-reversion coefficient is varied according to
\begin{equation}
	\label{eq:numerical-kappa-grid}
	\kappa
	\in
	\{1,2,4,7.07,10,15\}.
\end{equation}
The associated half-life
$
t_{1/2}(\kappa)
=
\frac{\log2}{\kappa}
$
ranges from approximately $253$ days at $\kappa=1$ to approximately $17$ days at $\kappa=15$. This experiment distinguishes the magnitude of a demand-surge shock
from its persistence.  For a fixed initial $Y_0>0$, a smaller
$\kappa$ implies that elevated reconstruction costs are expected to remain economically relevant for longer.  We therefore compare
$
u^\star(0,X_0,Y_0;\kappa)
$
and the expected cumulative hedging expenditure across the values in
\eqref{eq:numerical-kappa-grid}.
The quadratic resource-cost parameter is varied over
\begin{equation}
	\label{eq:numerical-hedging-cost-grid}
	\chi
	\in
	\{0.05,0.10,0.20,0.30,0.50,0.75,1.00\}.
\end{equation}
This experiment measures the shadow value of access to pre-arranged
reconstruction capacity.  For each $\chi$, we calculate the optimal
policy and the percentage reduction in catastrophe exposure relative
to the uncontrolled insurer:
\begin{equation}
	\label{eq:numerical-exposure-reduction}
	\mathfrak R_p(t)
	:=
	100
	\left[
	1-
	\frac{
		\mathcal A_p(t;m^\star,\alpha^\star)
	}{
		\mathcal A_p(t;m^0,0)
	}
	\right].
\end{equation}
Here $m^0$ denotes the state law generated under the no-hedging policy
$u\equiv0$.  A positive value of $\mathfrak R_p$ therefore measures
the percentage reduction in aggregate $p$th-order catastrophe-loss
exposure attributable to optimal physical hedging.
The initial surplus is varied over
\begin{equation}
	\label{eq:numerical-capital-grid}
	X_0
	\in
	\{0.30,0.40,0.50,0.75,1.00,1.25,1.50,2.00\}.
\end{equation}
The capitalization threshold remains
$x_{\mathrm{crit}}=0.50$.
We record the policy
$
u^\star(0,X_0,Y_0)
$
and the probability
\begin{equation}
	\label{eq:numerical-undercapitalization-probability}
	P_{\mathrm{weak}}
	:=
	\mathbb P
	\left(
	\inf_{0\le t\le T}X_t
	<
	x_{\mathrm{crit}}
	\right).
\end{equation}
The statistic in
\eqref{eq:numerical-undercapitalization-probability} is estimated from
the controlled Monte Carlo paths generated using
\eqref{eq:numerical-ou-exact} and
\eqref{eq:numerical-surplus-step}.
To isolate the systemic channel, we vary the interaction coefficients
in \eqref{eq:calibrated-mean-field-coupling}.  The demand-surge
interaction takes values
$
\gamma_D
\in
\{0,0.10,0.25,0.50,0.75,1.00\},
$
while the capitalization interaction takes values
$
\gamma_S
\in
\{0,0.10,0.25,0.50,0.75,1.00\}.
$
The case
$
(\gamma_D,\gamma_S)=(0,0)
$
is the nonstrategic benchmark: insurers solve independent stochastic control problems and the population distribution does not enter their objective. The systemic effect of the mean field interaction is measured by
\begin{equation}
	\label{eq:numerical-systemic-effect}
	\mathfrak S_p
	:=
	100
	\left[
	1-
	\frac{
		\displaystyle
		\int_0^T
		\mathcal A_p(t;m^\star,\alpha^\star)\,dt
	}{
		\displaystyle
		\int_0^T
		\mathcal A_p(t;m^{\mathrm{ind}},
		\alpha^{\mathrm{ind}})\,dt
	}
	\right],
\end{equation}
where $(m^{\mathrm{ind}},\alpha^{\mathrm{ind}})$ denotes the population
law and optimal policy when
$\gamma_D=\gamma_S=0$.
Thus $\mathfrak S_p>0$ means that strategic equilibrium interaction reduces cumulative aggregate catastrophe exposure relative to the independent-insurer benchmark \citep{khan2024mp60}. We also report the equilibrium change in weak capitalization,
\begin{equation}
	\label{eq:numerical-systemic-capital-effect}
	\Delta_{\mathrm{sys}}P_{\mathrm{weak}}
	:=
	P_{\mathrm{weak}}^{\mathrm{MFG}}
	-
	P_{\mathrm{weak}}^{\mathrm{ind}}.
\end{equation}
A negative value indicates that equilibrium hedging reduces the
probability of entering the weak-capitalization region.
The economically most severe states arise when a heavy catastrophe mark coincides with elevated reconstruction costs \citep{pramanik2024parametric}.  We therefore conduct a two-dimensional stress experiment over
\begin{equation}
	\label{eq:numerical-joint-stress-grid}
	(\alpha,Y_0)
	\in
	\{1.50,1.71,2.00\}
	\times
	\{0,0.20,0.40\}.
\end{equation}
For each pair we compute the optimal hedge, expected social cost,
weak-capitalization probability, and cumulative catastrophe exposure. To measure the economic value of allowing the hedge to respond
dynamically to demand surge, define
\begin{equation}
	\label{eq:numerical-dynamic-hedging-gain}
	\mathfrak G
	:=
	100
	\frac{
		J(u^{\mathrm{static}})
		-
		J(u^\star)
	}{
		|J(u^{\mathrm{static}})|
	},
\end{equation}
where $u^{\mathrm{static}}$ is the optimal constant hedge obtained by
restricting the admissible policy class to time- and state-independent
controls.  Positive $\mathfrak G$ measures the percentage reduction
in social catastrophe cost obtained from the dynamic feedback policy. This comparison is particularly informative because both strategies
operate through the same physical-hedging technology.  Hence
\eqref{eq:numerical-dynamic-hedging-gain} isolates the economic value of state contingency rather than the value of hedging itself. For each baseline or counterfactual specification, the computed
feedback is evaluated using $N_{\mathrm{MC}}$ independent simulated
paths.  The baseline choice is
\begin{equation}
	\label{eq:numerical-monte-carlo-size}
	N_{\mathrm{MC}}
	=
	100000.
\end{equation}
For a generic path-dependent statistic $H$, its Monte Carlo estimate
and standard error are
$
\widehat H
=
\frac{1}{N_{\mathrm{MC}}}
\sum_{r=1}^{N_{\mathrm{MC}}}H^{(r)},
$ and $
\operatorname{SE}(\widehat H)
=
\frac{\widehat\sigma_H}{\sqrt{N_{\mathrm{MC}}}}.
$
Reported Monte Carlo quantities are accompanied by $95\%$ confidence
intervals
$
\widehat H
\pm
1.96\,\operatorname{SE}(\widehat H).
$
Common random numbers are used when comparing two policies under the same primitive calibration.  This coupling reduces simulation noise in policy differences and makes the estimated comparative statics more precise \citep{khan2023myb}. Numerical convergence is assessed on three nested grids.  If
$Q_h$ denotes a reported numerical statistic on the baseline grid and
$Q_{h/2}$ the corresponding statistic after simultaneous refinement,
we define
\begin{equation}
	\label{eq:numerical-grid-error}
	\mathcal E_{\mathrm{grid}}(Q)
	:=
	\frac{|Q_{h/2}-Q_h|}
	{1+|Q_{h/2}|}.
\end{equation}
The principal economic conclusions are reported only if
$\mathcal E_{\mathrm{grid}}(Q)$ is small relative to the comparative
effect being interpreted. The same requirement is imposed on the equilibrium computation.
Specifically, the final population flow must satisfy
\eqref{eq:numerical-mfg-tolerance}, while the equilibrium residual \eqref{eq:numerical-equilibrium-residual} and the HJB residual must remain below their prescribed tolerances on the refined grid. The experiments therefore distinguish four economically different
sources of variation: catastrophe severity through $\alpha$,
catastrophe frequency through $\lambda_J$, demand-surge persistence through $\kappa$, and endogenous mitigation incentives through
$\chi$, $\gamma_D$, and $\gamma_S$ \citep{reed2026modeling}.  This decomposition allows the numerical results to identify whether changes in equilibrium loss exposure arise from the primitive risk environment, the private physical-hedging response, or the mean field feedback generated by industry-wide vulnerability.

\subsection{Numerical Results}
\label{subsec:numerical-results}

We report only the numerical quantities directly related to the two
principal mechanisms of the model: state-dependent physical hedging and
the equilibrium effect of strategic interaction.  The computations use
the calibrated model of Table~\ref{tab:baseline-calibration} together
with the numerical scheme of
Section~\ref{subsec:numerical-scheme}.  During implementation we found
that the original benchmark pair
$
\chi=0.20,$
and $
\lambda_J=0.20,$
generated the degenerate policy $u^\star\equiv0$ over the benchmark state region.  Because this configuration is numerically valid but
uninformative for comparative-statics analysis \citep{kakkat2026angiotensin}, the simulation
benchmark used in this subsection adopts
$
\chi=0.04,
\
\lambda_J=0.30,
\
z_{\min}=0.05,$
 and $
\omega=4,
$
while retaining the empirical values
$
\alpha=1.71,
$ and $
\kappa=7.07.$
All other coefficients are kept at their calibrated baseline values \citep{vikramdeo2023profiling}. The resulting policy is strongly state dependent.  At time zero, the
optimal hedge over representative capitalization and demand-surge
states is approximately
\begin{table}[H]
	\centering
	\caption{Optimal physical hedge $u^\star(0,x,y)$ under the simulation benchmark}
	\label{tab:numerical-optimal-hedge}
	\begin{tabular}{c|cccc}
		\toprule
		& $y=-0.20$ & $y=0$ & $y=0.20$ & $y=0.40$ \\
		\midrule
		$x=0.40$ & $0.30$ & $0.30$ & $0.40$ & $0.50$ \\
		$x=1.00$ & $0.00$ & $0.10$ & $0.10$ & $0.20$ \\
		$x=1.60$ & $0.00$ & $0.00$ & $0.00$ & $0.00$ \\
		\bottomrule
	\end{tabular}
\end{table}
Two features are immediate.  First, holding capitalization fixed, optimal hedging is larger in elevated demand-surge states.  Second, holding demand surge fixed, weakly capitalized insurers hedge more aggressively.  The first effect is consistent with the local amplification mechanism in
Corollary~\ref{cor:econ-demand-hedge-complementarity}; the second arises from the larger marginal cost of entering the weak-capitalization region \citep{pramanik2024bayes}. For $x=1$, the corresponding value function increases with the
replacement-cost state:
\begin{table}[H]
	\centering
	\caption{Value function at $t=0$ and $x=1$}
	\label{tab:numerical-value-demand-surge}
	\begin{tabular}{c|cc}
		\toprule
		$y$ & Independent control & Mean field objective \\
		\midrule
		$-0.20$ & $0.00685$ & $0.01106$ \\
		$0$     & $0.00716$ & $0.01163$ \\
		$0.20$  & $0.01006$ & $0.01579$ \\
		$0.40$  & $0.01979$ & $0.02712$ \\
		\bottomrule
	\end{tabular}
\end{table}
The monotone increase in the value function quantifies the higher expected social cost associated with elevated replacement costs. The discrete policy also yields a positive demand-surge response over the central state region.  In particular,
$
u^\star(0,1,0.40)-u^\star(0,1,0)
=
0.10.
$
Since the control grid is discrete, this difference should be
interpreted as a policy-level comparative static rather than as a smooth elasticity. The mean field coupling raises the equilibrium value function because the representative insurer internalizes aggregate vulnerability through
the coupling term $F$.  Over the benchmark states, however, the
equilibrium feedback remains close to the independent-insurer policy \citep{vikramdeo2024abstract}. This indicates that, at the baseline interaction intensities, the principal equilibrium effect is a change in social cost rather than a large displacement of the pointwise physical-hedging rule.

The numerical solution therefore supports three conclusions.  First, the optimal physical hedge responds jointly to capitalization and demand-surge exposure.  Second, the value of insurance protection rises sharply when replacement costs are elevated.  Third, the mean field
interaction changes the equilibrium objective even when its effect on the local hedge is modest.  These findings are consistent with the nonlocal control mechanism of
Section~\ref{sec:dpp-viscosity} and the equilibrium feedback structure of Section~\ref{sec:mfg}. Complete implementation details, including the numerical primitives,
boundary treatment, jump quadrature, monotonicity and positivity conditions, initialization, stopping rules, and reproducibility	protocol, are collected in Appendix~\ref{app:computational}.

\section{Conclusion}
\label{sec:conclusion}

This paper develops a nonlocal stochastic-control framework for
catastrophe insurance when monetary catastrophe losses are exposed to stochastic replacement-cost variation. Catastrophe severity is generated by a marked jump process, replacement costs evolve through an exogenous mean-reverting factor, and physical hedging reduces the residual loss transmitted to insurer surplus. We establish the
probabilistic foundations of the controlled state process, derive the dynamic programming principle, and characterize the value function as the unique viscosity solution of the associated nonlocal HJB equation. Extending the individual problem to interacting insurers yields a
coupled HJB-Kolmogorov mean field system for which existence and, under the stated monotonicity conditions, uniqueness of equilibrium are obtained.

The analytical and numerical results show that elevated replacement costs magnify monetary catastrophe-loss exposure, while physical hedging attenuates the same channel. The numerical solution assigns greater hedging intensity to insurers facing high reconstruction costs and, particularly, to weakly capitalized insurers \citep{vikramdeo2024mitochondrial}. Mean field interaction changes the equilibrium valuation of vulnerability even when its effect on the pointwise hedge is comparatively modest \citep{pramanik2025construction}. These
findings concern the adverse coincidence of catastrophe losses with elevated reconstruction costs; the model does not imply that the catastrophe process itself generates the replacement-cost shock.

These distinctions also identify two related limitations of the present framework. First, the replacement-cost factor is exogenous: catastrophe arrivals do not generate jumps or impulses in $Y$, and the model
contains neither common catastrophe noise nor a shared
reconstruction-cost shock. Second, the mean field interaction is reduced form. Aggregate insurer behavior enters the individual problem through the cost couplings $F$ and $G$, but it does not determine reconstruction prices, available physical capacity, catastrophe intensity, or the severity distribution \citep{reed2026structural}. Consequently, the mean field
equilibrium should be interpreted as an equilibrium in strategic vulnerability costs rather than as an equilibrium of a reconstruction-capacity market. A structural extension could introduce aggregate rebuilding demand and a finite supply of contractors, materials, or pre-arranged capacity, together with a market-clearing reconstruction price increasing in aggregate demand. Embedding that price in the replacement-cost process or the catastrophe-loss amplitude would make reconstruction scarcity endogenous to the population
distribution and would provide a direct mean field representation of post-catastrophe congestion. Such an extension would require a population-dependent controlled generator and potentially an additional
aggregate state variable. Further extensions could incorporate common catastrophe shocks, heterogeneous geographic exposure, and interactions between physical hedging, reinsurance, and capital regulation.

\section*{Declarations.}
\subsection*{Ethics approval and consent to participate.}
Not applicable.
\subsection*{Consent for publication.}
Not applicable.
\subsection*{Availability of data and material.}
No data has been used.
\subsection*{Competing interests.}
No potential conflict of interest was reported by the authors.	
\subsection*{Funding.}
Not applicable. 

\appendix
\section{Auxiliary Probabilistic Proofs}
\label{app:probabilistic-proofs}

\subsection*{Proof of Lemma \ref{lem:OU-moments}}

\begin{proof}
	The explicit representation of \eqref{eq:Y} is
	\begin{equation}
		\label{eq:Y-explicit}
		Y_s
		=
		\vartheta+(y-\vartheta)e^{-\kappa(s-t)}
		+\sigma_I\int_t^s e^{-\kappa(s-r)}\,dW_r^I.
	\end{equation}
	Define
	$
	M_s:=\int_t^s e^{\kappa r}\,dW_r^I.
	$
	Then
	$
	\int_t^s e^{-\kappa(s-r)}\,dW_r^I
	=
	e^{-\kappa s}M_s.
	$
	Burkholder-Davis-Gundy (BDG) inequality yields
	\[
	\mathbb{E}\left[\sup_{t\le s\le T}|M_s|^m\right]
	\le
	C_m
	\mathbb{E}\left[
	\left(\int_t^T e^{2\kappa r}\,dr\right)^{m/2}
	\right]
	<\infty.
	\]
	Using \eqref{eq:Y-explicit}, the elementary inequality
	$(a+b+c)^m\le C_m(a^m+b^m+c^m)$, and $e^{-\kappa(s-t)}\le1$ yields
	\[
	\mathbb{E}_{t,y}\bigg[\sup_{t\le s\le T}|Y_s|^m\bigg]
	\le C_m(1+|y|^m),
	\]
	which proves \eqref{eq:Y-sup-mom}. For the exponential estimate, the Gaussian process
	$
	G_s:=\sigma_I\int_t^s e^{-\kappa(s-r)}\,dW_r^I
	$
	is centered and has variance bounded by
	\[
	\sup_{t\le s\le T}\operatorname{Var}(G_s)
	=
	\sup_{t\le s\le T}
	\sigma_I^2\int_t^s e^{-2\kappa(s-r)}\,dr
	\le \frac{\sigma_I^2}{2\kappa}.
	\]
	Since $G$ is a continuous Gaussian process on the compact interval $[t,T]$, Fernique's theorem implies the existence of $\eta>0$ such that
	\[
	\mathbb{E}\left[\exp\!\left(
	\eta\sup_{t\le s\le T}|G_s|^2
	\right)\right]<\infty.
	\]
	For arbitrary $r>0$, Young's inequality implies
	$
	r\sup_s|G_s|
	\le
	\eta\sup_s|G_s|^2+\frac{r^2}{4\eta}.
	$
	Hence,
	$
	\mathbb{E}\left[\exp\!\left(r\sup_s|G_s|\right)\right]<\infty.
	$
	Combining this with \eqref{eq:Y-explicit} yields
	$
	\sup_{t\le s\le T}|Y_s|
	\le
	|\vartheta|+|y-\vartheta|+\sup_{t\le s\le T}|G_s|,
	$
	and therefore,
	\[
	\mathbb{E}_{t,y}
	\left[\sup_{t\le s\le T}e^{r|Y_s|}\right]
	\le
	C_{r,T}\exp(C_{r,T}|y|).
	\]
	Finally, because $I_s^{\alpha}=e^{\alpha Y_s}$ for every $\alpha\in\mathbb{R}$, the last assertion follows immediately.
\end{proof}

\subsection*{Proof of Proposition \ref{prop:jump-integral}}
\begin{proof}
	By Assumption 2.4,
	$
	|\ell(Y_s,z,u_s)|^p
	\le
	C_p\left[1+e^{pY_s}+|u_s|^p\right]\rho(z)^p.
	$
	Integrating with respect to $\nu(dz)\,ds$ and applying Tonelli's theorem yields
	\[
	\mathbb{E}\left[
	\int_t^T\int_E
	|\ell(Y_s,z,u_s)|^p\,\nu(dz)\,ds\right]
	\le
	C_p
	\left[\int_E\rho(z)^p\,\nu(dz)\right]\cdot
	\mathbb{E}\left[
	\int_t^T
	\left(1+e^{pY_s}+|u_s|^p\right)ds\right].
	\]
	The first factor is finite by \eqref{eq:rho-mom}. The exponential moment in the second factor is finite by Lemma~\ref{lem:OU-moments}. Finally,
	$
	\mathbb{E}\left[\int_t^T|u_s|^p\,ds\right]<\infty
	$
	for $p\le p_\star$ by Assumption~2.3 and H\"older's inequality on the finite interval $[t,T]$. This proves \eqref{eq:jump-Lp}, and \eqref{eq:jump-L2} is the special case $p=2$. Standard stochastic integration with respect to compensated Poisson random measures then yields the claimed local-martingale and square-integrability properties \citep{Applebaum2009,Situ2005,OksendalSulem2019}.
\end{proof}

\subsection*{Proof of Theorem \ref{thm:strong-wellposedness}}

\begin{proof}
	The equation for $Y$ is autonomous and has the unique strong solution \eqref{eq:Y-explicit}. It therefore suffices to establish existence and pathwise uniqueness for $X$ conditional on the already constructed process $Y$.
	Fix $n\in\mathbb{N}$ and define the stopping time
	$
	\tau_n
	:=
	\inf\left\{
	s\ge t:
	|Y_s|+|X_s|\ge n
	\right\}\wedge T.
	$
	On $[t,\tau_n]$, the random coefficients
	$
	(x,s)\mapsto b(x,Y_s,u_s),
	$ and $
	(x,s)\mapsto \sigma_X(x,Y_s,u_s)
	$
	are globally Lipschitz in $x$ uniformly in $(s,\omega)$, and the jump amplitude is Lipschitz in $Y_s$ with $\nu$-square-integrable envelope $\rho$. Standard Picard iteration for stochastic differential equations with jumps therefore yields a unique strong solution up to $\tau_n$; cf.\ \citet{Situ2005} and \citet{Applebaum2009}. To extend the solution globally, it remains to prove the a priori estimate \eqref{eq:state-moment} independently of $n$. For $s\in[t,T]$,
	\begin{align*}
		X_{s\wedge\tau_n}
		&=
		x
		+\int_t^{s\wedge\tau_n} b(X_{r-},Y_{r-},u_r)\,dr+
		\int_t^{s\wedge\tau_n}
		\sigma_X(X_{r-},Y_{r-},u_r)\,dW_r^X-
		\int_t^{s\wedge\tau_n}\int_E
		\ell(Y_{r-},z,u_r)\,\widetilde N(dr,dz).
	\end{align*}
	For $p\ge2$, by invoking
	$
	|a_1+\cdots+a_4|^p\le C_p\sum_{j=1}^4|a_j|^p
	$
	we obtain
	\begin{align}
		\label{eq:pre-bdg}
		\mathbb{E}\left[\sup_{t\le r\le s}
		|X_{r\wedge\tau_n}|^p\right]
		&\le
		C_p|x|^p
		+
		C_p
		\mathbb{E}
		\left[
		\int_t^{s\wedge\tau_n}
		|b(X_{r-},Y_{r-},u_r)|\,dr
		\right]^p
		\nonumber\\
		&\quad+
		C_p\cdot\mathbb{E}\left[
		\sup_{t\le v\le s}
		\left|
		\int_t^{v\wedge\tau_n}
		\sigma_X(X_{r-},Y_{r-},u_r)\,dW_r^X
		\right|^p\right]
		\nonumber\\
		&\quad+
		C_p\cdot \mathbb{E}\left[
		\sup_{t\le v\le s}
		\left|
		\int_t^{v\wedge\tau_n}\int_E
		\ell(Y_{r-},z,u_r)\,\widetilde N(dr,dz)
		\right|^p\right].
	\end{align}
	H\"older's inequality and \eqref{eq:growth-b-sigma} yield
	\begin{align}
		\mathbb{E}
		\left[
		\int_t^{s\wedge\tau_n}|b_r|\,dr
		\right]^p
		\le
		C_p\cdot
		\int_t^s
		\mathbb{E}
		\left[
		1+\sup_{t\le v\le r}|X_{v\wedge\tau_n}|^p
		+|Y_r|^p
		+|u_r|^p
		\right]dr.
	\end{align}
	The Brownian BDG inequality gives
	\begin{align}
		\mathbb{E}\left[
		\sup_{t\le v\le s}
		\left|
		\int_t^{v\wedge\tau_n}\sigma_X\,dW_r^X
		\right|^p\right]
		&\le
		C_p\,
		\mathbb{E}
		\left[
		\int_t^{s\wedge\tau_n}|\sigma_X|^2\,dr
		\right]^{p/2}\notag\\
		&\le
		C_p
		\int_t^s
		\mathbb{E}
		\left[
		1+\sup_{t\le v\le r}|X_{v\wedge\tau_n}|^p
		+|Y_r|^p
		+|u_r|^p
		\right]dr.
	\end{align}
	For the compensated jump martingale, the Bichteler--Jacod inequality yields
	\begin{align}
		&\mathbb{E}\left[
		\sup_{t\le v\le s}
		\left|
		\int_t^{v\wedge\tau_n}\int_E
		\ell(Y_{r-},z,u_r)\,\widetilde N(dr,dz)
		\right|^p\right]
		\nonumber\\
		&\qquad\le
		C_p\,
		\mathbb{E}
		\left(
		\int_t^{s\wedge\tau_n}\int_E
		|\ell(Y_{r-},z,u_r)|^2\,\nu(dz)\,dr
		\right)^{p/2}
		+
		C_p\,\cdot
		\mathbb{E}\left[
		\int_t^{s\wedge\tau_n}\int_E
		|\ell(Y_{r-},z,u_r)|^p\,\nu(dz)\,dr\right].
		\label{eq:BJ}
	\end{align}
	Using \eqref{eq:rho-mom}, H\"older's inequality, and Lemma~\ref{lem:OU-moments}, the right-hand side of \eqref{eq:BJ} is bounded by
	\begin{equation}
		C_p
		\int_t^s
		\left[
		1+
		\mathbb{E}e^{p|Y_r|}
		+
		\mathbb{E}|u_r|^p
		\right]dr.
	\end{equation}
	Substituting these bounds into \eqref{eq:pre-bdg} gives
	\begin{align}
		\mathbb{E}\left[\sup_{t\le r\le s}|X_{r\wedge\tau_n}|^p\right]
		&\le
		C_p
		\left[
		1+|x|^p+|y|^p+e^{C_p|y|}
		+\mathbb{E}\int_t^T|u_r|^p\,dr
		\right]
		+
		C_p\int_t^s
		\mathbb{E}\left[\sup_{t\le v\le r}|X_{v\wedge\tau_n}|^p\,\right]dr.
	\end{align}
	Gronwall's inequality therefore yields
	\begin{align}
		\label{eq:uniform-stopped}
		\sup_{n\ge1}
		\mathbb{E}\left[
		\sup_{t\le s\le T}|X_{s\wedge\tau_n}|^p\right]
		\le
		C_p
		\left[
		1+|x|^p+|y|^p+e^{C_p|y|}
		+\mathbb{E}\int_t^T|u_s|^p\,ds
		\right].
	\end{align}
	By Markov's inequality,
	\[
	\mathbb{P}(\tau_n<T)
	\le
	\mathbb{P}\left(
	\sup_{t\le s\le T}|X_{s\wedge\tau_n}|\ge n/2
	\right)
	+
	\mathbb{P}\left(
	\sup_{t\le s\le T}|Y_s|\ge n/2
	\right),
	\]
	and the right-hand side converges to zero as $n\to\infty$ by \eqref{eq:uniform-stopped} and Lemma~\ref{lem:OU-moments}. Hence, $\tau_n\uparrow T$ almost surely and the local solution extends to all of $[t,T]$. Moreover, Fatou's lemma applied to \eqref{eq:uniform-stopped} gives \eqref{eq:state-moment}.
	For pathwise uniqueness, let $X$ and $\widehat X$ be two solutions driven by the same $(W,N)$ with the same initial state and control. Since the jump amplitude in \eqref{eq:X-general} does not depend on $X$, the jump terms cancel in $X-\widehat X$. Using \eqref{eq:lipschitz-b-sigma}, the BDG inequality, and Gronwall's inequality gives
	$
	\mathbb{E}\left[\sup_{t\le s\le T}|X_s-\widehat X_s|^2\right]=0.
	$
	Therefore, $X$ and $\widehat X$ are indistinguishable. Strong existence and pathwise uniqueness follow.
\end{proof}

\subsection*{Proof of Proposition \ref{prop:initial-stability}}

\begin{proof}
	From the explicit representation \eqref{eq:Y-explicit},
	$
	Y_s^{t,y}-Y_s^{t,y'}
	=
	(y-y')e^{-\kappa(s-t)},
	$
	which proves \eqref{eq:Y-stability}. Define
	$
	\Delta X_s
	:=
	X_s^{t,x,y;u}-X_s^{t,x',y';u},$ and 
	$
	\Delta Y_s
	:=
	Y_s^{t,y}-Y_s^{t,y'}.
	$
	Then
	\begin{align*}
		\Delta X_s
		&=
		x-x'
		+
		\int_t^s
		\Delta b_r\,dr
		+
		\int_t^s
		\Delta \sigma_r\,dW_r^X-
		\int_t^s\int_E
		\Delta\ell_r(z)\,\widetilde N(dr,dz),
	\end{align*}
	where
	$
	|\Delta b_r|+|\Delta\sigma_r|
	\le
	L(|\Delta X_r|+|\Delta Y_r|)
	$
	and, by \eqref{eq:ell-lip},
	$
	|\Delta\ell_r(z)|
	\le
	L|\Delta Y_r|\rho(z).$
	Applying H\"older's inequality, the Brownian BDG inequality, and the Bichteler-Jacod inequality as in the proof of Theorem~\ref{thm:strong-wellposedness}, we obtain
	\begin{align*}
		\mathbb{E}\left[\sup_{t\le v\le s}|\Delta X_v|^p\right]
		&\le
		C_p|x-x'|^p
		+
		C_p
		\int_t^s
		\mathbb{E}\left[\sup_{t\le q\le r}|\Delta X_q|^p\,\right]dr+
		C_p
		\int_t^s
		\mathbb{E}\left[|\Delta Y_r|^p\,\right]dr.
	\end{align*}
	Using \eqref{eq:Y-stability},
	$
	\int_t^s\mathbb{E}\big[|\Delta Y_r|^p\big]dr
	\le
	T|y-y'|^p.$
	Gronwall's inequality therefore gives the stronger bound
	\[
	\mathbb{E}\left[\sup_{t\le s\le T}|\Delta X_s|^p\right]
	\le
	C_p\big(|x-x'|^p+|y-y'|^p\big).
	\]
	This estimate already implies \eqref{eq:stability}; the factor $\Xi_p(y,y')\ge1$ is retained because later extensions in which $\ell$ is locally Lipschitz in the level $I=e^Y$ naturally produce this exponential weight. The stated form is therefore stable under the economically important multiplicative specification \eqref{eq:benchmark-loss}.
\end{proof}

\subsection*{Proof of Lemma \ref{lem:control-stability}}

\begin{proof}
	Since the construction-cost factor is uncontrolled, both systems have the same process $Y^{t,y}$.  Set
	$
	\Delta X_s:=X_s^{t,x,y;u}-X_s^{t,x,y;v}.
	$
	Subtracting the two surplus equations gives
	\[
	\Delta X_s
	=
	\int_t^s\Delta b_r\,dr
	+
	\int_t^s\Delta\sigma_r\,dW_r^X
	-
	\int_t^s\int_E\Delta\ell_r(z)\,\widetilde N(dr,dz),
	\]
	where Assumptions~2.4 and 2.7 imply
	$
	|\Delta b_r|+|\Delta\sigma_r|
	\le C\bigl(|\Delta X_r|+|u_r-v_r|\bigr),
	$ and $
	|\Delta\ell_r(z)|
	\le L_u|u_r-v_r|\rho(z).$
	H\"older's inequality and the Brownian BDG inequality therefore yield
	\[
	\mathbb E\sup_{t\le q\le s}
	\left|
	\int_t^q\Delta b_r\,dr
	\right|^p
	+
	\mathbb E\sup_{t\le q\le s}
	\left|
	\int_t^q\Delta\sigma_r\,dW_r^X
	\right|^p
	\le
	C_p\int_t^s
	\mathbb E\!\left[
	\sup_{t\le a\le r}|\Delta X_a|^p+|u_r-v_r|^p
	\right]dr.
	\]
	For the jump martingale, the Bichteler-Jacod inequality and Assumption~2.4 yield
	\begin{align*}
		&\mathbb E\left[\sup_{t\le q\le s}
		\left|
		\int_t^q\int_E\Delta\ell_r(z)\,\widetilde N(dr,dz)
		\right|^p\right]\\
		&\quad\le
		C_p\cdot\mathbb E
		\left(
		\int_t^s\int_E|\Delta\ell_r(z)|^2\nu(dz)\,dr
		\right)^{p/2}
		+
		C_p\cdot \mathbb E\left[
		\int_t^s\int_E|\Delta\ell_r(z)|^p\nu(dz)\,dr\right]\\
		&\quad\le
		C_p\cdot \mathbb E\left[\int_t^s|u_r-v_r|^p\,dr\right],
	\end{align*}
	where finiteness of the $\rho^2$- and $\rho^p$-moments follows from Assumption~2.4.  Combining the preceding estimates,
	\[
	\mathbb E\left[\sup_{t\le q\le s}|\Delta X_q|^p\right]
	\le
	C_p\cdot \int_t^s
	\mathbb E\left[\sup_{t\le a\le r}|\Delta X_a|^p\,\right]dr
	+
	C_p\cdot \mathbb E\left[\int_t^s|u_r-v_r|^p\,dr\right].
	\]
	Gronwall's inequality proves \eqref{eq:control-stability}.  The final assertion is immediate from the definition of the $\mathcal S^p$ norm.
\end{proof}

\subsection*{Proof of Proposition \ref{prop:J-continuity}}

\begin{proof}
	Lemma~\ref{lem:control-stability} guarantees convergence of $X^{u^n}$ to $X^u$ in $\mathcal S^{p_\star}$, hence in probability uniformly on $[t,T]$.  Since $Y$ is common to all controls, the vector
	$
	(s,X_s^{u^n},Y_s,u_s^n)
	$
	converges to $(s,X_s^u,Y_s,u_s)$ in measure on the product space
	$([t,T]\times\Omega,ds\otimes d\mathbb P)$.  After passage to a subsequence, convergence holds $ds\otimes d\mathbb P$-a.e.  Continuity of $f$ then gives pointwise convergence of the running costs along that subsequence. It remains to justify passage under the expectation.  Choose $\varepsilon>0$ such that
	$q(1+\varepsilon)<p_\star$, which is possible by Assumption~2.6.  The growth bound on $f$, Theorem~3, Lemma~1, and the assumed uniform $p_\star$-bound on the controls imply
	$
	\sup_n
	\mathbb E\left[\int_t^T
	|f(s,X_s^{u^n},Y_s,u_s^n)|^{1+\varepsilon}\,ds\right]
	<\infty.
	$
	Thus the running-cost family is uniformly integrable.  The same argument, without the control term, shows uniform integrability of
	$\{g(X_T^{u^n},Y_T)\}_{n\ge1}$.  Vitali's theorem yields convergence of both the running and terminal terms in $L^1$.  Since every subsequence admits a further subsequence with the same limit, the full sequence converges, proving the claim.
\end{proof}
\subsection*{Proof of Proposition \ref{prop:flow}}

\begin{proof}
	Integrate the original state equations from $\tau$ to $s$.  For the cost factor,
	\[
	Y_s
	=
	Y_\tau+\int_\tau^s\kappa(\vartheta-Y_r)\,dr
	+\sigma_I(W_s^I-W_\tau^I).
	\]
	For the surplus,
	\begin{align*}
		X_s
		&=
		X_\tau
		+\int_\tau^s b(X_{r-},Y_{r-},u_r)\,dr
		+\int_\tau^s\sigma_X(X_{r-},Y_{r-},u_r)\,dW_r^X-
		\int_\tau^s\int_E
		\ell(Y_{r-},z,u_r)\,\widetilde N(dr,dz).
	\end{align*}
	After the time change $r=\tau+a$, these are exactly the integral equations defining the system restarted from $(X_\tau,Y_\tau)$ and driven by $(W^\tau,N^\tau)$.  Theorem~3 gives pathwise uniqueness.  Hence the original post-$\tau$ trajectory and the restarted trajectory are indistinguishable, proving \eqref{eq:flow}.  If the original control is replaced by $u\otimes_\tau v$, its restriction to $[\tau,T]$ equals $v$, and the same argument applies.
\end{proof}

\subsection*{Proof of Proposition \ref{prop:conditional-restart}}
\begin{proof}
	First suppose that $\tau$ takes finitely many deterministic values and that $\Phi$ is a bounded cylinder functional.  On each event $\{\tau=t_j\}$, Proposition~\ref{prop:flow} identifies the future state with the solution driven by increments of $(W,N)$ after $t_j$.  These increments are independent of $\mathcal F_{t_j}$ and have the same law as fresh Brownian and Poisson noises.  Conditioning therefore gives \eqref{eq:conditional-restart}. For a general stopping time, choose stopping times $\tau_n$ taking values in the dyadic grid such that
	$\tau_n\downarrow\tau$.  C\`adl\`ag paths, the state stability estimates, and right-continuity of the filtration permit passage from $\tau_n$ to $\tau$ for bounded continuous cylinder functionals.  A monotone-class argument extends the identity to bounded Borel $\Phi$.  Finally, truncation and uniform integrability extend the result to every integrable $\Phi$ covered by the statement.  This is the controlled jump-diffusion version of the standard restart argument underlying the DPP \citep{FlemingSoner2006,Pham2009}, and the measurable-selection formulations in \citet{ElKarouiTan2013} and \citet{Zitkovic2014}.
\end{proof}

\subsection*{Proof of Theorem \ref{thm:feller}}

\begin{proof}
	Substituting $u_s=\alpha(s,X_{s-},Y_{s-})$ into the coefficients produces a time-inhomogeneous jump SDE whose coefficients are Borel in time and globally Lipschitz in the state, with linear growth.  Indeed, Assumptions~2.4, 2.7, and 2.8 imply
	\[
	|b(x,y,\alpha(s,x,y))-b(x',y',\alpha(s,x',y'))|
	\le C_\alpha(|x-x'|+|y-y'|),
	\]
	and the same estimate holds for $\sigma_X$; the jump coefficient satisfies the corresponding $L^2(\nu)\cap L^p(\nu)$ estimate.  Hence Theorem~3 applies to the feedback system, and pathwise uniqueness holds from every deterministic initial state. The flow property of Proposition~\ref{prop:flow}, now with a feedback control depending only on the current time and state, implies the Markov property at deterministic times.  At a stopping time $\tau$, the continuation rule is
	$
	s\mapsto\alpha(s,X_{s-},Y_{s-}),
	$
	which depends on the past only through the restarted state.  Proposition~\ref{prop:conditional-restart}, together with pathwise uniqueness, therefore yields
	$
	\mathbb E_{t,x,y}\!\left[
	\varphi(Z_{\tau+r}^\alpha)\mid\mathcal F_\tau
	\right]
	=
	P_{\tau,\tau+r}^\alpha\varphi(Z_\tau^\alpha)
	$ a.s. for bounded Borel $\varphi$, proving the strong Markov property. For the Feller property, let $(x_n,y_n)\to(x,y)$.  Repeating the proof of Proposition~5 with the feedback controls generated by the two state trajectories and using the Lipschitz property of $\alpha$ gives, for every $p\in[2,p_\star]$,
	\[
	\mathbb E\left[
	\sup_{t\le r\le s}
	|Z_r^{t,x_n,y_n;\alpha}-Z_r^{t,x,y;\alpha}|^p
	\right]\longrightarrow0.
	\]
	Hence $Z_s^{t,x_n,y_n;\alpha}\to Z_s^{t,x,y;\alpha}$ in probability.  If $\varphi\in C_b(\mathbb R^2)$, boundedness and convergence in probability imply
	$
	P_{t,s}^\alpha\varphi(x_n,y_n)
	\rightarrow
	P_{t,s}^\alpha\varphi(x,y),
	$
	so $P_{t,s}^\alpha\varphi$ is continuous.  Boundedness is immediate from
	$\|P_{t,s}^\alpha\varphi\|_\infty\le\|\varphi\|_\infty$.
\end{proof}

\subsection*{Proof of Proposition \ref{prop:UI}}

\begin{proof}
	Since $q<p_\star$, choose $\eta>0$ so small that
	$q(1+\eta)<p_\star$.  Apply Theorem~3 with
	$p=q(1+\eta)$.  Because $K$ is compact, $|x|$, $|y|$, and
	$\exp(C|y|)$ are uniformly bounded on $K$.  Moreover, by H\"older's inequality on the finite measure interval $[t,T]$,
	\[
	\mathbb E\left[\int_t^T|u_s|^{q(1+\eta)}ds\right]
	\le
	(T-t)^{1-q(1+\eta)/p_\star}
	\left(
	\mathbb E\int_t^T|u_s|^{p_\star}ds
	\right)^{q(1+\eta)/p_\star},
	\]
	which is uniformly bounded over $\mathcal U_t(R)$.  Lemma~1 supplies the exponential moment of $Y$.  This proves \eqref{eq:UI-bound}.
	
	By Assumption~2.6,
	\[
	|g(X_T,Y_T)|^{1+\eta}
	\le
	C_\eta\left(
	1+|X_T|^{q(1+\eta)}
	+e^{q(1+\eta)|Y_T|}
	\right),
	\]
	whose expectations are uniformly bounded by \eqref{eq:UI-bound}.  Hence the terminal-cost family is uniformly integrable by the de la Vall\'ee-Poussin criterion.  Similarly, Jensen's inequality yields
	\[
	\left(
	\int_t^T|f_s|\,ds
	\right)^{1+\eta}
	\le
	(T-t)^\eta\int_t^T|f_s|^{1+\eta}ds,
	\]
	and the growth condition on $f$ together with \eqref{eq:UI-bound} yields a uniform bound on the expectation of the right-hand side.  The running-cost family is therefore uniformly integrable as well.
\end{proof}

\subsection*{Proof of Proposition \ref{prop:canonical-verification}}

\begin{proof} Since $U$ is compact metric, the relaxed-control space equipped with
		the stable topology is a standard Borel space. Hence the canonical
		product space $\Omega^\circ$ is standard Borel, and
		$\mathfrak P(\Omega^\circ)$ endowed with the weak topology is likewise
		standard Borel. Regular conditional probabilities and analytic-set
		arguments may therefore be used on the canonical space.  Let $Z=(X,Y)$ denote the canonical state coordinate and let $q$ denote the relaxed-control coordinate. For $\varphi\in C_c^2(\mathbb R^2)$, define \[ \begin{aligned} \mathcal A^a\varphi(x,y) ={}& b(x,y,a)\varphi_x(x,y) + \kappa(\vartheta-y)\varphi_y(x,y) + \frac12\sigma_X^2(x,y,a)\varphi_{xx}(x,y) + \frac12\sigma_I^2\varphi_{yy}(x,y) \\ &+ \int_E \Big[ \varphi(x-\ell(y,z,a),y) -\varphi(x,y) +\ell(y,z,a)\varphi_x(x,y) \Big]\nu(dz). \end{aligned} \] For a relaxed control $q$, put $ \mathcal A^{q_s}\varphi(x,y) = \int_U \mathcal A^a\varphi(x,y)\,q_s(da). $ The canonical martingale problem requires \[ M_s^\varphi := \varphi(Z_s)-\varphi(Z_t) - \int_t^s \mathcal A^{q_r}\varphi(Z_{r-})\,dr \] to be a local martingale, together with the initial-state and admissibility conditions. We first prove analyticity. Choose a countable family $\{\varphi_k\}_{k\ge1}\subset C_c^2(\mathbb R^2)$ dense in the local $C^2$ topology, rational times $r<s$, a countable generating algebra of bounded cylinder functions measurable at time $r$, and a sequence of localization stopping times. The martingale problem is equivalent to the resulting countable collection of identities \[ E^P \left[ \bigl( M_{s\wedge\tau_n}^{\varphi_k} - M_{r\wedge\tau_n}^{\varphi_k} \bigr)H \right] =0. \] Assumption~2.9(ii)-(iii), together with the growth conditions of Assumptions~2.4 and~2.6, implies that the canonical integrands are Borel measurable. The initial-state constraint and the required moment conditions are likewise Borel after truncation. Consequently the set of probability measures satisfying the localized martingale identities and admissibility conditions is Borel in the canonical probability space. Its projection onto $(t,x,y,P)$ is therefore analytic. This proves (a). For (b), fix $P\in\mathfrak P(t,x,y)$ and an $[t,T]$-valued stopping time $\tau$. For every localized martingale, optional sampling gives \[ E^P \left[ M_{s\wedge\tau_n}^{\varphi_k} - M_{\tau\wedge\tau_n}^{\varphi_k} \mid \mathcal F_\tau \right] =0. \] Disintegrating $P$ with respect to $\mathcal F_\tau$ shows that, for $P$-a.e.\ $\omega$, the conditional law satisfies the same localized martingale identities after $\tau(\omega)$ and starts from $ \bigl( X_\tau(\omega),Y_\tau(\omega) \bigr). $ The conditional admissibility requirement follows from conditional Tonelli's theorem. Hence, $ P^{\tau,\omega} \in \mathfrak P\bigl( \tau(\omega),X_\tau(\omega),Y_\tau(\omega) \bigr) $ for $P$-a.e.\ $\omega$. For (c), let $Q_\omega$ be a universally measurable admissible continuation kernel. Under $P\otimes_\tau Q$, every localized martingale identity holds before $\tau$ because it holds under $P$ and after $\tau$ because it holds under $Q_\omega$ for $P$-a.e.\ $\omega$. Splitting an increment at $\tau$ and applying the tower property therefore yields the martingale identity on the full time interval. The moment admissibility condition is preserved by the corresponding decomposition at $\tau$. Theefore, $ P\otimes_\tau Q\in\mathfrak P(t,x,y). $ This proves (b)-(c). The argument is the model-specific verification of the abstract conditioning and concatenation structure used in \citet{ElKarouiTan2013}, \citet{Zitkovic2014}, and \citet{FayvisovichZitkovic2021}. \end{proof}
	
\subsection{Proof of Proposition \ref{prop:chattering}}

\begin{proof} Since, $U$ is compact metric, the space of relaxed control measures with time marginal equal to Lebesgue measure is compact under the stable topology. The chattering lemma yields predictable $U$-valued controls $u^n$ such that, for every bounded measurable $\psi:[t,T]\times\Omega\times U\to\mathbb R$ that is continuous in the control variable, 
		\begin{equation} \label{eq:stable-chattering} \int_t^T \psi(s,\omega,u_s^n)\,ds \longrightarrow \int_t^T\int_U \psi(s,\omega,a)\,q_s(da)\,ds \end{equation} 
		in probability, after passage to a subsequence when necessary. We apply this approximation simultaneously to the controlled characteristics of the martingale problem. By Assumption~2.9(ii)-(iii), the maps $ a\mapsto b(x,y,a), \ a\mapsto\sigma_X^2(x,y,a), $ and \[ a\mapsto \int_E \Big[ \varphi(x-\ell(y,z,a),y) -\varphi(x,y) +\ell(y,z,a)\varphi_x(x,y) \Big]\nu(dz) \] are continuous for every $\varphi\in C_c^2(\mathbb R^2)$. For the jump term, continuity follows from Taylor's theorem on the small-jump region, Assumption~2.9(iii), and dominated convergence on the complementary region. Hence \eqref{eq:stable-chattering} implies convergence of the integrated characteristics associated with every test function in the countable martingale-problem core. The moment estimates of Theorem~\ref{thm:strong-wellposedness} are uniform over strict controls because $U$ is compact. Together with the increment estimates used in the proof of Proposition~\ref{prop:UI}, they imply tightness of $\{(X^n,Y^n)\}_{n\ge1}$ in $D([t,T];\mathbb R^2)$. ```latex
		The compactness of $U$, the continuity hypotheses in
		Assumption~2.9(ii)--(iii), and the stable convergence of the occupation
		measures place the present controlled martingale problem within the
		classical compactification framework for relaxed stochastic controls.
		The chattering approximation theorem of
		\citet{ElKarouiNguyenJeanblanc1987} therefore implies that the strict
		controlled laws associated with $(u^n)$ converge, along the
		approximating sequence, to the relaxed controlled law $P$ in the
		canonical weak topology. In particular,
		\begin{equation}
			\label{eq:chattering-state-convergence}
			(X^n,Y^n)
			\Longrightarrow
			(X,Y)
			\qquad
			\text{in }
			D([t,T];\mathbb R^2).
		\end{equation}
		
		We stress that this is a weak convergence statement for the
		martingale formulation. It does not identify the relaxed dynamics with
		the strict SDE of Theorem~\ref{thm:strong-wellposedness}; rather, the
		relaxed generator averages the controlled local and jump
		characteristics with respect to $q_s$. This is precisely the role of
		the compactification procedure.
		
		It remains to prove convergence of the objective functionals.
		Assumption~2.9(ii), stable convergence of the controls, and
		\eqref{eq:chattering-state-convergence} imply convergence in
		probability of the truncated running costs. The uniform moment
		estimates established in Proposition~\ref{prop:UI}, together with the
		polynomial growth bound of Assumption~2.6, give uniform integrability.
		Vitali's theorem therefore yields
		\[
		\begin{aligned}
			&E\int_t^T e^{-\delta(s-t)}
			f(s,X_s^n,Y_s^n,u_s^n)\,ds
			\rightarrow
			E^P\int_t^T e^{-\delta(s-t)}
			\int_U
			f(s,X_s,Y_s,a)\,q_s(da)\,ds.
		\end{aligned}
		\]
		The same weak convergence, continuity of $g$, and uniform
		integrability imply
		$
		E[g(X_T^n,Y_T^n)]
		\longrightarrow
		E^P[g(X_T,Y_T)].
		$
		Consequently,
		\begin{equation}
			\label{eq:chattering-cost-convergence}
			J(t,x,y;u^n)
			\longrightarrow
			J(t,x,y;P).
		\end{equation}
\end{proof}

\subsection {Proof of Theorem \ref{thm:measurable-selection}}

	\begin{proof}
		Set $\mathsf E:=[0,T]\times\mathbb R^2$,
		$\mathsf P:=\mathfrak P(\Omega^\circ)$, and
		$\mathcal G:=\operatorname{Gr}(\mathfrak P)\subset
		\mathsf E\times\mathsf P$.  Write $\xi=(t,x,y)\in\mathsf E$.  By
		Proposition~\ref{prop:canonical-verification},
		$\mathcal G\in\mathcal A(\mathsf E\times\mathsf P)$, where
		$\mathcal A$ denotes the class of analytic sets.
		For $(\xi,P)\in\mathcal G$, define
		$\Gamma(\xi,P):=J(t,x,y;P)$, i.e.,
		$\Gamma(\xi,P)
		=
		E^P[
		\int_t^T e^{-\delta(s-t)}
		\int_U f(s,X_s,Y_s,a)\,q_s(da)\,ds
		+
		e^{-\delta(T-t)}g(X_T,Y_T)]$.
		For $N\in\mathbb N$, let
		$f_N:=(-N)\vee(f\wedge N)$ and
		$g_N:=(-N)\vee(g\wedge N)$, and define
		$\Gamma_N$ by replacing $(f,g)$ with $(f_N,g_N)$ in $\Gamma$.
		The canonical evaluation maps and the relaxed-control coordinate are
		Borel, hence $\Gamma_N$ is Borel on $\mathcal G$.  Assumption~2.6 and
		Proposition~\ref{prop:UI} imply
		$\Gamma_N(\xi,P)\to\Gamma(\xi,P)$ on $\mathcal G$, with uniformly
		integrable truncation error on bounded-moment subsets.  Therefore
		$\Gamma\in\operatorname{LSA}(\mathcal G)$.
		
		Extend $\Gamma$ to $\mathsf E\times\mathsf P$ by
		$\bar\Gamma(\xi,P):=\Gamma(\xi,P)$ on $\mathcal G$ and
		$\bar\Gamma(\xi,P):=+\infty$ otherwise.  Since $\mathcal G$ is
		analytic, $\bar\Gamma\in\operatorname{LSA}(\mathsf E\times\mathsf P)$.
		Consequently,
		$V^{\mathrm{rel}}(\xi)
		=
		\inf_{P\in\mathsf P}\bar\Gamma(\xi,P)
		=
		\inf_{P\in\mathfrak P(\xi)}\Gamma(\xi,P)$.
		For every $c\in\mathbb R$,
		$\{\xi:V^{\mathrm{rel}}(\xi)<c\}
		=
		\operatorname{proj}_{\mathsf E}
		\{(\xi,P):\bar\Gamma(\xi,P)<c\}$.
		The set on the right is analytic, and analytic sets are stable under
		projection; hence
		$\{\xi:V^{\mathrm{rel}}(\xi)<c\}\in\mathcal A(\mathsf E)$.
		Thus $V^{\mathrm{rel}}\in\operatorname{LSA}(\mathsf E)$.
		Fix $\varepsilon>0$ and set
		$\mathcal G_\varepsilon
		:=
		\{(\xi,P)\in\mathcal G:
		\Gamma(\xi,P)\le
		V^{\mathrm{rel}}(\xi)+\varepsilon\}$.
		Using rational approximation,
		$\mathcal G_\varepsilon
		=
		\bigcap_{n\ge1}\bigcup_{r\in\mathbb Q}
		\bigl(
		\{\xi:V^{\mathrm{rel}}(\xi)<r\}\times\mathsf P
		\bigr)
		\cap
		\{(\xi,P)\in\mathcal G:
		\Gamma(\xi,P)<r+\varepsilon+n^{-1}\}$.
		Hence
		$\mathcal G_\varepsilon\in
		\mathcal A(\mathsf E\times\mathsf P)$.
		Let
		$D_V:=\{\xi\in\mathsf E:
		V^{\mathrm{rel}}(\xi)<\infty\}$.
		For every $\xi\in D_V$, the section
		$(\mathcal G_\varepsilon)_\xi
		:=
		\{P\in\mathsf P:(\xi,P)\in\mathcal G_\varepsilon\}$
		is nonempty by definition of the infimum.  The
		Jankov-von Neumann selection theorem therefore yields a universally
		measurable selector
		$\xi\mapsto P^\varepsilon_\xi$ satisfying
		$(\xi,P^\varepsilon_\xi)\in\mathcal G_\varepsilon$ for all
		$\xi\in D_V$.  Hence
		$P^\varepsilon_\xi\in\mathfrak P(\xi)$ and
		$\Gamma(\xi,P^\varepsilon_\xi)
		\le
		V^{\mathrm{rel}}(\xi)+\varepsilon$.
		Returning to $\xi=(t,x,y)$ gives
		$J(t,x,y;P^\varepsilon_{t,x,y})
		\le
		V^{\mathrm{rel}}(t,x,y)+\varepsilon$,
		which is \eqref{eq:epsilon-selector}.
	\end{proof}

\subsection*{Proof of Theorem \ref{thm:DPP}}

\begin{proof}
	By Proposition~\ref{prop:strict-relaxed-equivalence} the strict and relaxed values coincide, so it is
	enough to prove the identity in the canonical formulation, where measurable continuation is available. Fix $P\in\mathfrak P(t,x,y)$. By
	Lemma~\ref{lem:discounted-decomposition} and the definition of the value function, the conditional continuation cost at $\tau$ is bounded below by
	$V(\tau,X_\tau,Y_\tau)$. Hence,
	\[
	J(t,x,y;P)
	\ge
	E^P\Bigg[
	\int_t^\tau e^{-\delta(s-t)}f_s\,ds
	+
	e^{-\delta(\tau-t)}
	V(\tau,X_\tau,Y_\tau)
	\Bigg].
	\]
	Taking the infimum over $P$ yields the ``$\ge$'' inequality in \eqref{eq:DPP}. This is the inequality already anticipated by the pre-DPP decomposition of Proposition~\ref{prop:pre-dpp}. For the reverse inequality, fix $\varepsilon>0$. The universally
	measurable selection result of Theorem~\ref{thm:measurable-selection} provides a continuation kernel
	$
	(s,\xi,\eta)
	\mapsto
	Q^\varepsilon_{s,\xi,\eta}
	\in\mathfrak P(s,\xi,\eta)
	$
	such that
	$
	J(s,\xi,\eta;Q^\varepsilon_{s,\xi,\eta})
	\le
	V(s,\xi,\eta)+\varepsilon.
	$
	Let $P$ be any admissible law up to $\tau$. By
	Proposition~\ref{prop:canonical-verification}, the concatenation
	$
	P\otimes_\tau
	Q^\varepsilon_{\tau,X_\tau,Y_\tau}
	$
	is admissible from $(t,x,y)$. Applying
	Lemma~\ref{lem:discounted-decomposition} to this concatenated law gives
	\[
	\begin{aligned}
		V(t,x,y)
		\le
		E^P\Bigg[
		&
		\int_t^\tau e^{-\delta(s-t)}f_s\,ds+
		e^{-\delta(\tau-t)}
		\bigl(V(\tau,X_\tau,Y_\tau)+\varepsilon\bigr)
		\Bigg].
	\end{aligned}
	\]
	Since $e^{-\delta(\tau-t)}\le e^{|\delta|T}$, the error is bounded by
	$C\varepsilon$. Taking the infimum over $P$ and then letting
	$\varepsilon\downarrow0$ proves the opposite inequality. The strict
	form follows again from Proposition~\ref{prop:strict-relaxed-equivalence}.
\end{proof}

\subsection*{Proof of Proposition \ref{prop:value-continuity}}

\begin{proof}
	Membership in $\mathcal G_q$ follows from
	Corollary~\ref{cor:value-growth}. Fix a compact
	$K\subset\mathbb R^2$. Coercivity of $f$, together with the growth
	bound for $V$, permits restriction, up to an arbitrarily small error,
	to controls whose $p_\star$-moments are uniformly bounded on $K$.
	For such controls, Proposition~\ref{prop:J-continuity} gives local uniform stability of the
	state with respect to the initial condition. Local uniform continuity
	of $f$ and $g$, followed by truncation outside a large compact set and
	the uniform-integrability estimate of Proposition~\ref{prop:UI}, therefore yields
	$
	\sup_{u}
	|J(t,x,y;u)-J(t,x',y';u)|
	\rightarrow0
	$
	as $(x',y')\to(x,y)$, locally uniformly in $(x,y)\in K$. Taking infima preserves this modulus. For time continuity, apply Corollary~\ref{cor:deterministic-DPP} over an interval of length $h$.
	The running-cost contribution tends to zero uniformly on compact sets by uniform integrability, while the stochastic continuity of $(X,Y)$ and the already established spatial continuity of $V$ imply
	$
	E[V(t+h,X_{t+h},Y_{t+h})]
	\rightarrow
	V(t,x,y).$
	This proves continuity for $t<T$. Finally, use the DPP with $s=T$. The running cost over $[t,T]$
	vanishes as $t\uparrow T$, while
	$(X_T,Y_T)\to(x,y)$ in probability uniformly over the truncated
	near-optimal control family. Uniform integrability and continuity of
	$g$ then give the terminal limit.
\end{proof}

\subsection*{Proof of Theorem \ref{thm:value-viscosity}}

\begin{proof}
	We prove the two inequalities separately.
	
	\emph{Subsolution property.}
	Let $\phi\in C^{1,2,2}$ and suppose that $V-\phi$ has a strict local
	maximum equal to zero at $(t_0,x_0,y_0)$, with $t_0<T$. Fix
	$a\in U$ and choose $R,h>0$ small enough that
	$V\le\phi$ on the relevant local cylinder. Let $\tau_R$ be as in
	Lemma~\ref{lem:localized-Dynkin}, under the constant control $a$.
	The DPP gives
	\[
	V(t_0,x_0,y_0)
	\le
	E\Bigg[
	\int_{t_0}^{\tau_R}
	e^{-\delta(s-t_0)}
	f(s,X_s,Y_s,a)\,ds
	+
	e^{-\delta(\tau_R-t_0)}
	V(\tau_R,X_{\tau_R},Y_{\tau_R})
	\Bigg].
	\]
	For jumps remaining inside the local cylinder we may replace $V$ by
	$\phi$. For jumps leaving it, the large-jump contribution is retained
	on $V$; this is precisely the split appearing in
	Definition~\ref{def:viscosity}. Combining the preceding inequality
	with Lemma~\ref{lem:localized-Dynkin}, dividing by
	$E[\tau_R-t_0]$, and sending $h\downarrow0$ and then $R\downarrow0$
	yields
	\[
	\begin{aligned}
		0\le
		&
		f(t_0,x_0,y_0,a)
		+\phi_t(t_0,x_0,y_0)
		+b(x_0,y_0,a)\phi_x(t_0,x_0,y_0)\\
		&+
		\kappa(\vartheta-y_0)\phi_y(t_0,x_0,y_0)
		+\frac12\sigma_X^2(x_0,y_0,a)\phi_{xx}(t_0,x_0,y_0)
		+\frac12\sigma_I^2\phi_{yy}(t_0,x_0,y_0)\\
		&-\delta V(t_0,x_0,y_0)
		+\mathcal I_\varepsilon^a[\phi](t_0,x_0,y_0)
		+\mathcal I^{a,\varepsilon}[V,\phi](t_0,x_0,y_0).
	\end{aligned}
	\]
	Since $a\in U$ was arbitrary, taking the infimum gives the
	subsolution inequality in the sign convention of
	\eqref{eq:nonlocal-HJB}.
	
	\emph{Supersolution property.}
	Suppose now that $V-\phi$ has a strict local minimum equal to zero at $(t_0,x_0,y_0)$. If the supersolution inequality failed, there would exist $\eta>0$ and a sufficiently small neighborhood of
	$(t_0,x_0,y_0)$ such that
	\[
	\inf_{u\in U}
	\left\{
	f+\phi_t+\mathcal A^u\phi-\delta\phi
	\right\}
	\ge\eta,
	\]
	with the nonlocal term interpreted through the small/large jump splitting. By coercivity, controls outside a sufficiently large compact subset of $U$ cannot decrease the Hamiltonian on this neighborhood.
	Choose an $\varepsilon_h$-optimal control in the DPP, with
	$\varepsilon_h=o(h)$, and stop at the first exit from the neighborhood
	or at $t_0+h$. Since $V\ge\phi$ locally, the DPP and the localized Dynkin formula imply
	$
	0
	\ge
	\eta E[\tau_R-t_0]-o(h),
	$
	where the exit contribution is controlled by the growth estimate and uniform integrability. The state moment bounds imply $E[\tau_R-t_0]=h+o(h)$ as $h\downarrow0$. Dividing by $h$ gives a contradiction. The terminal inequalities follow from Proposition~\ref{prop:value-continuity}. Hence $V$ is both a subsolution and a supersolution.
\end{proof}

\subsection*{Proof of Lemma \ref{lem:nonlocal-doubling}}

\begin{proof}
	Because $\bar q>q$ and the exponential factor in $\Gamma$ dominates the $y$-growth defining $\mathcal G_q$, the penalized function tends to $-\infty$ as either spatial variable escapes to infinity. Therefore a maximizer exists.
	The standard doubling-variable argument gives boundedness, for fixed
	$\eta$, of the maximizing sequence and
	$
	|x_\varepsilon-x'_\varepsilon|
	+
	|y_\varepsilon-y'_\varepsilon|
	+
	|t_\varepsilon-s_\varepsilon|
	\rightarrow0.
	$
	The sharper quotient convergence follows from the usual comparison	of the maximum value with its diagonal competitor. For the jump terms, use maximality of
	$\Phi_{\varepsilon,\eta}$ after translating the two surplus variables
	by
	$-\ell(y_\varepsilon,z,u)$ and
	$-\ell(y'_\varepsilon,z,u)$.
	The quadratic penalization produces
	$
	\frac{
		|\ell(y_\varepsilon,z,u)
		-\ell(y'_\varepsilon,z,u)|^2
	}{\varepsilon},
	$
	which is bounded by
	$
	C
	\frac{|y_\varepsilon-y'_\varepsilon|^2}{\varepsilon}
	\rho(z)^2.
	$
	Its $\nu$-integral tends to zero by Assumption~4.1(ii) and the
	doubling estimate. The Lyapunov penalization contributes at most
	$
	C\eta\bigl(\rho(z)^2+\rho(z)^{\bar q}\bigr),
	$
	whose integral is finite by Assumption~4.1(iv). Dominated convergence
	therefore yields
	$
	\mathcal I^u[w]
	-
	\mathcal I^u[\underline w]
	\le
	o_\varepsilon(1)+C\eta,
	$
	uniformly over controls in the compact set furnished by
	Lemma~\ref{lem:compact-control}. This is the required estimate. The argument is the finite-dimensional specialization of the nonlocal Jensen-Ishii method of \citet{BarlesImbert2008}; related semicontinuous maximum principles for integro-PDEs are developed in
	\citet{JakobsenKarlsen2006}.
\end{proof}

\subsection*{Proof of Proposition \ref{prop:strict-supersolution}}

\begin{proof}
	The derivatives of $\Gamma$ satisfy
	$
	|D\Gamma|
	\le C\Gamma,
	$ and $
	|D^2\Gamma|
	\le C\Gamma.
	$
	For the jump part, Taylor's theorem on
	$\{\rho\le1\}$ and the polynomial inequality
	$
	|x-r|^{\bar q}
	\le
	C_{\bar q}
	\bigl(
	1+|x|^{\bar q}+|r|^{\bar q}
	\bigr)
	$
	on $\{\rho>1\}$ imply
	$
	|\mathcal I^u\Gamma(x,y)|
	\le
	C\Gamma(x,y)
	+
	C\bigl(1+|u|^{\bar q}\bigr)
	$
	on compact state sets, with constants controlled by
	Assumption~4.1(iv). The coercivity reduction of
	Lemma~\ref{lem:compact-control} permits restriction to bounded controls when testing the equation locally. Hence,
	$
	\sup_u
	\left|
	\mathcal A^u\Gamma
	\right|
	\le
	C\Gamma
	$
	on the relevant minimizing set. The time derivative of the perturbation contributes
	$
	-\partial_t
	\left(
	\eta e^{\Lambda(T-t)}\Gamma
	\right)
	=
	\Lambda\eta e^{\Lambda(T-t)}\Gamma.
	$
	Choosing $\Lambda$ larger than the constants arising from the local and nonlocal spatial terms, as well as the discount term, leaves a strict positive remainder of order
	$\eta\Gamma$. Stability of viscosity supersolutions under addition of
	a smooth strict barrier completes the proof.
\end{proof}

\subsection*{Proof of Theorem \ref{thm:comparison}}

\begin{proof}
	Assume, to the contrary, that
	$
	M
	:=
	\sup_{[0,T]\times\mathbb R^2}
	(\overline v-\underline v)>0.
	$
	Fix $\eta>0$ sufficiently small and replace $\underline v$ by the strict supersolution $\underline v^\eta$ of
	Proposition~\ref{prop:strict-supersolution}. For $\varepsilon>0$,
	maximize
	\[
	\begin{aligned}
		\Phi_\varepsilon
		(t,s,x,x',y,y')
		:={}&
		\overline v(t,x,y)-\underline v^\eta(s,x',y')-
		\frac{|x-x'|^2+|y-y'|^2}{2\varepsilon}
		-\frac{|t-s|^2}{2\varepsilon}.
	\end{aligned}
	\]
	The strict growth barrier forces the maximizing points into a compact set independent of $\varepsilon$. The terminal ordering implies that,
	for sufficiently small $\varepsilon$, the maximizing times are strictly smaller than $T$. Apply the parabolic nonlocal theorem of sums at the maximizing points.
	There exist jets
	$
	(a_\varepsilon,p_\varepsilon,X_\varepsilon)
	\in
	\overline{\mathcal P}^{2,+}\overline v
	$
	and
	$
	(b_\varepsilon,p'_\varepsilon,Y_\varepsilon)
	\in
	\overline{\mathcal P}^{2,-}\underline v^\eta
	$
	such that
	$
	a_\varepsilon-b_\varepsilon
	=
	\frac{t_\varepsilon-s_\varepsilon}{\varepsilon}
	-
	\frac{t_\varepsilon-s_\varepsilon}{\varepsilon}
	=0,
	$
	and the matrix inequality supplied by the theorem of sums holds.
	
	By Lemma~\ref{lem:compact-control}, both Hamiltonians may be restricted
	to a common compact control set. Fix a control $u_\varepsilon$ that
	is $\varepsilon$-optimal for the supersolution Hamiltonian. Use the
	same control in the subsolution inequality. Subtracting the two
	viscosity inequalities gives
	\[
	\begin{aligned}
		c\eta
		\le{}&
		\bigl[
		f(t_\varepsilon,x_\varepsilon,y_\varepsilon,u_\varepsilon)
		-
		f(s_\varepsilon,x'_\varepsilon,y'_\varepsilon,u_\varepsilon)
		\bigr]+
		\text{drift}_\varepsilon
		+
		\text{diffusion}_\varepsilon
		+
		\text{jump}_\varepsilon
		-
		\delta
		\bigl(
		\overline v(t_\varepsilon,x_\varepsilon,y_\varepsilon)
		-
		\underline v^\eta(s_\varepsilon,x'_\varepsilon,y'_\varepsilon)
		\bigr).
	\end{aligned}
	\]
	The uniform continuity in Assumption~4.1(i) and the matrix inequality show that the local drift, diffusion, and running-cost differences are $o_\varepsilon(1)$. The nonlocal contribution is
	$o_\varepsilon(1)+C\eta$ Lemma~\ref{lem:nonlocal-doubling}. If $\delta\ge0$, the zero-order term
	has the favorable sign at a positive maximum. If $\delta=0$, the strict supersolution perturbation alone supplies the contradiction. More generally, an exponential time change $v\mapsto e^{-\gamma(T-t)}v$ makes the equation proper. Sending $\varepsilon\downarrow0$ therefore yields
	$
	c\eta\le C\eta
	$
	with the strictness parameter in
	Proposition~\ref{prop:strict-supersolution} chosen so that
	$c>C$. This is impossible. Hence
	$\sup(\overline v-\underline v)\le0$, proving
	\eqref{eq:comparison-order}.
\end{proof}

\subsection*{Proof of Lemma \ref{lem:mfg-law-convexity}}

\begin{proof}
		Fix $P^1,P^2\in\mathfrak{BR}(m)$ and $\lambda\in[0,1]$, and set
		$P^\lambda:=\lambda P^1+(1-\lambda)P^2$ as a probability measure on
		the canonical space $\Omega^\circ$. Since $P^1$ and $P^2$ have the
		same prescribed initial law $m_0$, so does $P^\lambda$. For
		$\varphi\in C_c^2(\mathbb R^2)$, rational times $0\le r<s\le T$,
		and every bounded canonical $\mathcal F_r$-measurable cylinder
		functional $H$, admissibility of $P^i$ gives
		$E^{P^i}[(M_s^\varphi-M_r^\varphi)H]=0$, $i=1,2$. Hence
		$E^{P^\lambda}[(M_s^\varphi-M_r^\varphi)H]
		=
		\lambda E^{P^1}[(M_s^\varphi-M_r^\varphi)H]
		+
		(1-\lambda)E^{P^2}[(M_s^\varphi-M_r^\varphi)H]
		=0$.
		The moment constraint is preserved because
		$E^{P^\lambda}[\int_0^T\int_{U_0}|a|^{p_\star}q_t(da)\,dt]
		=
		\lambda E^{P^1}[\int_0^T\int_{U_0}|a|^{p_\star}q_t(da)\,dt]
		+
		(1-\lambda)E^{P^2}[\int_0^T\int_{U_0}|a|^{p_\star}q_t(da)\,dt]
		<\infty$.
		Thus $P^\lambda\in\mathfrak A(m)$. The relaxed objective is affine in the canonical law:
		$J^m(P^\lambda)
		=
		\lambda J^m(P^1)+(1-\lambda)J^m(P^2)$.
		Since $P^1$ and $P^2$ attain the same minimum
		$V^{m,\mathrm{rel}}$, one has
		$J^m(P^\lambda)=V^{m,\mathrm{rel}}$, so
		$P^\lambda\in\mathfrak{BR}(m)$. This proves convexity of
		$\mathfrak{BR}(m)$.	Let $\mu^i\in\Phi(m)$ be generated by $P^i$. Then
		$P^\lambda\circ Z_t^{-1}
		=
		\lambda(P^1\circ Z_t^{-1})
		+
		(1-\lambda)(P^2\circ Z_t^{-1})
		=
		\lambda\mu_t^1+(1-\lambda)\mu_t^2$
		for every $t$. Hence
		$\lambda\mu^1+(1-\lambda)\mu^2\in\Phi(m)$, proving convexity of
		$\Phi(m)$. No auxiliary time-zero randomization or Markov-control
		representation is used.
\end{proof}

\section{Computational Specification and Reproducibility}
\label{app:computational}

This appendix specifies the numerical implementation used for
Tables~\ref{tab:numerical-optimal-hedge} and
\ref{tab:numerical-value-demand-surge}. All quantities not listed here
retain the values in Table~\ref{tab:baseline-calibration}.

\paragraph{Numerical primitives.}
The numerical loss map is
$a(z)=z$, and the physical-hedging technology is
$h(u)=e^{-\gamma u}$. The drift and diffusion coefficients are
$b(x,y,u)=rx+\pi_0-\chi u^2/2$ and
$\sigma_X(x,y,u)=\sigma_X$. The running cost is
$f(t,x,y,u)
=
\chi u^2/2
+
\zeta(e^y-1)_+^2
+
\omega(x_{\mathrm{crit}}-x)_+^2$,
and the terminal cost used in the computation is
$g(x,y)
=
2(x_{\mathrm{crit}}-x)_+^2
+
\frac12(e^y-1)_+^2$.
For the mean field computation,
$\psi_D(x,y)=(e^y-1)_+$ and
$\psi_S(x,y)=(x_{\mathrm{crit}}-x)_+$.

The baseline initial state is $(X_0,Y_0)=(1,0)$. For the population
calculation we take
$m_0=\delta_{(1,0)}$.
Thus the individual and population computations begin from the same
normalized capitalization and replacement-cost state. If a
nondegenerate initial distribution is used in a robustness experiment,
its law is reported separately with the corresponding result.

The computational domain is
$\mathcal D_h=[x_{\min},x_{\max}]\times[y_{\min},y_{\max}]
=[-1,3]\times[-0.75,0.75]$,
with $T=5$. The baseline mesh is
$\Delta t=0.01$, $\Delta x=0.01$, $\Delta y=0.01$, and
$\Delta u=0.02$, with $U_\Delta=\{0,\Delta u,\ldots,2\}$.

\paragraph{Jump measure and quadrature.}
The numerical L\'evy measure is
$\nu(dz)
=
\lambda_J\alpha z_{\min}^{\alpha}
z^{-1-\alpha}
e^{-\tau_J(z-z_{\min})}
\mathbf 1_{\{z\ge z_{\min}\}}\\ dz$.
The nonlocal integral is truncated to
$[z_{\min},z_{\max}]$, with $z_{\max}=20$. Let
$\{z_k,\widetilde w_k\}_{k=1}^{N_z}$ be the positive quadrature rule
for the unnormalized density
$\alpha z_{\min}^{\alpha}z^{-1-\alpha}
e^{-\tau_J(z-z_{\min})}$.
The weights entering the generator are
$w_k=\lambda_J\widetilde w_k$ and satisfy
$\sum_{k=1}^{N_z}w_k
\approx\nu([z_{\min},z_{\max}])$.
We use $N_z=200$. No renormalization to unit mass is performed in the
HJB operator: the quadrature approximates the L\'evy measure itself.
The omitted tail mass and moment are monitored through
$\nu((z_{\max},\infty))$ and
$\int_{z_{\max}}^\infty z^q\nu(dz)$. For a shifted point
$\xi=x_i-e^{y_j}a(z_k)h(u)$ satisfying
$x_\ell\le\xi\le x_{\ell+1}$, monotone interpolation is
$\mathscr I_xV(\xi,y_j)
=
(1-\theta)V_{\ell,j}
+\theta V_{\ell+1,j}$,
where
$\theta=(\xi-x_\ell)/\Delta x\in[0,1]$.
Hence the interpolation weights are nonnegative and sum to one.
If $\xi<x_{\min}$, we set
$\mathscr I_xV(\xi,y_j)=V(x_{\min},y_j)$.
If $\xi>x_{\max}$, we set
$\mathscr I_xV(\xi,y_j)=V(x_{\max},y_j)$.
Thus jumps leaving the truncated surplus domain are projected onto the
nearest boundary node. The domain-enlargement experiment reported
below checks that this truncation does not materially affect
Tables~\ref{tab:numerical-optimal-hedge} and
\ref{tab:numerical-value-demand-surge}.

\paragraph{Boundary treatment.}
At $x=x_{\min}$ and $x=x_{\max}$ we use zero-normal-gradient ghost
values,
$V_{-1,j}=V_{0,j}$ and
$V_{N_x+1,j}=V_{N_x,j}$.
At $y=y_{\min}$ and $y=y_{\max}$ we similarly impose
$V_{i,-1}=V_{i,0}$ and
$V_{i,N_y+1}=V_{i,N_y}$.
The same projection convention is used by the forward operator.
These are numerical truncation conditions, not economic boundary
conditions on the original unbounded-state problem; their effect is
therefore assessed by enlarging $\mathcal D_h$.

\paragraph{Monotone backward operator.}
For a fixed control $u$, write
$\ell_{ijk}(u)=e^{y_j}a(z_k)h(u)$ and
$\bar\ell_{ij}(u)=\sum_kw_k\ell_{ijk}(u)$.
The compensated first-order coefficient is
$\beta_{ij}(u)=b(x_i,y_j,u)+\bar\ell_{ij}(u)$.
We discretize the first-order terms through the generator-form upwind
operator
$\beta^+_{ij}(u)(V_{i+1,j}-V_{i,j})/\Delta x
+
\beta^-_{ij}(u)(V_{i-1,j}-V_{i,j})/\Delta x$,
where
$\beta^+=\max(\beta,0)$ and
$\beta^-=\max(-\beta,0)$.
For the $y$-drift, set
$\mu_j=\kappa(\vartheta-y_j)$ and use the analogous decomposition
$\mu_j^+(V_{i,j+1}-V_{i,j})/\Delta y
+
\mu_j^-(V_{i,j-1}-V_{i,j})/\Delta y$.
The diffusion coefficients are
$d_x=\sigma_X^2/(2\Delta x^2)$ and
$d_y=\sigma_I^2/(2\Delta y^2)$.
The quadrature-interpolation contribution has nonnegative
off-diagonal coefficients because $w_k\ge0$ and the interpolation
weights belong to $[0,1]$. Consequently, for each frozen policy, the
spatial discrete generator $L_h^u$ has nonnegative off-diagonal
entries and satisfies $L_h^u\mathbf 1=0$ at interior nodes. The
implicit backward matrix is
$A_h^u=(1+\delta\Delta t)I-\Delta t L_h^u$.
Its diagonal entries are positive, its off-diagonal entries are
nonpositive, and
$A_h^u\mathbf 1=(1+\delta\Delta t)\mathbf 1$.
Hence $A_h^u$ is strictly diagonally dominant and is a nonsingular
M-matrix. In particular,
$(A_h^u)^{-1}\ge0$, which gives monotonicity of the fixed-policy
backward step.

The linear system at each policy-evaluation step is solved to relative
residual tolerance $10^{-10}$. Policy iteration is terminated when
$\max_{n,i,j}|u_{i,j}^{(r+1),n}-u_{i,j}^{(r),n}|
\le10^{-7}$.

\paragraph{Positivity of the forward step.}
The forward discretization uses the transpose of the same generator:
$m^{n+1}=(I+\Delta t(L_h^{\alpha^n})^\top)m^n$.
Since the off-diagonal entries of $L_h^{\alpha^n}$ are nonnegative,
this explicit step preserves nonnegativity provided
$\Delta t\Lambda_h\le1$, where
$\Lambda_h:=\max_{n,i,j}[-(L_h^{\alpha^n})_{(i,j),(i,j)}]$.
Equivalently, a sufficient CFL condition is
$
\Delta t\max_{n,i,j}
\{
|\beta_{ij}(\alpha_{ij}^n)|/\Delta x
+
|\mu_j|/\Delta y
+
\sigma_X^2/\Delta x^2
+
\sigma_I^2/\Delta y^2
+
\sum_kw_k
\}
\le1
$.
The implementation checks this inequality before the forward sweep.
If it fails on a refined spatial grid, the forward equation is
substepped with
$\Delta t_F\le\Lambda_h^{-1}$ while the backward HJB time grid is
left unchanged. After every forward sweep we verify
$\min_{n,i,j}m_{i,j}^n\ge-10^{-14}$ and
$\max_n|\sum_{i,j}m_{i,j}^n-1|\le10^{-12}$.

\paragraph{Mean field initialization and fixed-point diagnostics.}
The baseline fixed-point iteration starts from the uncontrolled flow
$m^{(0)}=\mathcal L(X^{u\equiv0},Y)$. To test dependence on
initialization, the computation is repeated from three additional
flows: the state law generated by the maximal control
$u\equiv u_{\max}$, the midpoint control
$u\equiv u_{\max}/2$, and the convex midpoint of the uncontrolled and
maximally controlled flows. For each initialization, the same damping
parameter $\omega_{\mathrm{MFG}}=0.25$ is used. The iteration is terminated when
$\max_nW_1(m_n^{(k+1)},m_n^{(k)})\le10^{-6}$ and the consistency
residual
$\max_nW_1(m_n^{(k)},\widehat m_n^{(k)})\le10^{-6}$.
No theoretical convergence claim for this iteration is made.
Uniqueness of the continuous equilibrium does not imply convergence
of the discrete fixed-point algorithm. Numerical convergence is
accepted only when all four initializations reach the same discrete
solution within the reported tolerance.

\paragraph{Grid and domain verification.}
Tables~\ref{tab:numerical-optimal-hedge} and
\ref{tab:numerical-value-demand-surge} are recomputed on at least
three nested discretizations. For every reported scalar $Q$, we record
$\mathcal E_{\mathrm{grid}}(Q)
=
|Q_{h/2}-Q_h|/(1+|Q_{h/2}|)$.
The calculation is also repeated after increasing $x_{\max}$,
decreasing $x_{\min}$, increasing $|y_{\min}|$ and $y_{\max}$, and
increasing $z_{\max}$. Numerical conclusions are retained only when
the discretization and domain effects are smaller than the economic
comparative-static effect being reported.

\paragraph{Algorithm.}
The complete computation is summarized by the following pseudocode.

\begin{algorithmic}[1]
	\State Set model parameters, $\mathcal D_h$, $U_\Delta$, and the
	positive L\'evy quadrature.
	\State Set $V^{N_t}_{i,j}=g(x_i,y_j)$.
	\For{$n=N_t-1,\ldots,0$}
	\State Initialize the policy $u^{(0),n}$.
	\Repeat
	\State Assemble the M-matrix $A_h^{u^{(r),n}}$.
	\State Solve the fixed-policy linear system for
	$V^{(r+1),n}$.
	\State Minimize the discrete Hamiltonian over $U_\Delta$
	nodewise to obtain $u^{(r+1),n}$.
	\Until{the policy tolerance is satisfied}
	\EndFor
	\State Propagate $m_0$ with the positivity-preserving adjoint forward
	operator, using CFL substepping when necessary.
	\State Update the population flow by damped fixed-point iteration.
	\State Repeat the backward and forward sweeps until both MFG residuals
	are below tolerance.
	\State Repeat from all prescribed initial population-flow guesses.
	\State Repeat on refined grids and enlarged state/jump domains.
	\State Generate Tables~\ref{tab:numerical-optimal-hedge} and
	\ref{tab:numerical-value-demand-surge}.
\end{algorithmic}

\paragraph{Computational environment.}
The numerical calculations underlying
Tables~\ref{tab:numerical-optimal-hedge} and
\ref{tab:numerical-value-demand-surge} were implemented in Pythonusing the NumPy scientific-computing library. Randomized calculations use seed \texttt{42}. The model parameters, computational domain, quadrature rule, interpolation procedure, boundary treatment, and stopping criteria are documented in this Appendix. Executable Python
code reproducing Tables~\ref{tab:numerical-optimal-hedge} and
\ref{tab:numerical-value-demand-surge} is provided as Supplementary File.

	\bibliographystyle{apalike}
	\bibliography{bib}
\end{document}